\documentclass[12pt]{article}

\usepackage[pagewise,mathlines]{lineno}
\usepackage{mathrsfs}
\usepackage{amsmath, amsthm}
\usepackage{lipsum}
\usepackage{amssymb,url}
\usepackage{bbm}
\usepackage{chngcntr}
\usepackage{mathtools, nccmath}
\usepackage{algorithm}
\usepackage{algpseudocode}
\usepackage{multirow}
\usepackage{array}
\usepackage{comment}
\usepackage{enumitem}
\usepackage{bm}
\usepackage{booktabs}
\usepackage{makecell}
\usepackage[toc,page]{appendix}
\usepackage{enumitem,kantlipsum}
\usepackage[font=small,labelfont=bf]{caption}
\newcommand\norm[1]{\left\lVert#1\right\rVert}

\usepackage{booktabs}
\usepackage{dsfont}

\usepackage{graphicx}
\usepackage{subcaption}
\usepackage{booktabs}
\usepackage{float}
\allowdisplaybreaks

\newtheorem{assumption}{Assumption}
\newtheorem{Theorem}{Theorem}
\newtheorem{Proposition}{Proposition}
\newtheorem{Lemma}{Lemma}

\newtheorem{Example}{Example}

\newtheorem{Corollary}{Corollary}
\newtheorem{remark}{Remark}

\newtheorem{Definition}{Definition}

\newcommand{\indep}{\perp\!\!\!\perp}

\usepackage{hyperref}

\usepackage{titlesec}
\titleformat{\section}{\normalfont\large\bfseries}{\thesection}{1em}{}
\titleformat{\subsection}{\normalfont\normalsize\bfseries}{\thesubsection}{1em}{}

\usepackage{natbib}
\usepackage{bibunits}
\defaultbibliographystyle{apalike}
\defaultbibliography{reference}
\newcommand{\dd}{\mathop{}\!\mathrm d}

\newcommand*\patchAmsMathEnvironmentForLineno[1]{%
  \expandafter\let\csname old#1\expandafter\endcsname\csname #1\endcsname
  \expandafter\let\csname oldend#1\expandafter\endcsname\csname end#1\endcsname
  \renewenvironment{#1}%
     {\linenomath\csname old#1\endcsname}%
     {\csname oldend#1\endcsname\endlinenomath}}%
\newcommand*\patchBothAmsMathEnvironmentsForLineno[1]{%
  \patchAmsMathEnvironmentForLineno{#1}%
  \patchAmsMathEnvironmentForLineno{#1*}}%
\AtBeginDocument{%
  \patchBothAmsMathEnvironmentsForLineno{equation}%
  \patchBothAmsMathEnvironmentsForLineno{align}%
  \patchBothAmsMathEnvironmentsForLineno{flalign}%
  \patchBothAmsMathEnvironmentsForLineno{alignat}%
  \patchBothAmsMathEnvironmentsForLineno{gather}%
  \patchBothAmsMathEnvironmentsForLineno{multline}%
}

\usepackage{setspace}
\usepackage{etoolbox}
\appto\normalsize{%
  \setlength{\abovedisplayskip}{4pt plus 2pt minus 2pt}%
  \setlength{\belowdisplayskip}{4pt plus 2pt minus 2pt}%
  \setlength{\abovedisplayshortskip}{1pt plus 1pt}%
  \setlength{\belowdisplayshortskip}{2pt plus 1pt minus 1pt}%
}
\begin{document}

\def\spacingset#1{\renewcommand{\baselinestretch}{#1}\small\normalsize}

\begin{bibunit}

\spacingset{1}

\newcommand{\anon}{1}

\if1\anon
{
  \title{\bf Omitted Variable Bias in Difference-in-Differences Designs}
  \author{
    Juejue Wang \\
    Department of Statistics, University of Washington \\
    and \\
    Pedro H.C. Sant'Anna \\
    Department of Economics, Emory University \\
    and \\
    Victor Chernozhukov \\
    Department of Economics, MIT \\
    and \\
    Carlos Cinelli \\
    Department of Statistics, University of Washington}
  \date{}
  \maketitle
} \fi

\if0\anon
{
  \title{\bf Omitted Variable Bias in Difference-in-Differences Designs}
  \author{
    }
  \date{}
  \maketitle
} \fi

\begin{abstract}
We study the omitted variable bias (OVB) problem in canonical difference-in-differences (DiD) designs when unobserved confounding induces departures from the parallel trends assumption. Our results provide a novel characterization of the OVB formula for the average treatment effect on the treated (ATT), which is of independent interest. We show how the ATT bias is mainly governed by the strength of confounding in the treatment assignment mechanism and provide alternative ways of quantifying this strength, such as (i) changes in the average odds of treatment among the treated, (ii) confounding imbalance between treated and control units, or (iii) variation explained in treatment odds among the untreated. 
Building on these results, we offer sensitivity statistics for routine reporting, describing the minimum strength of confounding required to overturn the conclusions of a DiD study, as well as formal bounds on the strength of confounders based on comparisons to observed covariates or pre-trends. Finally, we provide flexible and efficient statistical inference methods for the bounds on ATT, which can leverage modern machine learning algorithms for estimation. We demonstrate the utility of our approach in an empirical example that estimates the effects of minimum wage on teen employment.
\end{abstract}

\noindent
{\it Keywords:} Sensitivity Analysis; Difference-in-Differences; Parallel Trends; Unobserved Confounding; Double Machine Learning; Robustness Value.
\vfill

\newpage
\spacingset{1.73}

\section{Introduction}

Difference-in-differences (DiD) has become the most widely used research design for causal inference with observational data in economics and other empirical sciences \citep{goldsmith2024tracking}. Its appeal is easy to understand. The method is simple to implement, makes modest demands on the data (it requires only a treated and control group, observed before and after treatment takes place), and it even allows for some types of treatment selection. In its canonical form, DiD identifies the average treatment effect on the treated (ATT) under a parallel trends assumption (PTA), which states that, in the absence of treatment, the average outcomes of treated and control units would have evolved in parallel over time---possibly only after conditioning on a set of observed pre-treatment covariates. As with any method for causal inference, however, the reliability of DiD rests on the plausibility of its identifying assumption. Because the PTA concerns counterfactual trajectories that are never observed for treated units after treatment, it is fundamentally untestable, and investigators marshal what evidence they can to defend its plausibility in the context of their investigations.

By far the most common way to gather such evidence is to perform placebo tests using pre-treatment information. Researchers typically check whether the outcomes of treated and control groups evolve in parallel before treatment, and read the absence of differential pre-treatment trends (often known as ``pre-trends'') as evidence for post-treatment parallel trends. Another common practice is to check whether treated and control groups are balanced on observed pre-treatment covariates that are also thought to be determinants of changes in the untreated potential outcome. While null findings on either of these tests are consistent with the parallel trends assumption, they are not dispositive. Pre-trends speak only to what happened before treatment, and parallel trends may well hold ex-ante and fail ex-post. Similarly, balance on observed covariates cannot rule out \emph{unobserved} confounders that induce violations of parallel trends. And in many settings, pre-trends do differ across groups, and observed covariates are imbalanced. Yet researchers would still like to learn something about the ATT, even if there is evidence that the parallel trends assumption does not hold exactly.

To this end, sensitivity analyses have been proposed that use observed pre-treatment deviations from parallel trends to bound the plausible magnitude of post-treatment violations (e.g., \citealp{rambachan2023more}). This framework has been widely adopted, and is now recommended as a standard sensitivity analysis tool in the DiD literature \citep{roth2023s, baker2025difference}. While a step forward from naively assuming that parallel trends holds exactly, this approach still suffers from shortcomings. In applied work, violations of parallel trends are typically attributed to \emph{unobserved confounders} that \emph{jointly} shape the \emph{selection} of units \emph{into treatment} and their untreated \emph{outcome trends} \citep{callaway2021difference, roth2023s, baker2025difference}. Yet pre-trend extrapolation offers no framework for reasoning about these forces, and \emph{why} parallel trends might fail in the first place. It is informative only insofar as pre-treatment dynamics are a reliable proxy for the post-treatment counterfactual, an assumption that it cannot itself adjudicate. One immediate symptom of this deficiency is that the framework cannot be applied to the simplest, canonical two-period DiD design, where pre-trends are not available. How can we translate such confounding into formal statements about deviations from parallel trends and the ATT bias? How strong would it have to be to overturn a given conclusion in a DiD study? 

In this paper, we develop a suite of sensitivity analysis tools for DiD that allows one to easily answer such questions.  
We first derive a novel characterization of the omitted variable bias (OVB) formula for the ATT. While motivated by DiD, this result is important in its own right and of independent interest---we elaborate on this point in the related literature below. We show how the bias due to violations of parallel trends admits a formal decomposition into the product of a scaling factor estimable from the data, and three interpretable bias factors that researchers already informally invoke: (i)~the strength of unobserved confounders in shaping selection into treatment, (ii)~their strength in shaping the evolution of untreated potential outcomes, and (iii)~the alignment between these two channels. Among these, we further discuss how selection into treatment is the most important factor for bounding this bias (since the other two factors are each upper bounded by one), and we offer a rich set of equivalent ways that practitioners can use to reason about it, such as how unmeasured confounding increases the average odds of treatment among the treated, widens the covariate imbalance between treated and control units, or explains variation in treatment odds among the untreated. 

Next, we derive (extreme) robustness values for DiD. These are sensitivity statistics that characterize the minimum strength of unobserved confounding needed to overturn a given conclusion about the ATT \citep{cinelli2020making,cinelli2025omitted,chernozhukov2022long}. Routine reporting of these quantities, alongside point estimates and standard errors, provides a quick and simple way to communicate how robust DiD findings are to violations of the parallel trends assumption. We also offer formal bounds on the ATT  based on plausibility judgments about how the strength of unobserved confounders compares against the explanatory power of observed covariates. These bounds connect naturally to balance checks already performed by applied researchers, who compare observed covariates across treated and control groups as informal evidence for parallel trends. Our framework gives these comparisons a direct role in sensitivity analysis, allowing the observed imbalance on measured covariates to discipline judgments about the unobserved imbalance one is willing to entertain. When multiple pre-treatment periods are available, the same logic applies to pre-treatment dynamics, allowing our framework to be used in tandem with current pre-trend extrapolation approaches. 

Finally, we provide flexible and efficient methods for statistical inference using debiased machine learning (DML), which allows the use of modern machine learning algorithms for estimation---though we note our approach can also be used with standard parametric regressions. We demonstrate the utility of our approach in an empirical application revisiting the analysis of the effect of minimum wage on teen employment in \citet{callaway2021difference}. Open-source software for R implements the methods discussed in this paper.

\paragraph{Related literature.}

Our work contributes to the growing literature on sensitivity analysis for DiD, where the dominant approach is the pre-trend extrapolation framework of \cite{rambachan2023more}. Pre-trend extrapolation is best understood as a way of benchmarking the plausible magnitude of post-treatment violations against observed pre-treatment dynamics. However, it cannot explain how such violations arise in the first place. Our framework fills this gap by decomposing the bias into interpretable factors tied to selection into treatment and untreated outcome evolution, and in doing so, it opens the door to a broader set of benchmarking strategies. Observed covariates, for instance, can also be used to benchmark the plausibility of unobserved confounding, further connecting sensitivity analysis to the balance tests that applied researchers already perform as a matter of course. See Section~\ref{sec:benchmarking} for further discussion.

Our work is also related to the broader literature on sensitivity analysis for the ATT. In particular, our results are most closely related to \citet{chernozhukov2022long}, who develop a general OVB framework for linear functionals of the conditional expectation of the outcome, encompassing the ATT as a special case; see \cite{bach2025sensitivity} for a direct application of \citet{chernozhukov2022long} to DiD designs using this unconditional parameterization. Though we build on \citet{chernozhukov2022long}, our OVB formula for the ATT differs from theirs because we exploit a structural feature of the ATT---namely, that the bias admits a parameterization conditional solely on the untreated units. This parameterization has several desirable properties, such as requiring plausibility judgments only on the local parameters that directly drive the bias. We further extend \citet{chernozhukov2022long} by providing alternative ways of quantifying selection strength in terms of treatment odds and covariate imbalance, both for our preferred conditional parameterization and for the original unconditional parameterization. 

Our results are also closely related to the variance-based sensitivity analysis proposed in \citet{huang2025variance} for weighting estimators, which bounds the bias of the ATT by restricting an $R^2$ measure of the balancing weights. Our approach differs from theirs in four main ways. First, we use a non-centered parameterization of selection strength, which naturally avoids a zero-denominator issue acknowledged in \citet{huang2025variance}. Second, we further decompose the bias into three components---the strength of confounding in treatment selection, its strength in untreated outcome trends, and the alignment between the two. We show that \citet{huang2025variance}'s parameterization bundles the trend and alignment components together, so that their upper bound on the bias implies setting these two components to one. Third, their benchmarking analysis uses observed covariates in a way that is less conservative than ours. Fourth, we use debiased machine learning for statistical inference.
We further extend the derivations in \citet{huang2025variance} to obtain results analogous to ours in their centered parameterization. 

Thus, a distinct contribution of this paper is not only a novel OVB formula for the ATT, but a complete account of the different parameterization choices available for this bias, along with the theoretical and practical consequences of these choices. See Section~\ref{sec:parameterization} and Appendix~\ref{app:all-comparison} for further details.

\paragraph{Notation.} $R^2_{Y\sim X}$ denotes the (possibly uncentered) $R^2$
from the orthogonal linear projection of a scalar $Y$ onto a random vector
$X$, and $\eta^2_{Y\sim X}$ its non-parametric analogue, with $E[Y\mid X]$ in
place of the best linear predictor. The partial $R^2$ of $Y$ with $U$ given
$X$ is defined as $R^2_{Y\sim U\mid X}:=(R^2_{Y\sim U+X}-R^2_{Y\sim X})/(1-R^2_{Y\sim
X})$, with an analogous definition for $\eta^2_{Y\sim U\mid X}$. We use $\dd L/\dd P$ to denote the Radon--Nikodym
derivative of $L$ with respect to $P$; for distributions $P \ll Q$,
$\chi^2(P\|Q):=\int(\dd P/\dd Q-1)^2\,\dd Q$ denotes the chi-squared divergence of
$P$ from $Q$. Finally, $P_{V\mid W=w}$ denotes the distribution of $V$ given
$W=w$, and we use $E_n[f(X_i)]:=\frac{1}{n}\sum_{i=1}^n f(X_i)$ for empirical
expectations.

\section{Background and running example}
\label{sec:background}

In this section we introduce the running example used throughout the paper. This example serves three main purposes: (i)~it illustrates the standard use of pre-trends and covariates as a tool for assessing the plausibility of parallel trends; (ii)~it shows how our framework already refines this standard analysis through a richer characterization of \emph{observed} biases; and (iii)~it sets the stage for the formal sensitivity analysis to \emph{unobserved} confounders that we develop in the rest of the paper. 

\subsection{The effect of minimum wage on teen employment}

Whether increases in minimum wages reduce teen employment has been actively debated in labor economics for decades, with credible evidence on both sides \citep{neumark2022myth, dube2024minimum}. Here we revisit the analysis of \citet{callaway2021difference}, who study this question using a difference-in-differences design.   

Between 2001 and 2007, the federal minimum wage remained at \$5.15, but several states raised their own minimums above this floor at different points in time. \citet{callaway2021difference} exploit this variation in treatment timing to estimate group-time average treatment effects for each cohort of states, defined by when they first raised their minimum wage above the federal level.  To fix ideas, here we focus on the last cohort, the states that raised their minimum wage in 2007. The treated group consists of the 584 counties in those states, and the control group consists of the 1,377 counties in states that never raised their minimum wage above the federal floor before the end of 2007.

An investigator interested in estimating the ATT might start the analysis by positing that, absent the policy change, teen employment in treated and control counties would have evolved in parallel over time. Under this parallel trends assumption, the ATT is identified by the usual canonical  two-by-two DiD estimand. The results are shown in panel (a) of Figure~\ref{fig:nt_ATT_multi_main}. 
The treatment effect of interest is the estimate for 2007, and we find that increases in minimum wages resulted in a 2.77\% reduction in teen employment for that group. This is a sizable, policy-relevant negative effect, but how credible is this estimate? 

\begin{figure}[t]
    \centering
    \begin{subfigure}{0.49\textwidth}
        \centering
        \includegraphics[width=\textwidth]{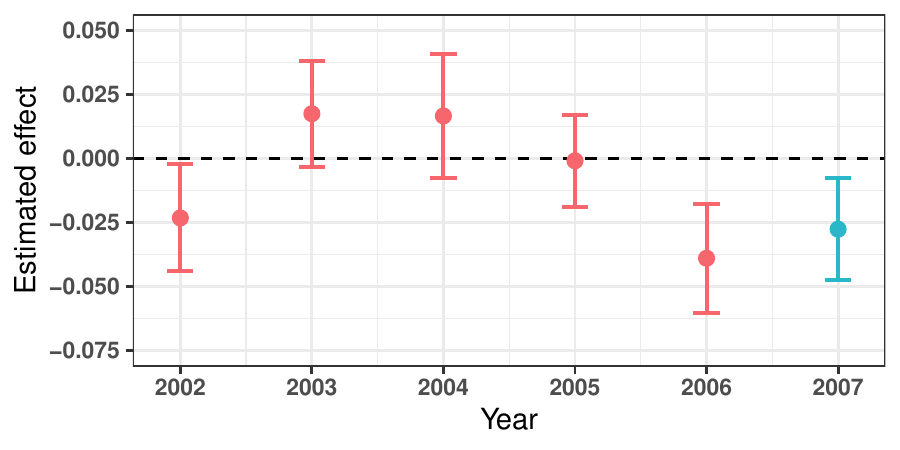}
        \caption{{\scriptsize Unconditional Parallel Trends, never-treated as control}}
    \end{subfigure}%
    \hspace{0.015\textwidth}%
    \begin{subfigure}{0.49\textwidth}
        \centering
        \includegraphics[width=\textwidth]{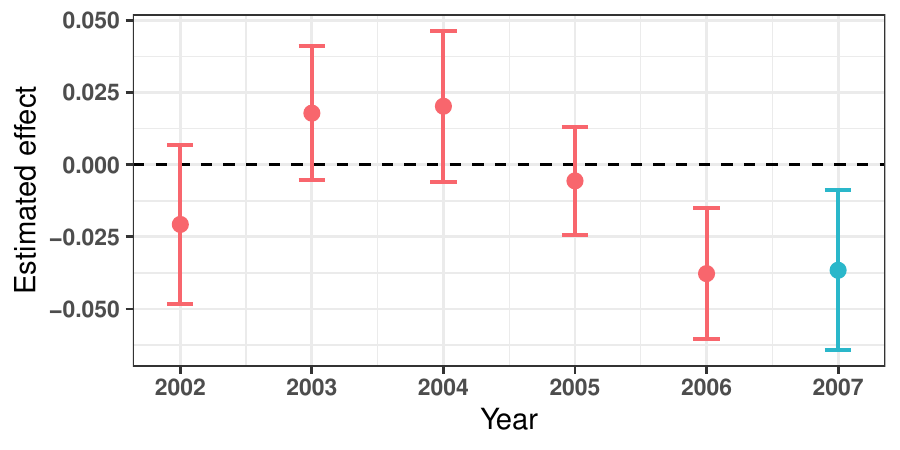}
        \caption{{\scriptsize Conditional Parallel Trends, never-treated as control}}
    \end{subfigure}
    \caption{{\footnotesize 
     Effect of the minimum wage on teen employment. Estimates use DML with a canonical DiD design and the preceding year as base period. Red and blue lines give point estimates with uniform 95\% confidence bands for pre-treatment periods and treatment effects, respectively.}}
    \label{fig:nt_ATT_multi_main}
\end{figure}

Beyond the effect of interest, Figure~\ref{fig:nt_ATT_multi_main}(a) also shows pre-treatment estimates (``pre-trends'') from 2002 to 2006, which are now commonly used to assess the plausibility of the parallel trends assumption. If the untreated trends of both groups were indeed evolving in parallel over time, these pre-trends should be equal to zero, barring sampling variation. 
Yet, the 2006 pre-trend is  -3.9\%, which is not only large and statistically significant, but also of the same order of magnitude as the treatment effect estimate itself. This suggests that treated and control counties were already on diverging trends before any policy changes.

What might explain such deviations from parallel trends? A natural concern is that employment in counties with \emph{different characteristics} may not evolve in parallel over time. To probe this, another common approach is to check whether the treated and control groups are \emph{balanced} in terms of covariates that could plausibly affect employment trajectories. Table~\ref{tab:balance_main} in Appendix~\ref{app:additional-results} reports standardized differences in means for several such characteristics. We find that treated counties are more populous, whiter, more educated and have lower poverty rates than control counties; they are also concentrated in the Midwest and West, with the South substantially underrepresented. Most differences exceed the 0.25 threshold, conventionally regarded as problematic \citep{imbens2015causal,baker2025difference}. These are the kind of imbalances that could account for parallel trends violations, and they motivate adjusting for observed covariates before drawing any conclusions about the policy effect.

To alleviate such concerns, panel (b) of Figure~\ref{fig:nt_ATT_multi_main} reports DiD estimates under a \emph{conditional} parallel trends assumption, which requires only that counties with the \emph{same observed characteristics} would have evolved in parallel over time.  These estimates account for observed covariates nonparametrically, using random forests and debiased machine learning as we will discuss in Section~\ref{sec:est-and-inf}. The treatment effect estimate is now -3.66\%, somewhat larger in magnitude than the unconditional estimate, and points to the same conclusions. However, the 2006 pre-trend estimate also remains essentially unchanged. In other words, though adjusting for observed covariates has not changed our treatment effect estimate, it has not resolved the pre-trend problem either. This persistent pre-trend deviation suggests confounding is acting through factors not captured by the observed data. 

At this point, the standard recommendation in the DiD literature is to proceed with a sensitivity analysis of the type of \citet{rambachan2023more}. It proceeds as follows: take the conditional 2006 deviation of -3.8\%,
assume the same deviation persists in 2007, and subtract it from the conditional estimate of -3.66\%. Since 3.8\% is larger than 3.66\%, this is evidently enough bias to overturn our original conclusion. This approach is straightforward and an improvement over simply assuming parallel trends and ignoring the problem. But it is also somewhat unsatisfactory as it leaves several questions unanswered. Why have the observed covariates barely changed these estimates, despite their substantial imbalance? Should we be looking at standardized mean differences or something else? What kind of unobserved confounding would have to be at work to produce these biases? And is pre-trend extrapolation the only way to gauge the plausible magnitude of such confounders? 

As a preview of how the results of this paper can help analysts answer these questions,
we show that the difference between the conditional and unconditional estimates in fact admits the following exact decomposition:
\[
\underbrace{(-0.0277)}_{\text{Unconditional Estimate}} - \underbrace{(-0.0366)}_{\text{Conditional Estimate}} =~ \underbrace{0.0089}_{\text{Bias}} ~= \underbrace{0.194}_{\text{Alignment}} \times \underbrace{0.117}_{\text{Trend}} \times \underbrace{2.548}_{\text{Selection}} \times \underbrace{0.154}_{\text{Scale}}.
\vspace{.2cm}
\]
Table~\ref{tab:obs_strength} in Appendix~\ref{app:additional-results} reproduces the above decomposition using all covariates, and also presents analogous decompositions obtained by adjusting for each observed covariate separately. Almost every applied DiD paper conditions on covariates, yet the standard narrative about why they matter rarely extends beyond covariate imbalance.  Decompositions such as the one above provide an exact post-mortem of how covariates affect (or not) the results. 

First, the ``selection'' component confirms that indeed there is a sizable distributional imbalance between treated and control groups, corresponding to a chi-squared divergence of $2.548^2 \approx 6.5$---note the relevant metric of distributional imbalance here is not the commonly used standardized mean difference, though in some cases the two may be close to each other, as we will discuss. Second, one of the reasons why this imbalance does not translate into a substantial difference in estimates in our running example is because these covariates explain only $0.117^2\approx 1.4\%$ of the variation in trends among the control group (the ``trend'' component). Third, for bias to arise, covariates must affect both channels jointly---their effect on the outcome trend must correlate with their effect on selection. This correlation, the ``alignment'' component, equals only 0.194 in our example, further attenuating the bias. As with the conditional estimate, all these components can also be estimated non-parametrically using debiased machine learning with the tools we provide in this paper. We argue that exercises such as the one above should be a routine part of DiD studies.

But more importantly, this same decomposition used to explain how observed covariates \emph{did} change our estimates can also be used to contemplate how unobserved confounders \emph{would} have changed our estimates. For example, it explains what unobserved confounders need to look like, in terms of their strength in predicting outcome trends and treatment selection, in order to induce any certain amount of bias---including biases derived from pre-trend extrapolation.
Crucially, it also opens up different ways of gauging the plausible magnitude of confounding. For instance, if an investigator has grounds to claim that no plausible unobserved confounder is more strongly imbalanced than the observed covariate region---or, equivalently, in terms of how it changes the odds of belonging to the treated group---this is also sufficient to bound the bias. We develop these tools formally in the rest of the paper, beginning with the bias decomposition itself in the next section.

\section{Omitted variable bias in difference-in-differences designs}
\label{sec:ovb}

\subsection{Problem setup}
\label{sec:setup}

We begin by introducing the omitted variable bias problem in the canonical two-by-two DiD setup. Let $t \in \{1,2\}$ index the pre- and post-treatment periods, respectively. Let $D\in \{0,1\}$ denote treatment group assignment, with $D=1$ for treated and $D=0$ for control. For each period $t$, $Y_t(1)$ and $Y_t(0)$ denote the treated and untreated potential outcomes at time $t$, respectively. We assume the observed outcome at time $t$ satisfies
\begin{equation}
\label{eq:consistency}
Y_t = D Y_t(1) + (1 - D) Y_t(0).    
\end{equation}
Equation~(\ref{eq:consistency}) is often referred to as the consistency assumption, or the stable unit treatment value assumption (SUTVA). 

The causal parameter of interest is the average treatment effect on the treated (ATT) at the post-treatment period, defined~as 
\begin{equation}
\label{eq:att}
\text{ATT} := E[Y_2(1) \mid D=1] - E[Y_2(0) \mid D=1].    
\end{equation}
Now let $\Delta Y:= Y_2 - Y_1$ denote the outcome evolution, $\Delta Y(0) := Y_2(0) - Y_1(0)$ the untreated potential outcome evolution, $X$ denote a vector of observed covariates and $U$ denote a vector of \emph{unobserved} covariates. Suppose that, conditionally on both $X$ and $U$, the outcomes of treated and control units would have evolved in parallel over time in the absence of treatment,~i.e.,
\begin{equation}
\label{eq:cpta}
E[\Delta Y(0) \mid D=1, X, U] = E[\Delta Y(0) \mid D=0, X, U].
\end{equation}
Under this conditional parallel trends assumption---along with the assumptions of no anticipation and other regularity conditions\footnote{See Appendix~\ref{app:did-assump} to recall the usual standard DiD assumptions.}, such as weak overlap below---the ATT is identified by the following well-known \emph{difference-in-differences estimand},
\begin{equation}
\label{eq:long}
\theta:= E[~E[\Delta Y \mid D=1, X,U] - E[\Delta Y \mid D=0,X,U] ~\mid~ D=1~].
\end{equation}
We refer to $\theta$ as the ``long'' DiD estimand because it depends both on the observed covariates $X$ and the unobserved confounders $U$. The starting point of our analysis is that the investigator has determined that she is interested in the DiD estimand  $\theta$ as in (\ref{eq:long}). 

\begin{remark}
Different formulations of the DiD assumptions can lead to the same statistical estimand (\ref{eq:long}), after appropriately redefining assumptions and variables. See, for example, \cite{callaway2021difference} for identification results involving group-time average treatment effects with multiple time periods, which invoke various different choices of control groups, parallel trends and no-anticipation assumptions.
Another example is the identification of the ATT under an unconfoundedness assumption, instead of parallel trends---for that case, one simply replaces $\Delta Y$ with $Y$.  
Our results apply to any such estimand that can be expressed in the generic form (\ref{eq:long}). 
\end{remark}

We further impose the following regularity conditions.

\begin{assumption}[Regularity Conditions]
\label{asmp:regularity}
We assume the outcome evolution is square-integrable $E[\Delta Y^2] < \infty$ and weak overlap  $E\left[P(D=1)^{-1}P(D=0|X,U)^{-1} \mid D=1 \right] < \infty$.
\end{assumption}

In practice, however, the unobserved confounders $U$ are not available to the researcher, so we cannot estimate the ``long" DiD parameter $\theta$. Instead, we are forced to estimate its ``short'' version, omitting $U$ from the analysis,
\begin{equation}
\label{eq:short}
\theta_s:= E[~E[\Delta Y \mid D=1, X] - E[\Delta Y \mid D=0,X] ~\mid~ D=1~].    
\end{equation}
Note that the parallel trends assumption may not hold if we condition on $X$ alone, and our estimates may thus suffer from omitted variable bias. In fact, under standard DiD assumptions, the omitted variable bias due to the omission of $U$ exactly equals the average deviation from parallel trends, among the treated, from conditioning on $X$ alone, i.e.,
\[\theta - \theta_{s}
= -
E\left[~E[\Delta Y(0) \mid X, D=1] - E[\Delta Y(0)  \mid X, D=0] ~\mid~ \, D=1 ~\right].\]
Consequently, judgments about deviations from parallel trends can be translated into judgments about omitted confounding. Conversely, the presence of omitted confounders provides a natural way to explain deviations from parallel trends.

Our goal is to characterize the omitted variable bias---or, equivalently, the deviations from parallel trends---in terms of the strength of the omitted variable $U$, capturing the distinct channels through which $U$ confounds our estimand. 

\subsection{The omitted variable bias formula}

Let $g_0$ and $g_{0s}$ be the long and short regression functions of outcome trend for the untreated,
\[g_0:= E[\Delta Y|D=0, X,U], \qquad g_{0s}:= E[\Delta Y|D=0, X].\]
The long and short DiD estimands can thus be compactly written as
\[\theta = E[\Delta Y -g_0\mid D=1], \qquad \theta_s = E[\Delta Y - g_{0s}\mid D=1].\]
The average treated trend $E[\Delta Y|D=1 ]$ is the same in both the long and short parameters, and it cancels out from the bias. The OVB is therefore fully governed by errors in the extrapolation of the untreated trend from the control group to the treated group. That is,
\[\theta - \theta_s = -(\theta_0 - \theta_{0s}),\]
where,
\begin{equation}
\label{eq:theta0}
\theta_0 :=E[g_0|D=1], \qquad \text{and}, \qquad \theta_{0s}:=E[g_{0s}|D=1].    
\end{equation}
In order to better understand the nature of this extrapolation, consider the marginal odds, as well as the short and long conditional odds of treatment,
\[
O:=\frac{P(D=1)}{P(D=0)},
\qquad
O_{XU}:=\frac{P(D=1\mid X,U)}{P(D=0\mid X,U)},
\qquad
O_X:=\frac{P(D=1\mid X)}{P(D=0\mid X)}.
\]
A key step for our approach is the following lemma, which rewrites $\theta_0$ and $\theta_{0s}$ as inner products of the conditional regression functions and rebalancing weights given by the conditional to marginal odds ratio, under the control population. 

\begin{Lemma}
\label{lemma:cond_RR_main} 
Under Assumption~\ref{asmp:regularity}, $\theta_0$ and $\theta_{0s}$ in (\ref{eq:theta0}) can be written as
\[\theta_0 = E 
\left[g_0 \left(\frac{O_{XU}}{O}\right)\mid D=0 \right], \quad \theta_{0s} = E\left[g_{0s} \left(\frac{O_{X}}{O}\right)\mid D=0\right].\]
Moreover, $g_{0s} = E[g_0 \mid D=0,X]$ and $O_{X} = E[O_{XU}\mid D=0,X]$.
\end{Lemma}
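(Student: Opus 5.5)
The plan is a change of measure from the treated population to the control population. The density ratio between the two is the conditional-to-marginal odds ratio, and the two "moreover" claims then follow from the tower property.

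\emph{Change of measure.} By Bayes' rule, the distribution of $(X,U)$ given $D=1$ has density with respect to its distribution given $D=0$ equal to
\[
\frac{\dd P_{XU\mid D=1}}{\dd P_{XU\mid D=0}}=\frac{P(D=1\mid X,U)/P(D=1)}{P(D=0\mid X,U)/P(D=0)}=\frac{O_{XU}}{O}.
\]
I would verify this by writing, for any bounded measurable $h(X,U)$,
\[
E[h\mid D=1]=\frac{E[h\,P(D=1\mid X,U)]}{P(D=1)},\qquad E\!\left[h\,\frac{P(D=1\mid X,U)}{P(D=0\mid X,U)}\,\Big|\,D=0\right]=\frac{E[h\,P(D=1\mid X,U)]}{P(D=0)}.
\]
Multiplying the second identity by $1/O=P(D=0)/P(D=1)$ gives the first. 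This requires $P(D=0\mid X,U)>0$ almost surely on the support relevant to the treated. Weak overlap in Assumption~\ref{asmp:regularity} implies that $P(D=0\mid X,U)^{-1}$ is finite $P_{\cdot\mid D=1}$-a.s., which is exactly absolute continuity of $P_{XU\mid D=1}$ with respect to $P_{XU\mid D=0}$.

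\emph{First identity.} Since $g_0$ is a function of $(X,U)$, applying the identity with $h=g_0$ gives $\theta_0=E[g_0 O_{XU}/O\mid D=0]$. The same argument with $h=g_{0s}(X)$, using the $X$-marginal version of Bayes' rule, gives $\theta_{0s}=E[g_{0s}O_X/O\mid D=0]$.

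\emph{Integrability.} This is the main technical point. Extending from bounded $h$ to $g_0$ needs $E[|g_0|\,O_{XU}\mid D=0]<\infty$, which I would obtain by Cauchy--Schwarz. First, $E[g_0^2\mid D=0]\le E[\Delta Y^2\mid D=0]<\infty$ by Jensen and square-integrability. Second,
\[
E[(O_{XU}/O)^2\mid D=0]=E[O_{XU}/O\mid D=1]=E\!\left[\tfrac{1}{P(D=1)\,P(D=0\mid X,U)}\,\Big|\,D=1\right]\cdot P(D=1\mid X,U)\cdot P(D=0),
\]
which is bounded by a constant times the weak overlap quantity, since $P(D=1\mid X,U)\le 1$. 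So the weights are square-integrable under the control measure, and a truncation argument with dominated convergence extends the identity from bounded $h$ to $g_0$. The same bound controls $O_X/O$ through the second "moreover" claim together with Jensen's inequality.

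\emph{Tower properties.} For $g_{0s}=E[g_0\mid D=0,X]$, note that $\sigma(X)\subset\sigma(X,U)$, so conditioning $E[\Delta Y\mid D=0,X,U]$ on $(D=0,X)$ gives $E[\Delta Y\mid D=0,X]$. For $O_X=E[O_{XU}\mid D=0,X]$, write
\[
E\!\left[\frac{P(D=1\mid X,U)}{P(D=0\mid X,U)}\,\Big|\,D=0,X\right]=\frac{E[P(D=1\mid X,U)\mid X]}{P(D=0\mid X)}=\frac{P(D=1\mid X)}{P(D=0\mid X)}.
\]
The first equality uses that the density of $U$ given $(D=0,X)$ relative to $U$ given $X$ is $P(D=0\mid X,U)/P(D=0\mid X)$; the second is the tower property.
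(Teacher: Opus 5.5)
Your proof is correct and follows essentially the same route as the paper, which rewrites $E[g_{0s}\mid D=1]$ by multiplying and dividing by $P(D=0\mid X)$ and applying iterated expectations (your Bayes'-rule change of measure), and which obtains $O_X=E[O_{XU}\mid D=0,X]$ from the fact that $O_{XU}/O_X$ is the density of $U\mid X,D=1$ relative to $U\mid X,D=0$ and so integrates to one. Your integrability argument adds detail the paper omits, but in your display the factor $P(D=1\mid X,U)$ must stay inside the conditional expectation, where it is then bounded by $1$, rather than being pulled outside it.
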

The fact that $g_{0s} = E[g_0 \mid D=0,X]$ follows directly from the law of iterated expectations. The property $O_{X} = E[O_{XU}\mid D=0,X]$ can be verified by applying Bayes' rule. Using this lemma, we can derive the following characterization of the OVB for the ATT. It shows that deviations from parallel trends can be exactly characterized by the ability of omitted confounders to explain treatment selection---as parameterized by the odds of treatment---and trend variation, jointly.
\begin{Theorem} [OVB for the ATT] 
\label{thm:main_npm}
Consider the long and short estimands $\theta$ and $\theta_{s}$ in (\ref{eq:long}) and (\ref{eq:short}). Under the regularity conditions of Assumption~\ref{asmp:regularity}, the OVB is given by the (negative of the) covariance of errors induced~by~the omission of $U$ both in the regression function and in the odds ratio, among the untreated units, 
\begin{align*}
\theta - \theta_s = - \operatorname{Cov}\left(g_0 - g_{0s},~ \frac{O_{XU}}{O}- \frac{O_{X}}{O} ~\middle|~ D=0 \right).
\end{align*}
Moreover, this bias can be further expressed in terms of $R^2$ measures:
\[\theta - \theta_s = -\underbrace{\rho_0}_{alignment} \underbrace{C_{0\Delta Y}}_{trend} \underbrace{C_{0D}}_{selection}
\underbrace{S_0}_{scale}, \]
where,
\[\rho_0 := \operatorname{Cor}\left(g_0 - g_{0s}, O_{XU} - O_X \mid D=0\right),\quad
C^2_{0\Delta Y} := \eta^2_{\Delta Y\sim U\mid X,D=0},\quad
C^2_{0D}:= \frac{1-R^2_{O_{XU}\sim O_{X}|D=0}}{R^2_{O_{XU}\sim O_{X}|D=0}},\]
and  
\[S^2_0 := \sigma^2_{0s} \nu^2_{0s}, \qquad 
\sigma^2_{0s}:= E[\operatorname{Var}(\Delta Y \mid X,D=0)\mid D=0], \qquad 
\nu^2_{0s}:= E\left[\left(\frac{O_X}{O}\right)^2 \middle| D=0\right].\]
\end{Theorem}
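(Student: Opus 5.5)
Starting from $\theta-\theta_s=-(\theta_0-\theta_{0s})$ and Lemma~\ref{lemma:cond_RR_main}, I will write
\[
\theta_0-\theta_{0s}=E\!\left[g_0\tfrac{O_{XU}}{O}-g_{0s}\tfrac{O_X}{O}\,\middle|\,D=0\right]
\]
and expand the difference as
\[
g_0\tfrac{O_{XU}}{O}-g_{0s}\tfrac{O_X}{O}=(g_0-g_{0s})\left(\tfrac{O_{XU}}{O}-\tfrac{O_X}{O}\right)+(g_0-g_{0s})\tfrac{O_X}{O}+g_{0s}\left(\tfrac{O_{XU}}{O}-\tfrac{O_X}{O}\right).
\]
The second part of Lemma~\ref{lemma:cond_RR_main} says $g_{0s}=E[g_0\mid D=0,X]$ and $O_X=E[O_{XU}\mid D=0,X]$. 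Conditioning on $(X,D=0)$ therefore makes the last two terms vanish, since $O_X$ and $g_{0s}$ are $X$-measurable. For the same reason both error terms have conditional mean zero, so the expectation of the product equals the covariance among the untreated. This gives the first display. Square integrability of $\Delta Y$ and weak overlap (Assumption~\ref{asmp:regularity}) make every product integrable, via Cauchy--Schwarz and the identity $E[(O_{XU}/O)^2\mid D=0]=E[O^{-1}O_{XU}/O\mid D=1]$, which is finite.

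Next I will factor the covariance as $\rho_0\cdot\mathrm{sd}(g_0-g_{0s}\mid D=0)\cdot\mathrm{sd}(O_{XU}-O_X\mid D=0)/O$. The correlation is scale invariant, so the $1/O$ can be dropped from $\rho_0$, matching its definition. Two variance identities remain.

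\textbf{Trend factor.} I claim $\operatorname{Var}(g_0-g_{0s}\mid D=0)=E[(g_0-g_{0s})^2\mid D=0]=C_{0\Delta Y}^2\sigma_{0s}^2$. By definition of the partial $\eta^2$,
\[
\eta^2_{\Delta Y\sim U\mid X,D=0}=\frac{\eta^2_{\Delta Y\sim XU}-\eta^2_{\Delta Y\sim X}}{1-\eta^2_{\Delta Y\sim X}}.
\]
The numerator is $E[(g_0-g_{0s})^2\mid D=0]/\operatorname{Var}(\Delta Y\mid D=0)$, by orthogonality of the nested conditional-expectation projections. The denominator is $E[(\Delta Y-g_{0s})^2\mid D=0]/\operatorname{Var}(\Delta Y\mid D=0)=\sigma^2_{0s}/\operatorname{Var}(\Delta Y\mid D=0)$. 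Taking the ratio gives the claim.

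\textbf{Selection factor.} I need $E[(O_{XU}-O_X)^2\mid D=0]/O^2=C_{0D}^2\nu_{0s}^2$. The uncentered $R^2$ is the right tool here. Because $O_X=E[O_{XU}\mid X,D=0]$, the projection of $O_{XU}$ onto $O_X$ has coefficient $E[O_{XU}O_X]/E[O_X^2]=1$. Hence
\[
R^2_{O_{XU}\sim O_X\mid D=0}=\frac{E[O_X^2\mid D=0]}{E[O_{XU}^2\mid D=0]},
\]
and by Pythagoras $E[O_{XU}^2]=E[O_X^2]+E[(O_{XU}-O_X)^2]$. It follows that
\[
\frac{1-R^2}{R^2}=\frac{E[(O_{XU}-O_X)^2\mid D=0]}{E[O_X^2\mid D=0]}.
\]
Multiplying by $\nu_{0s}^2=E[O_X^2\mid D=0]/O^2$ gives exactly the needed quantity.

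Assembling these pieces gives $\theta-\theta_s=-\rho_0C_{0\Delta Y}C_{0D}S_0$. The main obstacle is bookkeeping over conventions: the $R^2$ for the odds must be read as uncentered so that the regression coefficient of $O_{XU}$ on $O_X$ equals one. I also need to handle degenerate cases. If $\sigma_{0s}=0$, or if $U$ adds no information so that $C_{0D}=0$, then $\rho_0$ is undefined, but both sides are zero and any convention for $\rho_0$ works. Positivity of $E[O_X^2\mid D=0]$ holds because $P(D=1)>0$, so $R^2>0$ and $C_{0D}$ is well defined.
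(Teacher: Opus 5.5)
Your proposal is correct and is essentially the paper's argument: the same three-term expansion, with the cross terms killed by $g_{0s}=E[g_0\mid X,D=0]$ and $O_X=E[O_{XU}\mid X,D=0]$, followed by the identifications $E[(g_0-g_{0s})^2\mid D=0]=C_{0\Delta Y}^2\sigma_{0s}^2$ and $E[(O_{XU}-O_X)^2\mid D=0]/O^2=C_{0D}^2\nu_{0s}^2$, the latter via the uncentered projection of $O_{XU}$ on $O_X$ with unit coefficient. You factor the signed covariance directly where the paper works with its square, which is a harmless difference; there are also two cosmetic slips. The integrability identity should read $E[(O_{XU}/O)^2\mid D=0]=E[O_{XU}/O\mid D=1]$, without the extra $O^{-1}$, though finiteness is unaffected. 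Under the paper's uncentered $\eta^2$ the normalizer is $E[\Delta Y^2\mid D=0]$ rather than $\operatorname{Var}(\Delta Y\mid D=0)$, which cancels in the ratio anyway.
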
 

Theorem~\ref{thm:main_npm} rewrites the average deviation from parallel trends in terms that more conveniently rely on scale-free partial $R^2$ measures of association characterizing the strength of unobserved confounding. The bias decomposes into the product of three non-identifiable bias factors, $\rho_0$, $C_{0\Delta Y}$, $C_{0D}$---which need to be restricted by plausibility judgments on the strength of confounders---and an identifiable scale factor, $S_0$, which is estimable from the observed data. Note that bias arises only when all bias factors are non-zero. 

The ``trend'' component $C^2_{0\Delta Y}$ measures the strength of confounding in explaining trend variation. Formally, this is measured by $\eta^2_{\Delta Y \sim U|X,D=0}$, which is the \emph{non-parametric} partial $R^2$ of $\Delta Y$ with $U$, given $X$, among the untreated. This quantifies how much residual variance of the trend in the control group is explained by the omitted confounders, after taking into account what is already explained by~$X$. The less confounders can explain trend variation in the control group, the less the potential of such confounders to induce deviations from parallel trends. Note this component can always be left unconstrained by the investigator, as it is naturally upper bounded by 1.

The ``selection'' component $C^2_{0D}$ measures the strength of confounding in explaining selection into treatment. Formally, this is driven by $1-R^2_{O_{XU}\sim O_{X}|D=0}$, i.e., the share of variation in the long treatment odds of the control group which cannot be explained by observed covariates alone. While this latter quantity is also an $R^2$ which can be upper bounded by 1, note the selection term $C^2_{0D}$ is given by the ratio, $(1-R^2_{O_{XU}\sim O_{X}|D=0})/R^2_{O_{XU}\sim O_{X}|D=0}$, which can be arbitrarily large. Therefore, the treatment selection component must always be restricted. To aid  plausibility judgments regarding $C^2_{0D}$, we provide alternative characterizations of treatment selection in the next section below.

Together, the two sensitivity parameters, $\eta^2_{\Delta Y \sim U|X,D=0}$ and $1-R^2_{O_{XU}\sim O_{X}|D=0}$, characterize the strength of confounding in explaining trends and treatment odds.  However, for bias to arise, it is not sufficient for confounders to create errors in the trend and selection components; these errors must be systematically aligned.  This is captured by the ``alignment'' component $\rho_0$, defined as the correlation between these two errors in the control arm.  Similarly to the trend component, in the worst case, the alignment component can also be left unconstrained by the investigator, as its magnitude is upper bounded by 1.

Finally, the identifiable scale factor $S_0$ translates these three scale-free measures of the strength of confounding back into the original units of the ATT. It is characterized by the strength of \emph{observed covariates} in explaining variation in outcome trend and selection into treatment. The first component, $\sigma_{0s}^2$, measures the leftover variation of the trend in the control group after taking into account the part explained by $X$. Note the more variation $X$ explains of the outcome trend, the less room there is for unobserved confounding to create bias. Moving to $\nu_{0s}^2$, we see that the opposite relationship holds with respect to the strength of covariates in explaining  treatment assignment. This term measures how well observed covariates separate treated from control units. The stronger this separation, the less identifying variation is being used to estimate the ATT, and the more leverage unobserved confounding has to drive the bias; $\nu_{0s}^2$ is also directly interpretable as the imbalance of observed covariates between treated and control units, as measured by the chi-squared divergence, as we show next.

\subsection{Making sense of treatment selection strength} 
\label{sec:treatment-selection}

As we have seen, the OVB formula for the ATT is particularly sensitive to selection strength, both unobserved (via $C_{0D}^2$) and observed (via $\nu^2_{0s}$). It is therefore useful to develop alternative ways of understanding and reasoning about these quantities. In this section, we provide such characterizations in terms of increases in average treatment odds and covariate imbalance, helping researchers specify plausible upper bounds on the unobserved selection strength $C^2_{0D}$ using the scale most familiar to them in their own applications.

For example, in genetics, psychology, economics and other quantitative social sciences, researchers routinely think in terms of variance explained and related $R^2$ measures \citep{cohen2013statistical,Imbens03,oster_unobservable_2019,cinelli2020making,cinelli2025omitted,chernozhukov2022long}. In medicine and epidemiology, odds and odds ratios are commonly used to characterize selection into treatment \citep{DingVanderweele16}. Covariate imbalance---often summarized by the standardized mean difference, itself a common effect size, \citep{cohen2013statistical}---is a widely used balance diagnostic in observational studies and randomized trials \citep{stuart2010matching}. Each community has developed intuition for what counts as ``large'' or ``small'' on its own scale. Our goal is to formally connect these measures, allowing researchers to approach the OVB problem from different perspectives. The following lemma provides the basic identities linking treatment odds, changes of measure, and covariate imbalance.

\begin{Lemma}
\label{thm:main_connection}
Under the weak overlap condition of Assumption~\ref{asmp:regularity}, the following holds:
\begin{enumerate}[label=(\roman*), leftmargin=*, itemsep=2pt, topsep=3pt]
\item \textbf{Odds ratio as change of measure:}
\(\displaystyle O_{XU}/O
= \dd P_{X,U|D=1}/\dd P_{X,U|D=0}\).

\item \textbf{Control-to-treated odds identity:}
\(\displaystyle E[O_{XU}^2\mid X,D=0]
=O_XE[O_{XU}\mid X,D=1]\).

\item \textbf{Imbalance as average odds:}
\(\displaystyle \chi^2(P_{X,U|D=1}\|P_{X,U|D=0})
=E[O_{XU}/O\mid D=1]-1\).
\end{enumerate}
\end{Lemma}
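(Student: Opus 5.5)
The three identities all follow from Bayes' rule applied to the joint law of $(D,X,U)$. I will prove them in order, each building on the previous one. Throughout, write $\pi(X,U):=P(D=1\mid X,U)$ and $p:=P(D=1)$. Weak overlap in Assumption~\ref{asmp:regularity} guarantees that $O_{XU}$ is finite $P_{X,U\mid D=1}$-almost surely and that the relevant expectations exist.

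\textbf{Part (i).} For any measurable set $A$ in the range of $(X,U)$, Bayes' rule gives
\[
P((X,U)\in A\mid D=1)=E[\mathbbm{1}_A\,\pi(X,U)]/p
\quad\text{and}\quad
P((X,U)\in A\mid D=0)=E[\mathbbm{1}_A\,(1-\pi(X,U))]/(1-p).
\]
Hence
\[
P((X,U)\in A\mid D=1)=E\!\left[\mathbbm{1}_A\,\frac{\pi}{1-\pi}\cdot\frac{1-p}{p}\;\middle|\;D=0\right]=E[\mathbbm{1}_A\, O_{XU}/O\mid D=0].
\]
This is exactly the defining property of the Radon--Nikodym derivative. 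One point needs care: the event $\{\pi=1\}$ could carry treated mass, which would break absolute continuity. Weak overlap rules this out, because it forces $1/(1-\pi)<\infty$ almost surely under $D=1$. So $P_{X,U\mid D=1}\ll P_{X,U\mid D=0}$, and the identity holds.

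\textbf{Part (ii).} Apply the same change-of-measure argument conditionally on $X$: for any integrable $h(X,U)$,
\[
E[h\mid X,D=1]=E[h\,O_{XU}\mid X,D=0]/O_X .
\]
The normalizing constant here is $O_X$, which follows from Lemma~\ref{lemma:cond_RR_main}'s fact that $O_X=E[O_{XU}\mid X,D=0]$, or directly from the conditional version of the Bayes computation above. Taking $h=O_{XU}$ gives $E[O_{XU}^2\mid X,D=0]=O_X\,E[O_{XU}\mid X,D=1]$. This argument assumes $E[O_{XU}\mid X,D=1]$ is finite; if it is infinite, both sides are $+\infty$ by the same identity for nonnegative $h$ with monotone limits.

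\textbf{Part (iii).} By definition and part (i),
\[
\chi^2=E[(O_{XU}/O)^2\mid D=0]-2E[O_{XU}/O\mid D=0]+1 .
\]
Part (i) with $A$ the whole space gives $E[O_{XU}/O\mid D=0]=1$. Part (i) applied to $h=O_{XU}/O$ gives $E[(O_{XU}/O)^2\mid D=0]=E[O_{XU}/O\mid D=1]$. Substituting both yields $\chi^2=E[O_{XU}/O\mid D=1]-1$.

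\textbf{Main obstacle.} The only delicate step is the measure-theoretic justification: establishing absolute continuity and finiteness from weak overlap, and handling the nonnegative possibly-infinite integrals via monotone convergence. The algebra itself is routine.
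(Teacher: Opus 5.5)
Your proposal is correct and follows essentially the same route as the paper: (i) and (ii) come from the Bayes-rule change of measure in Proposition~\ref{prop: among_odds}(1) and the identity $\int (\dd P_1/\dd P_0)^2\,\dd P_0=\int (\dd P_1/\dd P_0)\,\dd P_1$ (with $P_d$ the relevant law given $D=d$) in Proposition~\ref{prop: among_odds}(3), and (iii) comes from expanding the $\chi^2$-divergence and using $E[O_{XU}/O\mid D=0]=1$ (Proposition~\ref{prop: odds_and_divergence}(2)). You also check explicitly that weak overlap rules out treated mass on $\{P(D=1\mid X,U)=1\}$, which gives absolute continuity and finiteness; the paper's density-based Bayes computation leaves this step implicit.
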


Applying these identities yields equivalent expressions for the selection components $C^2_{0D}$ and $\nu^2_{0s}$ in terms of increase in average treatment odds and $\chi^2$-divergence.

\begin{Corollary}[Alternative characterizations of $C_{0D}^2$ and $\nu^2_{0s}$]
\label{thm:alt-selection}
Under the weak overlap condition of Assumption~\ref{asmp:regularity}, $C^2_{0D}$ and $\nu^2_{0s}$ of Theorem~\ref{thm:main_npm} admit the additional interpretations:
\begin{enumerate}
\item Increase in average treatment odds among the treated:
{
\begin{align*}
C_{0D}^2 = \frac{E[O_{XU}\mid D=1] - E[O_X\mid D=1]}{E[O_X\mid D=1]}, \qquad 
\nu^2_{0s} = \frac{E[O_X|D=1]}{O}.
\end{align*}
}

\item Covariate imbalance between treated and control units:
{
\begin{align*}
C_{0D}^2 = \frac{\chi^2(P_{X,U|D=1}\|P_{X,U|D=0}) - \chi^2(P_{X|D=1}\|P_{X|D=0})}{\chi^2(P_{X|D=1}\|P_{X|D=0}) + 1}, \qquad 
\nu^2_{0s}=\chi^2(P_{X|D=1}\|P_{X|D=0})+1.
\end{align*}
        
}
\end{enumerate}
\end{Corollary}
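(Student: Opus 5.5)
The plan is to express $R^2_{O_{XU}\sim O_X\mid D=0}$ through second moments of the odds under the control distribution, and then translate those moments into treated-group averages using Lemma~\ref{thm:main_connection}. Since $O_X = E[O_{XU}\mid D=0,X]$ by Lemma~\ref{lemma:cond_RR_main}, $O_X$ is the orthogonal projection of $O_{XU}$ onto functions of $X$ in $L^2(P_{\cdot\mid D=0})$. The $R^2$ is uncentered, as the notation section allows; I read it this way because otherwise the formula for $C^2_{0D}$ would not match the claim. Pythagoras then gives
\[
R^2_{O_{XU}\sim O_X\mid D=0}=\frac{E[O_X^2\mid D=0]}{E[O_{XU}^2\mid D=0]},\qquad C^2_{0D}=\frac{E[O_{XU}^2\mid D=0]-E[O_X^2\mid D=0]}{E[O_X^2\mid D=0]}.
\]
Weak overlap guarantees that these moments are finite.

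The next step converts control-group second moments into treated-group first moments. For the long odds, I take the expectation of Lemma~\ref{thm:main_connection}(ii) over $X\mid D=0$, which gives $E[O_{XU}^2\mid D=0]=E[O_X E[O_{XU}\mid X,D=1]\mid D=0]$. By Lemma~\ref{thm:main_connection}(i) applied to $X$ alone, $O_X/O = \dd P_{X\mid D=1}/\dd P_{X\mid D=0}$; the same Bayes argument works, or one marginalizes (i). This change of measure turns the expression into $O\,E[O_{XU}\mid D=1]$. For the short odds, the same change of measure gives $E[O_X^2\mid D=0]=O\,E[O_X\mid D=1]$ directly.

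Substituting these into the formula for $C^2_{0D}$ yields part 1. The same identity yields $\nu^2_{0s}=E[(O_X/O)^2\mid D=0]=E[O_X\mid D=1]/O$.

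For part 2, Lemma~\ref{thm:main_connection}(iii) gives $\chi^2(P_{X,U\mid D=1}\|P_{X,U\mid D=0})+1=E[O_{XU}\mid D=1]/O$. The analogous $X$-only version, proved identically via $\chi^2(P\|Q)+1=E_P[\dd P/\dd Q]$, gives $\chi^2(P_{X\mid D=1}\|P_{X\mid D=0})+1=E[O_X\mid D=1]/O$. Dividing the numerator and denominator of part 1 by $O$ then gives the stated ratio of $\chi^2$ differences, and $\nu^2_{0s}=\chi^2_X+1$.

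The main obstacle is bookkeeping rather than depth. One point is confirming that the $R^2$ is uncentered so that the Pythagorean ratio is exact. The other is justifying the $X$-only analogues of (i) and (iii), which follow by integrating $U$ out of the Radon--Nikodym derivative using $O_X=E[O_{XU}\mid X,D=0]$. Finiteness of all terms follows from weak overlap.
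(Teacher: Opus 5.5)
Your proposal is correct and follows essentially the same route as the paper: write the uncentered $R^2_{O_{XU}\sim O_X\mid D=0}$ as $E[O_X^2\mid D=0]/E[O_{XU}^2\mid D=0]$, then turn these control-group second moments into treated-group first moments using the control-to-treated odds identity and the change of measure $O_X/O=\dd P_{X\mid D=1}/\dd P_{X\mid D=0}$. The only cosmetic difference is in part 2: the paper substitutes the second-moment form $\chi^2(P_{X\mid D=1}\|P_{X\mid D=0})+1=E[(O_X/O)^2\mid D=0]$ directly into the $R^2$ ratio, whereas you divide part 1 by $O$ using the equivalent treated-average form $\chi^2(P_{X\mid D=1}\|P_{X\mid D=0})+1=E[O_X/O\mid D=1]$.
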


The odds characterization gives a direct way to translate substantive restrictions on the propensity score into bounds on $C_{0D}^2$. For example, if one believes that the inclusion of $U$ at most doubles the average odds of treatment, among treated units, then Corollary~\ref{thm:alt-selection} implies $C_{0D}^2 \le 1$. This relationship also clarifies the connection between the selection component $C^2_{0D}$ of the OVB formula and sensitivity models based on worst-case bounds on odds ratios. The marginal sensitivity model of \citet{tan2006distributional}, for instance, assumes
\[
\Lambda^{-1} \leq O_{XU}/O_X \leq \Lambda
\]
almost surely, for some $\Lambda >1$. Such a restriction implies the sharp bound (see Corollary~\ref{cor:msm-parameterization}),
\[
C_{0D}^2 \leq (\Lambda-1)^2/\Lambda.
\]
Thus, if one believes $U$ cannot double (or halve) the odds of treatment not only on average, but uniformly, this implies the tighter restriction $C^2_{0D} \leq 1/2$. This also makes it clear that worst-case restrictions on odds ratios are sufficient, but not necessary, to bound the bias.

Beyond treatment odds, Corollary~\ref{thm:alt-selection} also provides a covariate balance interpretation of $C_{0D}^2$. It shows that $C_{0D}^2$ captures the additional imbalance introduced by $U$, relative to the imbalance already captured by $X$. Checking covariate balance is standard practice in applied research, often through standardized mean differences (SMD) \citep{imbens2015causal,baker2025difference}. As the OVB formula shows, however, the imbalance measure directly relevant for quantifying the bias of the ATT is the $\chi^2$-divergence, which summarizes the full distributional difference between treated and control groups.

\subsection{On the choice of parameterization of the ATT bias}
\label{sec:parameterization}

Theorem~\ref{thm:main_npm} provides a novel parameterization of the omitted variable bias of the ATT. This choice of parameterization is not innocuous. It determines the quantities about which researchers must make plausibility judgments, as well as the allowable values that such sensitivity parameters can take. In this section, we present some key differences of our result from prior OVB formulas for the ATT derived in \citet{chernozhukov2022long} and \citet{huang2025variance}.  A complete analysis is deferred to Appendix~\ref{app:all-comparison}.

The result of \citet{chernozhukov2022long} applied to the ATT can be written as
\begin{align}
\theta-\theta_s
&=\rho\,C_{\Delta Y}\,C_D\,S,
\label{eq:lls}
\end{align}
where
\(
C_{\Delta Y}^2:=\eta^2_{\Delta Y\sim U\mid X,D},
C_D^2:=\frac{E[O_{XU}]-E[O_X]}{E[O_X]},
S^2:=E\!\left[\operatorname{Var}(\Delta Y\mid X,D)\right]\frac{E[O_X]}{P(D=1)^2},
\) and
\(
\rho:=
\operatorname{Cor}\left(
E[\Delta Y\mid D,X,U]-E[\Delta Y\mid D,X],\;
-(1-D)(O_{XU}-O_X)
\right).
\)

Note that the sensitivity parameters in (\ref{eq:lls}) are defined in terms of the full population,  even though Theorem~\ref{thm:main_npm} shows that the bias for the ATT can be localized solely on the untreated units. Therefore, the parameterization in (\ref{eq:lls}) asks users to make plausibility judgments on quantities that are not directly relevant for the bias, such as $\eta^2_{\Delta Y\sim U\mid X,D}$ instead of $\eta^2_{\Delta Y\sim U\mid X,D=0}$. 
This fact has consequences beyond interpretation. In particular, these unconditional trend and alignment bias factors obey the following restriction.

\begin{Proposition}
\label{prop:lls-restriction}
The trend and alignment components in equation~\eqref{eq:lls} satisfy
\[
\rho^2 C_{\Delta Y}^2
=
\frac{P(D=0)\,\sigma_{0s}^2}
{E[\operatorname{Var}(\Delta Y\mid X,D)]}\,
\rho_0^2\,C_{0\Delta Y}^2
\;\leq\;
\frac{P(D=0)\,\sigma_{0s}^2}
{E[\operatorname{Var}(\Delta Y\mid X,D)]}.
\]
\end{Proposition}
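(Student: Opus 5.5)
The plan is to expand both sides directly from their definitions. The key observation is that every ingredient of the unconditional pair $(\rho, C_{\Delta Y})$ localizes onto the control arm. Write $g:=E[\Delta Y\mid D,X,U]$ and $g_s:=E[\Delta Y\mid D,X]$, and set $A:=-(1-D)(O_{XU}-O_X)$, so that
\[
\rho=\operatorname{Cor}(g-g_s,A).
\]

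The first step is to write the nonparametric partial $R^2$ as a ratio of second moments. By the law of total variance,
\[
C_{\Delta Y}^2=\frac{\eta^2_{\Delta Y\sim X,D,U}-\eta^2_{\Delta Y\sim X,D}}{1-\eta^2_{\Delta Y\sim X,D}}
=\frac{\operatorname{Var}(g)-\operatorname{Var}(g_s)}{E[\operatorname{Var}(\Delta Y\mid X,D)]}
=\frac{E[(g-g_s)^2]}{E[\operatorname{Var}(\Delta Y\mid X,D)]}.
\]
The last equality holds because $g_s=E[g\mid D,X]$, so $g-g_s$ is orthogonal to $g_s$. The same computation within the control arm gives
\[
C^2_{0\Delta Y}=\frac{E[(g_0-g_{0s})^2\mid D=0]}{\sigma^2_{0s}}.
\]

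Next I use that both error terms have conditional mean zero. We have $E[g-g_s\mid D,X]=0$. Also $E[O_{XU}-O_X\mid D=0,X]=0$ by Lemma~\ref{lemma:cond_RR_main}, so $E[A\mid D,X]=0$. Hence all covariances reduce to expectations of products, and all variances to second moments.

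Multiplying out, the factor $E[(g-g_s)^2]$ cancels between $\rho^2$ and $C^2_{\Delta Y}$:
\[
\rho^2C^2_{\Delta Y}=\frac{E[(g-g_s)A]^2}{E[A^2]\,E[\operatorname{Var}(\Delta Y\mid X,D)]}.
\]
Since $A$ vanishes on $D=1$, and $g-g_s=g_0-g_{0s}$ on $D=0$, we get
\[
E[(g-g_s)A]=-P(D=0)\,E[(g_0-g_{0s})(O_{XU}-O_X)\mid D=0],
\qquad
E[A^2]=P(D=0)\,E[(O_{XU}-O_X)^2\mid D=0].
\]
Therefore
\[
\rho^2C^2_{\Delta Y}=\frac{P(D=0)\,E[(g_0-g_{0s})(O_{XU}-O_X)\mid D=0]^2}{E[(O_{XU}-O_X)^2\mid D=0]\,E[\operatorname{Var}(\Delta Y\mid X,D)]}.
\]

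The identical cancellation applied to the control-arm quantities gives
\[
\rho_0^2C^2_{0\Delta Y}=\frac{E[(g_0-g_{0s})(O_{XU}-O_X)\mid D=0]^2}{E[(O_{XU}-O_X)^2\mid D=0]\,\sigma^2_{0s}}.
\]
Comparing the two displays yields the stated equality. The inequality then follows from $\rho_0^2\le 1$ and $C^2_{0\Delta Y}=\eta^2_{\Delta Y\sim U\mid X,D=0}\le 1$.

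The only delicate point I anticipate is degeneracy, where a variance in a denominator vanishes. This happens, for example, when $O_{XU}=O_X$ a.s. on $D=0$, or when $g=g_s$ a.s. In that case the correlations are undefined. I would handle it by adopting the convention that the product $\rho^2C^2$ is zero, which is consistent with the covariance forms, since both sides are then zero. Square-integrability under Assumption~\ref{asmp:regularity} guarantees that all the moments above are finite.
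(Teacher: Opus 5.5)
Your proof is correct and is essentially the paper's argument. The paper multiplies the weighted-average form of $C_{\Delta Y}^2$ in Proposition~\ref{prop: CB_connection}(i) by the expression for $\rho^2$ in \eqref{eq:rho-unc-explicit} and lets the treated-group terms cancel; that is exactly your cancellation of $E[(g-g_s)^2]$, followed by localizing the covariance and $E[A^2]$ to the control arm. You just perform the cancellation directly, without the intermediate weights, and your degeneracy convention matches the paper's implicit assumption that the relevant control-arm variances are positive.
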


Thus, whenever the residual outcome variance among treated units is positive, the upper bound in Proposition~\ref{prop:lls-restriction} is strictly smaller than one, meaning that the trend and alignment components cannot vary freely. For example, applying the bounds $|\rho|\leq1$ and $C_{\Delta Y}\leq1$ directly to (\ref{eq:lls}) gives
\(
|\theta-\theta_s|\leq C_D\,S,
\)
whereas Proposition~\ref{prop:lls-restriction} implies the smaller bound
\(
|\theta-\theta_s|
\leq
\sqrt{
\frac{P(D=0)\,\sigma_{0s}^2}
{E[\operatorname{Var}(\Delta Y\mid X,D)]}
}
\,C_D\,S
=
C_{0D}\,S_0,
\)
which is exactly the bound implied by our OVB decomposition in Theorem~\ref{thm:main_npm}.

As for the characterization of \citet{huang2025variance}, in Appendix~\ref{app:hp-comparison} we show that it admits the following representation:
\begin{align}
\theta-\theta_s
&=-\rho_{w0}\,C_{w0D}\,S_{w0},
\label{eq:hp}
\end{align}
where
\(
\rho_{w0}:=\operatorname{Cor}(\Delta Y,O_{XU}-O_X\mid D=0),
C_{w0D}^2:=\frac{E[O_{XU}]-E[O_X]}{E[O_X]-O},
\)
and
\(
S_{w0}^2
:=
\operatorname{Var}(\Delta Y\mid D=0)\,
\frac{P(D=0)\{E[O_X]-O\}}{P(D=1)^2}.
\)  For sensitivity analysis, \citet{huang2025variance} propose bounding $|\rho_{w0}|$ by the estimable quantity $\bar\rho_{w0}:= \sqrt{
1-\operatorname{Cor}^2(O_X,\Delta Y\mid D=0)
}.$

The first noticeable difference with respect to our result in Theorem~\ref{thm:main_npm} concerns the selection components. If $X$ does not predict treatment, which includes the canonical 2x2 DiD setup without covariates, $C_{w0D}$ has a zero denominator, and the decomposition (\ref{eq:hp}) is not well defined. If $X$ only weakly predicts treatment, (\ref{eq:hp}) is well defined, but $C_{w0D}$ can grow arbitrarily large, while the estimation of $S_{w0}^2$ becomes difficult.

The second difference concerns the trend and alignment components. The parameterization in equation~\eqref{eq:hp} does not contain a separate sensitivity parameter measuring the strength of confounding with the untreated outcome trend. Instead, trend and alignment components are bundled into a single correlation $\rho_{w0}$, and separate plausibility judgments about these parameters are not possible. Furthermore, the parameterization obeys the following restriction.

\begin{Proposition} 
\label{prop:hp-correlation}
When well defined, $\rho_{w0}$ in equation~\eqref{eq:hp} satisfies

\[
|\rho_{w0}|
\leq
\sqrt{
\frac{\sigma_{0s}^2}
{\operatorname{Var}(\Delta Y\mid D=0)}
}
\leq
\bar\rho_{w0}.
\]
The first equality holds iff $\rho_0^2C_{0\Delta Y}^2=1$, and the second iff $g_{0s}$ is an affine function of~$O_X$ among the untreated.
\end{Proposition}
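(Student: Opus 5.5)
Since $O$ is constant, $\rho_{w0}=\operatorname{Cor}(\Delta Y, O_{XU}-O_X\mid D=0)$ is unchanged if we rescale the weights. The plan is to compute the numerator and the denominator of this correlation separately, working in the control population, and then compare the result with $\operatorname{Cor}(O_X,\Delta Y\mid D=0)$.

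\textbf{Step 1 (replace $\Delta Y$ by $g_0-g_{0s}$ in the covariance).} Write $\Delta Y=(\Delta Y-g_0)+(g_0-g_{0s})+g_{0s}$. Set $e:=O_{XU}-O_X$, where all expectations below are given $D=0$. By Lemma~\ref{lemma:cond_RR_main}, $E[e\mid X,D=0]=0$. The residual $\Delta Y-g_0$ is orthogonal to every function of $(X,U)$, and $g_{0s}$ is a function of $X$ alone. Hence
\[
\operatorname{Cov}(\Delta Y,e\mid D=0)=\operatorname{Cov}(g_0-g_{0s},e\mid D=0).
\]
By Cauchy--Schwarz this gives
\[
|\rho_{w0}|=\frac{|\operatorname{Cov}(g_0-g_{0s},e\mid D=0)|}{\sqrt{\operatorname{Var}(\Delta Y\mid D=0)\operatorname{Var}(e\mid D=0)}}=|\rho_0|\sqrt{\frac{\operatorname{Var}(g_0-g_{0s}\mid D=0)}{\operatorname{Var}(\Delta Y\mid D=0)}}.
\]

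\textbf{Step 2 (identify the variance ratio).} We have $\operatorname{Var}(g_0-g_{0s}\mid D=0)=E[(g_0-g_{0s})^2\mid D=0]$, because the mean is zero. By the definition of the nonparametric partial $R^2$, this equals $C_{0\Delta Y}^2\,\sigma_{0s}^2$. Therefore
\[
\rho_{w0}^2=\rho_0^2C_{0\Delta Y}^2\,\frac{\sigma_{0s}^2}{\operatorname{Var}(\Delta Y\mid D=0)}.
\]
Since $\rho_0^2\le1$ and $C_{0\Delta Y}^2\le1$, the first inequality follows, with equality iff $\rho_0^2C_{0\Delta Y}^2=1$. The statement's ``first equality'' is read as equality in the first inequality.

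\textbf{Step 3 (second inequality).} Use the law of total variance: $\operatorname{Var}(\Delta Y\mid D=0)=\sigma_{0s}^2+\operatorname{Var}(g_{0s}\mid D=0)$. This gives
\[
\frac{\sigma_{0s}^2}{\operatorname{Var}(\Delta Y\mid D=0)}=1-\frac{\operatorname{Var}(g_{0s}\mid D=0)}{\operatorname{Var}(\Delta Y\mid D=0)}.
\]
Next, $O_X$ is a function of $X$, so $\operatorname{Cov}(O_X,\Delta Y\mid D=0)=\operatorname{Cov}(O_X,g_{0s}\mid D=0)$. Then
\[
\operatorname{Cor}^2(O_X,\Delta Y\mid D=0)=\operatorname{Cor}^2(O_X,g_{0s}\mid D=0)\,\frac{\operatorname{Var}(g_{0s}\mid D=0)}{\operatorname{Var}(\Delta Y\mid D=0)}\le\frac{\operatorname{Var}(g_{0s}\mid D=0)}{\operatorname{Var}(\Delta Y\mid D=0)}.
\]
Taking $1-{}$ of both sides yields $\sigma_{0s}^2/\operatorname{Var}(\Delta Y\mid D=0)\le\bar\rho_{w0}^2$.

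Equality holds iff $\operatorname{Cor}^2(O_X,g_{0s}\mid D=0)=1$, i.e.\ iff $g_{0s}$ is an affine function of $O_X$ among the untreated. If $g_{0s}$ is constant, both sides equal one, and this is consistent because a constant is trivially affine. This degenerate case needs care with the convention for a correlation involving a zero-variance variable.

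\textbf{Main obstacle and well-definedness.} The main obstacle is bookkeeping rather than depth: checking that ``well defined'' ensures $\operatorname{Var}(e\mid D=0)>0$ and $\operatorname{Var}(\Delta Y\mid D=0)>0$. Positivity of $\operatorname{Var}(e\mid D=0)$ is equivalent, via Corollary~\ref{thm:alt-selection}, to $C_{0D}^2>0$. The other point to check is that square-integrability of $O_{XU}$ under $D=0$ follows from weak overlap via Lemma~\ref{thm:main_connection}(ii). With these in place, Steps 1–3 are direct computations.
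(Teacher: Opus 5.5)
Your proposal is correct and follows essentially the same route as the paper. Steps 1--2 rederive the identity $\rho_{w0}^2=\rho_0^2C_{0\Delta Y}^2\,\sigma_{0s}^2/\operatorname{Var}(\Delta Y\mid D=0)$, which the paper takes from Proposition~\ref{prop: HS_connection}(ii) and \eqref{eq:R2-DY-g0s}, and Step 3 uses the same reduction $\operatorname{Cov}(O_X,\Delta Y\mid D=0)=\operatorname{Cov}(O_X,g_{0s}\mid D=0)$ with Cauchy--Schwarz that the paper uses for the equality case; the only difference is that you obtain the second inequality directly from that step rather than citing the maximal-correlation property \eqref{eq:maximal-cor}, which makes your argument slightly more self-contained.
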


Consequently, their reported upper bound is larger than necessary. When expressed in our parameterization, it can be attained only when the trend and alignment components reach their worst-case values and, among untreated units, the conditional outcome trend can be expressed as an affine function of the treatment odds. 

The consequences of these different parameterizations are especially transparent under the marginal sensitivity model (MSM) of \citet{tan2006distributional} introduced in Section~\ref{sec:treatment-selection}. The same restriction on treatment odds yields the following sharp bounds.

\begin{Corollary}
\label{cor:msm-parameterization}
Under the MSM, the selection parameters satisfy
\[
C_D^2\leq\left(\frac{E[O_X]-p}{E[O_X]}\right)\frac{(\Lambda-1)^2}{\Lambda},\qquad C_{0D}^2\leq\frac{(\Lambda-1)^2}{\Lambda},\qquad C_{w0D}^2\leq\left(\frac{E[O_X]-p}{E[O_X]-O}\right)\frac{(\Lambda-1)^2}{\Lambda}.
\]
All three bounds are sharp, with the last applying when $C_{w0D}^2$ is well defined.
\end{Corollary}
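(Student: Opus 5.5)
The plan is to prove the bound on $C_{0D}^2$ first and then get the other two from it, since all three selection parameters turn out to share the same numerator up to identified factors. Throughout, $p:=P(D=1)$. Write $R:=O_{XU}/O_X$. The MSM then says $\Lambda^{-1}\le R\le \Lambda$ almost surely. By Lemma~\ref{lemma:cond_RR_main}, $E[O_{XU}\mid X,D=0]=O_X$, so $E[R\mid X,D=0]=1$.

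\textbf{Step 1: the bound on $C_{0D}^2$.} Lemma~\ref{thm:main_connection}(ii) gives $E[O_{XU}^2\mid X,D=0]=O_X\,E[O_{XU}\mid X,D=1]$. Combining this with the mean identity above,
\[
E[O_{XU}\mid X,D=1]-O_X=\frac{\operatorname{Var}(O_{XU}\mid X,D=0)}{O_X}=O_X\operatorname{Var}(R\mid X,D=0).
\]
Conditionally on $X$, the variable $R$ has mean $1$ and support in $[\Lambda^{-1},\Lambda]$. The Bhatia--Davis inequality, $\operatorname{Var}\le(b-\mu)(\mu-a)$, then gives $\operatorname{Var}(R\mid X,D=0)\le(\Lambda-1)(1-\Lambda^{-1})=(\Lambda-1)^2/\Lambda$. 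Taking expectations over $X\mid D=1$ and using Corollary~\ref{thm:alt-selection}(1) yields $C_{0D}^2\le(\Lambda-1)^2/\Lambda$.

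\textbf{Step 2: the bounds on $C_D^2$ and $C_{w0D}^2$.} I would write both unconditional expectations as mixtures over $D$:
\[
E[O_{XU}]-E[O_X]=p\,\{E[O_{XU}\mid D=1]-E[O_X\mid D=1]\}+(1-p)\{E[O_{XU}\mid D=0]-E[O_X\mid D=0]\}.
\]
The second bracket vanishes by the tower property and $E[O_{XU}\mid X,D=0]=O_X$. Hence the numerator shared by $C_D^2$ and $C_{w0D}^2$ equals $p\,E[O_X\mid D=1]\,C_{0D}^2$. Next, Lemma~\ref{thm:main_connection}(i) applied with $X$ alone shows that $O_X/O$ is a density ratio, so $E[O_X\mid D=0]=O$ and $(1-p)O=p$. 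Therefore $E[O_X]=p\,E[O_X\mid D=1]+p$, that is, $p\,E[O_X\mid D=1]=E[O_X]-p$. This gives
\[
C_D^2=\frac{E[O_X]-p}{E[O_X]}\,C_{0D}^2,\qquad C_{w0D}^2=\frac{E[O_X]-p}{E[O_X]-O}\,C_{0D}^2,
\]
and the stated bounds follow from Step 1.

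\textbf{Step 3: sharpness.} Because the three parameters are identified multiples of $C_{0D}^2$, it is enough to attain equality in Step 1. Bhatia--Davis is tight exactly when $R$ is two-point on the endpoints. I would therefore take $U$ binary and independent of $X$ among controls, with $R=\Lambda$ when $U=1$ and $R=\Lambda^{-1}$ when $U=0$. Setting $P(U=1\mid X,D=0)=1/(\Lambda+1)$ makes $E[R\mid X,D=0]=1$. The treated conditional law would then be defined by the change of measure of Lemma~\ref{thm:main_connection}(i), namely $\dd P_{U\mid X,D=1}/\dd P_{U\mid X,D=0}=R$. This leaves the observed law of $(X,D)$ unchanged and gives $O_{XU}=O_X R$, which satisfies the MSM with equality in the variance bound.

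I expect the main obstacle to be this last verification: checking that the construction is a legitimate joint distribution, compatible with the observed data and satisfying Assumption~\ref{asmp:regularity}. The concern is whether weak overlap survives the rescaling of $O_X$ by at most a factor $\Lambda$. I would argue that it does, since a bounded factor cannot turn a finite overlap moment into an infinite one.
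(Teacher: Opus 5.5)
Your proof is correct, and it runs in the opposite direction from the paper's.

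\textbf{The paper's route.} The paper proves none of the MSM bounds from scratch. It imports the sharp bound on $C_D^2$ from Theorem~3 of Chernozhukov et al.\ (2022). It then establishes the exact identities $C_D^2=\frac{E[O_X]-p}{E[O_X]}\,C_{0D}^2$ and $C_{w0D}^2=\frac{E[O_X]-p}{E[O_X]-O}\,C_{0D}^2$ via Propositions~\ref{prop: CB_connection}(ii), \ref{prop: HS_connection}(i) and \ref{prop: among_odds}(4). Dividing by the positive identified factor gives the $C_{0D}^2$ bound. Sharpness of all three bounds is inherited from the cited theorem, because the factors are fixed by the observed law.

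\textbf{Your route.} You instead prove the $C_{0D}^2$ bound directly. You write $C_{0D}^2$ as an $O_X$-weighted average, over $X\mid D=1$, of $\operatorname{Var}(O_{XU}/O_X\mid X,D=0)$, and cap each conditional variance by Bhatia--Davis. You then recover the same two proportionality identities with a one-line mixture-over-$D$ argument, and certify sharpness with an explicit two-point confounder.

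\textbf{What each buys.} Your argument is self-contained: neither the bound nor its sharpness rests on the external theorem. It also exhibits the extremal confounder explicitly. Finally, it explains transparently why the $C_{0D}^2$ bound does not depend on the observed assignment: it is an average of per-$X$ variances, each capped by the same constant $(\Lambda-1)^2/\Lambda$. The paper's route is shorter, and it frames the three MSM bounds as one fact expressed on three scales.

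\textbf{The overlap step.} The concern you flag closes easily. Write $R=O_{XU}/O_X$ for your constructed confounder. Then $1+O_{XU}=1+O_XR\le\Lambda(1+O_X)$. Moreover, your Step~1 identity (nonnegativity of a conditional variance) gives $E[O_X\mid D=1]\le E[O_{XU}\mid D=1]$ for the true confounder, which is finite by Assumption~\ref{asmp:regularity}. Hence the constructed $U$ also satisfies weak overlap. You can keep the observed law of $(\Delta Y,D,X)$ intact by taking the new $U$ independent of $\Delta Y$ given $(X,D)$, since none of the selection parameters involve $\Delta Y$.
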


Thus, under the same MSM, the sharp bound for $C_{0D}^2$ is invariant to the observed treatment assignment, whereas the sharp bounds for the alternative parameterizations depend on it, with that for $C_{w0D}^2$ diverging and becoming uninformative as observed imbalance vanishes.

\section{Benchmarking the strength of unobserved confounding} 
\label{sec:benchmarking}

The main difficulty in sensitivity analysis is specifying plausible strengths for unobserved confounding. While in some cases researchers may not be able to  make absolute judgments about the magnitude of an omitted confounder, they may still have grounds to make judgments of its relative importance. For example, a researcher may be willing to argue that any unobserved confounder is no stronger than some observed covariate, or that post-treatment confounding is no stronger than the confounding revealed by pre-treatment periods. In this section, we develop two benchmarking procedures that formalize these comparisons. 

\subsection{Comparing unobserved confounders against observed covariates}
\label{sec:bench-covariates}

Our first analysis compares the gains in explanatory power from unobserved confounders with those from key observed covariates. Let $X_j$ denote the benchmark covariates against which we wish to calibrate the strength of $U$, and let $X_{-j}$ denote the remaining covariates, so that $X = (X_j, X_{-j})$. We define the following relative strength parameters:
\begin{align*}
k_{0\Delta Y,j} := \frac{ \eta^2_{\Delta Y \sim U,X_j \mid X_{-j},D=0} -  \eta^2_{\Delta Y \sim X_j \mid X_{-j},D=0}}{\eta^2_{\Delta Y \sim X_j \mid X_{-j},D=0}},
\quad 
k_{0D,j} := \frac{R^2_{O_{X} \sim O_{X_{-j}}\mid D=0} - R^2_{O_{XU} \sim O_{X_{-j}}\mid D=0}}{1 - R^2_{O_X \sim O_{X_{-j}}\mid D=0}}.
\end{align*}
The parameter $k_{0\Delta Y, j}$ measures how much additional variation in the untreated outcome evolution is explained by $U$, relative to the observed gains in variation explained by $X_j$. For example,  $k_{0\Delta Y, j} =1$ implies that further adding $U$ to the trend regression would lead to similar additive gains in explanatory power to those observed by adding $X_j$. The parameter $k_{0D, j}$ uses the same logic to measure the relative gain in explanatory power with the treatment odds. As discussed in Section~\ref{sec:treatment-selection}, $k_{0D, j}$ admits alternative interpretations in terms of the relative increase in the \textit{average treatment odds} or the additional \textit{imbalance between treated and control groups} attributable to $U$, relative to that attributable to $X_j$.

These measures allow us to re-express the absolute strength of $U$ in the OVB formula of Theorem~\ref{thm:main_npm} in terms of its relative strength as compared to that of $X_j$:
\begin{align*}
\eta^2_{\Delta Y \sim U \mid X,D=0} = k_{0\Delta Y,j} 
\left(\frac{\eta^2_{\Delta Y \sim X_j \mid X_{-j},D=0}}{1 - \eta^2_{\Delta Y \sim X_j \mid X_{-j},D=0}}\right), 
\quad 
1 - R^2_{O_{XU} \sim O_X\mid D=0} &= k_{0D,j}
\left(\frac{1 - R^2_{O_X \sim O_{X_{-j}}\mid D=0}}{R^2_{O_X \sim O_{X_{-j}}\mid D=0}}\right).
\end{align*}
Therefore, plausibility judgments on the \emph{relative importance} of $U$ compared to $X_j$, both in explaining outcome evolution, and treatment odds, can be leveraged to bound the bias. 

Finally, one may benchmark plausible values for the correlation of errors $\rho_0$. Notice that $\rho_0$ is not a measure of explanatory power of the confounders. Rather, it measures how systematically $U$ shifts the odds of treatment and outcome evolution in the same direction. This is partly connected to the functional form of confounding. For example, if $U$ enters the treatment and trend equations with different functional forms, this will generally attenuate the bias. To calibrate this empirically, we may use as a reference the observed correlation between the errors in the outcome regression and treatment odds induced by $X_j$, as measured by
\[
\rho_{0,j}:=\operatorname{Cor}\!\left(g_{0s} - g_{0s,-j},\; O_X - O_{X_{-j}} \,\big|\, D=0\right).
\]
Benchmarking the bias factors $\rho_0$, $C_{0\Delta Y}$ and $C_{0D}$ separately allows researchers to entertain richer confounding scenarios. For example, one may posit that an unobserved confounder is comparable to $X_j$ in terms of explaining treatment selection and outcome evolution, while remaining agnostic about alignment. Further details can be found in Appendix~\ref{app:bench-covariates}.

\subsection{Comparing post-treatment bias against pre-treatment bias}
\label{sec:bench-pre-trend}

We now connect the OVB framework to pre-trend extrapolation approaches such as \citet{rambachan2023more}. Existing methods typically bound the post-treatment bias by extrapolating deviations from parallel trends observed before treatment. Our decomposition shows that these extrapolation assumptions can instead be interpreted as restrictions on the underlying confounding mechanism itself, and also suggests natural alternative ways of both interpreting and augmenting them. We consider the simplest case of one additional pre-treatment period, and leave extensions to multiple time periods to future work.

\subsubsection*{Quantifying pre-treatment bias}

Consider three time periods $t \in \{0, 1, 2\}$, with treatment occurring after
$t = 1$. To estimate the (placebo) effect in the pre-treatment period, we take
$t=0$ as the reference and estimate the effect at $t=1$. Let
$\Delta Y^{\text{pre}} := Y_1 - Y_0$ denote the observed outcome evolution prior
to treatment, and let $(X^{\text{pre}}, U^{\text{pre}})$ denote the covariates
that would make parallel trends hold in the pre-treatment period. Throughout, we
write $\theta^{\text{pre}}$, $\theta_s^{\text{pre}}\neq 0$, $\rho_0^{\text{pre}}$,
$C^{\text{pre}}_{0\Delta Y}$, $C^{\text{pre}}_{0D}$, and
$S_0^{\text{pre}} = \sigma_{0s}^{\text{pre}}\nu_{0s}^{\text{pre}}$ for the
estimands, bias factors, and scale factor of Section~\ref{sec:ovb}, computed with
$(\Delta Y^{\text{pre}}, X^{\text{pre}}, U^{\text{pre}})$ in place of
$(\Delta Y, X, U)$, with $D$ unchanged throughout.

Since no treatment has yet occurred, $\theta^{\text{pre}} = 0$, and the
\emph{pre-treatment bias} from omitting confounders reduces to
$\theta_s^{\text{pre}}$, the (average) deviation from parallel trends before
treatment. Applying Theorem~\ref{thm:main_npm} to the pre-treatment period gives
\[
|\rho^{\text{pre}}_0|\, C^{\text{pre}}_{0\Delta Y}\, C^{\text{pre}}_{0D}\,
S_0^{\text{pre}} = |\theta_s^{\text{pre}}|,
\]
so $\theta_s^{\text{pre}}$ does not identify the pre-treatment bias factors
individually, but constrains their product. We allow $(X^{\text{pre}},
U^{\text{pre}})$ to differ from $(X,U)$, but in many cases, such as in our
empirical application, they coincide. When that happens, the treatment selection component is then immediately transportable across periods, since
$C^{2,\text{pre}}_{0D}=C^{2}_{0D}$ and $\nu_{0s}^{2,\text{pre}} = \nu_{0s}^{2}$.

Therefore, two natural strategies emerge for extrapolating the pre-treatment bias to the post-treatment period in this setting. One is to directly transport the bias magnitude, as is currently done in pre-trend extrapolation approaches; another is to transport only (some of) the bias factors, while re-estimating the scaling factor $S_0$ from post-treatment data. 

\subsubsection*{Extrapolating the bias magnitude}

As in \cite{rambachan2023more}, researchers may be willing to assume that the post-treatment bias is no larger than $k$ times the bias observed in the pre-treatment period. This corresponds to the assumption that $|\theta - \theta_s| \leq k|\theta_s^{\text{pre}}|$, and yields the following bounds
\begin{align*}
    \theta_{\pm} = \theta_s \pm k |\theta_s^{\text{pre}}|,
\end{align*}
where both $\theta_s$ and $\theta_s^{\text{pre}}$ are estimable from the data. Mapping this to unobserved confounders, this approach is equivalent to imposing a constraint on the full product of bias factors and the scale factor, i.e., 
\[
|\rho_0|\, C_{0\Delta Y}\, C_{0D}\, S_0 \leq k |\rho^{\text{pre}}_0| C^{\text{pre}}_{0\Delta Y} C_{0D}^{\text{pre}}\, S_0^{\text{pre}} = k |\theta_s^{\text{pre}}|.
\]

This connection also clarifies how different values of $k$ could arise; for example, consider $|\theta-\theta_s|=k|\theta_s^{\text{pre}}|$. Writing
$k_\rho := |\rho_0|/|\rho_0^{\text{pre}}|$,
$k_{\Delta Y} := C_{0\Delta Y}/C_{0\Delta Y}^{\text{pre}}$,
$k_D := C_{0D}/C_{0D}^{\text{pre}}$, and
$k_S := S_0/S_0^{\text{pre}}$, we then have $k = k_\rho k_{\Delta Y} k_D k_S$, so the multiplier $k$ collects the relative changes in each component of the bias across periods (and, in fact, $k_S$ is estimable).

\subsubsection*{Extrapolating the bias factors}

The previous considerations suggest another natural approach for extrapolating the bias. Rather than transporting the full magnitude of the pre-treatment bias, we can transport only the bias factors associated with the unobserved confounding mechanism, while allowing the observed scale component to change across periods. This yields the following bounds
\begin{align*}
\theta_{\pm} &= \theta_s \pm k \left(\frac{S_0}{S_0^{\text{pre}}}\right)  |\theta_s^{\text{pre}}| ,
\end{align*}
where $\theta_s$, $\theta_s^{\text{pre}}$, $S_0^{\text{pre}}$, and $S_0$ are estimable from the data, and $k = k_{\rho} k_{\Delta Y}  k_{D}$. 

Regardless of the choice of pre-trend extrapolation, the OVB decomposition of Section~\ref{sec:ovb}, together with benchmarking against observed covariates in Section~\ref{sec:bench-covariates}, allows researchers to clearly  translate these assumptions into statements about the underlying strength of confounding. For example, one can ask what kind of unobserved confounder would be required to generate the extrapolated post-treatment bias, in terms of its strength in treatment selection, outcome trends, and the correlation between the two, and whether such strength of confounding is comparable in magnitude to that of observed covariates.

\section{Estimation and inference} 
\label{sec:est-and-inf}

The previous sections characterize the OVB formula and the associated sensitivity parameters at the population level. We now discuss estimation and inference in finite samples. In what follows, we assume we have an i.i.d. sample from the observed data distribution. Appendix~\ref{app:simulations} reports extensive simulation exercises to verify the properties of our estimators and confidence intervals.

\subsection{Inference under direct restrictions on confounding}

Given plausibility judgments on the magnitude of sensitivity parameters $|\rho_0|$, $C_{0\Delta Y}$, $C_{0D}$, we have the following  bounds on $\theta$:
\begin{align}
    \theta_{\pm} &= \theta_s \pm |\rho_0|\,C_{0\Delta Y}\,C_{0D}\,S_0, \quad\text{with}\quad S_0^2 = \sigma_{0s}^2\,\nu_{0s}^2, \nonumber
\end{align}
where $\theta_s$ and $S_0^2$ are estimable from the observed data. 

We estimate these components using debiased machine learning (DML), which combines Neyman orthogonal scores with cross-fitting to enable the use of machine learning methods for nuisance estimation \citep{chernozhukov2018double,chernozhukov2022long}. Specifically, let $Z := (\Delta Y,D,X)$, $p:=P(D=1)$ and $\pi:=P(D=1\mid X)$.  Estimation proceeds via Algorithm~\ref{alg:DML}, with orthogonal scores
\begin{align}
\psi_{\theta_s}(Z; g_{0s}, \pi, p) &= \left(\frac{D}{p} - \frac{(1-D)}{(1-p)}\frac{O_{X}}{O}\right)(\Delta Y - g_{0s}) - \frac{D\theta_s}{p}, 
\label{eq:psi-theta} \\
\psi_{\sigma_{0s}^2}(Z;g_{0s},p) &= \frac{1-D}{1-p}\left((\Delta Y - g_{0s})^2 - \sigma_{0s}^2\right), \label{eq:psi-sigma} \\
\psi_{\nu_{0s}^2}(Z;\pi,p) 
&= 2\frac{D}{p}\left(\frac{O_{X}}{O} - \nu_{0s}^2\right) - \frac{1-D}{1-p}\left(\left(\frac{O_{X}}{O}\right)^2 - \nu_{0s}^2\right). 
\label{eq:psi-nu}
\end{align}
We denote the resulting estimates as
\begin{align*}
\widehat{\theta}_s := \text{DML}(\psi_{\theta_s}), \qquad
\widehat{\sigma}_{0s}^2 := \text{DML}(\psi_{\sigma_{0s}^2})\qquad
\widehat{\nu}_{0s}^2 := \text{DML}(\psi_{\nu_{0s}^2}).
\end{align*}

\begin{algorithm}[t]
\small
\caption{DML for estimable components: DML($\psi_{\beta}$)}
\begin{algorithmic}
\State \textbf{Input:} The Neyman orthogonal score $\psi_{\beta}(Z;\eta)$, sample $\{Z_i\}_{i=1}^n$, and number of folds $L$.
\State \textbf{Sample Splitting:} Randomly partition the sample into $L$ folds of approximately equal size. Denote each fold by $I_l$ and its complement by $I_l^c$, for $l = 1,\dots,L$.
    \For{$l = 1,\dots,L$}
        \State Estimate the nuisance parameters using observations in $I_l^c$. Denote the estimates by $\widehat{\eta}_l$.
    \EndFor
\State \textbf{Estimate:} Construct the estimator $\widehat{\beta}$ as the solution of $0 = \frac{1}{n}\sum_{l=1}^L\sum_{i \in I_l}\psi_{\widehat{\beta}}(Z_i;\widehat{\eta}_l)$.
\State \textbf{Return:} The estimate $\widehat{\beta}$ and the estimated scores $\psi_{\widehat{\beta}}(Z_i;\widehat{\eta}_l)$ for each $i \in I_l$ and each $l$.
\end{algorithmic}
\label{alg:DML}
\end{algorithm}

Let $\beta \in \{\theta_s, \sigma^2_{0s},\nu^2_{0s} \}$ denote a generic target parameter and $\eta = (g_{0s}, \pi, p)$ the vector of nuisance parameters. The scores (\ref{eq:psi-theta})-(\ref{eq:psi-nu}) admit the representation $\psi_{\beta}(Z;\eta) = \psi_\beta^a(Z;\eta)\,\beta + \psi_\beta^b(Z;\eta)$.  An application of the results in \citet{chernozhukov2018double} for linear score functions yields the following lemma, which establishes the asymptotic linearity and normality of the DML estimators, and characterizes their influence functions.

\begin{Lemma}[Asymptotic linearity of the DML estimators]
\label{lemma:Properties of the DML Bound Estimators}
Suppose that Assumptions 3.1 and 3.2 from \citet{chernozhukov2018double} hold for each of the scores $\psi_{\beta}$ above, and for the estimators of the nuisance parameters $\widehat{\eta}_l$. Then, the DML estimators $\widehat{\beta}$ are asymptotically linear and Gaussian with the following properties
    \begin{align}
        \sqrt{n}(\widehat{\beta} - \beta) &= \frac{1}{\sqrt{n}}\sum_{i=1}^n\underbrace{\frac{\psi_{\beta}(Z_i;\eta)}{-E[\psi_\beta^a(Z;\eta)]}}_{=: \varphi^0_{\beta}(Z_i)} + o_p(1) \overset{d}{\rightarrow} N(0,\sigma_{\varphi_{\beta}}^2), \text{ with } 
        \sigma_{\varphi_{\beta}}^2 := E[(\varphi^0_{\beta}(Z))^2], \nonumber 
    \end{align}
    where $\varphi^0_{\beta}(\cdot)$ is the influence function. Following Theorem 3.2 in \citet{chernozhukov2018double}, we denote the estimate of the influence function as $\widehat{\varphi}_{\widehat{\beta}}(\cdot) := -\psi_{\widehat{\beta}}(\cdot;\widehat{\eta}_l)\big/E_n[\psi^a(Z;\widehat{\eta}_l)]$ and use $\widehat{\sigma}_{\varphi_{\beta}}^2 := E_n\left[(\widehat{\varphi}_{\widehat{\beta}}(Z))^2\right]$ as the estimator for $\sigma^2_{\varphi_\beta}$.
\end{Lemma}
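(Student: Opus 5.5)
The plan is to verify, for each of the three scores (\ref{eq:psi-theta})--(\ref{eq:psi-nu}), the hypotheses of Theorem~3.1 of \citet{chernozhukov2018double} for linear scores, and then read off the influence function. There are three things to check. First, identification: $E[\psi_\beta(Z;\eta)]=0$ at the true $\beta$ and $\eta$. Second, linearity: $\psi_\beta=\psi^a_\beta\beta+\psi^b_\beta$ with $E[\psi^a_\beta]\neq 0$. Third, Neyman orthogonality: the Gateaux derivative of $\eta\mapsto E[\psi_\beta(Z;\eta)]$ vanishes at the truth in every direction.

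Linearity is immediate, with
\[
\psi^a_{\theta_s}=-\frac{D}{p},\qquad
\psi^a_{\sigma^2_{0s}}=-\frac{1-D}{1-p},\qquad
\psi^a_{\nu^2_{0s}}=-2\frac{D}{p}+\frac{1-D}{1-p}.
\]
Their expectations are $-1$, $-1$ and $-1$ respectively. The normalization $-E[\psi^a_\beta]=1$ therefore makes $\varphi^0_\beta=\psi_\beta$.

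Identification follows from Lemma~\ref{lemma:cond_RR_main}. For $\theta_s$, the weighting $\frac{1-D}{1-p}\frac{O_X}{O}$ reweights controls to the treated covariate distribution, so
\[
E\!\left[\tfrac{1-D}{1-p}\tfrac{O_X}{O}(\Delta Y-g_{0s})\right]=0,
\qquad
E\!\left[\tfrac{D}{p}(\Delta Y-g_{0s})\right]=\theta_s .
\]
For $\sigma^2_{0s}$ the result holds by definition. For $\nu^2_{0s}$, Corollary~\ref{thm:alt-selection} gives $E[O_X/O\mid D=1]=\nu^2_{0s}=E[(O_X/O)^2\mid D=0]$. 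Hence
\[
E[\psi_{\nu^2_{0s}}]=2(\nu^2_{0s}-\nu^2_{0s})-(\nu^2_{0s}-\nu^2_{0s})=0 .
\]

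Orthogonality is the main step. I would perturb $g_{0s}\to g_{0s}+t h$ and $\pi\to\pi+t\delta$, with $p$ treated as known or estimated at the $\sqrt n$ rate; its contribution can be absorbed as an additional moment if needed.

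For $\psi_{\theta_s}$, the derivative in $h$ is
\[
-E\!\left[\left(\tfrac{D}{p}-\tfrac{1-D}{1-p}\tfrac{O_X}{O}\right)h(X)\right].
\]
This vanishes because $E[(1-D)O_X\mid X]=\pi$. The derivative in $\delta$ multiplies $E[\Delta Y-g_{0s}\mid X,D=0]=0$, so it vanishes too.

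For $\psi_{\sigma^2_{0s}}$, the derivative in $h$ is $-2E[(1-D)(\Delta Y-g_{0s})h]=0$.

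The score $\psi_{\nu^2_{0s}}$ is the delicate one. Write $r=O_X/O$, so that a perturbation $\delta$ of $\pi$ induces a perturbation $r'$ of $r$. The derivative is
\[
2E\!\left[\tfrac{D}{p}r'\right]-2E\!\left[\tfrac{1-D}{1-p}\,r\,r'\right].
\]
Conditioning on $X$, this equals
\[
\frac{2}{p}E\!\left[r'\left(\pi-\frac{p}{1-p}(1-\pi)\,r\right)\right].
\]
Since $r=\frac{\pi(1-p)}{(1-\pi)p}$, the bracket is zero, so the derivative vanishes. I expect this check to be the main obstacle: the factor $2$ and the sign in (\ref{eq:psi-nu}) must be exactly right for orthogonality to hold.

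With these three properties in hand, together with the rate and regularity conditions of Assumptions~3.1 and~3.2 (which the statement assumes), Theorem~3.1 of \citet{chernozhukov2018double} yields the stated asymptotic linearity and normality. Theorem~3.2 then gives consistency of the variance estimator $\widehat\sigma^2_{\varphi_\beta}$. Square integrability of the scores follows from Assumption~\ref{asmp:regularity}, together with a fourth-moment-type condition on $O_X$ for $\nu^2_{0s}$, which is implicit in Assumption~3.2 of \citet{chernozhukov2018double}.
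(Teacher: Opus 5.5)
Your proposal follows the paper's proof: both apply Theorems~3.1--3.2 of \citet{chernozhukov2018double} after checking the moment conditions and Neyman orthogonality of each score, and your computations in $g_{0s}$ and $\pi$ agree with the paper's (your remark that $-E[\psi^a_\beta]=1$, so $\varphi^0_\beta=\psi_\beta$, is a nice extra). The one difference is that the paper also verifies orthogonality in $p$, which you hedge on instead. It is immediate for $\psi_{\theta_s}$ and $\psi_{\sigma^2_{0s}}$, where $p$ enters only through an overall factor $1/p$ or $1/(1-p)$ on a mean-zero term, and it takes a short conditioning-on-$X$ calculation for $\psi_{\nu^2_{0s}}$. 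You should check this rather than hedge: if $p$ were not orthogonal, absorbing its estimation as an extra moment would change the influence function from the one stated.
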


\begin{remark}
Note that the DML estimator for $\nu^2_{0s}$ can be used to estimate the chi-squared divergence of observed covariates, since $\nu^2_{0s} -1 = \chi^2(P_{X|D=1}\|P_{X|D=0})$---see Corollary~\ref{thm:alt-selection}.
\end{remark}

The plug-in estimator for the bounds thus takes the following form,
\begin{equation*}
\widehat{\theta}_{\pm} := \widehat{\theta}_s \pm |\rho_0|\,C_{0\Delta Y}\,C_{0D}\,\sqrt{\widehat{\sigma}_{0s}^2\,\widehat{\nu}_{0s}^2}.
\end{equation*}
Since $\widehat{\theta}_{\pm}$ is a smooth function of asymptotically linear estimators, we can obtain confidence intervals for the bounds by applying the delta method.

\begin{Theorem}[Confidence intervals for the bounds] \label{thm: inference}
    Under the Assumptions of Lemma \ref{lemma:Properties of the DML Bound Estimators}, the estimator $\widehat{\theta}_{\pm}$ is asymptotically linear and Gaussian with the following properties 
    \begin{align}
        \sqrt{n}(\widehat{\theta}_{\pm} - \theta_{\pm}) &= \frac{1}{\sqrt{n}}\sum_{i=1}^n\varphi^0_{\theta_{\pm}}(Z_i) + o_p(1) \overset{d}{\rightarrow} N(0, \sigma^2_{\varphi_{\theta_{\pm}}}), \text{ with } \sigma^2_{\varphi_{\theta_{\pm}}} = E[(\varphi^0_{\theta_{\pm}}(Z))^2], \nonumber \\
        \text{where } \varphi^0_{\theta_{\pm}}(Z) &= \varphi_{\theta_s}^0(Z) \pm |\rho_0|C_{0\Delta Y}C_{0D} \times \underbrace{\frac{1}{2S_0}\left(\sigma_{0s}^2\varphi^0_{\nu_{0s}^2}(Z) + \nu_{0s}^2\varphi^0_{\sigma_{0s}^2}(Z)\right)}_{=:\varphi_{S_0}^0(Z)}. \nonumber 
    \end{align}
    Let $\Phi$ denote the CDF of the standard normal and $\alpha$ denote the significance level. The confidence interval
\begin{align}
        [l, u] &= \left[\widehat{\theta}_{-} - \Phi^{-1}(1 - \alpha)\sqrt{\frac{E[(\varphi^0_{\theta_{-}}(Z))^2]}{n}}, ~\widehat{\theta}_{+} + \Phi^{-1}(1 - \alpha)\sqrt{\frac{E[(\varphi^0_{\theta_{+}}(Z))^2]}{n}}\right], \nonumber 
\end{align}
satisfies $P(\theta_{-} \in [l,\infty)) \rightarrow 1-\alpha$ and $P(\theta_{+} \in (-\infty, u]) \rightarrow 1-\alpha$. 
These results continue to hold if we replace $E[(\varphi^0_{\theta_{\pm}}(Z))^2]$ with $E_n[(\widehat{\varphi}_{\widehat{\theta}_{\pm}}(Z))^2]$.
\end{Theorem}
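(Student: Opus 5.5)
The approach is to treat $\widehat{\theta}_{\pm}$ as a smooth transformation of the three asymptotically linear DML estimators from Lemma~\ref{lemma:Properties of the DML Bound Estimators}, and to apply the delta method to that transformation. The sensitivity parameters $|\rho_0|$, $C_{0\Delta Y}$ and $C_{0D}$ are fixed user-chosen constants, so set $c:=|\rho_0|C_{0\Delta Y}C_{0D}$. Then $\theta_\pm = h_\pm(\theta_s,\sigma^2_{0s},\nu^2_{0s})$ with
\[
h_\pm(a,b,v)=a\pm c\sqrt{bv}.
\]

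First, I stack the three expansions from Lemma~\ref{lemma:Properties of the DML Bound Estimators}. Each $\sqrt n(\widehat\beta-\beta)$ equals $n^{-1/2}\sum_i\varphi^0_\beta(Z_i)+o_p(1)$, and the influence functions have mean zero and finite variance. The multivariate CLT (via Cramér--Wold) then gives joint asymptotic normality of the vector $\sqrt n(\widehat\theta_s-\theta_s,\widehat\sigma^2_{0s}-\sigma^2_{0s},\widehat\nu^2_{0s}-\nu^2_{0s})$. Joint normality is needed even though the lemma is stated componentwise. It holds because all three estimators are driven by the same i.i.d. sample: the linear representations add, so any linear combination is again a sample mean of influence functions plus $o_p(1)$.

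Second, I linearize $h_\pm$. The map $h_\pm$ is differentiable at the true point provided $S_0^2=\sigma^2_{0s}\nu^2_{0s}>0$. This holds because $\nu^2_{0s}\ge 1$ by Corollary~\ref{thm:alt-selection}, and $\sigma^2_{0s}>0$ whenever trends are not perfectly predicted; I would state this as a maintained nondegeneracy condition. The gradient is $(1,\pm c\,v/(2\sqrt{bv}),\pm c\,b/(2\sqrt{bv}))$. A first-order Taylor expansion with $o_p(n^{-1/2})$ remainder then yields the stated influence function
\[
\varphi^0_{\theta_s}\pm c\,(2S_0)^{-1}\bigl(\sigma^2_{0s}\varphi^0_{\nu^2_{0s}}+\nu^2_{0s}\varphi^0_{\sigma^2_{0s}}\bigr),
\]
and asymptotic normality with variance $E[(\varphi^0_{\theta_\pm})^2]$.

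Third, coverage of the one-sided intervals follows directly from this normality. For example, $P(\theta_-\ge l)=P\bigl(\sqrt n(\widehat\theta_--\theta_-)/\sigma_{\varphi_{\theta_-}}\le\Phi^{-1}(1-\alpha)\bigr)\to1-\alpha$, and the upper bound is handled symmetrically. If $\sigma_{\varphi}=0$ the statement degenerates, so I assume a positive variance.

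For the plug-in variance, I form $\widehat\varphi_{\widehat\theta_\pm}$ by substituting the estimated influence functions and the estimates $\widehat\theta_s,\widehat\sigma^2_{0s},\widehat\nu^2_{0s}$. Consistency of $E_n[\widehat\varphi^2]$ follows from the consistency of each $\widehat\sigma^2_{\varphi_\beta}$ and of the corresponding cross-moments, which holds by the same Theorem 3.2 argument in \citet{chernozhukov2018double}, together with the continuous mapping theorem. Slutsky's theorem then preserves the coverage result. The main obstacle is this last step: the lemma only provides variance consistency for each estimator separately, so I would need to argue that the cross terms $E_n[\widehat\varphi_\beta\widehat\varphi_{\beta'}]$ are also consistent. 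I would try to establish this by repeating the Cauchy--Schwarz bound on $E_n[(\widehat\varphi_\beta-\varphi^0_\beta)^2]=o_p(1)$ for each $\beta$.
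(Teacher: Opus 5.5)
Your proposal is correct and follows the paper's argument. The paper likewise treats $\widehat{\theta}_{\pm}$ as a smooth function of the three asymptotically linear DML estimators and derives the influence function by the delta method. It justifies the plug-in variance by the covariance-estimation argument of Theorem~3.2 in the double machine learning paper, and gets one-sided coverage from asymptotic normality. Your explicit treatment of the joint linearization, the nondegeneracy conditions $S_0>0$ and $\sigma^2_{\varphi_{\theta_\pm}}>0$, and the cross-moment consistency fills in details the paper leaves as ``standard arguments.''
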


\subsection{Inference under benchmarking restrictions on confounding}
\label{sec:est-inf-bench}

In the benchmarking analysis introduced in Section~\ref{sec:benchmarking}, additional components characterizing the strength of observed confounding are estimable from the data. We now discuss how to account for the uncertainty in estimating these components. We begin with benchmarking against observed covariates. 

Given plausibility judgments on the relative strength parameters $k_{0\Delta Y,j},k_{0D,j}$, and setting $|\rho_0|=|\rho_{0,j}|$, the plug-in estimator of the bias bound now takes the following form:
\begin{align*}
\widehat{\theta}_{\pm} = \widehat{\theta}_s \pm |\widehat{\rho}_{0,j}|\times \sqrt{k_{0\Delta Y,j}\widehat{G}_{0\Delta Y,j} \times \frac{k_{0D,j}\widehat{G}_{0D,j}}{1-k_{0D,j}\widehat{G}_{0D,j}}}\times \sqrt{\widehat{\sigma}_{0s}^2\,\widehat{\nu}_{0s}^2}
\end{align*}
where,
\begin{align*}
\widehat{G}_{0\Delta Y,j} := \frac{\widehat{\sigma}_{0s,-j}^2 - \widehat{\sigma}_{0s}^2}{\widehat{\sigma}_{0s}^2}, \quad
\widehat{G}_{0D,j} := \frac{\widehat{\nu}_{0s}^2 - \widehat{\nu}_{0s,-j}^2}{\widehat{\nu}_{0s,-j}^2}, \ \text{and} \quad
\widehat{\rho}_{0,j} = \frac{-(\widehat{\theta}_s - \widehat{\theta}_{s,-j})}{\sqrt{(\widehat{\sigma}_{0s,-j}^2 - \widehat{\sigma}_{0s}^2) \times (\widehat{\nu}_{0s}^2 - \widehat{\nu}_{0s,-j}^2)}}.
\end{align*}
Here  $\theta_{s,-j}$ denotes the ATT estimand conditional on $X_{-j}$ alone, and $\sigma_{0s,-j}^2$ and $\nu_{0s,-j}^2$ are the scaling factors in our OVB decomposition of $\theta_s - \theta_{s,-j}$. The orthogonal scores for these parameters have the same form as (\ref{eq:psi-theta})-(\ref{eq:psi-nu}), with the only difference that $X_{-j}$ is used for estimation in place of $X$. 

We now move to benchmarking against pre-trend bias. For fixed $k$, the first approach yields the following plug-in estimator of the bias bound
\begin{align*}
    \widehat{\theta}_{\pm} &= \widehat{\theta}_s \pm k  |\widehat{\theta}_s^{\text{pre}}|,
\end{align*}
whereas the second approach gives
\begin{align*}
\widehat{\theta}_{\pm} &= \widehat{\theta}_s \pm k  \left(\frac{\widehat{S}_0}{\widehat{S}_0^{\text{pre}}} \right) |\widehat{\theta}_s^{\text{pre}}|, 
\qquad \text{with} 
\quad \widehat{S}_0^2 = \widehat{\sigma}_{0s}^2\,\widehat{\nu}_{0s}^2, 
\qquad \text{and} 
\quad \widehat{S}_0^{2,\text{pre}} = \widehat{\sigma}_{0s}^{2,\text{pre}}\,\widehat{\nu}_{0s}^{2,\text{pre}}. 
\end{align*}
Under the regularity conditions stated in Appendix~\ref{app:bench}, $\widehat{\theta}_{\pm}$ is a smooth function of asymptotically linear estimators whose influence functions were characterized above. Thus, inference for the bounds follows from the delta method. See the Appendix for corresponding influence functions and asymptotic results.

\subsection{Sensitivity statistics for routine reporting}
\label{sec:statistics}

We now introduce two sensitivity statistics that measure the minimum strength of confounding required to invalidate the conclusions of a DiD study. These statistics require no assumptions about the strength of unobserved confounding; rather, they communicate what one must be prepared to believe in order to rule out confounding that would be problematic. They serve as quick summaries of the overall robustness of the ATT estimates against systematic biases \citep{cinelli2020making, cinelli2025omitted,chernozhukov2022long}.

For $\bar R^2_{\Delta Y},\bar R^2_D\in[0,1]$, let
\(
\text{CI}^{\max}_{1-\alpha,\bar R^2_{\Delta Y},\bar R^2_D}(\theta)
\)
denote the widest confidence interval constructed from Theorem~\ref{thm: inference} under the restrictions
\(
|\rho_0|\leq1,~
\eta^2_{\Delta Y\sim U\mid X,D=0}\leq\bar R^2_{\Delta Y},~
1-R^2_{O_{XU}\sim O_X\mid D=0}\leq\bar R^2_D.
\)
The first sensitivity statistic we propose asks: leaving alignment and trend entirely unrestricted, what is the minimum strength of confounding with treatment selection alone that would lead one to not reject $H_0:\theta=\theta^*$? This yields the \emph{extreme robustness value} (XRV),
\[
\text{XRV}_{\theta^*,\alpha}(\theta)
:=
\inf\left\{
\text{XRV}:
\theta^*\in
\text{CI}^{\max}_{1-\alpha,1,\text{XRV}}(\theta)
\right\}.
\]
Any confounding scenario with selection strength below $\text{XRV}_{\theta^*,\alpha}(\theta)$ is logically incapable of overturning the original conclusions, regardless of how much variation such confounders explain of the untreated trend.

Leaving the association of the confounder with the trend completely unrestricted may be too conservative. Thus, the second sensitivity statistic considers the minimum strength of confounding in explaining both the untreated outcome trend and treatment selection jointly. The \emph{robustness value} is
\[
\text{RV}_{\theta^*,\alpha}(\theta)
:=
\inf\left\{
\text{RV}:
\theta^*\in
\text{CI}^{\max}_{1-\alpha,\text{RV},\text{RV}}(\theta)
\right\}.
\]
Any unobserved confounding with trend and selection strength below $\text{RV}_{\theta^*,\alpha}(\theta)$ is incapable of overturning the results of the study.

Appendix~\ref{app:rv} provides the general definition and inferential properties of these quantities. In particular, it shows that $\text{RV}_{\theta^*,\alpha}(\theta)$ and $\text{XRV}_{\theta^*,\alpha}(\theta)$ are asymptotically valid lower confidence bounds for their population counterparts.

\section{Applying the OVB framework to the sensitivity of DiD}
\label{sec:app}

We now return to the minimum wage example of Section~\ref{sec:background},  and show how to deploy the results of Sections~\ref{sec:ovb} to \ref{sec:est-and-inf} to answer: (i) How strong would unobserved confounders have to be to overturn the estimated negative effect in 2007? (ii) How does this required strength compare to that of the observed covariates? and, (iii) What does the 2006 pre-trend imply about the nature of unobserved confounding when read through the OVB decomposition? 

\subsection{Minimal sensitivity reporting}

Table~\ref{tab: MW-results-robustness_main} shows our proposal for minimal sensitivity reporting of DiD estimates.  The first columns report usual estimates of the treatment effect in 2007 under the assumption that parallel trends holds conditionally on the observed covariates alone. In addition to these estimates,  we propose researchers report the extreme robustness value and the robustness value \citep{cinelli2020making,cinelli2025omitted,chernozhukov2022long}. As discussed in Section~\ref{sec:statistics}, these statistics quickly convey how robust the estimated effect is to the presence of omitted variables.

The extreme robustness value ($\text{XRV}_{\theta^*=0,\ \alpha=0.05}$) of $0.2\%$ means that confounders that explain at most 0.2\% of the variation in treatment odds cannot explain away the estimated effect, at the 5\% significance level, even if such confounders were to explain all leftover variation of the outcome evolution. Alternatively, the same number can be interpreted as confounders that induce at most a $\approx0.2\%$ increase in the average odds of treatment among the treated. When the XRV is large, this may provide sufficient evidence for a causal effect in and of itself, by virtue of ruling out confounders that explain a large fraction of treatment assignment. This turns out not to be the case in our application, as it seems difficult to rule out confounders that change the odds of treatment by 0.2\%.
We thus move to examining the minimum \emph{joint} strength that confounders need to have, both in terms of explaining treatment selection and outcome evolution, in order to explain away the results.

\begin{table}[h]
\centering
{\footnotesize
\begin{tabular}{ccccc}
\toprule
\multicolumn{3}{c}{\textbf{Results Under Conditional Parallel Trends}} 
& \multicolumn{2}{c}{\textbf{Robustness Values}} \\
\cmidrule(lr){1-3} \cmidrule(lr){4-5}
\textbf{Short Estimate ($\widehat{\theta}_{s,2007}$)} 
& \textbf{Std. Error} 
& \textbf{Confidence Interval} 
& $\text{RV}_{\theta^*=0,\ \alpha=0.05}$ 
& $\text{XRV}_{\theta^*=0,\ \alpha=0.05}$\\
\midrule
-0.0366 
& 0.0105
& [-0.0572;\;-0.0160] 
& 4.38\% 
& 0.20\% \\
\bottomrule
\end{tabular}
}
\caption{{\footnotesize Minimal Sensitivity Reporting.
}}
\label{tab: MW-results-robustness_main}
\end{table}

The robustness value of $\text{RV}_{\theta^*=0,\ \alpha=0.05}=4.38\%$ indicates that unobserved confounders explaining less than $4.38\%$ of the residual variation in both treatment odds and the outcome evolution of untreated units are not sufficiently strong to render the negative effect insignificant, in the sense of shifting the upper confidence bound to zero, at the $5\%$ significance level. As before, the alternative characterizations of selection, discussed in Section~\ref{sec:treatment-selection}, offer two additional interpretations of this value---it corresponds to a $4.58\%$ relative increase in the observed average treatment odds among treated units. For a single independent binary confounder, this corresponds to a standardized mean difference of about $0.2$---which is not implausible in light of observed imbalances, as we have seen in Section~\ref{sec:background}. 

\subsection{Sensitivity contour plots and benchmarking}

Overall, neither the XRV nor the RV allowed us to quickly rule out confounding of magnitudes that would be problematic. We thus turn to the benchmarking procedure of Section~\ref{sec:bench-covariates}, which compares the gains in explanatory power from unobserved confounders with those from key observed covariates. Table~\ref{tab:benchmark_conf_bound} in the Appendix reports one-sided 95\% confidence bounds implied by an unobserved confounder comparable in strength and alignment to each observed covariate. These bounds account for the uncertainty in estimating the benchmark components, using the results of Section~\ref{sec:est-inf-bench}. We see that latent confounders comparable to observed covariates are not sufficiently strong to explain away the estimated effect. Nor would such confounders be able to reproduce the 2006 pre-trend deviation.  

We now turn to examining the full range of estimates we could have obtained under different confounding scenarios, using sensitivity contour plots \citep{Imbens03,cinelli2020making,cinelli2025omitted,chernozhukov2022long}. Since the estimated effect is negative, we focus on the upper confidence bound, which corresponds to the direction of bias relevant for overturning the original conclusion. Figure~\ref{fig:mw_contour_all_main} shows the results, for two different alignment values, a moderate scenario $|\rho_0| = .3$ and an extreme scenario $|\rho_0|=1$. The horizontal axis measures confounding in treatment selection, whereas the vertical axis measures confounding in the untreated outcome trend, both on the same $R^2$ scale. Each contour line corresponds to confounding scenarios that yield the same upper confidence bound for $\theta$, under those hypothetical strengths, and a fixed chosen alignment value. The black triangle in the lower-left corner corresponds to the upper bound of the confidence interval reported in Table~\ref{tab: MW-results-robustness_main}. The red dashed line represents the threshold scenarios that render the effect insignificant. Any confounding scenario below that line is not capable of overturning the original conclusion. The virtue of the contour plot is that the investigator can examine robustness to postulated confounding of \emph{any} hypothetical strength, by simply reading off the value at the corresponding coordinates. 

To better aid plausibility judgments, the red diamonds show the point estimates of the scenarios implied by unobserved confounders comparable in strength to observed covariates in explaining treatment selection and outcome evolution, with the alignment fixed at the value of each panel. In addition, the blue dashed line represents the scenario implied by extrapolating the 2006 pre-trend deviation, and it shows what kinds of omitted variables reproduce that amount of bias. (Table~\ref{tab:benchmark_pretrend_bounds} in the Appendix shows the same scenarios accounting for sampling uncertainty.) In accordance with the previous results, in a scenario of moderate alignment none of the benchmark points cross the critical threshold, let alone explain the pre-trend. If, however, we postulate a more conservative scenario, then we cannot rule out that adversarial confounders comparable to \textit{region}, median income (\textit{lmedinc}) or race (\textit{white}) would explain away the estimated effect. In that case, the 2006 pre-trend deviation would be rationalized by regional confounding, but would still not be reproducible by confounders comparable to race, poverty, or income.

\begin{figure}[t]
    \centering
    \begin{subfigure}[t]{0.49\textwidth}
        \centering
        \makebox[\linewidth][c]{
        \includegraphics[width=1.4\linewidth]{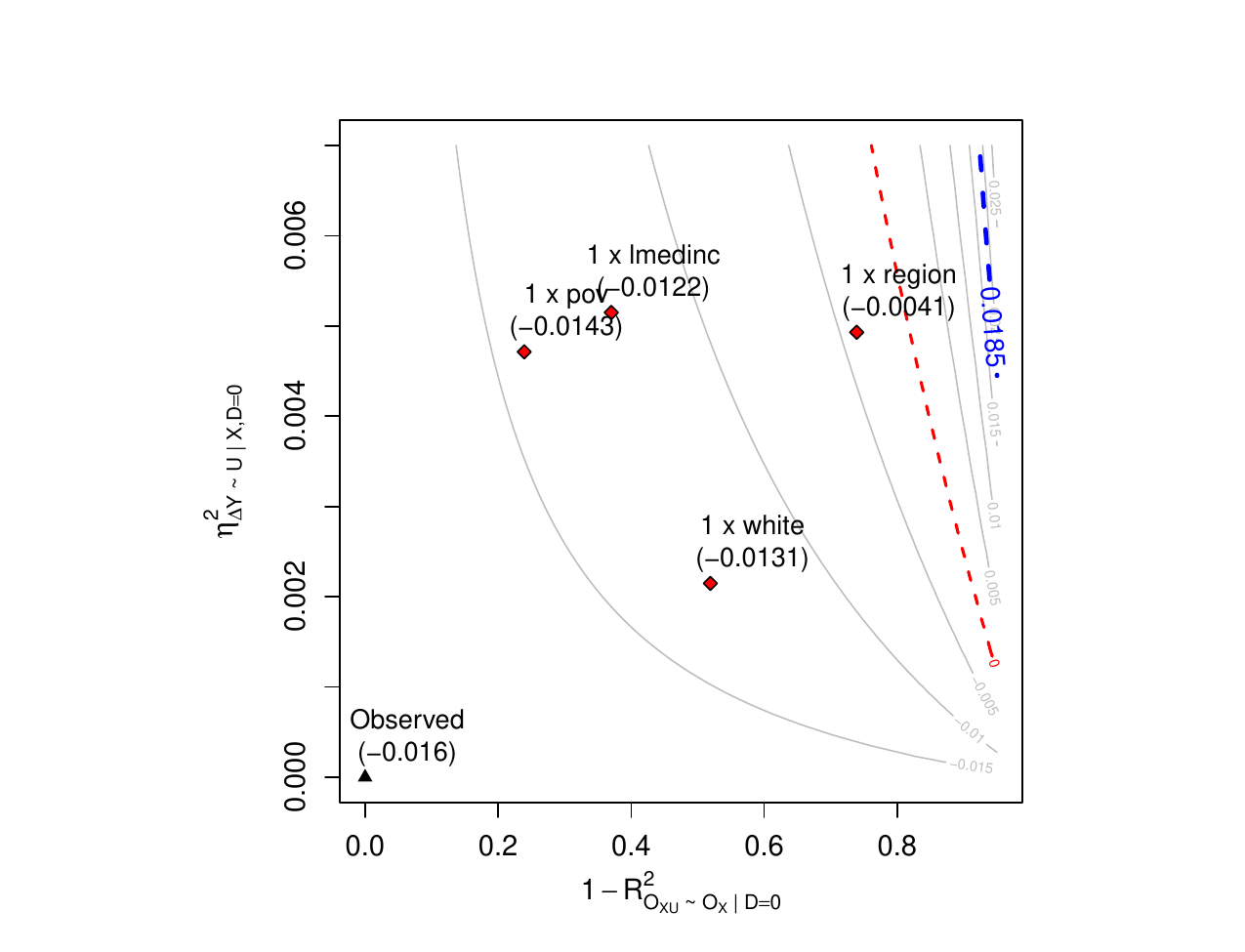}}
        \caption{\scriptsize{Upper limit conf. bound, $|\rho_{0}|=0.3$.}}
    \end{subfigure}\hfill
    \begin{subfigure}[t]{0.49\textwidth}
        \centering
        \makebox[\linewidth][c]{
        \includegraphics[width=1.4\linewidth]{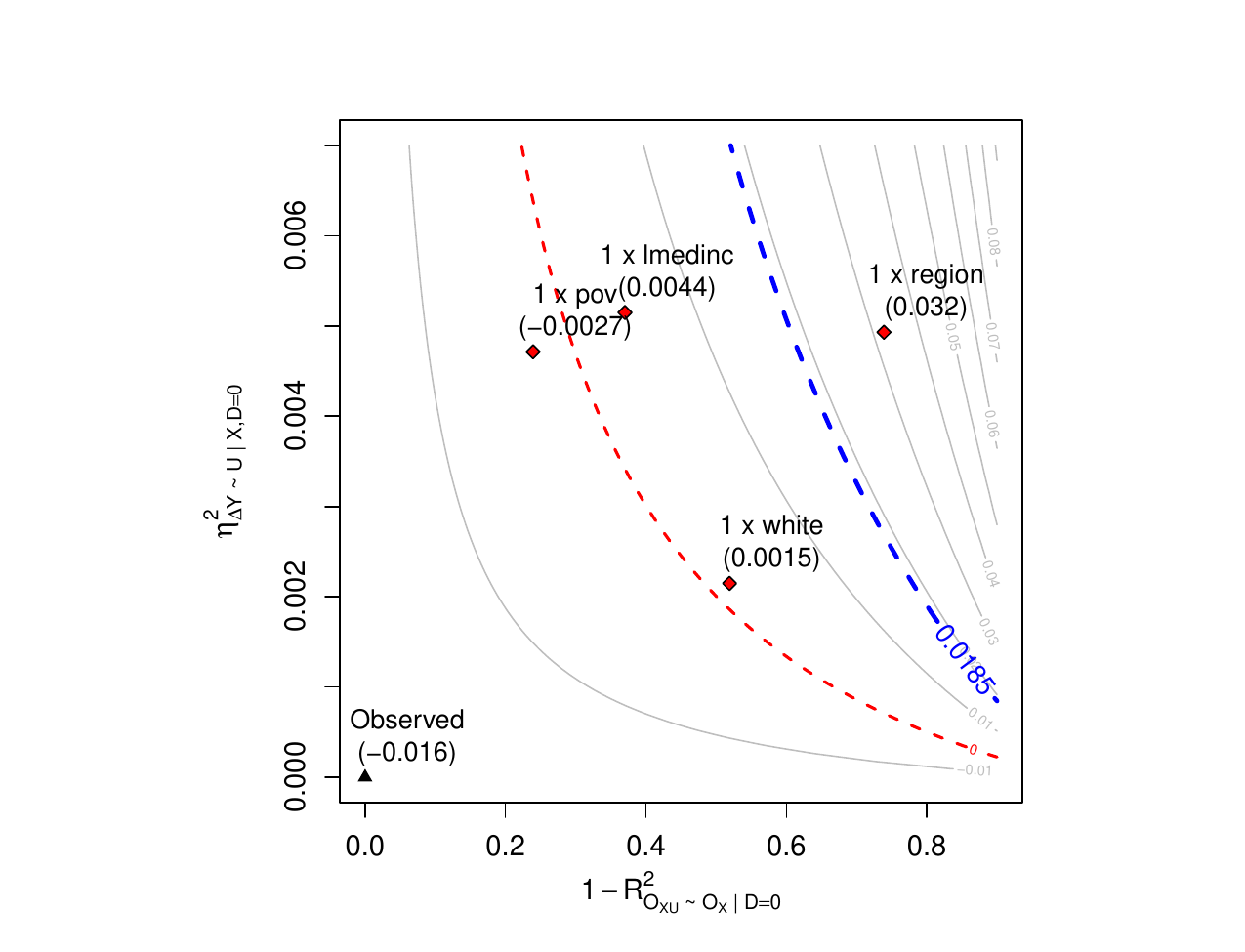}}
        \caption{\scriptsize{Upper limit conf. bound, $|\rho_{0}|=1$.}}
    \end{subfigure}
    \caption{{\footnotesize Sensitivity contour plots for the minimum wage example at a significance level of $\alpha = 0.05$. 
    }}
    \label{fig:mw_contour_all_main}
\end{figure}

Taken together, these results show that the minimum wage estimate is not completely immune to omitted confounding, but also that overturning the original analysis requires confounding of nontrivial magnitude. More importantly, the analysis clarifies exactly what one needs to believe in order to sustain the estimated effect. For example, unobserved confounders would need to be stronger---or at least more adversarial---than observed covariates to explain away the result. Reproducing the 2006 pre-trend deviation is more demanding still. Thus, a critic arguing that the estimate is not credible must articulate what plausible omitted variables remain that are not only comparable in magnitude to regional confounding, but also sufficiently adversarial. Conversely, someone defending the estimate must argue against unobserved confounders of that magnitude. Note also that sensitivity exercises such as the one above may help adjudicate between competing explanations for the 2006 deviation. If confounders of the required magnitude can be ruled out, anticipation becomes more credible---see Appendix~\ref{app:anticipation} for additional results under this scenario. While researchers may not have much confidence in answering these questions, the sensitivity analysis shifts the discussion from a generic concern about whether parallel trends might fail, to a more disciplined 
discussion about the mechanisms required to overturn the conclusion, and the competing hypotheses that could explain pre-trend deviations.

\section{Conclusion}
\label{sec:conclusion}

In this paper we provide an omitted variable bias framework for sensitivity analysis of difference-in-differences designs when unobserved confounding induces violations of parallel trends. We show that the bias in the average treatment effect on the treated admits an exact decomposition into an estimable scale factor and three bias factors, measuring confounding in treatment selection, confounding in untreated outcome evolution, and the alignment between these two channels. We also provide alternative characterizations of the treatment selection component in terms of treatment odds and covariate imbalance. Building on these results, we derive sensitivity statistics for routine reporting, develop benchmarking procedures based on observed covariates and pre-treatment biases, and provide inference methods for the resulting bounds using debiased machine learning.

Possible extensions of our framework include accommodating multiple time periods with staggered treatment adoption. In these cases, the OVB analysis we showed here can be performed period-by-period and aggregated across cohorts and time. Likewise, it is possible to perform richer benchmarking exercises in these settings, including calibration against the full sequence of pre-trends and pre-treatment covariates. Finally, extending the analysis to continuous and multi-valued treatments is also an interesting direction for future work.

\section*{Acknowledgments}
This work was supported in part by the National Science Foundation Grant No. MMS-2417955.

\par
\spacingset{1.4}
\putbib
\end{bibunit}

\newpage

\appendix
\numberwithin{assumption}{section}
\numberwithin{Proposition}{section}
\begin{bibunit}

\setcounter{page}{1}
\counterwithin{equation}{section}
\newcommand{\nindep}{\not\!\perp\!\!\!\perp}
\captionsetup[figure]{font=small,skip=0pt}
\allowdisplaybreaks
\theoremstyle{remark}
\spacingset{1.8}

\begin{center}
{\large \bfseries Supplementary Materials \\ for ``Omitted Variable Bias in Difference-in-Differences Designs''}
\end{center}

\section{Preliminaries}

\subsection{Notation}

Let $Y$ denote the outcome, $D$ the treatment indicator, $X$ the observed covariates, and $U$ the unobserved covariates. To simplify notation, we denote $p = P(D=1)$, $\pi = P(D=1 \mid X)$, $O = p/(1-p)$, and $O_X = \pi/(1-\pi)$. We recall the following standard definitions and results, which will be used in subsequent derivations.

\paragraph{Linear Projection and Residuals.} For square-integrable $Y$ and $X$ with nonsingular $E[XX^\top]$, let $\widehat{Y^X} := X^\top \beta$ denote the linear projection of $Y$ onto $X$, where $\beta := \arg\min_{b} E[(Y - X^\top b)^2] = (E[XX^\top])^{-1} E[XY]$. We define the residual as $Y^{\perp X} := Y - \widehat{Y^X}$, which represents the component of $Y$ orthogonal to the linear span of $X$. 
    
\paragraph{Radon-Nikodym Derivative.} Let $(\mathscr{X}, \mathscr{A})$ be a measurable space on which two $\sigma$-finite measures, $\mu$ and $\nu$, are defined. If $\nu$ is absolutely continuous with respect to $\mu$, denoted $\nu \ll \mu$, then by the Radon-Nikodym theorem, there exists an $\mathscr{A}$-measurable function $f:\mathscr{X} \rightarrow [0,\infty)$ such that for any measurable set $A \in \mathscr{A}$, $\nu(A) = \int_A f \dd \mu$. The function $f$, denoted $f = \dd \nu/ \dd\mu$, is called the Radon-Nikodym derivative of $\nu$ with respect to $\mu$.
    
\paragraph{$\chi^2$-divergence.} Let $P_{X|D=0}$ and $P_{X|D=1}$ denote the distributions of $X$ in the control and treated groups, respectively. Suppose $P_{X|D=1} \ll P_{X|D=0}$. We define the $\chi^2$-divergence of $P_{X|D=1}$ from $P_{X|D=0}$ as $\chi^2(P_{X|D=1}\|P_{X|D=0}) = \int\left(\frac{\dd P_{X|D=1}}{\dd P_{X|D=0}}-1\right)^2 \dd P_{X|D=0}$.
    
\paragraph{Coefficient of Variation (CV).} Let $Q_X$ denote the distribution of $X$. The coefficient of variation, $\text{CV}_X$, is a standardized measure of dispersion of $Q_X$, defined as $\text{CV}_X := \frac{\sigma_X}{\mu_X},$ where $\mu_X$ and $\sigma_X$ denote the mean and standard deviation of $X$, respectively.

\paragraph{Riesz-Frechet Representation Theorem.} Let $H$ be a Hilbert space over $\mathbb{R}$ with an inner product $\langle \cdot, \cdot \rangle$, that is complete w.r.t. the norm $\norm{\cdot}$ generated by this inner product. Let $T$ be a continuous linear functional on $H$. Then there exists a unique $w \in H$ such that for every $g \in H$, we have $T(g) = \langle g,w\rangle$. 

\subsection{Definition and properties of \texorpdfstring{$R^2$}{R-squared}}

We use $R^2_{Y \sim X}$ to denote the (possibly uncentered) $R^2$ of the orthogonal linear projection of a random variable $Y$ on a random vector $X$. That is, writing the orthogonal decomposition $Y = \widehat{Y^X} + Y^{\perp X}$,
\[
R^2_{Y \sim X}:= \frac{E[(\widehat{Y^X})^2]}{E[Y^2]} = 1- \frac{E[(Y^{\perp X})^2]}{E[Y^2]} = \frac{\left(E[(\widehat{Y^X})Y]\right)^2}{E[Y^2]E[(\widehat{Y^X})^2]}.
\]
If $X$ is univariate we also have the equality,
\[
R^2_{Y \sim X}= \frac{E[XY]^2}{E[Y^2]E[X^2]}.
\]
We define the \emph{partial} $R^2$ of $Y$ with $U$ given $X$ as:
\[
R^2_{Y \sim U \mid X} := \frac{R^2_{Y \sim U + X} - R^2_{Y \sim X}}{1- R^2_{Y\sim X}}.
\]
It is easy to show that this equals the regular $R^2$ of $Y^{\perp X}$ with $U^{\perp X}$. Note that $(Y^{\perp X})^{\perp (U^{\perp X})} = Y^{\perp U, X}$, so, 
\[
R^2_{Y\sim U \mid X} = 1 - \frac{E[(Y^{\perp U, X})^2]}{E[(Y^{\perp X})^2]}.
\]

We can use these definitions to generalize the usual (centered) $R^2$.  That is, if we want to recover the centered $R^2$, we simply partial out the constant. Formally, 
\[
R^2_{Y\sim X|1} =\frac{\operatorname{Var}(\widehat{Y^{X1}})}{\operatorname{Var}(Y)} = 1-\frac{\operatorname{Var}({Y}^{\perp X1})}{\operatorname{Var}(Y)} = \operatorname{Cor}^2(\widehat{Y^{X1}}, Y),
\]
where, for univariate $X$, $R^2_{Y\sim X|1} = \operatorname{Cor}^2(Y,X)$. Note that some other works use $R^2_{Y\sim X}$ to denote the centered version (see, e.g., \citealp{cinelli2020making, cinelli2025omitted}). 

Notice that if $X$ is a vector including the constant, then 
\[
R^2_{Y\sim U \mid X} = 1 - \frac{E[(Y^{\perp U, X})^2]}{E[(Y^{\perp X})^2]} = 1 - \frac{\operatorname{Var}(Y^{\perp U, X})}{\operatorname{Var}(Y^{\perp X})},
\]
which is centered. 

We define the nonparametric analogue of the (uncentered) $R^2$ as:
\begin{align*}
    \eta^2_{Y \sim X} := \frac{E[(E[Y \mid X])^2]}{E[Y^2]},
\end{align*}
which is equivalent to the linear (uncentered) $R^2$ from the projection of $Y$ on $E[Y \mid X]$. We define the nonparametric \textit{partial} $R^2$ of $Y$ with $U$ given $X$ as:
\begin{align*}
    \eta^2_{Y \sim U \mid X} := \frac{\eta^2_{Y \sim (X,U)} - \eta^2_{Y \sim X}}{1 - \eta^2_{Y \sim X}},
\end{align*}
which quantifies how much of the residual variation in $Y$ is explained by $U$, after accounting for $X$.

Note that the nonparametric partial $R^2$ is automatically centered, since $E[Y - E[Y \mid X]] = E[E[Y|X,U] - E[Y|X]] = 0$ by the tower property of conditional expectations. 

\begin{Proposition}[Nonparametric and linear $R^2$]
\par\noindent
   \begin{enumerate}
       \item[(i)] $\eta^2_{Y \sim X} = 1 - \frac{E\left[(Y - E[Y \mid X])^2\right]}{E[Y^2]} = R^2_{Y \sim E[Y \mid X]}$. 
       \item[(ii)] $ \eta^2_{Y \sim U \mid X} = 1 - \frac{E\left[(Y - E[Y \mid X,U])^2\right]}{E\left[(Y - E[Y \mid X])^2\right]} = R^2_{Y - E[Y \mid X] \sim E[Y \mid X,U] - E[Y \mid X]}$.
   \end{enumerate} 
\end{Proposition}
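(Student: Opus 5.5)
For part (i), I will use the defining property of conditional expectation together with the tower property. Write $m(X):=E[Y\mid X]$. The definition gives $E[(Y-m)^2]=E[Y^2]-2E[Ym]+E[m^2]$. By the tower property, $E[Ym]=E[m\,E[Y\mid X]]=E[m^2]$. Hence $E[(Y-m)^2]=E[Y^2]-E[m^2]$, and dividing by $E[Y^2]$ gives the first equality: $\eta^2_{Y\sim X}=1-E[(Y-m)^2]/E[Y^2]$.

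For the second equality in (i), I will apply the univariate formula for the uncentered $R^2$ recalled above, with regressor $m$: $R^2_{Y\sim m}=E[mY]^2/(E[Y^2]E[m^2])$. Since $E[mY]=E[m^2]$, this reduces to $E[m^2]/E[Y^2]=\eta^2_{Y\sim X}$. The degenerate case $E[m^2]=0$ has to be set aside, but there both sides are $0$ under the convention $0/0=0$ for a zero projection.

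For part (ii), set $m_{XU}:=E[Y\mid X,U]$. Applying (i) to both conditioning sets gives
\begin{align*}
\eta^2_{Y\sim(X,U)}-\eta^2_{Y\sim X}&=\frac{E[(Y-m)^2]-E[(Y-m_{XU})^2]}{E[Y^2]}, & 1-\eta^2_{Y\sim X}&=\frac{E[(Y-m)^2]}{E[Y^2]}.
\end{align*}
Taking the ratio, the factor $E[Y^2]$ cancels, and this yields the first expression in (ii).

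For the linear-projection form in (ii), define $\tilde Y:=Y-m$ and $W:=m_{XU}-m$. Since $m$ is $(X,U)$-measurable, $E[\tilde Y\mid X,U]=W$. Applying part (i) with conditioning on $(X,U)$ then gives $R^2_{\tilde Y\sim W}=E[W^2]/E[\tilde Y^2]$. It remains to identify $E[W^2]$. The error $Y-m_{XU}$ is orthogonal to $W$, so the Pythagorean decomposition gives $E[\tilde Y^2]=E[(Y-m_{XU})^2]+E[W^2]$. Therefore $R^2_{\tilde Y\sim W}=1-E[(Y-m_{XU})^2]/E[(Y-m)^2]$, which matches the expression already obtained.

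The only delicate points are the orthogonality steps, which require $E[Y^2]<\infty$, and the degenerate zero-denominator cases; otherwise every step is routine.
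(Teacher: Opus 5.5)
Your proof is correct and follows essentially the same route as the paper: both arguments rest on the tower-property identity $E[Y\,E[Y\mid X]]=E[(E[Y\mid X])^2]$ (and its analogue conditional on $(X,U)$), expanding the squared residuals and dividing through by $E[Y^2]$. The only difference is that the paper writes the residual of the linear projection of $Y$ on $E[Y\mid X]$ (and of $Y-E[Y\mid X]$ on $E[Y\mid X,U]-E[Y\mid X]$) as $Y-E[Y\mid X]$ (resp.\ $Y-E[Y\mid X,U]$) directly by definition, whereas you justify this through the univariate $R^2$ formula and the unit projection coefficient, which is a slightly more careful treatment of that step.
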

\begin{proof}
    \begin{align*}
    R^2_{Y \sim E[Y \mid X]} &:= 1 - \frac{E\left[(Y - E[Y \mid X])^2\right]}{E[Y^2]} \\
    &= \frac{E[Y^2] - \left(E[Y^2] + E[(E[Y \mid X])^2] - 2E[YE[Y \mid X]]\right)}{E[Y^2]} \\
    &\overset{\text{LTE}}{=} \frac{E[(E[Y \mid X])^2]}{E[Y^2]} = \eta^2_{Y \sim X}.
\end{align*}
Additionally, 
\begin{align*}
    R^2_{Y - E[Y \mid X] \sim E[Y \mid X,U] - E[Y \mid X]} &:= 1 - \frac{E\left[(Y - E[Y \mid X,U])^2\right]}{E[(Y - E[Y \mid X])^2]} \\
    &= \frac{E[(Y - E[Y \mid X])^2] - E\left[(Y - E[Y \mid X,U])^2\right]}{E[(Y - E[Y \mid X])^2]} \\
    &\overset{\text{LTE}}{=} \frac{E[(E[Y \mid X,U])^2] - E[(E[Y \mid X])^2]}{E[Y^2] - E[(E[Y \mid X])^2]} \\
    &= \frac{\eta^2_{Y \sim (X,U)} - \eta^2_{Y \sim X}}{1 - \eta^2_{Y \sim X}} = \eta^2_{Y \sim U \mid X}.
\end{align*}
\end{proof}

\subsection{Important relationships}

\begin{Proposition}[Relationship between selection odds] \label{prop: among_odds} Under the weak overlap condition of Assumption~\ref{asmp:regularity}, the following identities hold:
    \begin{enumerate}
        \item[(1)] \textbf{Odds ratio as change of measure.} \[\frac{O_{XU}}{O_X} = \frac{\dd P_{U|X,D=1}}{\dd P_{U|X,D=0}}.\]
        \item[(2)] \textbf{Short odds as the projection of long odds.} \[O_X = E[O_{XU} \mid X,D=0].\]
        \item[(3)] \textbf{Control-to-treated expectation identity for odds.} \[E[O^2_{XU} \mid X,D=0] = O_X \times E[O_{XU} \mid X,D=1].\]
        \item[(4)] \textbf{From conditional to marginal odds identity.} \[E[O_{XU}] = p\times (E[O_{XU}\mid D=1] + 1) = (1-p)E[O^2_{XU}\mid D=0] + p.\]
    \end{enumerate}
\end{Proposition}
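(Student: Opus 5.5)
The plan is to derive all four identities from Bayes' rule applied to the joint law of $(X,U,D)$. The basic tool is to write conditional densities with respect to a common dominating measure. With $e(X,U):=P(D=1\mid X,U)$ and $\pi=P(D=1\mid X)$, Bayes' rule gives
\[
\frac{\dd P_{U\mid X,D=1}}{\dd P_{U\mid X}}=\frac{e(X,U)}{\pi},\qquad \frac{\dd P_{U\mid X,D=0}}{\dd P_{U\mid X}}=\frac{1-e(X,U)}{1-\pi}.
\]
Weak overlap gives $e<1$ almost surely on the treated, so $P_{U\mid X,D=1}\ll P_{U\mid X,D=0}$. Dividing the two derivatives by the chain rule for Radon--Nikodym derivatives yields
\[
\frac{e/(1-e)}{\pi/(1-\pi)}=\frac{O_{XU}}{O_X},
\]
which is identity (1).

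Identity (2) follows by integrating (1) against $P_{U\mid X,D=0}$. A Radon--Nikodym derivative integrates to one, so $E[O_{XU}/O_X\mid X,D=0]=1$. Because $O_X$ is $X$-measurable, this is exactly $O_X=E[O_{XU}\mid X,D=0]$. Here I should note that the integral equals one, rather than the mass of the absolutely continuous part, precisely because absolute continuity holds in this direction.

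For (3), apply the change of measure from (1) to the function $O_{XU}$:
\[
E[O_{XU}\mid X,D=1]=E\!\left[O_{XU}\,\frac{O_{XU}}{O_X}\,\middle|\,X,D=0\right].
\]
Multiplying through by $O_X$ gives (3). Both sides may be $+\infty$ on a null set, and integrability follows from the overlap condition after averaging.

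For (4), the first equality expands $E[O_{XU}]=pE[O_{XU}\mid D=1]+(1-p)E[O_{XU}\mid D=0]$. I then need $(1-p)E[O_{XU}\mid D=0]=p$, which follows from $E[(1-D)O_{XU}]=E[(1-e)\,e/(1-e)]=E[e]=p$. The second equality uses the same trick one level up: $E[D\,O_{XU}]=E[e\,O_{XU}]=E[(1-e)O_{XU}^2]=(1-p)E[O_{XU}^2\mid D=0]$. Alternatively, it follows by averaging (3) and using (2).

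The main obstacle is measure-theoretic care rather than algebra. I need to justify absolute continuity, handle sets where $e\in\{0,1\}$, and guarantee finiteness of the second moments. The weak overlap assumption $E[p^{-1}(1-e)^{-1}\mid D=1]<\infty$ gives exactly $E[O_{XU}\mid D=1]<\infty$, which via (4) makes every quantity finite. I would state this reduction first and then carry out the computations above.
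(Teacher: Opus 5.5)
Your proposal is correct and matches the paper's proof in substance: (1) via Bayes' rule, (2) by integrating the Radon--Nikodym derivative against $P_{U\mid X,D=0}$, and (3) by the change of measure from (1). Your derivation of $P_{U\mid X,D=1}\ll P_{U\mid X,D=0}$ from weak overlap, and your observation that weak overlap is exactly $E[O_{XU}\mid D=1]<\infty$, supply rigor the paper leaves implicit. The only variation is in (4): your direct computations $E[(1-D)O_{XU}]=E[e]=p$ and $E[D\,O_{XU}]=E[(1-e)O_{XU}^2]$ are self-contained, whereas the paper cites (2) and (3) read with a trivial conditioning set (i.e., $E[O_X\mid D=0]=O$ and $E[O_{XU}^2\mid D=0]=O\,E[O_{XU}\mid D=1]$). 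Your alternative of averaging (3) over $X$ would additionally need $O_X/O=\dd P_{X\mid D=1}/\dd P_{X\mid D=0}$, not just (2).
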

\begin{proof} We establish each of the four properties in turn.
    \begin{enumerate}
        \item[(1)] By Bayes' rule, the following holds: 
        \begin{align*}
            \frac{O_{XU}}{O_X} 
            &= \frac{P(D=1 \mid X,U)P(U \mid X)}{P(D=1 \mid X)} \times \frac{P(D=0 \mid X)}{P(D=0 \mid X,U)P(U \mid X)} \\
            &= \frac{P(D=1,U \mid X)}{P(D=1 \mid X)} \times \frac{P(D=0 \mid X)}{P(D=0,U \mid X)} \\
            &= \frac{P(U \mid X,D=1)}{P(U \mid X,D=0)} \overset{\text{def.}}{=} \frac{\dd P_{U|X,D=1}}{\dd P_{U|X,D=0}}. 
        \end{align*}
        \item[(2)] We will use (1) to show the equality:
        \begin{align*}
            E\left[\frac{O_{XU}}{O_X}  \mid X,D=0\right] &\overset{(1)}{=} \int \frac{\dd P_{U|X,D=1}(u)}{\dd P_{U|X,D=0}(u)} \times \dd P_{U|X,D=0}(u) = \int \dd P_{U|X,D=1}(u) = 1, \\
            \implies O_X&= E[O_{XU} \mid X,D=0]. 
        \end{align*}
        \item[(3)] We will use (1) to show the equality:
        \begin{align*}
            E\left[\left(\frac{O_{XU}}{O_X}\right)^2  \mid X,D=0\right] &\overset{(1)}{=} \int \left(\frac{\dd P_{U|X,D=1}(u)}{\dd P_{U|X,D=0}(u)}\right)^2 \times \dd P_{U|X,D=0}(u) \\
            &= \int \frac{\dd P_{U|X,D=1}(u)}{\dd P_{U|X,D=0}(u)} \times \dd P_{U|X,D=1}(u) \\
            &\overset{(1)}{=} E\left[\frac{O_{XU}}{O_X}   \mid X,D=1\right], \\
            \implies E[O^2_{XU} \mid X,D=0] &= O_X \times E[O_{XU} \mid X,D=1]. 
        \end{align*}
        \item[(4)] By the tower property, the following holds:
        \begin{align*}
            E[O_{XU}] &\overset{\text{LTE}}{=} E\left[E[O_{XU}\mid D]\right] \\
            &= E[O_{XU}\mid D=1] \times p + E[O_{XU}\mid D=0] \times (1-p)\\
            &\overset{(2)}{=} E[O_{XU}\mid D=1] \times p + O \times (1-p) \\
            &= p\times (E[O_{XU}\mid D=1] + 1) \\
            &\overset{(3)}{=} p + p\times \frac{E[O^2_{XU}\mid D=0]}{O} \\
            &= p + (1-p)E[O^2_{XU}\mid D=0]. 
        \end{align*}
    \end{enumerate}
\end{proof}

\begin{Proposition}[Odds and $\chi^2$-divergence] Under the weak overlap condition of Assumption~\ref{asmp:regularity}, the following identities hold: \label{prop: odds_and_divergence}
    \begin{enumerate}
        \item[(1)] \textbf{$\chi^2$-divergence as squared coefficient of variation.}
        \[
        \chi^2(P_{X,U|D=1}\|P_{X,U|D=0}) = \operatorname{Var}\left(\frac{O_{XU}}{O} \mid D=0\right) = \text{CV}_{O_{XU}|0}^2.
        \]
        \item[(2)] \textbf{$\chi^2$-divergence as average conditional expected odds.} 
        \begin{align*}
        \chi^2(P_{X,U|D=1}\|P_{X,U|D=0}) &= E\left[\frac{O_{XU}}{O} \mid D=1\right]-1, \\
        \chi^2(P_{U|X,D=1}\|P_{U|X,D=0}) &= E\left[\frac{O_{XU}}{O_X}  \mid X,D=1\right]-1.    
        \end{align*}
        \item[(3)] \textbf{Conditional $\chi^2$-weighted decomposition of odds increase.}
        \[
        E[O_{XU}\mid D=1] - E[O_X\mid D=1] = E\left[\chi^2(P_{U|X,D=1}\|P_{U|X,D=0})\times O_X \mid D=1\right].
        \]
        \item[(4)] \textbf{Marginal-conditional decomposition of $\chi^2$-divergence.}
        \[\chi^2(P_{X,U|D=1}\|P_{X,U|D=0}) = \chi^2(P_{X|D=1}\|P_{X|D=0}) + E_{P_{X|D=0}}\left[\left(\frac{\dd P_{X|D=1}}{\dd P_{X|D=0}}\right)^2\chi^2(P_{U|X,D=1}\|P_{U|X,D=0})\right].
        \]
    \end{enumerate}
\end{Proposition}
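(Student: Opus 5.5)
The plan is to reduce all four identities to the change-of-measure facts in Proposition~\ref{prop: among_odds}, together with their marginal analogue in Lemma~\ref{thm:main_connection}(i).

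\textbf{Parts (1) and (2).} Write $W:=O_{XU}/O=\dd P_{X,U|D=1}/\dd P_{X,U|D=0}$. By definition, $\chi^2=E[(W-1)^2\mid D=0]$. Since $E[W\mid D=0]=1$, this is exactly $\operatorname{Var}(W\mid D=0)$. Because $E[O_{XU}\mid D=0]=O$ (tower property plus Proposition~\ref{prop: among_odds}(2) and Bayes), the variance of $W$ is $\operatorname{Var}(O_{XU}\mid D=0)/E[O_{XU}\mid D=0]^2$, which is the squared coefficient of variation. This gives (1).

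For (2), expand $E[(W-1)^2\mid D=0]=E[W^2\mid D=0]-1$. Then change measure once:
\[
E[W^2\mid D=0]=\int W\,\dd P_{X,U|D=1}=E[W\mid D=1].
\]
The conditional version follows from the same argument given $X$. There, $O_{XU}/O_X$ is the density of $P_{U|X,D=1}$ relative to $P_{U|X,D=0}$ by Proposition~\ref{prop: among_odds}(1), and its conditional mean under $D=0$ is one by Proposition~\ref{prop: among_odds}(2).

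\textbf{Part (3).} Use the conditional identity from (2): $E[O_{XU}\mid X,D=1]=O_X\{1+\chi^2(P_{U|X,D=1}\|P_{U|X,D=0})\}$. Take $E[\cdot\mid D=1]$ via the tower property (conditioning on $X$ within $D=1$) and subtract $E[O_X\mid D=1]$.

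\textbf{Part (4).} Start from the marginal identity in (2): $\chi^2_{XU}=E[O_{XU}/O\mid D=1]-1$. Condition on $X$ and insert the conditional identity to get
\[
E\bigl[(O_X/O)(1+\chi^2_{U|X})\mid D=1\bigr]-1 .
\]
The first piece, $E[O_X/O\mid D=1]-1$, equals $\chi^2(P_{X|D=1}\|P_{X|D=0})$ by the same argument as in (2) applied to $X$ alone. This uses $O_X/O=\dd P_{X|D=1}/\dd P_{X|D=0}$, which is Bayes' rule. The remaining piece is
\[
E_{P_{X|D=1}}\bigl[(\dd P_{X|D=1}/\dd P_{X|D=0})\,\chi^2_{U|X}\bigr].
\]
One more change of measure to $P_{X|D=0}$ turns this into the stated squared-density weighting.

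\textbf{Main obstacle.} The work is bookkeeping rather than a genuine difficulty. One must ensure that weak overlap makes every expectation finite and justifies each change of measure, i.e.\ absolute continuity and $E[W\mid D=1]<\infty$. One must also keep straight which population (treated or control) each conditional expectation is taken over.
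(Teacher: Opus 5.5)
Your proposal is correct and follows essentially the same route as the paper: everything reduces to the change-of-measure facts $O_{XU}/O=\dd P_{X,U|D=1}/\dd P_{X,U|D=0}$ and $O_{XU}/O_X=\dd P_{U|X,D=1}/\dd P_{U|X,D=0}$. Parts (1)--(2) come from expanding the square and changing measure once, (3) from the tower property over $X$ within the treated, and (4) from combining the marginal and conditional forms of (2) with a final change of measure to $P_{X|D=0}$. The only cosmetic differences are that you inline the computation of (3) inside (4) instead of citing it, and you justify $E[O_{XU}\mid D=0]=O$ a bit more carefully than the paper, which simply cites the conditional identity.
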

\begin{proof}
    We establish each of the four properties in turn, using Proposition~\ref{prop: among_odds}. Throughout, we let $Z := (X,U)$. 
    \begin{enumerate}
        \item[(1)]
        \begin{align*}
            \chi^2(P_{Z|1}\|P_{Z|0}) &:= \int \left(\frac{\dd P_{Z|1}(z)}{\dd P_{Z|0}(z)} -1\right)^2 \dd P_{Z|0}(z) \\
            &= \int\left(\frac{\dd P_{Z|1}(z)}{\dd P_{Z|0}(z)}\right)^2 \times \dd P_{Z|0}(z)  + \int \dd P_{Z|0}(z) - 2\int\frac{P_{Z|1}(z)}{P_{Z|0}(z)} \dd P_{Z|0}(z) \\
            &= \int\left(\frac{\dd P_{Z|1}(z)}{\dd P_{Z|0}(z)}\right)^2 \dd P_{Z|0}(z) + 1 - 2 \\
            &=  E\left[\left(\frac{O_Z}{O}\right)^2 \mid D=0\right] -1 \text{ by Proposition~\ref{prop: among_odds} (1),} \\
            &= E\left[\left(\frac{O_Z}{O}\right)^2 \mid D=0\right] - \left(E\left[\frac{O_Z}{O} \mid D=0\right]\right)^2 \text{ by Proposition~\ref{prop: among_odds} (2),} \\
            &= \operatorname{Var}\left(\frac{O_Z}{O} \mid D=0\right) = \frac{\operatorname{Var}(O_Z\mid D=0)}{O^2} \\
            &= \frac{\operatorname{Var}(O_Z\mid D=0)}{E^2[O_Z\mid D=0]}\text{by Proposition~\ref{prop: among_odds} (2),} \\
            &\overset{\text{def.}}{=} \text{CV}_{O_{XU}|0}^2. 
            \end{align*}
        Following similar steps, we can show 
        \begin{align*}
            \chi^2(P_{U|X,D=1}\|P_{U|X,D=0}) = \operatorname{Var}\left(\frac{O_{XU}}{O_X}  \mid X,D=0\right) = \text{CV}_{O_{XU}|X,0}^2.
        \end{align*}
        \item[(2)]
        \begin{align*}
            \chi^2(P_{Z|1}\|P_{Z|0}) &\overset{(1)}{=} \operatorname{Var}\left(\frac{O_Z}{O} \mid D=0\right) = \frac{\operatorname{Var}(O_Z\mid D=0)}{O^2} \\
            &= \frac{E[O^2_Z\mid D=0] - O^2}{O^2} \text{ by Proposition~\ref{prop: among_odds} (2),} \\
            &= E\left[\frac{O_Z}{O} \mid D=1\right] - 1 \text{ by Proposition~\ref{prop: among_odds} (3).}
        \end{align*}
        Following similar steps, we can show
        \begin{align*}
            \chi^2(P_{U|X,D=1}\|P_{U|X,D=0}) = E\left[\frac{O_{XU}}{O_X}  \mid X,D=1\right] - 1. 
        \end{align*}
        \item[(3)]
        \begin{align*}
            &E[O_{XU}\mid D=1] - E[O_X\mid D=1] = E\left[\left(\frac{O_{XU}}{O_X} - 1\right) \times O_X \mid D=1\right] \\
            &\overset{\text{LTE}}{=} E\left[\left(E\left[\frac{O_{XU}}{O_X}   \mid X,D=1\right]-1\right) \times O_X \mid D=1\right] \\
            &\overset{(2)}{=} E\left[\chi^2(P_{U|X,D=1}\|P_{U|X,D=0}) \times O_X\mid D=1\right].
        \end{align*}
        \item[(4)] 
        \begin{align*}
            &\chi^2(P_{X,U|1}\|P_{X,U|0}) - \chi^2(P_{X|1}\|P_{X|0}) \overset{(2)}{=} E\left[\frac{O_{XU}}{O} \mid D=1\right] - E\left[\frac{O_X}{O} \mid D=1\right] \\
            &\overset{(3)}{=} \frac{1}{O}E\left[\chi^2(P_{U|X,D=1}\|P_{U|X,D=0})\times O_X \mid D=1\right] \\
            &= E\left[\chi^2(P_{U|X,D=1}\|P_{U|X,D=0})\times \left(\frac{\dd P_{X|1}}{\dd P_{X|0}}\right) \mid D=1\right]\text{ by Proposition~\ref{prop: among_odds} (1),} \\
            &= \int \chi^2(P_{U|X,D=1}\|P_{U|X,D=0}) \times \left(\frac{\dd P_{X|1}}{\dd P_{X|0}}\right) \dd P_{X|1} \\
            &= \int \chi^2(P_{U|X,D=1}\|P_{U|X,D=0}) \times \left(\frac{\dd P_{X|1}}{\dd P_{X|0}}\right)^2 \dd P_{X|0} \\
            &= E_{P_{X|0}}\left[\left(\frac{\dd P_{X|1}}{\dd P_{X|0}}\right)^2\chi^2(P_{U|X,D=1}\|P_{U|X,D=0})\right]. 
        \end{align*}
    \end{enumerate}
\end{proof}

\subsection{Standard DiD assumptions and identification of ATT}
\label{app:did-assump}

This section presents the standard identifying assumptions for the ATT in the canonical DiD design and shows that, under these assumptions, our target estimand admits a causal interpretation.

\begin{assumption}[Consistency]  \label{assump:consistency}
    Observed outcomes are generated as $Y:= Y(D)$.
\end{assumption}

\begin{assumption}[No anticipation]  \label{assump:No anticip}
    Treatment has no effect on pre-treatment outcomes.
    \[E[Y_1(0)\mid D=1, X,U] = E[Y_1(1)\mid D=1, X,U] \ \ a.s.\]
\end{assumption}

\begin{assumption}[Conditional parallel trends assumption] \label{assump:PTA} Let $\Delta Y(0) := Y_2(0) - Y_1(0)$ denote the untreated potential outcome evolution, and assume that 
    \[E[\Delta Y(0)\mid D=1,X,U] = E[\Delta Y(0)\mid D=0,X,U] \ \ a.s.\]
\end{assumption}

\begin{assumption}[Strong overlap]\label{assump:Overlap}
    There exists $\epsilon > 0$ such that
    \[p := P(D=1) \ge \epsilon, \ \text{and} \ \pi_{XU} := P(D=1\mid X,U) \le 1 - \epsilon, \ \ a.s.\] 
\end{assumption}

\begin{remark}   
Assumption~\ref{assump:Overlap} is stronger than what is needed to derive the OVB formula of Theorem~\ref{thm:main_npm}, in which we impose only the weak overlap condition of Assumption~\ref{asmp:regularity}.
\end{remark}

\begin{Proposition}[DiD identifies the ATT]\label{lemma:Identification}
    Let $\Delta Y:= Y_2 - Y_1$ denote the observed outcome evolution, and let $g_0(X,U) := E[\Delta Y \mid D=0,X,U]$. Under Assumptions \ref{assump:consistency}-\ref{assump:Overlap}, the ATT is identified by \[\theta:= E\left[\Delta Y - g_0(X,U) \mid D=1\right].\]
    \begin{proof}
        By no anticipation, 
        \begin{align*}
            \text{ATT} &=  E\left[Y_2(1) - Y_2(0)\middle|D=1\right] \\
            &= E[Y_2(1) - (Y_1(1) - Y_1(1)) - Y_2(0) + (Y_1(0) - Y_1(0)) \mid D =1] \\
            &= E[\Delta Y(1) \mid D=1] - E[\Delta Y(0) \mid D=1] + E[Y_1(1) - Y_1(0) \mid D=1] \\
            &= E[\Delta Y(1) \mid D=1] - E[\Delta Y(0) \mid D=1].
        \end{align*}
        Then, we have
        \begin{align*}
            \text{ATT} &= E[\Delta Y(1) \mid D=1] - E[\Delta Y(0) \mid D=1] \\
            &= E[\Delta Y \mid D=1] - E[\Delta Y(0) \mid D=1] \text{ by consistency,} \\
            &= E[\Delta Y \mid D=1] - E\left[E[\Delta Y(0) \mid X,U,D=1]\mid D=1\right] \text{ by LTE,} \\
            &= E[\Delta Y \mid D=1] - E\left[E[\Delta Y(0) \mid X,U,D=0]\mid D=1\right] \text{ by conditional PTA,} \\
            &= E[\Delta Y - g_0(X,U) \mid D=1] \\
            &= \theta.\qedhere
        \end{align*}
    \end{proof}
\end{Proposition}

\section{Deferred proofs}

For simplicity, let $\alpha_0 := O_{XU}/O$, and $\alpha_{0s}:= O_X/O$. 

\begin{proof}[Proof of Lemma~\ref{lemma:cond_RR_main}]~\\
    We begin by showing that $\theta_{0s} = E[g_{0s}\alpha_{0s}\mid D=0]$. 
        \begin{align}
            \theta_{0s} &:= E\left[g_{0s}(X) \mid D=1\right] = E\left[\frac{D}{p} \times g_{0s}(X)\right] = E\left[\frac{P(D=1 \mid X)}{p} \times g_{0s}(X)\right] \text{ by LTE,} \nonumber \\
            &= E\left[\left(\frac{P(D=1 \mid X)}{P(D=0 \mid X)} \times \frac{1-p}{p}\right) \times \frac{P(D=0 \mid X)}{1-p} \times g_{0s}(X)\right] \nonumber \\
            &= E\left[\alpha_{0s}(X)g_{0s}(X) \times \frac{P(D=0 \mid X)}{1-p}\right] \nonumber \\
            &= E\left[\alpha_{0s}(X)g_{0s}(X) \times \frac{1-D}{1-p}\right] \text{ by LTE,} \nonumber \\
            &= E[g_{0s}\alpha_{0s}\mid D=0]. \nonumber 
        \end{align}
        The same argument shows that $\theta_0 = E[g_0\alpha_0\mid D=0]$. 
        Next, by Proposition~\ref{prop: among_odds} (2), we have that $\alpha_{0s} = E[\alpha_{0}\mid X,D=0]$ and that $E[\alpha_{0}\mid D=0] = E[\alpha_{0s}\mid D=0] = 1$. 
\end{proof}

    \begin{proof}[Proof of Theorem~\ref{thm:main_npm}]~\\
        We first show that $\theta - \theta_s = -\operatorname{Cov}(g_0 - g_{0s}, \alpha_{0} - \alpha_{0s}\mid D=0)$. 
        \begin{align}
        \theta - \theta_s &= -(\theta_0 - \theta_{0s}) \nonumber \\
        &=E[g_{0s}\alpha_{0s}\mid D=0] - E[g_0\alpha_0\mid D=0] \nonumber \\
        &= E[g_{0s}\alpha_{0s}\mid D=0] - \left[E[(g_{0s}+g_0-g_{0s})(\alpha_{0s} + \alpha_{0} - \alpha_{0s})\mid D=0]\right] \nonumber \\
            &= -\left[E[g_{0s}(\alpha_{0} - \alpha_{0s})\mid D=0] + E[\alpha_{0s}(g_0 - g_{0s})\mid D=0] + E[(g_0 - g_{0s})(\alpha_{0} - \alpha_{0s})\mid D=0]\right] \nonumber \\
            &= -E[(g_0 - g_{0s})(\alpha_{0} - \alpha_{0s})\mid D=0] \nonumber \\
            &= -\operatorname{Cov}(g_0 - g_{0s}, \alpha_{0} - \alpha_{0s}\mid D=0). \nonumber
    \end{align}
    The second-to-last equality uses that given $D=0$, $g_{0s}$ is orthogonal to $(\alpha_{0} - \alpha_{0s})$ and $\alpha_{0s}$ is orthogonal to $(g_0 - g_{0s})$. The last equality uses that $E[\alpha_{0}\mid D=0] = E[\alpha_{0s}\mid D=0] = 1$.

    Next, we decompose this covariance in the following way: 
    \begin{align}
    |\theta - \theta_{s}|^2 &=  \operatorname{Cov}^2(g_0 - g_{0s}, \alpha_{0} - \alpha_{0s}\mid D=0) \nonumber \\
    &= \underbrace{\frac{\operatorname{Cov}^2(g_0 - g_{0s}, \alpha_{0} - \alpha_{0s}\mid D=0)}{\operatorname{Var}(g_0 - g_{0s}\mid D=0)\operatorname{Var}(\alpha_{0} - \alpha_{0s}\mid D=0)}}_{=:\rho_0^2} \times \operatorname{Var}(g_0 - g_{0s}\mid D=0)\operatorname{Var}(\alpha_{0} - \alpha_{0s}\mid D=0) \nonumber \\
    \text{Since } &E[g_0 - g_{0s}\mid D=0] = E[\alpha_{0} - \alpha_{0s}\mid D=0] = 0, \nonumber \\
    &= \rho_0^2E[(g_0-g_{0s})^2\mid D=0]E[(\alpha_{0} - \alpha_{0s})^2\mid D=0] \nonumber \\
    &= \rho_0^2 \times \underbrace{\frac{E[(g_0-g_{0s})^2\mid D=0]}{E[(\Delta Y-g_{0s})^2\mid D=0]}}_{=: C_{0\Delta Y}^2} \times \underbrace{\frac{E[(\alpha_{0} - \alpha_{0s})^2\mid D=0]}{E[\alpha_{0s}^2\mid D=0]}}_{=: C_{0D}^2} \times S_0^2\nonumber \\
    &= \rho_0^2 C_{0\Delta Y}^2C_{0D}^2S_0^2, \nonumber
    \end{align}
    where $S_0^2 := E[(\Delta Y -g_{0s})^2\mid D=0]E[\alpha_{0s}^2\mid D=0]$. 
    
    Now, we show that $C_{0\Delta Y}^2 = R^2_{\Delta Y-g_{0s} \sim g_0-g_{0s}\mid D=0}$:
    \begin{align}
        R^2_{\Delta Y-g_{0s} \sim g_0-g_{0s}\mid D=0} &:= 1 - \frac{E[(\Delta Y-g_0)^2\mid D=0]}{E[(\Delta Y - g_{0s})^2\mid D=0]} \nonumber \\
        &= \frac{E[(\Delta Y - g_0 + g_0 -g_{0s})^2\mid D=0] - E[(\Delta Y - g_{0})^2\mid D=0]}{E[(\Delta Y - g_{0s})^2\mid D=0]} \nonumber \\
        &= \frac{E[(g_0-g_{0s})^2\mid D=0] + 2E[(\Delta Y-g_0)(g_0 - g_{0s})\mid D=0]}{E[(\Delta Y - g_{0s})^2\mid D=0]} \nonumber \\
        &\overset{\text{LTE}}{=} \frac{E[(g_0-g_{0s})^2\mid D=0] + 2E[(g_0-g_0)(g_0 - g_{0s})\mid D=0]}{E[(\Delta Y - g_{0s})^2\mid D=0]} \nonumber \\
        &= C_{0\Delta Y}^2. \nonumber
    \end{align}
    
    Next, we relate $C_{0\Delta Y}^2$ to the nonparametric partial $R^2$ measure:
    \begin{align}
        C_{0\Delta Y}^2 &:= \frac{E[(g_0 - g_{0s})^2\mid D=0]}{E[(\Delta Y-g_{0s})^2\mid D=0]} \nonumber \\
        &\overset{\text{LTE}}{=} \frac{E[g_0^2\mid D=0] - E[g_{0s}^2\mid D=0]}{E[\Delta Y^2\mid D=0] - E[g_{0s}^2\mid D=0]} \nonumber \\
        &= \frac{\frac{E[g_{0}^2\mid D=0]}{E[\Delta Y^2\mid D=0]}-\frac{E[g_{0s}^2\mid D=0]}{E[\Delta Y^2\mid D=0]}}{1 - \frac{E[g_{0s}^2\mid D=0]}{E[\Delta Y^2\mid D=0]}} \nonumber \\
        &\overset{\text{def.}}{=} \eta^2_{\Delta Y\sim U \mid X,D=0}. \nonumber
        \end{align}

        Note that
        \begin{align}
        R^2_{\alpha_0 \sim \alpha_{0s}\mid D=0} &:= 1 - \frac{E[(\alpha_0 - \alpha_{0s})^2\mid D=0]}{E[\alpha_{0}^2\mid D=0]} \nonumber \\
        &= \frac{E[\alpha_{0}^2\mid D=0] - (E[\alpha_{0}^2\mid D=0] + E[\alpha_{0s}^2\mid D=0] - 2E[\alpha_{0}\alpha_{0s}\mid D=0])}{E[\alpha_{0}^2\mid D=0]} \nonumber \\
        &\overset{\text{LTE}}{=} \frac{E[\alpha_{0}^2\mid D=0] - (E[\alpha_{0}^2\mid D=0] + E[\alpha_{0s}^2\mid D=0] - 2E[\alpha_{0s}^2\mid D=0])}{E[\alpha_{0}^2\mid D=0]}  \nonumber \\
        &= \frac{E[\alpha_{0s}^2\mid D=0]}{E[\alpha_{0}^2\mid D=0]}. \nonumber
        \end{align}
        Additionally,
        \begin{align*}
        R^2_{O_{XU}\sim O_X\mid D=0} &:= 1 - \frac{E[(O_{XU} - O_X)^2\mid D=0]}{E[O^2_{XU}\mid D=0]} \nonumber \\
        &= \frac{E[O^2_{XU}\mid D=0] - \left(E[O^2_{XU}\mid D=0] + E[O^2_X\mid D=0] - 2E[O_{XU}O_X\mid D=0]\right)}{E[O^2_{XU}\mid D=0]} \nonumber \\
        &\overset{\text{LTE}}{=}  \frac{E[O^2_{XU}\mid D=0] - \left(E[O^2_{XU}\mid D=0] + E[O^2_X\mid D=0] - 2E[O^2_X\mid D=0]\right)}{E[O^2_{XU}\mid D=0]} \nonumber \\
        &= \frac{E[O^2_X\mid D=0]}{E[O^2_{XU}\mid D=0]} =\frac{E[\alpha_{0s}^2\mid D=0]}{E[\alpha_{0}^2\mid D=0]} \\
        &= R^2_{\alpha_0 \sim \alpha_{0s}\mid D=0}. \nonumber
        \end{align*}

        Accordingly, we express $C_{0D}^2$ in terms of $R^2$ measures:
        \begin{align}
        C_{0D}^2 &= \frac{E[(\alpha_0 - \alpha_{0s})^2\mid D=0]}{E[\alpha_{0s}^2\mid D=0]} = \frac{E[\alpha_{0}^2\mid D=0] + E[\alpha_{0s}^2\mid D=0] - 2E[\alpha_{0} \alpha_{0s}\mid D=0]}{E[\alpha_{0s}^2\mid D=0]} \nonumber \\
        &\overset{\text{LTE}}{=} \frac{E[\alpha_{0}^2\mid D=0] - E[\alpha_{0s}^2\mid D=0]}{E[\alpha_{0s}^2\mid D=0]}\nonumber \\
        &= \frac{1 - R^2_{\alpha_{0}\sim \alpha_{0s}\mid D=0}}{R^2_{\alpha_{0} \sim \alpha_{0s}\mid D=0}}\nonumber \\
        &= \frac{1-R^2_{O_{XU}\sim O_{X}|D=0}}{R^2_{O_{XU}\sim O_{X}|D=0}}. \nonumber
    \end{align}

    Finally, we simplify the scaling factor $S_0^2$ as follows:
    \begin{align}
        \sigma_{0s}^2 &:= E[(\Delta Y - g_{0s})^2\mid D=0] = \operatorname{Var}(\Delta Y - g_{0s}\mid D=0) \nonumber \\
        &\overset{\text{LTV}}{=} \operatorname{Var}\left(E\left[\Delta Y - g_{0s} \mid X,D=0\right] \mid D=0\right) + E\left[\operatorname{Var}\left(\Delta Y - g_{0s} \mid X,D=0\right) \mid D=0\right] \nonumber \\
        &= E[\operatorname{Var}(\Delta Y \mid X,D=0)\mid D=0], \text{ and}\nonumber \\
        \nu_{0s}^2 &:= E[\alpha_{0s}^2\mid D=0] \overset{\text{def.}}{=} E\left[\left(\frac{O_X}{O}\right)^2 \middle| D=0\right], \nonumber \\
        \implies S_0^2 &:= \sigma_{0s}^2\nu_{0s}^2 = E[\operatorname{Var}(\Delta Y \mid X,D=0)\mid D=0] \times E\left[\left(\frac{O_X}{O}\right)^2 \middle| D=0\right]. \nonumber \qedhere
    \end{align}
    \end{proof}

\begin{proof}[Proof of Lemma~\ref{thm:main_connection}]
    \emph{Parts 1} and \emph{2} are immediate consequences of Proposition~\ref{prop: among_odds}, items~(1) and~(3) respectively. \emph{Part~3} follows directly from Proposition~\ref{prop: odds_and_divergence}(2).
\end{proof}

    \begin{proof}[Proof of Corollary~\ref{thm:alt-selection}]~\\
    \emph{Part 1.} 
     By Proposition~\ref{prop: among_odds}(3), we transform $R^2_{O_{XU} \sim O_X\mid D=0}$ into an expression in terms of the average selection odds among the treated.
    \begin{align}
        R^2_{O_{XU} \sim O_X\mid D=0} &= \frac{E[O^2_X\mid D=0]}{E[O^2_{XU}\mid D=0]} = \frac{E[O_X\mid D=1]}{E[O_{XU}\mid D=1]},\nonumber \\
        \implies C_{0D}^2 &= \frac{1 - R^2_{O_{XU} \sim O_X\mid D=0}}{R^2_{O_{XU} \sim O_X\mid D=0}} = \frac{E[O_{XU}\mid D=1] - E[O_X\mid D=1]}{E[O_X\mid D=1]}. \nonumber 
    \end{align}
    Correspondingly, we have
    \begin{align*}
        1 - R^2_{O_{XU} \sim O_X\mid D=0} &= 1 - \frac{E[O_X\mid D=1]}{E[O_{XU}\mid D=1]} = \frac{E[O_{XU}\mid D=1] - E[O_X\mid D=1]}{E[O_{XU}\mid D=1]}, 
    \end{align*}
    and 
    \begin{align*}
        \nu_{0s}^2 &= E\left[\left(\frac{O_X}{O}\right)^2 \middle| D=0\right] = E\left[\frac{O_X}{O} \middle| D=1\right]. 
    \end{align*}

    \noindent
    \emph{Part 2.} By Proposition~\ref{prop: among_odds}(2) and Proposition~\ref{prop: odds_and_divergence}(1), we have that, 
        \begin{align*}
            \chi^2(P_{X|1}\|P_{X|0}) = E\left[\left(\frac{O_X}{O}\right)^2 \middle| D=0\right] - 1, \text{ and } \chi^2(P_{X,U|1}\|P_{X,U|0}) = E\left[\left(\frac{O_{XU}}{O}\right)^2 \middle| D=0\right] - 1. 
        \end{align*}
        Therefore, 
        \begin{align*}
            R^2_{O_{XU} \sim O_X\mid D=0} &= \frac{E[O^2_X\mid D=0]}{E[O^2_{XU}\mid D=0]} \\
            &= \frac{(\chi^2(P_{X|1}\|P_{X|0}) +1) \times O^2}{(\chi^2(P_{X,U|1}\|P_{X,U|0}) +1) \times O^2} \\
            &= \frac{\chi^2(P_{X|1}\|P_{X|0}) +1}{\chi^2(P_{X,U|1}\|P_{X,U|0}) +1}, \\
            \implies C_{0D}^2 &= \frac{1 - R^2_{O_{XU} \sim O_X\mid D=0}}{R^2_{O_{XU} \sim O_X\mid D=0}} = \frac{1- \frac{\chi^2(P_{X|1}\|P_{X|0}) +1}{\chi^2(P_{X,U|1}\|P_{X,U|0}) +1}}{\frac{\chi^2(P_{X|1}\|P_{X|0}) +1}{\chi^2(P_{X,U|1}\|P_{X,U|0}) +1}} \\
            &= \frac{\chi^2(P_{X,U|1}\|P_{X,U|0}) - \chi^2(P_{X|1}\|P_{X|0})}{\chi^2(P_{X|1}\|P_{X|0}) + 1}. 
        \end{align*}
        Correspondingly, we have
        \begin{align*}
            1 - R^2_{O_{XU} \sim O_X\mid D=0} &= 1 - \frac{\chi^2(P_{X|1}\|P_{X|0}) +1}{\chi^2(P_{X,U|1}\|P_{X,U|0}) +1} = \frac{\chi^2(P_{X,U|1}\|P_{X,U|0}) - \chi^2(P_{X|1}\|P_{X|0})}{\chi^2(P_{X,U|1}\|P_{X,U|0}) +1},
        \end{align*}
        and \[\nu_{0s}^2 := E\left[\left(\frac{O_X}{O}\right)^2 \middle| D=0\right]  = \chi^2(P_{X|1}\|P_{X|0}) + 1.\]
    \end{proof}

\begin{proof}[Proof of Proposition~\ref{prop:lls-restriction}]
Multiplying the identity of Proposition~\ref{prop: CB_connection}(i) by equation~\eqref{eq:rho-unc-explicit} and simplifying, the treated-group terms cancel, yielding
\[
\rho^2 C_{\Delta Y}^2 = \frac{P(D=0)\,\sigma_{0s}^2}{E[\operatorname{Var}(\Delta Y \mid X, D)]}\,\rho_0^2\, C_{0\Delta Y}^2.
\]
The inequality follows since $\rho_0^2 \le 1$ and $C_{0\Delta Y}^2 \le 1$. Finally, combining this identity with equation~\eqref{eq:unc-cond-product} yields the equality $\sqrt{P(D=0)\,\sigma_{0s}^2/E[\operatorname{Var}(\Delta Y \mid X, D)]}\; C_D\, S = C_{0D}\, S_0$ stated in the main text; the degenerate case $\rho_0\, C_{0\Delta Y} = 0$ follows directly from the computations leading to \eqref{eq:unc-cond-product}, which do not involve the correlation factors.
\end{proof}

\begin{proof}[Proof of Proposition~\ref{prop:hp-correlation}]
By Proposition~\ref{prop: HS_connection}(ii) and equation~\eqref{eq:R2-DY-g0s},
\[
\rho_{w0}^2 = \rho_0^2\, C_{0\Delta Y}^2\, \frac{\sigma_{0s}^2}{\operatorname{Var}(\Delta Y \mid D=0)}.
\]
The first inequality and its equality condition then follow from $\rho_0^2\, C_{0\Delta Y}^2 \le 1$. For the second inequality, equation~\eqref{eq:maximal-cor} implies $\operatorname{Cor}^2(O_X, \Delta Y \mid D=0) \le \operatorname{Cor}^2(g_{0s}, \Delta Y \mid D=0) = R^2_{\Delta Y \sim g_{0s} \mid 1, D=0}$, which rearranges to $\sigma_{0s}^2/\operatorname{Var}(\Delta Y \mid D=0) \le \bar\rho_{w0}^2$ by \eqref{eq:R2-DY-g0s}. Since $\operatorname{Cov}(O_X, \Delta Y \mid D=0) = \operatorname{Cov}(O_X, g_{0s} \mid D=0)$, equality holds iff the Cauchy--Schwarz inequality for $\operatorname{Cov}(O_X, g_{0s} \mid D=0)$ is an equality, that is, iff $g_{0s}$ is an affine function of $O_X$ almost surely among the untreated.
\end{proof}

\begin{proof}[Proof of Corollary~\ref{cor:msm-parameterization}]
By Theorem~3 of \citet{chernozhukov2022long}, the marginal sensitivity model implies the sharp bound
\[
C_D^2\leq\left(\frac{E[O_X]-p}{E[O_X]}\right)\frac{(\Lambda-1)^2}{\Lambda}.
\]
Proposition~\ref{prop: CB_connection}(ii), Corollary~\ref{thm:alt-selection}, and Proposition~\ref{prop: among_odds}(4) give the exact relationship
\[
C_D^2=\left(\frac{E[O_X]-p}{E[O_X]}\right)C_{0D}^2.
\]
The factor multiplying $C_{0D}^2$ is strictly positive under overlap, so the first two displays imply $C_{0D}^2\leq(\Lambda-1)^2/\Lambda$. Moreover, Proposition~\ref{prop: HS_connection}(i) and Proposition~\ref{prop: among_odds}(4) give, whenever $C_{w0D}^2$ is well defined,
\[
C_{w0D}^2=\left(\frac{E[O_X]-p}{E[O_X]-O}\right)C_{0D}^2,
\]
which yields the stated bound on $C_{w0D}^2$. Because the two relationships above are exact and their factors are fixed by the observed distribution and strictly positive on their respective domains, the same extremal distributions attain the bounds on $C_{0D}^2$ and $C_{w0D}^2$. Hence all three bounds are sharp.
\end{proof}

    \begin{proof}[Proof of Lemma~\ref{lemma:Properties of the DML Bound Estimators} and Theorem~\ref{thm: inference}]~\\
    Lemma~\ref{lemma:Properties of the DML Bound Estimators} is a direct application of Theorems~3.1 and 3.2 in \citet{chernozhukov2018double}. Valid estimation of covariance follows similarly to the proof of Theorem~3.2 in \citet{chernozhukov2018double}. The first claim of Theorem~\ref{thm: inference} then follows by the delta method \citep{van1996weak}, and the stated confidence intervals are justified by standard arguments on asymptotic normality.

    Below, we verify that the scores in \eqref{eq:psi-theta}–\eqref{eq:psi-nu} are Neyman orthogonal, as required by Theorem~3.1 of \citet{chernozhukov2018double}. 
        \begin{itemize}
            \item \textbf{Score for $\theta_s$.}
                \begin{align}
        \intertext{\emph{Moment condition:}}
        &E[\psi_{\theta_s}(Z;g_{0s},\pi,p)] \nonumber \\
        &= E[\Delta Y - g_{0s}\mid D=1] - E[(\Delta Y - g_{0s})\alpha_{0s}\mid D=0] - \theta_s \nonumber \\
        &\overset{\text{LTE}}{=} \underbrace{E[\Delta Y - g_{0s}\mid D=1]}_{\theta_s} - \theta_s\nonumber \\
        &= 0. \nonumber \\
        \intertext{\emph{Orthogonality with respect to $g_{0s}$:}}
        &\partial_{r} \left. \left\{E[\psi_{\theta_s}(Z; g_{0s} + r\underbrace{(g_{0s}^{\prime} - g_{0s})}_h, \pi,p)]\right\} \right|_{r=0} \nonumber \\
        &= \partial_{r} \left. \left\{E\left[\frac{D(\Delta Y - g_{0s}-rh)}{p} - \frac{\pi(1-D)(\Delta Y - g_{0s}-rh)}{p(1-\pi)}\right]\right\} \right|_{r=0} \nonumber \\
        &= \left. E\left[-\frac{Dh}{p} + \frac{\pi(1-D)h}{p(1-\pi)}\right] \right|_{r=0} \nonumber \\
        &= \frac{1}{p}E\left[\frac{\pi-D}{1-\pi}(g_{0s}^{\prime} - g_{0s})\right] \nonumber \\
        &\overset{\text{LTE}}{=} \frac{1}{p}E\left[E\left[\frac{\pi-D}{1-\pi}(g_{0s}^{\prime} - g_{0s})  \mid X\right]\right] \nonumber \\
        &= \frac{1}{p}E\left[E\left[\frac{\pi-\pi}{1-\pi}\right](g_{0s}^{\prime}(X) - g_{0s}(X))\right] = 0. \nonumber \\
        \intertext{\emph{Orthogonality with respect to $\pi$:}}
       &\partial_{r} \left. \left\{E[\psi_{\theta_s}(Z;\pi  + r\underbrace{(\pi^{\prime} - \pi)}_h, g_{0s},p)]\right\} \right|_{r=0} \nonumber \\
        &= \partial_{r} \left. \left\{E\left[ - \frac{(\pi + rh)(1-D)(\Delta Y - g_{0s})}{p(1-\pi - rh)}\right]\right\} \right|_{r=0} \nonumber \\
        &= E\left[-\frac{(1-D)(\Delta Y - g_{0s})h}{p(1 - \pi)^2}\right] \nonumber \\
        &\overset{\text{LTE}}{=} E\left[-(1-D)E\left[\frac{(\Delta Y - g_{0s})h}{p(1-\pi)^2} \mid D\right]\right] \nonumber \\
        &= -E\left[\frac{(\Delta Y - g_{0s})h}{p(1-\pi)^2} \mid D=0\right] \times (1-p) \nonumber \\
        &\overset{\text{LTE}}{=} -E\left[\frac{h}{p(1-\pi)^2}E[\Delta Y - g_{0s} \mid X,D=0] \mid D=0\right] \times (1-p) \nonumber \\
        &= 0.  \nonumber \\
        \intertext{\emph{Orthogonality with respect to $p$:}}
        &\partial_{p} \left\{E[\psi_{\theta_s}(Z;\theta_s, g_{0s},\pi,p)]\right\}  \nonumber \\
        &= E\left[-\frac{D(\Delta Y - g_{0s})}{p^2} + \frac{\pi(1-D)(\Delta Y - g_{0s})}{p^2(1-\pi)} + \frac{D\theta_s}{p^2}\right] \nonumber \\
        &= \frac{1}{p^2}E\left[\frac{(\pi - D)(\Delta Y - g_{0s})}{1-\pi} + D\theta_s\right] \nonumber \\
        &\overset{\text{LTE}}{=} \frac{1}{p^2}\left(E\left[-(\Delta Y - g_{0s}) + \theta_s \mid D=1\right]p + E\left[\frac{\pi(\Delta Y - g_{0s})}{1-\pi} \mid D=0\right] \times (1-p)\right) \nonumber \\
        &\overset{\text{LTE}}{=} 0 + 0 = 0. \nonumber
    \end{align}
        \item \textbf{Score for $\sigma_{0s}^2$.}
    \begin{align}
    \intertext{\emph{Moment condition:}}
    &E[\psi_{\sigma_{0s}^2}(Z;g_{0s},p)] \overset{\text{def.}}{=} \sigma_{0s}^2 - \sigma_{0s}^2 = 0. \nonumber \\
    \intertext{\emph{Orthogonality with respect to $g_{0s}$:}}
    &\partial_{r} \left. \left\{E[\psi_{\sigma_{0s}^2}(Z;g_{0s} + r(g_{0s}^{\prime}-g_{0s}),p)]\right\} \right|_{r=0} \nonumber \\
    &=  \partial_{r} \left. \left\{E\left[\frac{1-D}{1-p}(\Delta Y - g_{0s} - r(g_{0s}^{\prime}-g_{0s}))^2 - \frac{1-D}{1-p}\sigma_{0s}^2\right]\right\} \right|_{r=0}\nonumber \\ 
    &= \left. -2E\left[(\Delta Y - g_{0s} - r(g_{0s}^{\prime}-g_{0s}))(g_{0s}^{\prime}-g_{0s})\mid D=0\right]\right|_{r=0}\nonumber \\ 
    &= -2E\left[(\Delta Y - g_{0s})(g_{0s}^{\prime}-g_{0s})\mid D=0\right]\nonumber \\ 
    &\overset{\text{LTE}}{=} -2E[(g_{0s}^{\prime}-g_{0s})E[(\Delta Y - g_{0s})\mid D=0,X]\mid D=0] = 0. \nonumber \\
    \intertext{\emph{Orthogonality with respect to $p$:}}
    &\partial_{p}E[\psi_{\sigma_{0s}^2}(Z;g_{0s},p)] \nonumber \\
    &= \frac{1}{1-p}\left(E[(\Delta Y - g_{0s})^2\mid D=0] - \sigma_{0s}^2\right) \overset{\text{def.}}{=} 0. \nonumber 
\end{align}
    \item \textbf{Score for $\nu_{0s}^2$.}
    \begin{align}
    \intertext{\emph{Moment condition:}}
     &E[\psi_{\nu_{0s}^2}(Z;\pi,p)] \nonumber \\
     &= E\left[2\frac{D}{p}
        \left(\frac{O_X}{O} - \nu_{0s}^2\right)
        - \frac{1-D}{1-p} \times 
        \left[\left(\frac{O_X}{O}\right)^2 - \nu_{0s}^2\right]\right] \nonumber \\
    &= 2E\left[\alpha_{0s}\mid D=1\right] - 2\nu_{0s}^2 - E[\alpha_{0s}^2\mid D=0] + \nu_{0s}^2 \nonumber \\
    &\overset{\text{def.}}{=} 0. \nonumber \\
     \intertext{\emph{Orthogonality with respect to $\pi$:}}
    &\left. \partial_{r}\left\{E[\psi_{\nu_{0s}^2}(Z;\pi+r(\pi^{\prime} - \pi),p)]\right\}\right|_{r=0}\nonumber \\
    &= \left. \partial_{r}\left\{E\left[ 2\frac{D}{p}\frac{\pi+rh}{1-\pi-rh}\frac{1-p}{p} - \frac{1-D}{1-p}\frac{(\pi+rh)^2}{(1-\pi-rh)^2}\frac{(1-p)^2}{p^2}\right]\right\}\right|_{r=0} \nonumber \\ &= \left. E\left[\frac{2D(1-p)}{p^2}\frac{h}{(1-\pi-rh)^2} - \frac{(1-D)(1-p)}{p^2}\frac{2h(\pi+rh)}{(1-\pi-rh)^3}\right] \right|_{r=0}\nonumber  \\ 
    &=  E\left[\frac{2D(1-p)}{p^2}\frac{h}{(1-\pi)^2} - \frac{(1-D)(1-p)}{p^2}\frac{2h\pi}{(1-\pi)^3}\right] \nonumber \\
    &= \frac{2(1-p)}{p^2} \times E\left[\frac{h(D - \pi)}{(1-\pi)^3}\right] \nonumber \\  
    &\overset{\text{LTE}}{=} \frac{2(1-p)}{p^2} \times E\left[h \times E\left[\frac{D - \pi}{(1-\pi)^3}  \mid X\right]\right] \nonumber \\
    &= \frac{2(1-p)}{p^2} \times E\left[h \times \frac{\pi - \pi}{(1-\pi)^3}\right] = 0.  \nonumber \\
    \intertext{\emph{Orthogonality with respect to $p$:}}
    &\partial_{p}E[\psi_{\nu_{0s}^2}(Z;\pi,p)] \nonumber \\
    &= E\left[\frac{-2D}{p^2}\alpha_{0s} - \frac{2D}{p^3}\frac{\pi}{1-\pi}\right] \nonumber \\
    & \phantom{=} \hspace{2cm} - E\left[\frac{1-D}{(1-p)^2}\alpha_{0s}^2 - \frac{1-D}{1-p}\frac{\pi^2}{(1-\pi)^2}\frac{2(1-p)}{p^3}\right] \nonumber \\
    & \phantom{=} \hspace{3cm} - E\left[\frac{-2D}{p^2} - \frac{1-D}{(1-p)^2}\right]\nu_{0s}^2 \nonumber \\
    &= \frac{-2}{p}\nu_{0s}^2 - \frac{1}{1-p}\nu_{0s}^2 + \frac{2}{p}\nu_{0s}^2 + \frac{1}{1-p}\nu_{0s}^2 \nonumber \\
    & \phantom{=} \hspace{3cm} - E\left[ \frac{2D}{p^3}\frac{\pi}{1-\pi}\right] + E\left[\frac{\pi^2}{(1-\pi)^2}\frac{2(1-D)}{p^3}\right] \nonumber \\
    &= E\left[2\frac{\pi(\pi - D)}{p^3(1-\pi)^2}\right] \overset{\text{LTE}}{=} 0. \nonumber \qedhere
    \end{align}
        \end{itemize}
    \end{proof}

\section{Benchmarking analysis}
\label{app:bench}

\subsection{Benchmarking against observed covariates}
\label{app:bench-covariates}

Our approach follows \citet{cinelli2020making, cinelli2025omitted} and \citet{chernozhukov2022long}, extrapolating the strength of unobserved confounding from its relative strength compared to observed covariates. We begin by introducing notation and then define the benchmarking measures for each bias factor.

\subsubsection{Notation}

Suppose we use $X_j$, a subvector of the observed covariates $X$, as the benchmark covariate vector. Let $X_{-j}$ denote the vector of observed covariates other than $X_j$. Correspondingly, we define $g_{0s,-j} := E[\Delta Y \mid X_{-j},D=0]$, $\alpha_{0s,-j} = O_{X_{-j}}/O$, $\theta_{0s,-j} := E[g_{0s,-j}\mid D=1]$, and $\theta_{s,-j} := E[\Delta Y\mid D=1] - \theta_{0s,-j}$. We define the bias factors that quantify the strength of $X_j$ conditional on $X_{-j}$ as follows:
\begin{align*}
     C_{0\Delta Y,j}^2 &:= \eta^2_{\Delta Y \sim X_j \mid X_{-j},D=0}, \\
     C_{0D,j}^2 &:= \frac{1 - R^2_{O_X \sim O_{X_{-j}}\mid D=0}}{R^2_{O_X \sim O_{X_{-j}}\mid D=0}}\\
     &= \frac{E[O_{X}\mid D=1] - E[O_{X_{-j}}\mid D=1]}{E[O_{X_{-j}}\mid D=1]} \\
     &= \frac{\chi^2(P_{X|1}\|P_{X|0}) - \chi^2(P_{X_{-j}|1}\|P_{X_{-j}|0})}{\chi^2(P_{X_{-j}|1}\|P_{X_{-j}|0}) + 1}, \text{ and}\\
     \rho_{0,j} &:= \operatorname{Cor}(g_{0s} - g_{0s,-j}, O_X - O_{X_{-j}}\mid D=0). 
 \end{align*}
 The corresponding scaling factors are $\sigma_{0s,-j}^2 := E[\operatorname{Var}(\Delta Y \mid X_{-j},D=0)\mid D=0]$ and $\nu_{0s,-j}^2 := E[\alpha^2_{0s,-j}\mid D=0] = E[\alpha_{0s,-j}\mid D=1]$. 

\subsubsection{Relative bounds on \texorpdfstring{$1 - R^2_{O_{XU} \sim O_X | D=0}$}{1 - R2(OXU ~ OX | D=0)}}

We now discuss how to express the strength of unobserved confounders in explaining treatment assignment in terms of their relative strength as compared to $X_j$.

\begin{Definition}[Relative strength with treatment selection] 
\label{def:bench_metric}
    \small{
    \begin{align*}
        k_{0D,j} &:= \frac{R^2_{O_{X} \sim O_{X_{-j}}\mid D=0} - R^2_{O_{XU} \sim O_{X_{-j}}\mid D=0}}{1 - R^2_{O_X \sim O_{X_{-j}}\mid D=0}} \tag*{\text{(Residual $R^2$ in Selection Odds)}} \\
        &= \frac{\frac{E[O_{XU}\mid D=1] - E[O_{X_{-j}}\mid D=1]}{E[O_{XU}\mid D=1]} - \frac{E[O_{X}\mid D=1] - E[O_{X_{-j}}\mid D=1]}{E[O_{X}\mid D=1]}}{\frac{E[O_{X}\mid D=1] - E[O_{X_{-j}}\mid D=1]}{E[O_{X}\mid D=1]}} \tag*{\text{(Average Selection Odds)}} \\
        &= \frac{\frac{\chi^2(P_{X,U|1}\|P_{X,U|0}) - \chi^2(P_{X_{-j}|1}\|P_{X_{-j}|0})}{\chi^2(P_{X,U|1}\|P_{X,U|0})+1} - \frac{\chi^2(P_{X|1}\|P_{X|0}) - \chi^2(P_{X_{-j}|1}\|P_{X_{-j}|0})}{\chi^2(P_{X|1}\|P_{X|0})+1}}{\frac{\chi^2(P_{X|1}\|P_{X|0}) - \chi^2(P_{X_{-j}|1}\|P_{X_{-j}|0})}{\chi^2(P_{X|1}\|P_{X|0})+1}} \tag*{\text{(Distributional Imbalance)}}. 
    \end{align*}}
\end{Definition}

Using this definition, we obtain the following reparameterization of the sensitivity parameter $1 - R^2_{O_{XU} \sim O_X\mid D=0}$.

\begin{Proposition}[Reparameterization of $1 - R^2_{O_{XU} \sim O_X\mid D=0}$ in terms of relative strength]
    \begin{align*}
        1 - R^2_{O_{XU} \sim O_X\mid D=0} &= k_{0D,j} \times C_{0D,j}^2, \text{ and } C_{0D}^2 = \frac{1 - R^2_{O_{XU} \sim O_X\mid D=0}}{R^2_{O_{XU} \sim O_X\mid D=0}},
    \end{align*}
    where $C_{0D,j}^2 := \frac{1 - R^2_{O_X \sim O_{X_{-j}}\mid D=0}}{R^2_{O_X \sim O_{X_{-j}}\mid D=0}}$. 
\end{Proposition}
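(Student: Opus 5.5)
The plan is to reduce every $R^2$ in the statement to ratios of second moments of odds among the controls, and then do a short algebraic check. The second identity, $C_{0D}^2=(1-R^2_{O_{XU}\sim O_X\mid D=0})/R^2_{O_{XU}\sim O_X\mid D=0}$, is simply the definition of $C_{0D}^2$ in Theorem~\ref{thm:main_npm}, so nothing needs proving there. The substance is the first identity.

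The key tool is the nesting property. Since $X_{-j}\subseteq X\subseteq (X,U)$, Proposition~\ref{prop: among_odds}(2) applies at each level. Doing so, and replacing $X$ by $X_{-j}$ where appropriate, gives
\[
O_{X_{-j}}=E[O_X\mid X_{-j},D=0]=E[O_{XU}\mid X_{-j},D=0].
\]
Therefore the cross-moment computation used in the proof of Theorem~\ref{thm:main_npm} for $R^2_{O_{XU}\sim O_X\mid D=0}$ carries over verbatim. Writing $M_W:=E[O_W^2\mid D=0]$, we get
\[
R^2_{O_{XU}\sim O_X\mid D=0}=\frac{M_X}{M_{XU}},\qquad
R^2_{O_{XU}\sim O_{X_{-j}}\mid D=0}=\frac{M_{X_{-j}}}{M_{XU}},\qquad
R^2_{O_X\sim O_{X_{-j}}\mid D=0}=\frac{M_{X_{-j}}}{M_X}.
\]

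With these expressions in hand, the remaining work is algebra. Substituting into Definition~\ref{def:bench_metric},
\[
k_{0D,j}=\frac{M_{X_{-j}}/M_X-M_{X_{-j}}/M_{XU}}{1-M_{X_{-j}}/M_X}
=\frac{M_{X_{-j}}(M_{XU}-M_X)}{M_{XU}(M_X-M_{X_{-j}})}.
\]
Similarly, $C_{0D,j}^2=(M_X-M_{X_{-j}})/M_{X_{-j}}$. Multiplying the two, the factors $M_{X_{-j}}$ and $M_X-M_{X_{-j}}$ cancel, leaving $(M_{XU}-M_X)/M_{XU}=1-R^2_{O_{XU}\sim O_X\mid D=0}$, which is the claim.

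The only step needing care is justifying the projection identities for the coarser conditioning set $X_{-j}$, together with finiteness of the second moments and nonzero denominators. Finite second moments follow from weak overlap, which makes the $M_W$ finite. Positivity of the denominators holds because $O_{X_{-j}}>0$ on the relevant support. The condition $M_X>M_{X_{-j}}$ is implicitly required for $k_{0D,j}$ to be defined, i.e.\ the benchmark $X_j$ must have nontrivial selection strength. I would state this condition explicitly.
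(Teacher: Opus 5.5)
Your proof is correct and follows essentially the paper's route. The paper multiplies $k_{0D,j}$ by $C^2_{0D,j}$, cancels $1-R^2_{O_X\sim O_{X_{-j}}\mid D=0}$, and then uses the identity $R^2_{O_{XU}\sim O_{X_{-j}}\mid D=0}/R^2_{O_X\sim O_{X_{-j}}\mid D=0}=R^2_{O_{XU}\sim O_X\mid D=0}$ without proof; your second-moment representation, built from the nested projection identities, supplies exactly that missing justification, and the nondegeneracy condition $M_X>M_{X_{-j}}$ you flag is the right one.
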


\begin{proof}
    \begin{align*}
        k_{0D,j} \times C_{0D,j}^2 &= \frac{R^2_{O_{X} \sim O_{X_{-j}}\mid D=0} - R^2_{O_{XU} \sim O_{X_{-j}}\mid D=0}}{1 - R^2_{O_X \sim O_{X_{-j}}\mid D=0}} \times \frac{1 - R^2_{O_X \sim O_{X_{-j}}\mid D=0}}{R^2_{O_X \sim O_{X_{-j}}\mid D=0}} \\
        &= 1 - \frac{R^2_{O_{XU} \sim O_{X_{-j}}\mid D=0}}{R^2_{O_X \sim O_{X_{-j}}\mid D=0}} = 1 - R^2_{O_{XU} \sim O_X\mid D=0}.  \qedhere
    \end{align*}
\end{proof}

\begin{remark}
Note that, since $1 - R^2_{O_{XU} \sim O_X \mid D=0} \le 1$, we must have $k_{0D,j} \le 1/C_{0D,j}^2$. Also, $R^2_{O_X \sim O_{X_{-j}}\mid D=0} = \frac{E\left[O_{X_{-j}}^2 \mid D=0\right]}{E\left[O_{X}^2 \mid D=0\right]} = \frac{\nu_{0s,-j}^2}{\nu_{0s}^2}$, so an alternative expression is
\begin{align*}
        1 - R^2_{O_{XU} \sim O_X\mid D=0} &=  k_{0D,j} \times \frac{\nu_{0s}^2 - \nu_{0s,-j}^2}{\nu_{0s,-j}^2}.
\end{align*}
This latter form is useful for estimation and inference.
\end{remark}

\subsubsection{Relative bounds on \texorpdfstring{$\eta^2_{\Delta Y \sim U \mid X,D=0}$}{eta2}}

We now discuss how to express the strength of unobserved confounders in explaining outcome evolution in terms of their relative strength as compared to $X_j$.

\begin{Definition}[Relative strength with outcome trend]
    \begin{align*}
        k_{0\Delta Y,j} := \frac{ \eta^2_{\Delta Y \sim U,X_j \mid X_{-j},D=0} -  \eta^2_{\Delta Y \sim X_j \mid X_{-j},D=0}}{\eta^2_{\Delta Y \sim X_j \mid X_{-j},D=0}}.
    \end{align*}
\end{Definition}
Using this definition, we obtain the following reparameterization of the sensitivity parameter $\eta^2_{\Delta Y \sim U \mid X,D=0}$.
\begin{Proposition}[Reparameterization of $\eta^2_{\Delta Y \sim U \mid X,D=0}$ in terms of relative strength]
    \begin{align*}
        \eta^2_{\Delta Y \sim U \mid X,D=0} = k_{0\Delta Y,j} \times \frac{C_{0\Delta Y,j}^2}{1 - C_{0\Delta Y,j}^2},
    \end{align*}
    where $C_{0\Delta Y,j}^2 := \eta^2_{\Delta Y \sim X_j \mid X_{-j},D=0}$
\end{Proposition}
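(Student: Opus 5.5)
The identity follows from the definitions of the nonparametric partial $R^2$, all computed among the untreated units. I treat everything conditional on $D=0$ and write $a:=\eta^2_{\Delta Y\sim X_{-j},D=0}$, $b:=\eta^2_{\Delta Y\sim X,D=0}$ and $c:=\eta^2_{\Delta Y\sim (X,U),D=0}$. Since $X=(X_j,X_{-j})$, the unconditional measures satisfy $a\le b\le c\le 1$.

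The first step rewrites each partial $\eta^2$ appearing in the statement in terms of $a$, $b$, $c$, using the definition $\eta^2_{Y\sim U\mid X}=(\eta^2_{Y\sim (X,U)}-\eta^2_{Y\sim X})/(1-\eta^2_{Y\sim X})$:
\[
C^2_{0\Delta Y,j}=\eta^2_{\Delta Y\sim X_j\mid X_{-j},D=0}=\frac{b-a}{1-a},\qquad
\eta^2_{\Delta Y\sim U,X_j\mid X_{-j},D=0}=\frac{c-a}{1-a},\qquad
\eta^2_{\Delta Y\sim U\mid X,D=0}=\frac{c-b}{1-b}.
\]
Equivalently, the first proposition of the appendix gives these as ratios of residual mean squares $E[(\Delta Y-E[\Delta Y\mid\cdot,D=0])^2\mid D=0]$. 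Either form works.

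The second step computes the right-hand side directly. The numerator of $k_{0\Delta Y,j}$ is $(c-a)/(1-a)-(b-a)/(1-a)=(c-b)/(1-a)$. Dividing by $(b-a)/(1-a)$ gives
\[
k_{0\Delta Y,j}=\frac{c-b}{b-a}.
\]
Next, $1-C^2_{0\Delta Y,j}=(1-b)/(1-a)$, so
\[
\frac{C^2_{0\Delta Y,j}}{1-C^2_{0\Delta Y,j}}=\frac{b-a}{1-b}.
\]
Multiplying the two expressions, the factor $(b-a)$ cancels and leaves $(c-b)/(1-b)=\eta^2_{\Delta Y\sim U\mid X,D=0}$, which is the claim.

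The only real obstacle is well-definedness, meaning that none of the denominators vanish. The identity requires $b>a$, so that $X_j$ has nonzero explanatory power and $k_{0\Delta Y,j}$ is defined, and it requires $b<1$. Both conditions are implicit in the definition of $k_{0\Delta Y,j}$ and in $\sigma^2_{0s}>0$. I would state these as standing assumptions and note that if $b=1$ both sides are degenerate. I would also remark that the conditioning on $D=0$ is carried uniformly through every expectation, so the computation is purely algebraic once the definitions are unpacked.
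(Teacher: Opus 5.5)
Your proof is correct and follows the paper's argument: multiply out the definitions so that the factor $\eta^2_{\Delta Y\sim X_j\mid X_{-j},D=0}$ cancels, leaving $(\eta^2_{\Delta Y\sim U,X_j\mid X_{-j},D=0}-\eta^2_{\Delta Y\sim X_j\mid X_{-j},D=0})/(1-\eta^2_{\Delta Y\sim X_j\mid X_{-j},D=0})$. Your rewriting in terms of $a,b,c$ spells out the final step, namely that this ratio equals $(c-b)/(1-b)=\eta^2_{\Delta Y\sim U\mid X,D=0}$, which the paper asserts without comment; your nondegeneracy conditions $a<b<1$ correspond to the paper's implicit assumptions $G_{0\Delta Y,j}>0$ and $\sigma^2_{0s}>0$.
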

\begin{proof}
    \begin{align*}
        k_{0\Delta Y,j} \times \frac{C_{0\Delta Y,j}^2}{1 - C_{0\Delta Y,j}^2} &= \frac{\eta^2_{\Delta Y \sim U,X_j \mid X_{-j},D=0} -  \eta^2_{\Delta Y \sim X_j \mid X_{-j},D=0}}{\eta^2_{\Delta Y \sim X_j \mid X_{-j},D=0}} \times \frac{\eta^2_{\Delta Y \sim X_j \mid X_{-j},D=0}}{1-\eta^2_{\Delta Y \sim X_j \mid X_{-j},D=0}} \\
        &= \frac{\eta^2_{\Delta Y \sim U,X_j \mid X_{-j},D=0} -  \eta^2_{\Delta Y \sim X_j \mid X_{-j},D=0}}{1-\eta^2_{\Delta Y \sim X_j \mid X_{-j},D=0}} \\
        &= \eta^2_{\Delta Y \sim U \mid X,D=0}. \qedhere
    \end{align*}
\end{proof}

\begin{remark}
Note that, since $\eta^2_{\Delta Y \sim U \mid X,D=0}\leq 1$, we must have $k_{0\Delta Y,j} \le \frac{1-C_{0\Delta Y,j}^2}{C_{0\Delta Y,j}^2} $. Also note that
\begin{align*}
\sigma_{0s,-j}^2 - \sigma_{0s}^2 &= E[(\Delta Y - g_{0s,-j})^2\mid D=0] - E[(\Delta Y - g_{0s})^2\mid D=0] \\
&\overset{\text{LTE}}{=} E[(g_{0s} - g_{0s,-j})^2\mid D=0], 
\end{align*}
which leads to the following alternative representation
\begin{align*}
\eta^2_{\Delta Y \sim U \mid X,D=0} &= k_{0\Delta Y,j} \times \frac{\eta^2_{\Delta Y \sim X_j \mid X_{-j},D=0}}{1-\eta^2_{\Delta Y \sim X_j \mid X_{-j},D=0}} \\
&= k_{0\Delta Y,j} \times \frac{E[(g_{0s} - g_{0s,-j})^2\mid D=0]}{E[(\Delta Y - g_{0s,-j})^2\mid D=0] - E[(g_{0s} - g_{0s,-j})^2\mid D=0]} \\
&\overset{\text{LTE}}{=} k_{0\Delta Y,j} \times \frac{E[(g_{0s} - g_{0s,-j})^2\mid D=0]}{E[(\Delta Y - g_{0s})^2\mid D=0]} \\
&= k_{0\Delta Y,j} \times \frac{\sigma_{0s,-j}^2 - \sigma_{0s}^2}{\sigma_{0s}^2}.
\end{align*}
This latter form is useful for estimation and inference.
\end{remark}

\subsubsection{Benchmarking \texorpdfstring{$|\rho_0|$}{|rho0|}} \label{sec:rho-benchmark}

Following \citet{chernozhukov2022long}, we propose extrapolating $\rho_0$ using the observed alignment of $X_j$ as given by $\rho_{0,j} := \operatorname{Cor}(g_{0s} - g_{0s,-j}, O_X - O_{X_{-j}}\mid D=0)$.  Note that the bias decomposition with respect to $X_j$ gives
\begin{align*}
    \theta_s - \theta_{s,-j} 
    = - \rho_{0,j} 
    \sqrt{
    (\sigma^2_{0s,-j}-\sigma^2_{0s})(\nu^2_{0s} - \nu^2_{0s,-j})
    }.
\end{align*}
This leads to the following debiased representation of $\rho_{0,j}$,
\begin{align*}
    \rho_{0,j} = \frac{-(\theta_{s} - \theta_{s,-j}) }{\sqrt{(\sigma^2_{0s,-j}-\sigma^2_{0s})(\nu^2_{0s} - \nu^2_{0s,-j})}}.
\end{align*}
This latter form is useful for estimation and inference, which we discuss below.

\subsubsection{Statistical inference for benchmark components} \label{sec:Statistical Inference of benchmark components}

Statistical inference for our benchmarking analysis follows the procedure introduced in Appendices E.5 and E.6 of \citet{chernozhukov2022long}. Specifically, we define the estimable ``gain" metrics with the debiased representations derived above,
\begin{align*}
    &G_{0D,j} := C_{0D,j}^2 = \frac{\nu_{0s}^2 - \nu_{0s,-j}^2}{\nu_{0s,-j}^2}, \\
    &G_{0\Delta Y,j} := \frac{C_{0\Delta Y,j}^2}{1 - C_{0\Delta Y,j}^2} = \frac{\sigma_{0s,-j}^2 - \sigma_{0s}^2}{\sigma_{0s}^2}, \text{ and }\\
    &\rho_{0,j} = \frac{-(\theta_{s} - \theta_{s,-j})}{\sqrt{(\sigma_{0s,-j}^2 - \sigma_{0s}^2) \times (\nu_{0s}^2 - \nu_{0s,-j}^2)}}. 
\end{align*}
For inference on the benchmarking bounds, we assume $G_{0\Delta Y,j}>0$, $G_{0D,j}>0$, $\rho_{0,j}\neq 0$, and $0<k_{0D,j}G_{0D,j}<1$.
We have shown that $1 - R^2_{O_{XU} \sim O_X\mid D=0} = k_{0D,j} \times G_{0D,j}$ and $\eta^2_{\Delta Y \sim U \mid X,D=0} = k_{0\Delta Y,j} \times G_{0\Delta Y,j}$. Therefore, given the influence functions of $\theta_s$, $\theta_{s,-j}$, $\sigma_{0s}^2$, $\sigma_{0s,-j}^2$, $\nu_{0s}^2$, and $\nu_{0s,-j}^2$ obtained from DML, we apply the delta method to obtain the following debiased influence functions for the ``gain" metrics.
\begin{align*}
    \varphi^0_{G_{0D,j}}(Z) &= \frac{\nu_{0s,-j}^2\varphi^0_{\nu_{0s}^2}(Z) - \nu_{0s}^2\varphi^0_{\nu_{0s,-j}^2}(Z)}{\nu_{0s,-j}^4}, \\
    \varphi^0_{G_{0\Delta Y,j}}(Z) &= \frac{\sigma_{0s}^2\varphi^0_{\sigma_{0s,-j}^2}(Z) - \sigma_{0s,-j}^2\varphi^0_{\sigma_{0s}^2}(Z)}{\sigma_{0s}^4}, \text{ and } \\
    \varphi^0_{\rho_{0,j}}(Z) &= \frac{\varphi^0_{\theta_{s,-j}}(Z) - \varphi^0_{\theta_{s}}(Z)}{(\sigma_{0s,-j}^2 - \sigma_{0s}^2)^{1/2} (\nu_{0s}^2 - \nu_{0s,-j}^2)^{1/2}} - \frac{(\theta_{s,-j}-\theta_s)(\varphi^0_{\sigma_{0s,-j}^2}(Z) - \varphi^0_{\sigma_{0s}^2}(Z))}{2(\sigma_{0s,-j}^2 - \sigma_{0s}^2)^{3/2} (\nu_{0s}^2 - \nu_{0s,-j}^2)^{1/2}} \nonumber \\
    & \phantom{=} \hspace{3.5cm} - \frac{(\theta_{s,-j}-\theta_s)(\varphi^0_{\nu_{0s}^2}(Z) - \varphi^0_{\nu_{0s,-j}^2}(Z))}{2(\sigma_{0s,-j}^2 - \sigma_{0s}^2)^{1/2} (\nu_{0s}^2 - \nu_{0s,-j}^2)^{3/2}}.
\end{align*}
The plug-in estimator of the bias and the corresponding influence function for $\theta_{\pm}$ follow directly from the discussion in Appendix E.6 of \citet{chernozhukov2022long}, after replacing the ``unconditional” components with their corresponding ``conditional” counterparts in our results. 

\subsection{Benchmarking against pre-trends}
\label{app:bench-pre-trend}

Here we let $Z := (\Delta Y,D,X, \Delta Y^{\text{pre}},X^{\text{pre}})$. In this section we provide the deferred influence functions for the pre-trend extrapolation bounds. The first approach yields the following bounds on the target estimand in the post-treatment period:
\begin{align*}
\theta_{\pm} = \theta_s \pm k\cdot |\theta_s^{\text{pre}}|,
\end{align*}
where $\theta_s$ and $\theta_s^{\text{pre}}$ are estimable from the data, and $\theta_s^{\text{pre}} \neq 0$.  Under standard DML conditions, by applying the delta method, the plug-in estimator $\widehat{\theta}_{\pm}$ is asymptotically linear and Gaussian with influence function
\begin{align*}
    \varphi^{0}_{\theta_{\pm}}(Z) := \varphi^{0}_{\theta_s}(Z) \pm k\cdot \text{sign}(\theta^{\text{pre}}_{s})\cdot\varphi^{0}_{\theta^{\text{pre}}_s}(Z).
\end{align*}

As for the second approach, recall it yields the bounds:
\begin{align*}
\theta_{\pm} &= \theta_s \pm k\cdot \left(\frac{S_0}{S_0^{\text{pre}}}\right) \cdot |\theta_s^{\text{pre}}|.
\end{align*}
For this result, we additionally assume $S_0>0$ and $S_0^{\text{pre}}>0$.
The influence function for $S_0$ is:
\begin{align*}
\varphi^{0}_{S_0}(Z) &= 
\frac{1}{2 S_0}\left(\sigma_{0s}^2\, \varphi^0_{\nu_{0s}^2}(Z) + \nu_{0s}^2\, \varphi^0_{\sigma_{0s}^2}(Z)\right),
\end{align*}
with a similar influence function for $S_0^{\text{pre}}$. By the delta method, the plug-in estimator $\widehat{\theta}_{\pm}$ is then asymptotically linear and Gaussian with influence function
\begin{align*}
    \varphi^{0}_{\theta_{\pm}}(Z) = \varphi^{0}_{\theta_s}(Z) \pm k\, |\theta_s^{\text{pre}}|\frac{ S_0}{S_0^{\text{pre}}} \left( \frac{\varphi^{0}_{\theta_s^{\text{pre}}}(Z)}{\theta_s^{\text{pre}}} + \frac{\varphi^{0}_{S_0}(Z)}{S_0} - \frac{\varphi^{0}_{S_0^{\text{pre}}}(Z)}{S_0^{\text{pre}}} \right).
\end{align*}

\section{Sensitivity statistics}
\label{app:rv}

\subsection{Compatible inferences given bounds on sensitivity parameters}

Given fixed values of the sensitivity parameters, Theorem~\ref{thm: inference} determines how inference changes under that particular confounding scenario. Often, however, the analyst wishes to consider all inferences compatible with upper bounds on these parameters. Let
\[
\text{CI}^{\max}_{1-\alpha,\bar\rho^2,\bar R^2_{\Delta Y},\bar R^2_D}(\theta)
\]
denote the interval formed by the most extreme lower and upper confidence limits constructed from Theorem~\ref{thm: inference}. To simplify notation, we denote $\eta^2_{\Delta Y\sim U\mid X,D=0}$ and $1-R^2_{O_{XU}\sim O_X\mid D=0}$ by $\eta^2$ and $1-R^2$, respectively. Formally,
\begin{align*}
&\text{CI}^{\max}_{1-\alpha,\bar\rho^2,\bar R^2_{\Delta Y},\bar R^2_D}(\theta)
= [L^{\min},U^{\max}], \quad\text{where}
\\
&L^{\min}
= \min_{\rho_0^2,\eta^2,R^2}
\left(
\widehat{\theta}_{-}(\rho_0^2,\eta^2,R^2)
- \Phi^{-1}(1-\alpha)
\sqrt{\frac{E[(\varphi^0_{\theta_-}(Z;\rho_0^2,\eta^2,R^2))^2]}{n}}
\right),
\\
&\hspace{1cm}\text{s.t. }\quad
\rho_0^2\leq\bar\rho^2,\qquad
\eta^2\leq\bar R^2_{\Delta Y},\qquad
1-R^2\leq\bar R^2_D,
\\
&U^{\max}
= \max_{\rho_0^2,\eta^2,R^2}
\left(
\widehat{\theta}_{+}(\rho_0^2,\eta^2,R^2)
+ \Phi^{-1}(1-\alpha)
\sqrt{\frac{E[(\varphi^0_{\theta_+}(Z;\rho_0^2,\eta^2,R^2))^2]}{n}}
\right),
\\
&\hspace{1cm}\text{s.t. }\quad
\rho_0^2\leq\bar\rho^2,\qquad
\eta^2\leq\bar R^2_{\Delta Y},\qquad
1-R^2\leq\bar R^2_D.
\end{align*}
Thus, $\text{CI}^{\max}_{1-\alpha,\bar\rho^2,\bar R^2_{\Delta Y},\bar R^2_D}(\theta)$ collects the confidence intervals compatible with the stated restrictions. When $\bar\rho^2=1$, we omit it from the subscript.

\begin{Corollary}
\label{cor:ci-max}
Under the conditions of Theorem~\ref{thm: inference}, the endpoints of $\text{CI}^{\max}_{1-\alpha,\bar\rho^2,\bar R^2_{\Delta Y},\bar R^2_D}(\theta)$ have asymptotic one-sided coverage of at least $1-\alpha$ whenever the stated restrictions hold, and the interval is nested as any of the three upper bounds increases.
\end{Corollary}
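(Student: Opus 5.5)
The plan has two parts: coverage of each endpoint, then monotonicity in the three bounds. Write $\mathcal{F}(\bar\rho^2,\bar R^2_{\Delta Y},\bar R^2_D)=\{(\rho_0^2,\eta^2,R^2):\rho_0^2\le\bar\rho^2,\ \eta^2\le\bar R^2_{\Delta Y},\ 1-R^2\le\bar R^2_D\}$ for the feasible set of sensitivity parameters.

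\emph{Coverage.} Fix the true distribution and suppose the stated restrictions hold. Then the true triple $(\rho_0^{2*},\eta^{2*},R^{2*})$ lies in $\mathcal{F}$. Evaluated at that triple, the upper bound $\theta_+$ of Theorem~\ref{thm: inference} satisfies $\theta_+\ge\theta$, because Theorem~\ref{thm:main_npm} gives $|\theta-\theta_s|=|\rho_0|C_{0\Delta Y}C_{0D}S_0$. Since $U^{\max}$ is a maximum over $\mathcal{F}$, we have $U^{\max}\ge u(\rho_0^{2*},\eta^{2*},R^{2*})$, where $u(\cdot)$ is the one-sided upper limit of Theorem~\ref{thm: inference}. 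Hence
\[
P(\theta\le U^{\max})\ge P(\theta_+^*\le U^{\max})\ge P\bigl(\theta_+^*\le u(\rho_0^{2*},\eta^{2*},R^{2*})\bigr)\to 1-\alpha .
\]
The lower endpoint $L^{\min}$ is handled symmetrically. This pointwise argument needs only Theorem~\ref{thm: inference} at the single true parameter value, so no uniformity over $\mathcal{F}$ is required. If the variance in the definition is replaced by its estimate $E_n[\widehat\varphi^2]$, Theorem~\ref{thm: inference} already covers that case. Should the theorem's confidence limits be interpreted as bounds on $\theta_\pm$ rather than on $\theta$, the chain of inequalities above is exactly what makes them bounds on $\theta$.

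\emph{Nesting.} Each of the three bounds enters only as a constraint defining $\mathcal{F}$. Increasing any of them enlarges $\mathcal{F}$. A maximum over a larger set cannot decrease, and a minimum over a larger set cannot increase, so $U^{\max}$ is nondecreasing and $L^{\min}$ is nonincreasing in each bound. Therefore the intervals are nested. This holds deterministically for every sample.

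The only delicate point I anticipate is well-definedness. The objective in the optimization must be finite, so $R^2>0$ is needed for $C_{0D}$ to be finite. When $\bar R^2_D<1$ this is automatic, since $1-R^2\le\bar R^2_D$ forces $R^2\ge 1-\bar R^2_D>0$. When $\bar R^2_D=1$ I would interpret the extrema as suprema and infima in the extended reals, in which case both coverage and nesting hold trivially. I would also note that the objective is continuous and monotone in $|\rho_0|C_{0\Delta Y}C_{0D}$, so the extrema are attained at the corner $(\bar\rho^2,\bar R^2_{\Delta Y},1-\bar R^2_D)$. This is not needed for the proof, but it simplifies the computation.
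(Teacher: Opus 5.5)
Your proof is correct and is essentially the paper's: the true sensitivity parameters lie in the feasible set, so $L^{\min}\le l_0$ and $U^{\max}\ge u_0$, and coverage reduces to Theorem~\ref{thm: inference} at the truth. Nesting holds because enlarging the feasible set can only push the extrema outward, and your observation that $\theta_-\le\theta\le\theta_+$ also gives coverage of $\theta$ itself is a harmless extra.

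One small correction to your optional aside concerns monotonicity. $E[(\varphi^0_{\theta_+})^2]$ is a quadratic in $B=|\rho_0|C_{0\Delta Y}C_{0D}$, and it decreases near $B=0$ when $E[\varphi^0_{\theta_s}\varphi^0_{S_0}]<0$. So the upper limit need not be monotone in $B$ in every sample. The corner is guaranteed to be optimal on the event $\widehat S_0\ge\Phi^{-1}(1-\alpha)\sqrt{E[(\varphi^0_{S_0})^2]/n}$, whose probability tends to one when $S_0>0$.
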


\begin{proof}
Let $[l_0,u_0]$ denote the confidence interval from Theorem~\ref{thm: inference} evaluated at the true sensitivity parameters. When the stated restrictions hold, the true sensitivity parameters are among the configurations used to construct $\text{CI}^{\max}_{1-\alpha,\bar\rho^2,\bar R^2_{\Delta Y},\bar R^2_D}(\theta)$. Therefore, $L^{\min}\leq l_0$ and $U^{\max}\geq u_0$. It follows that whenever $l_0$ covers $\theta_-$ from below, $L^{\min}$ does as well; similarly, whenever $u_0$ covers $\theta_+$ from above, $U^{\max}$ does as well. Theorem~\ref{thm: inference} shows that each of these events has asymptotic probability $1-\alpha$, which proves the coverage result.

For nesting, increasing any of the three upper bounds expands the collection of confounding configurations over which the confidence limits are optimized. The minimum lower confidence limit can therefore only decrease, while the maximum upper confidence limit can only increase. This proves the nesting result.
\end{proof}

\subsection{Robustness Values}

\subsubsection{Definitions}

Here we provide a more detailed definition of the robustness value and extreme robustness value. 

\begin{Definition}[Robustness Value (RV)]
For fixed $\theta^*$ and $\alpha\in(0,0.5)$, if the two sided $1-\alpha$ confidence interval from the original analysis under no confounding contains $\theta^*$, we define
\[
\text{RV}_{\theta^*,\alpha}(\theta):=0.
\]
Otherwise, the robustness value is defined as
\begin{align}
\text{RV}_{\theta^*,\alpha}(\theta)
:=
\inf\left\{
\text{RV}:
\theta^*\in
\text{CI}^{\max}_{1-\alpha,\text{RV},\text{RV}}(\theta)
\right\}.
\nonumber
\end{align}
\end{Definition}

\begin{Definition}[Extreme Robustness Value (XRV)]
For fixed $\theta^*$ and $\alpha\in(0,0.5)$, if the two sided $1-\alpha$ confidence interval from the original analysis under no confounding contains $\theta^*$, we define
\[
\text{XRV}_{\theta^*,\alpha}(\theta):=0.
\]
Otherwise, the extreme robustness value is defined as
\begin{align}
\text{XRV}_{\theta^*,\alpha}(\theta)
:=
\inf\left\{
\text{XRV}:
\theta^*\in
\text{CI}^{\max}_{1-\alpha,1,\text{XRV}}(\theta)
\right\}.
\nonumber
\end{align}
\end{Definition}

The (extreme) robustness values of the point estimate are defined by setting the critical value to zero. We denoted them by
$\text{RV}_{\theta^*,\alpha=1}(\theta)$ and
$\text{XRV}_{\theta^*,\alpha=1}(\theta)$.

Notice that the robustness value and extreme robustness value are defined as descriptive sensitivity statistics. They communicate the minimum strength of confounding that one must be willing to postulate for the confidence interval of Theorem~\ref{thm: inference} to include a null value $\theta^*$ of interest. In addition, they also admit the following inferential interpretation.

\subsubsection{Inferential Properties}

We denote the corresponding population values $\text{RV}_{\theta^*}(\theta)$ and $\text{XRV}_{\theta^*}(\theta)$, as the values obtained by replacing the confidence limits in the definitions above with the corresponding population bias bounds. Following arguments similar to those in \citet{cinelli2020making,cinelli2025omitted}, let $f_{\theta^*} := |\theta^* - \theta_s|/S_0$. The population robustness values have the following analytical formulas:
\begin{align*}
\text{RV}_{\theta^*}(\theta) &= \frac{1}{2}\left(\sqrt{f_{\theta^*}^4 + 4f_{\theta^*}^2} - f_{\theta^*}^2\right), \qquad \text{and} \qquad
\text{XRV}_{\theta^*}(\theta) = \frac{f_{\theta^*}^2}{1 + f^2_{\theta^{*}}}.
\end{align*}

The following result shows that the empirical robustness values can be interpreted as lower confidence bounds for their population counterparts.

\begin{Corollary}
\label{cor:rv-lower-confidence}
Let $f_{\theta^*}>0$. Under the conditions of Theorem~\ref{thm: inference},  $\text{RV}_{\theta^*,\alpha}(\theta)$ and $\text{XRV}_{\theta^*,\alpha}(\theta)$ are asymptotically valid $1-\alpha$ lower confidence bounds for their population counterparts.
\end{Corollary}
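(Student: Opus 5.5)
We prove the claim by the standard test-inversion argument of \citet{cinelli2020making,cinelli2025omitted}, treating the upper-bias case (so that $\theta^*>\theta_s$ matters through the upper limit $U^{\max}$); the lower case is symmetric. Write the worst-case upper bound in the RV scenario as a function of a single scalar $r\in[0,1]$. With $|\rho_0|=1$, $\eta^2=r$ and $1-R^2=r$, the selection factor becomes $C_{0D}^2=r/(1-r)$. The population upper bound is therefore
\[
\theta_+(r)=\theta_s+S_0\sqrt{r\cdot r/(1-r)}=\theta_s+S_0\, h(r),\qquad h(r):=r/\sqrt{1-r}.
\]
This function is continuous and strictly increasing on $[0,1)$, running from $0$ to $\infty$. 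Hence the population robustness value is $\text{RV}_{\theta^*}(\theta)=h^{-1}(f_{\theta^*})$, which matches the stated closed form after solving $r^2=f^2(1-r)$. The analogous XRV function is $h_X(r)=\sqrt{r/(1-r)}$, and the same reasoning applies to it verbatim.

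Next, characterize the empirical RV as a crossing point. By Corollary~\ref{cor:ci-max}, $U^{\max}(r)$ is nondecreasing in $r$. Since $\theta_+$ is increasing in the bias magnitude, the maximum is attained at the boundary $|\rho_0|=1$, $\eta^2=1-R^2=r$; this needs a small check that the standard-error term does not reverse the ordering, or one can simply accept the $\max$ as defined. Therefore $\text{RV}_{\theta^*,\alpha}\le r$ if and only if $U^{\max}(r)\ge\theta^*$. Let $r_0:=\text{RV}_{\theta^*}(\theta)$ denote the population value, and note that $f_{\theta^*}>0$ ensures $r_0\in(0,1)$. We then have
\[
P\bigl(\text{RV}_{\theta^*,\alpha}\le r_0\bigr)\;\ge\;P\bigl(U^{\max}(r_0)\ge\theta^*\bigr)\;\ge\;P\bigl(\widehat{\theta}_+(r_0)+\Phi^{-1}(1-\alpha)\widehat{\operatorname{se}}_+(r_0)\ge\theta_+(r_0)\bigr),
\]
where the last step uses $\theta_+(r_0)=\theta^*$ by definition of $r_0$. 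By Theorem~\ref{thm: inference}, applied at the fixed sensitivity parameters $(1,r_0,r_0)$, the final probability tends to $1-\alpha$.

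The main obstacle is the case where $\theta^*$ already lies in the original confidence interval, so that RV is set to $0$. This event only makes the inequality $\text{RV}_{\theta^*,\alpha}\le r_0$ easier to satisfy, so it causes no difficulty. The second concern is the monotonicity and boundary-attainment step, which lets us replace the inf over the CI family by the single crossing event. Here I would rely on the nesting property of Corollary~\ref{cor:ci-max}: the set of $r$ with $\theta^*\in\text{CI}^{\max}_{1-\alpha,r,r}$ is an up-set, so the infimum is at most $r_0$ whenever $r_0$ belongs to the set. That membership holds whenever $U^{\max}(r_0)\ge\theta^*$, because $U^{\max}(r_0)\ge\theta^*>\theta_s$ and the lower limit lies below $\theta^*$. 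This suffices, and no explicit maximizer is needed. The XRV argument is identical, using $\text{CI}^{\max}_{1-\alpha,1,r}$ and $h_X$.
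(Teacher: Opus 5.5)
Your proposal is correct and follows the paper's own test-inversion argument. At the population value $r_0$, $\theta^*$ is exactly the population endpoint under the configuration $(\rho_0^2,\eta^2,1-R^2)=(1,r_0,r_0)$, or $(1,1,r_0)$ for the XRV. The relevant limit of $\text{CI}^{\max}_{1-\alpha,r_0,r_0}(\theta)$ is dominated by that configuration's one-sided limit, which covers with asymptotic probability $1-\alpha$ by Theorem~\ref{thm: inference}, and membership of $r_0$ in the feasible set then bounds the infimum.

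One point to tighten: the opposite endpoint lies on the other side of $\theta^*$ only with probability tending to one, not surely. This is because $\theta^*\neq\theta_s$ strictly, as the paper notes. Consequently your first displayed inequality $P(\text{RV}_{\theta^*,\alpha}\le r_0)\ge P(U^{\max}(r_0)\ge\theta^*)$ holds only up to an $o(1)$ term, which is harmless for the asymptotic claim.
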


\begin{proof}
We first consider the RV. Let $r=\text{RV}_{\theta^*}(\theta)$ and suppose that $\theta^*<\theta_s$. By definition, $r$ is the minimum common bound on trend and selection strength for which the lower population bias bound reaches $\theta^*$. Therefore,
\[
\theta^*=\theta_s-S_0\frac{r}{\sqrt{1-r}}.
\]
Equivalently, $\theta^*$ is the lower population endpoint when
\[
(\rho_0^2,\eta^2,1-R^2)=(1,r,r).
\]
Now consider $\text{CI}^{\max}_{1-\alpha,r,r}(\theta)$, which allows these same bounds on trend and selection strength while leaving alignment unrestricted. Corollary~\ref{cor:ci-max} implies that its lower endpoint is no greater than $\theta^*$ with asymptotic probability at least $1-\alpha$. Its upper endpoint exceeds $\theta^*$ with probability tending to one because $\theta^*<\theta_s$. Hence, $\theta^*\in\text{CI}^{\max}_{1-\alpha,r,r}(\theta)$ with asymptotic probability at least $1-\alpha$.

Whenever this event occurs, the definition of $\text{RV}_{\theta^*,\alpha}(\theta)$ implies
\[
\text{RV}_{\theta^*,\alpha}(\theta)
\leq r
=
\text{RV}_{\theta^*}(\theta).
\]
Thus, $\text{RV}_{\theta^*,\alpha}(\theta)$ is an asymptotically valid $1-\alpha$ lower confidence bound for $\text{RV}_{\theta^*}(\theta)$. The case $\theta^*>\theta_s$ follows by applying the same argument to the upper endpoint.

For the XRV, let $r=\text{XRV}_{\theta^*}(\theta)$. Because confounding with the untreated outcome trend is unrestricted, the relevant population configuration is
\[
(\rho_0^2,\eta^2,1-R^2)=(1,1,r).
\]
Applying the same argument to $\text{CI}^{\max}_{1-\alpha,1,r}(\theta)$ gives
\[
\liminf_{n\to\infty}
P\!\left(
\text{XRV}_{\theta^*,\alpha}(\theta)
\leq
\text{XRV}_{\theta^*}(\theta)
\right)
\geq 1-\alpha,
\]
which proves the XRV result.
\end{proof}

Moreover, note that the two coverage events coincide:
\[
\text{RV}_{\theta^*,\alpha}(\theta)
\leq
\text{RV}_{\theta^*}(\theta)
\Longleftrightarrow
\text{XRV}_{\theta^*,\alpha}(\theta)
\leq
\text{XRV}_{\theta^*}(\theta)
\]
Therefore, all simulation results in this appendix report a single common (X)RV empirical coverage.

\subsubsection{Interpretation in terms of increase in average odds or covariate imbalance}

For $0 \le k < 1$, note that
\begin{align*}
    1 - R^2_{O_{XU} \sim O_{X}\mid D=0} = k 
    \iff  C^2_{0D} = \frac{k}{1-k}.
\end{align*}
Then, 
\begin{itemize}
    \item If $k$ is the value of $\text{XRV}_{0,\alpha}$: if unobserved confounders increase treatment odds or covariate imbalance by less than $\frac{k}{1-k}$ then such confounders are not capable of overturning the original results, at the significance level of $\alpha$, \emph{regardless} of how much variation such confounders explain of the outcome trend.
    \item If $k$ is the value of $\text{RV}_{0,\alpha}(\theta)$: unobserved confounders explaining less than $k$ of the residual variation in untreated trend and inducing less than $\frac{k}{1-k}$ increase in average odds or covariate imbalance cannot overturn the conclusions of the study, at the significance level of $\alpha$.
\end{itemize}

\section{Simulations}
\label{app:simulations}

This section discusses the use of Monte Carlo simulations to examine the finite sample properties of the proposed OVB results in the canonical DiD setting. Our data simulation strategy allows us to derive explicit expressions for the bias factors, as well as to identify the correctly specified models for estimating the nuisance parameters. This enables us to attribute the source of bias solely to the omission of covariates, which leads to the violation of the conditional PTA. 

    \subsection{Data generating process}\label{sec:dgp}

    The data are generated according to the following steps:

    \noindent
    \textbf{Step 1:} Given $p \ge \epsilon$, let $D_i \overset{i.i.d.}{\sim} \text{Bernoulli }(p)$. Then, generate the covariates as follows,
        \begin{align}
            &(X_i,U_i)\mid D_i = d \overset{i.i.d.}{\sim} \mathcal{N}(\bm{\mu}_d, \Sigma_d), \text{ with} \nonumber \\
            &\bm{\mu}_d = (\mu_{dX}, \mu_{dU}) \text{ and } \Sigma_d = \text{diag}(\sigma_{dX}^2, \sigma_{dU}^2), \text{ for } d \in \{0,1\}. \nonumber 
        \end{align}

    \noindent
    \textbf{Step 2:} For each unit $i$, calculate the potential outcomes at time period $t \in \{1,2\}$ using 
        \begin{align}
        &Y_{i,t}(0) = \alpha_i + \rho_t + X_i \beta_{xt} + U_i\beta_{ut} +\epsilon_{i,t}, \nonumber \\
        &Y_{i,t}(1) = Y_{i,t}(0) + \mathds{1}\{t \ge 2\} \cdot \theta + (\nu_{i,t} - \epsilon_{i,t}), \nonumber
        \end{align} where $\rho_t = \beta_{xt} = \beta_{ut} = t \text{,} \ \alpha_i\mid D_i=d \overset{i.i.d.}{\sim} \mathcal{N}(2d, 1)$, and $\epsilon_{i,t}, \nu_{i,t} \overset{i.i.d.}{\sim} \mathcal{N}(0,1)$. 

    \noindent
    \textbf{Step 3:} Calculate the observed outcomes using consistency, which leads to the following model for outcome evolution:
    \begin{align}
    \Delta Y_i &= (\Delta\rho + X_i\Delta \beta_{x} + U_i\Delta\beta_{u}) + \theta D_i + \Delta \tilde{\epsilon}_{i}, \nonumber
    \end{align}
    where $\Delta\rho = \Delta \beta_{x} = \Delta\beta_{u} = 1$, and $\Delta \tilde{\epsilon}_{i} = D_i\Delta \nu_i + (1-D_i)\Delta\epsilon_i\overset{i.i.d.}{\sim} \mathcal{N}(0,2)$. 

    The dependence of covariates $(X_i,U_i)$ on $D_i$ reflects selection into treatment. The covariates represent pre-treatment characteristics and are fixed over time.
    
    In this simulation setting, the propensity score model is correctly specified as a logistic regression with quadratic terms in the covariates for both the ``long” and ``short” estimations. From Step 3, the outcome evolution model is also correctly specified as a linear regression for both estimations. The true bias factors admit the following closed-form expressions:
    \begin{align}
        C_{0\Delta Y}^2 &= \frac{\sigma_{0U}^2}{\sigma_{0U}^2 + 2}, \quad C_{0D}^2 = \chi^2(P_{U|1}\|P_{U|0}), \text{ and} \nonumber \\
        \rho_0^2 &= \frac{(\mu_{1U} - \mu_{0U})^2}{\sigma_{0U}^2 \times (\chi^2(P_{X|1}\|P_{X|0})+1)\chi^2(P_{U|1}\|P_{U|0})}, \nonumber 
    \end{align}
    where $\Delta \theta_s := \theta - \theta_s = -(\mu_{1U} - \mu_{0U})$, and 
    \begin{align*}
        \chi^2(P_{U|1}\|P_{U|0}) &= \frac{\sigma_{0U}^2}{\sigma_{1U}\sqrt{2\sigma_{0U}^2 - \sigma_{1U}^2}}\exp\left(\frac{(\mu_{1U} - \mu_{0U})^2}{2\sigma_{0U}^2 - \sigma_{1U}^2}\right) - 1, \\
        \chi^2(P_{X|1}\|P_{X|0}) &= \frac{\sigma_{0X}^2}{\sigma_{1X}\sqrt{2\sigma_{0X}^2 - \sigma_{1X}^2}}\exp\left(\frac{(\mu_{1X} - \mu_{0X})^2}{2\sigma_{0X}^2 - \sigma_{1X}^2}\right) - 1.
    \end{align*}
    These divergences are finite if and only if $2\sigma_{0X}^2 > \sigma_{1X}^2$ and $2\sigma_{0U}^2 > \sigma_{1U}^2$. Intuitively, when the treated distribution has substantially heavier tails, $\pi_{XU}$ approaches one in the tails, undermining overlap and causing the density ratio to grow explosively, leading to infinite divergence.

\begin{remark}
By the Riesz-Frechet representation theorem, the linear functional $\theta_0(g_0)$ is continuous on $L^2(P_{(X,U) \mid D=0})$ if and only if there exists a unique representer $\alpha_0 \in \Gamma$ such that $\theta_0(g_0) = E[g_0\alpha_0 \mid D =0]$. Lemma~\ref{lemma:cond_RR_main} shows that this representer is $\alpha_0 = O_{XU}/O$. Therefore, continuity of $\theta_0$ requires that $E[\alpha_0^2\mid D=0] < \infty$.  Moreover, as shown in Corollary~\ref{thm:alt-selection}, $\chi^2(P_{X,U|1}\|P_{X,U|0}) = E[\alpha_0^2\mid D=0] - 1$, so in our simulation setting, requiring $2\sigma_{0X}^2 > \sigma_{1X}^2$ and $2\sigma_{0U}^2 > \sigma_{1U}^2$ ensures $\chi^2(P_{X,U|1}\|P_{X,U|0}) < \infty$ and equivalently $E[\alpha_0^2\mid D=0] < \infty$. This condition is automatically satisfied under the usual strong overlap assumption (Assumption~\ref{assump:Overlap}), which enforces uniform boundedness of the propensity score and therefore implies square-integrability. Formally,  
\begin{align*}
\alpha_0 &= \frac{O_{XU}}{O} = \frac{\pi_{XU}}{1 - \pi_{XU}} \times \frac{1-p}{p}
\le \left(\frac{1-\epsilon}{\epsilon}\right)^2 \text{ by Assumption~\ref{assump:Overlap},} \\
\implies \quad &\chi^2(P_{X,U|1}\|P_{X,U|0}) + 1 = E[\alpha_0^2 \mid D = 0] < \infty.
\end{align*}
\end{remark}

\subsection{Results} \label{app:sim-results}
We evaluate the performance of our proposed approach from the following perspectives:

\paragraph{Coverage of Confidence Bounds.} We evaluate the coverage of the confidence bound for $\theta$, after plugging in the true bias factors.

We set $\theta = 2$, $\bm{\mu}_0 = (0.3,0.3)$, $\bm{\mu}_1 = (0,0)$, $(\sigma_{0X}^2, \sigma_{0U}^2) = (6,6)$, and $(\sigma_{1X}^2, \sigma_{1U}^2) = (3,3)$. We consider scenarios with $p \in \{0.2,0.5,0.8\}$ and sample sizes $n \in \{500, 2,000, 5,000\}$, at significance levels $\alpha \in \{0.01, 0.05, 0.1\}$. The nuisance parameters are estimated using the correctly specified model with ten-fold cross-fitting.\footnote{To ensure overlap, we set $\sigma_{0\cdot}^2$ to be sufficiently larger than $\sigma_{1\cdot}^2$. We use more than the default five cross-fitting folds to stabilize estimation of $\nu_{0s}^2$.} We report results from $B = 5,000$ repetitions.
        
The results are presented in Table~\ref{tab: sim_coverage} and Figure~\ref{fig: sim_coverage}, which show that empirical coverage closely tracks nominal levels, indicating well-calibrated confidence bounds with stable finite sample performance.
        
\paragraph{Sensitivity Statistics.} Following the same scenarios as in our first coverage experiment, we compare the empirical values of the sensitivity statistics with their population values.
        
Figure~\ref{fig:ss_point} reports the finite sample behavior of $\text{RV}_{\theta^*=0,\alpha=1}$ and $\text{XRV}_{\theta^*=0,\alpha=1}$ when viewed as estimators of their population analogs. We run $5{,}000$ repetitions for various values of sample size $n$ and treatment probability $p$. The black dot represents the mean and the bars the 2.5\% and 97.5\% quantiles. The plot shows that the mean of the empirical (X)RV closely track their population values, with the sampling distribution becoming more concentrated as the sample size increases.

Next we asses the finite sample behavior of $\text{RV}_{\theta^*=0,\alpha}$ and $\text{XRV}_{\theta^*=0,\alpha}$ for  $\alpha \in \{0.01,0.05,0.1\}$ when viwed as a lower limit confidence bound for their population counterparts. That is, Figure~\ref{fig:ss_CI} reports the empirical probabilities
\[
{P}_n\!\left({\text{RV}}_{\theta^*=0,\alpha} \leq \text{RV}_{\theta^*=0}\right)
\quad\text{and}\quad
{P}_n\!\left({\text{XRV}}_{\theta^*=0,\alpha} \leq \text{XRV}_{\theta^*=0}\right).
\]
The red dashed line marks the nominal coverage level $1-\alpha$, and the shaded region shows the expected (95\%) Monte Carlo error under nominal coverage (for $5{,}000$ repetitions). The figure shows that the empirical coverage closely tracks the nominal levels across the simulation scenarios.

\paragraph{Model Misspecification.} In the presence of model misspecification, we evaluate the performance of our proposed approach by following the pipeline of \citet{chernozhukov2018double, chernozhukov2024applied} for first-stage machine learning estimation of nuisance parameters, which we refer to as the ``nonparametric" model. Detailed information on the learners is summarized in Table~\ref{tab:miss_learners}. We then compare these results with those obtained under parametric nuisance estimation, namely linear regression for the outcome evolution model and logistic regression for the propensity score model, which we refer to as the ``parametric” model hereafter.

Specifically, we maintain the DGP described in Section~\ref{sec:dgp}, but fit the short nuisance models using $X^*=\exp(X/2)$ in place of $X$, thereby inducing misspecification of the parametric models. The remaining design follows the first coverage experiment, so the true bias factors remain unchanged. 

Figures~\ref{fig:miss_coverage} and \ref{fig:width_diff_miss} show that confidence bounds obtained from the parametric model tend to be wider and, consequently, exhibit over-coverage relative to those from the nonparametric model. In our simulation, we focus on the upper confidence bound, which corresponds to the direction of bias relevant for overturning the conclusion. Accordingly, Figures~\ref{fig:zoom_miss} and~\ref{fig:upper_miss} show that the upper bounds from the nonparametric model tend to lie closer to the true effect than those from the parametric model. These findings are consistent with our expectations: under model misspecification, the nonparametric model better approximates the true nuisance functions, leading to less biased first-stage estimates and, consequently, tighter confidence bounds with coverage closer to the nominal level and upper bounds closer to the true effect.

In addition, Figure~\ref{fig:zoom_miss} shows that in some repetitions the parametric model produces excessively wide confidence bounds. This occurs because its estimates of $\nu_{0s}^2$ are highly sensitive to the estimated propensity scores, which can exhibit greater dispersion under model misspecification. By contrast, the nonparametric model is less sensitive to extreme propensity score values.

Finally, Figures~\ref{fig:ss_miss_point} and~\ref{fig:ss_miss_coverage} compare, under parametric and nonparametric nuisance estimation, the empirical values of $\text{RV}_{\theta^*=0,\alpha=1}$ and $\text{XRV}_{\theta^*=0,\alpha=1}$ with their population analogs and the empirical coverage of $\text{RV}_{\theta^*=0,\alpha}$ and $\text{XRV}_{\theta^*=0,\alpha}$ with the nominal levels $1-\alpha$. Overall, the sensitivity statistics estimated from the nonparametric model are more tightly concentrated and are closer to the true values than those from the parametric model. This indicates improved finite sample stability and accuracy of the nonparametric model under model misspecification.

\subsection{Deferred derivations}

We first verify correct specification of the propensity score model by deriving the expression for the propensity score and showing that it takes the form of a logistic regression with quadratic terms.
\begin{enumerate}
\item[(a)] \textbf{Short:}
\begin{align*}
\frac{P(X=x|D=0)}{P(X=x|D=1)} 
&= \frac{\prod_{i=1}^{q_X}\frac{1}{\sqrt{2\pi\sigma_{0X}^2(i)}}\exp\left(-\frac{1}{2\sigma_{0X}^2(i)}(x_i - \mu_{0X}(i))^2\right)}{\prod_{i=1}^{q_X}\frac{1}{\sqrt{2\pi\sigma_{1X}^2(i)}}\exp\left(-\frac{1}{2\sigma_{1X}^2(i)}(x_i - \mu_{1X}(i))^2\right)} \nonumber \\
&= \sqrt{\prod_{i=1}^{q_X}\frac{\sigma_{1X}^2(i)}{\sigma_{0X}^2(i)}}\exp\left(\sum_{i=1}^{q_X}(C_{X1i}x_i^2 + C_{X2i}x_i + C_{X3i})\right)  \nonumber \\
&= \exp\left(C_{X4} + \left(\sum_{i=1}^{q_X}(C_{X1i}x_i^2 + C_{X2i}x_i)\right)\right), \nonumber 
\end{align*}
for some constants $C_{X4}, C_{X1i}, C_{X2i} \in \mathbb{R}$. Therefore, by Bayes' rule, the following holds,
        \begin{align*}
            P(D=1|X=x) &= \frac{1}{1+\frac{P(D=0)}{P(D=1)}\frac{P(X=x|D=0)}{P(X=x|D=1)}} \\
            &= \frac{1}{1 + \exp\left(-\left(C_{X5} + \left(\sum_{i=1}^{q_X}(-C_{X1i}x_i^2 - C_{X2i}x_i)\right)\right)\right)} \\
            &= \text{expit}\left(C_{X5} - \left(\sum_{i=1}^{q_X}(C_{X1i}x_i^2 + C_{X2i}x_i)\right)\right),
        \end{align*}
        for some constants $C_{X5}, C_{X1i}, C_{X2i} \in \mathbb{R}$, which aligns with the functional form of a logistic regression model with quadratic terms.
        \item[(b)] \textbf{Long:} We proceed analogously to the long case.
        \begin{align*}
            &\frac{P(X=x,U=u|D=0)}{P(X=x,U=u|D=1)} \\
            &= \frac{P(X=x|D=0)P(U=u|D=0)}{P(X=x|D=1)P(U=u|D=1)} \text{ by conditional independence,} \\
            &= \exp\left(C_{XU4} + \left(\sum_{i=1}^{q_X}(C_{X1i}x_i^2 + C_{X2i}x_i)\right) + \left(\sum_{i=1}^{q_U}(C_{U1i}u_i^2 + C_{U2i}u_i)\right)\right),
        \end{align*}
        for some constants $C_{XU4}, C_{X1i}, C_{X2i}, C_{U1i}, C_{U2i} \in \mathbb{R}$. Therefore, by Bayes' rule, 
        \begin{align*}
            &P(D=1|X=x,U=u)\\
            &= \frac{1}{1+\frac{P(D=0)}{P(D=1)}\frac{P(X=x,U=u|D=0)}{P(X=x,U=u|D=1)}} \\
            &= \text{expit}\left(C_{XU5} - \left(\sum_{i=1}^{q_X}(C_{X1i}x_i^2 + C_{X2i}x_i)+ \sum_{i=1}^{q_U}(C_{U1i}u_i^2 + C_{U2i}u_i)\right)\right),
        \end{align*}
        for some constants $C_{XU5}, C_{X1i}, C_{X2i}, C_{U1i}, C_{U2i} \in \mathbb{R}$, which aligns with the functional form of a logistic regression model with quadratic terms.
    \end{enumerate}

    Next, we derive closed-form expressions for the bias factors in our simulation setting.
    \begin{enumerate}
        \item[(a)] $\bm{C_{0\Delta Y}^2:}$
        \begin{align*}
            E[(g_0 - g_{0s})^2|D=0]  &= (\Delta \beta_{u})^2E[(U - E[U|D=0])^2|D=0] \text{ by } U\indep X\mid D=0,\nonumber \\
            &= (\Delta \beta_{u})^2\operatorname{Var}(U|D=0) = 1 \times \sigma_{0U}^2, \nonumber \\
            E[(\Delta Y - g_{0s})^2|D=0] &= E[(\Delta \beta_{u}(U - E[U|D=0]) + \Delta \tilde{\epsilon})^2|D=0] \text{ by } U\indep X\mid D=0, \nonumber \\
            &= E[(g_0 - g_{0s})^2|D=0] + E[\Delta \tilde{\epsilon}^2|D=0] + 2E[(g_0 - g_{0s})\Delta \tilde{\epsilon}|D=0] \nonumber \\
            &= E[(g_0 - g_{0s})^2|D=0] + E[\Delta \tilde{\epsilon}^2|D=0] \text{ by independence},\nonumber \\
            &= \sigma_{0U}^2 + 2, \nonumber \\
            \implies \quad C_{0\Delta Y}^2 &= \frac{E[(g_0 - g_{0s})^2|D=0]}{E[(\Delta Y - g_{0s})^2|D=0]} = \frac{\sigma_{0U}^2}{\sigma_{0U}^2 + 2}. \nonumber 
        \end{align*}
        \item[(b)] $\bm{C_{0D}^2:}$
        \begin{align*}
            C_{0D}^2 &= \frac{\chi^2(P_{X,U|1}\|P_{X,U|0}) - \chi^2(P_{X|1}\|P_{X|0})}{\chi^2(P_{X|1}\|P_{X|0}) + 1},
        \end{align*}
        which follows directly from Corollary~\ref{thm:alt-selection}. To compute this quantity in our simulation, we plug in the conditional density functions of $X$ and $U$ given $D$, and use conditional independence to simplify the calculation. Specifically,
        \begin{align*}
            &\chi^2(P_{X|1}\|P_{X|0}) \\
            &= \int \frac{\dd P_{X|1}(x)}{\dd P_{X|0}(x)}\dd P_{X|1}(x) - 1 \\
            &= \frac{\sigma_{0X}}{\sqrt{2\pi}\sigma_{1X}^2}\int \exp\left(- \frac{(x - \mu_{1X})^2}{\sigma_{1X}^2} + \frac{(x - \mu_{0X})^2}{2\sigma_{0X}^2}\right)dx - 1 \\
            &= \frac{\sigma_{0X}}{\sqrt{2\pi}\sigma_{1X}^2}\int \exp\left(-\underbrace{\left(\frac{1}{\sigma_{1X}^2} -\frac{1}{2\sigma_{0X}^2}\right)}_ax^2 + \underbrace{\left(\frac{2\mu_{1X}}{\sigma_{1X}^2} -\frac{\mu_{0X}}{\sigma_{0X}^2}\right)}_b x - \underbrace{\left(\frac{\mu_{1X}^2}{\sigma_{1X}^2} -\frac{\mu_{0X}^2}{2\sigma_{0X}^2}\right)}_c\right)dx - 1, \\
            (1) \ &\text{If $a > 0$ (i.e., $2\sigma_{0X}^2 > \sigma_{1X}^2$):} \\
           &\chi^2(P_{X|1}\|P_{X|0})  = \frac{\sigma_{0X}}{\sqrt{2\pi}\sigma_{1X}^2}\exp\left(\frac{b^2}{4a} - c\right)\int \exp\left(-\frac{1}{2}\left(\sqrt{2a}x-\frac{b}{\sqrt{2a}}\right)^2\right)dx - 1 \\
            &= \frac{\sigma_{0X}}{\sqrt{2\pi}\sigma_{1X}^2}\exp\left(\frac{b^2}{4a} - c\right)\sqrt{\frac{\pi}{a}} - 1 \\
            &= \frac{\sigma_{0X}^2}{\sigma_{1X}\sqrt{2\sigma_{0X}^2 - \sigma_{1X}^2}}\exp\left(\frac{(\mu_{1X} - \mu_{0X})^2}{2\sigma_{0X}^2 - \sigma_{1X}^2}\right) - 1; \\
            (2) \ &\text{Otherwise:}\\
            &\chi^2(P_{X|1}\|P_{X|0}) = \infty.
        \end{align*}
        
        Similarly, 
        \begin{align*}
            &\chi^2(P_{X,U|1}\|P_{X,U|0}) \\
            &= \left(\chi^2(P_{X|1}\|P_{X|0}) + 1\right)\left(\chi^2(P_{U|1}\|P_{U|0})+1\right) - 1 \text{ by conditional independence,} \\
            (1) \ &\text{If $2\sigma_{0X}^2 > \sigma_{1X}^2$ and $2\sigma_{0U}^2 > \sigma_{1U}^2$:}\\
            &\chi^2(P_{X,U|1}\|P_{X,U|0}) =  \frac{\sigma_{0X}^2\sigma_{0U}^2}{\sigma_{1X}\sigma_{1U}\sqrt{(2\sigma_{0X}^2 - \sigma_{1X}^2)(2\sigma_{0U}^2 - \sigma_{1U}^2)}} \times \\
            &
            \qquad \qquad \qquad
            \qquad \qquad \qquad
            \exp\left(\frac{(\mu_{1X} - \mu_{0X})^2}{2\sigma_{0X}^2 - \sigma_{1X}^2} + \frac{(\mu_{1U} - \mu_{0U})^2}{2\sigma_{0U}^2 - \sigma_{1U}^2}\right) - 1; \\
            (2) \ &\text{Otherwise:}\\ 
            &\chi^2(P_{X,U|1}\|P_{X,U|0}) = \infty.
        \end{align*}
        Accordingly, when $2\sigma_{0X}^2 > \sigma_{1X}^2$ and $2\sigma_{0U}^2 > \sigma_{1U}^2$, 
        \begin{align*}
            C_{0D}^2 &= \frac{\chi^2(P_{X,U|1}\|P_{X,U|0}) - \chi^2(P_{X|1}\|P_{X|0})}{\chi^2(P_{X|1}\|P_{X|0}) + 1} \\
            &= \frac{\left(\chi^2(P_{X|1}\|P_{X|0}) + 1\right)\left(\chi^2(P_{U|1}\|P_{U|0})+1\right) - 1 - \chi^2(P_{X|1}\|P_{X|0})}{\chi^2(P_{X|1}\|P_{X|0}) + 1} \\
            &= \chi^2(P_{U|1}\|P_{U|0}) \\
            &= \frac{\sigma_{0U}^2}{\sigma_{1U}\sqrt{2\sigma_{0U}^2 - \sigma_{1U}^2}}\exp\left(\frac{(\mu_{1U} - \mu_{0U})^2}{2\sigma_{0U}^2 - \sigma_{1U}^2}\right) - 1. 
        \end{align*}

        \item[(c)] $\bm{\rho_0^2:}$
        \begin{align*}
            \theta_s &= E\left[\left(E[\Delta Y|D=1,X] - E[\Delta Y|D=0,X]\right)\mid D=1\right] \nonumber \\
            &= E\left[\frac{D}{p}\left(\theta + E[U\Delta\beta_{u}|D=1] - E[U\Delta\beta_{u}|D=0]\right) \right] \text{ by } U\indep X\mid D=0, \nonumber \\
            &= \theta + (\mu_{1U} - \mu_{0U})\Delta\beta_{u} = \theta + (\mu_{1U} - \mu_{0U}). \\
        \end{align*}
        Accordingly,
        \begin{align*}
            &\operatorname{Cov}(g_0 - g_{0s}, O_{XU} - O_X|D=0) \\
            &= E[(g_0 - g_{0s})(O_{XU} - O_X)|D=0] \nonumber \\
            &= O \times (\theta_0 - \theta_{0s}) \nonumber \\
            &= O \times (\theta_s - \theta) \\
            &= O \times (\mu_{1U} - \mu_{0U}),
        \end{align*}
        where the second equality follows from the proof of Theorem~\ref{thm:main_npm}. Additionally,
        \begin{align*}
            \operatorname{Var}(O_{XU} - O_{X}|D=0) &= E[(O_{XU} - O_{X})^2|D=0] \nonumber \\
            &= E[O_{XU}^2|D=0] - E[O_{X}^2|D=0] \nonumber \\
            &= O^2 \times (\chi^2(P_{X,U|1}\|P_{X,U|0}) - \chi^2(P_{X|1}\|P_{X|0})) \text{ by Proposition~\ref{prop: odds_and_divergence} (1)}, \\
            \operatorname{Var}(g_0 - g_{0s}\mid D=0) &= E[(g_0 - g_{0s})^2|D=0] = \sigma_{0U}^2. 
        \end{align*}
        Finally, we have 
        \begin{align*}
            \rho_0^2 &= \frac{\operatorname{Cov}^2(g_0 - g_{0s}, O_{XU} - O_X|D=0)}{\operatorname{Var}(O_{XU} - O_{X}|D=0)\operatorname{Var}(g_0 - g_{0s}\mid D=0)} \\
            &= \frac{(\mu_{1U} - \mu_{0U})^2}{\sigma_{0U}^2 \times (\chi^2(P_{X,U|1}\|P_{X,U|0}) - \chi^2(P_{X|1}\|P_{X|0}))} \\
            &= \frac{(\mu_{1U} - \mu_{0U})^2}{\sigma_{0U}^2 \times (\chi^2(P_{X|1}\|P_{X|0})+1)\chi^2(P_{U|1}\|P_{U|0})}. 
        \end{align*}
    \end{enumerate}

    \subsection{Simulation: figures and tables}

    \begin{figure}[H]
            \centering
            \includegraphics[width=\linewidth]{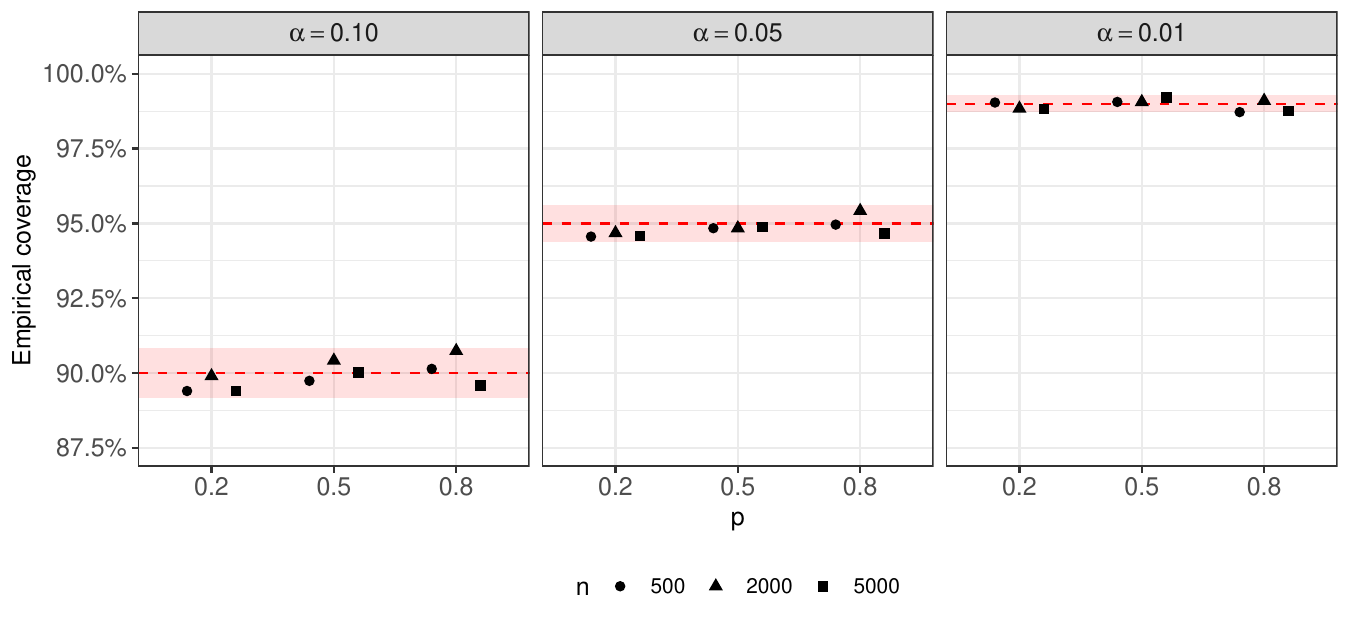}
            \caption{Empirical coverage probabilities of 90\%, 95\%, and 99\% confidence bounds across sample sizes $n$ and treatment probabilities $p$, using the true bias factors. Red dashed lines indicate nominal coverage levels; shaded regions indicate the expected (95\%) Monte Carlo error under nominal coverage.}\label{fig: sim_coverage}
        \end{figure}

\begin{table}[!ht]
\centering
\setlength{\tabcolsep}{4pt}
\renewcommand{\arraystretch}{0.9}
\small
\begin{tabular}{rr rr rr rr}
\toprule
& & \multicolumn{2}{c}{90\%} & \multicolumn{2}{c}{95\%} & \multicolumn{2}{c}{99\%} \\
\cmidrule(lr){3-4}\cmidrule(lr){5-6}\cmidrule(lr){7-8}
$p$ & $n$ & Coverage & $z$ & Coverage & $z$ & Coverage & $z$ \\
\midrule
0.2 & 500       & 0.8940 & $-1.41$ & 0.9456 & $-1.43$ & 0.9906 & $ 0.43$ \\
    & 2{,}000   & 0.8988 & $-0.28$ & 0.9466 & $-1.10$ & 0.9884 & $-1.14$ \\
    & 5{,}000   & 0.8936 & $-1.51$ & 0.9458 & $-1.36$ & 0.9882 & $-1.28$ \\
\midrule
0.5 & 500       & 0.8974 & $-0.61$ & 0.9484 & $-0.52$ & 0.9906 & $ 0.43$ \\
    & 2{,}000   & 0.9036 & $ 0.85$ & 0.9488 & $-0.39$ & 0.9906 & $ 0.43$ \\
    & 5{,}000   & 0.9000 & $ 0.00$ & 0.9486 & $-0.45$ & 0.9916 & $ 1.14$ \\
\midrule
0.8 & 500       & 0.9014 & $ 0.33$ & 0.9496 & $-0.13$ & 0.9872 & $-1.99$ \\
    & 2{,}000   & 0.9090 & $ 2.12$ & 0.9534 & $ 1.10$ & 0.9914 & $ 0.99$ \\
    & 5{,}000   & 0.8958 & $-0.99$ & 0.9470 & $-0.97$ & 0.9876 & $-1.71$ \\
\bottomrule
\end{tabular}
\caption{Empirical coverage probabilities of 90\%, 95\%, and 99\% confidence bounds across sample sizes $n$ and treatment probabilities $p$, using the true bias factors. The $z$-statistic indicates the difference between empirical and nominal coverage divided by the Monte Carlo standard error, $\sqrt{\alpha(1-\alpha)/5{,}000}$.}
\label{tab: sim_coverage}
\end{table}

\clearpage  
    \begin{figure}[H]
    \centering
    \includegraphics[width=\linewidth]{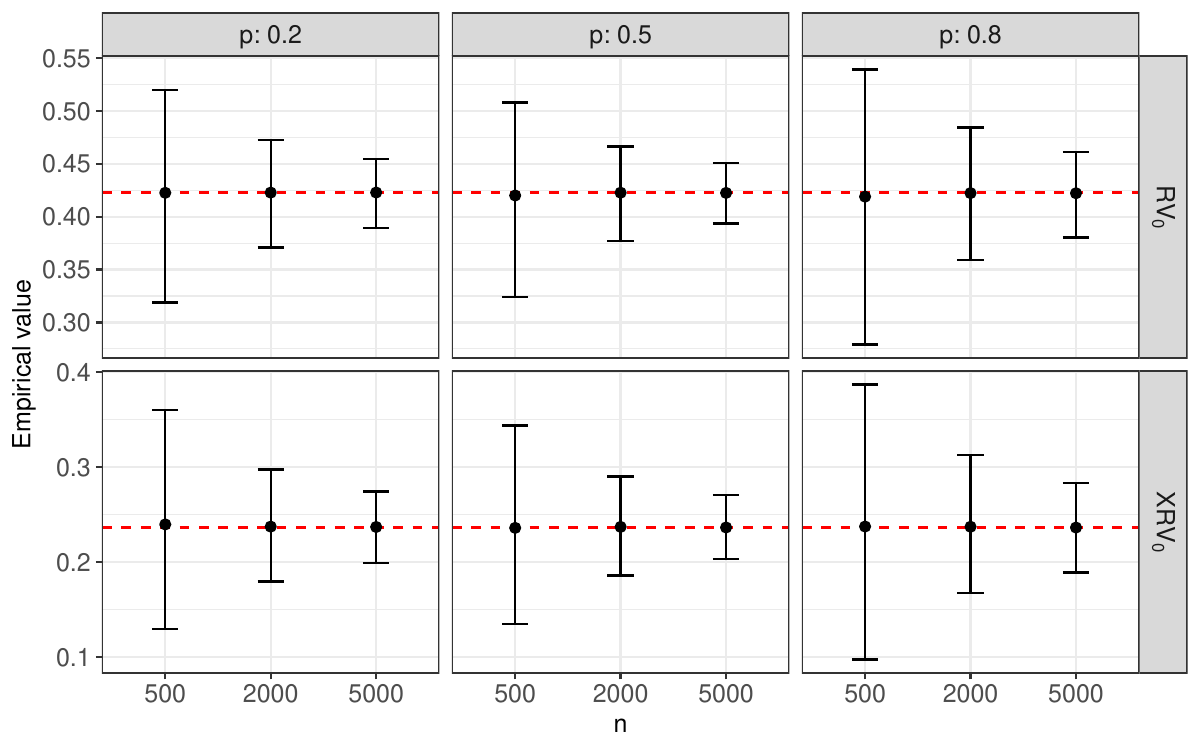}
    \caption{Finite sample behavior of $\text{RV}_{\theta^*=0,\alpha=1}$ and $\text{XRV}_{\theta^*=0,\alpha=1}$ across sample sizes and treatment probabilities ($5{,}000$ repetitions). Points denote Monte Carlo means, whiskers delimit the empirical $2.5$th and $97.5$th percentiles. The red dashed line indicates the true population value.}
    \label{fig:ss_point}
\end{figure}

    \begin{figure}[H]
    \centering
    \includegraphics[width=\linewidth]{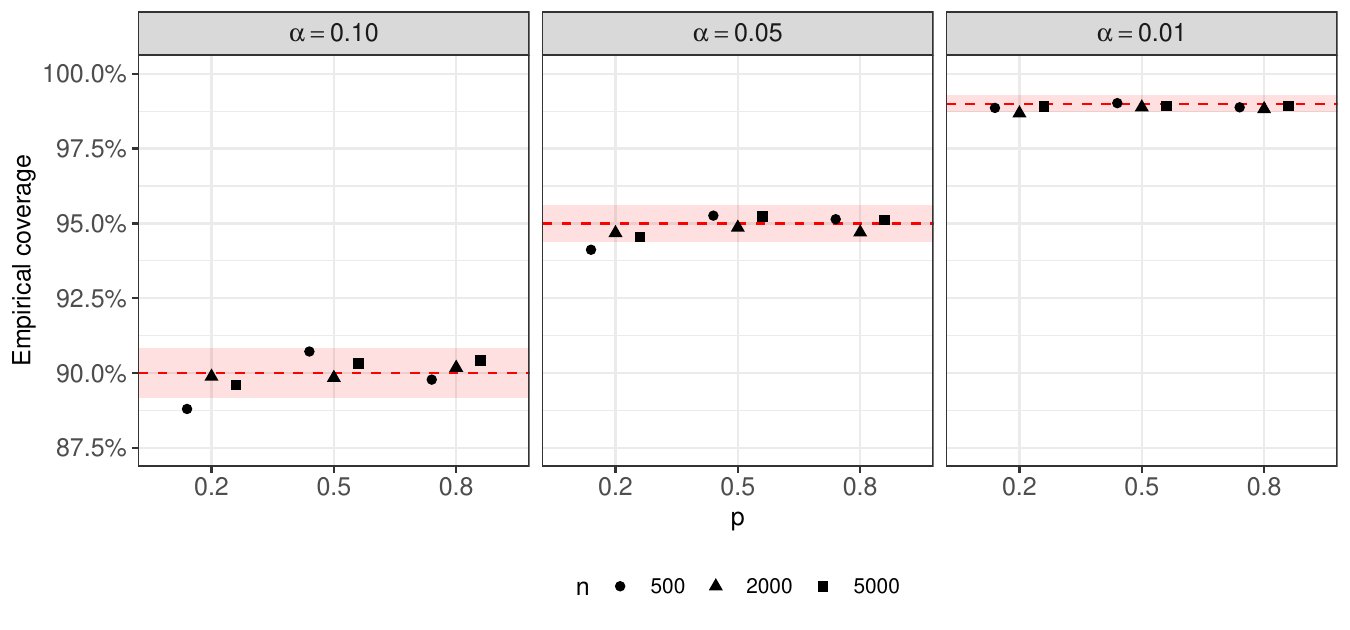}
    \caption{Empirical coverage of $\text{RV}_{\theta^*=0,\alpha}$ and $\text{XRV}_{\theta^*=0,\alpha}$ across sample sizes and treatment probabilities ($5{,}000$ repetitions). Red dashed lines indicate the nominal coverage levels $1-\alpha$; shaded regions indicate the expected (95\%) Monte Carlo error under nominal coverage.}
    \label{fig:ss_CI}
\end{figure}

\begin{table}[H]
\centering
\footnotesize                      
\setlength{\tabcolsep}{4pt}         
\renewcommand{\arraystretch}{1.0}  
\begin{tabular}{llll}
\toprule
\textbf{Learner} & \textbf{Variables} & \textbf{Package} & \textbf{Tuning grid} \\
\midrule
\makecell[l]{Parametric: \\
Linear for $g_{0s}$ \\
Logistic for $\pi$}  & $X^*,(X^*)^2$
 &
\makecell[l]{\texttt{stats} (\texttt{lm})\\
\texttt{stats} (\texttt{glm})} &
N/A\\
\hline
\makecell[l]{Ridge: \\
Linear for $g_{0s}$ \\
Logistic for $\pi$} &
$X^*,(X^*)^2,(X^*)^3$ &
\texttt{glmnet} &
\makecell[l]{$\alpha_{\text{EN}} = 0$ (ridge) \\ $\lambda$ selected from \texttt{cv.glmnet}} \\
\hline
\makecell[l]{Lasso: \\
Linear for $g_{0s}$ \\
Logistic for $\pi$} &
$X^*,(X^*)^2,(X^*)^3$ &
\texttt{glmnet} &
\makecell[l]{$\alpha_{\text{EN}} = 1$ (lasso) \\ $\lambda$ selected from \texttt{cv.glmnet}} \\
\hline
\makecell[l]{Random Forest \\ ($\texttt{num.trees} = 500$)} &
$X^*$ &
\texttt{ranger} &
\makecell[l]{\texttt{mtry} $\in \{2,3\}$ \\
             \texttt{min.node.size} $\in \{50, 100, \ldots,300\}$ \\
             \texttt{splitrule} $\in \{\texttt{variance}, \texttt{extratrees}\}$} \\
\bottomrule
\end{tabular}
\caption{Machine learning methods used for first-stage nuisance estimation. Covariates enter linearly. Models are selected by minimizing the RMSE. Across $500$ repetitions, random forest is selected in $87\%$ of replications for the propensity score $\pi$ and in $98.8\%$ of replications for the outcome evolution model $g_{0s}$.}
\label{tab:miss_learners}
\end{table}

\begin{figure}[H]
\centering
\begin{minipage}[t]{\textwidth}
    \centering
    \includegraphics[width=\linewidth]{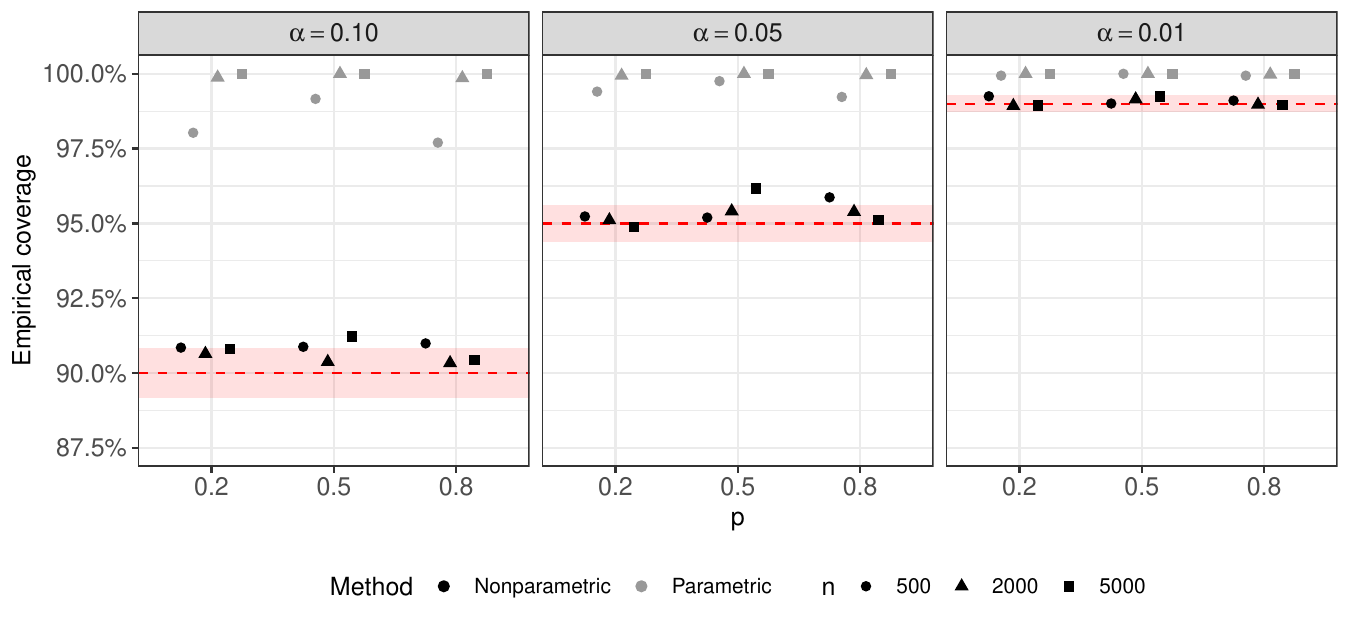}
    \captionof{figure}{Comparison of empirical coverage probabilities between parametric and nonparametric confidence bounds, using the true bias factors. Red dashed lines indicate nominal coverage levels; shaded regions indicate the expected (95\%) Monte Carlo error under nominal coverage.}
    \label{fig:miss_coverage}
\end{minipage}
\hfill

\vspace{1cm}
\begin{minipage}[t]{0.8\textwidth}
    \centering
    \includegraphics[width=\textwidth]{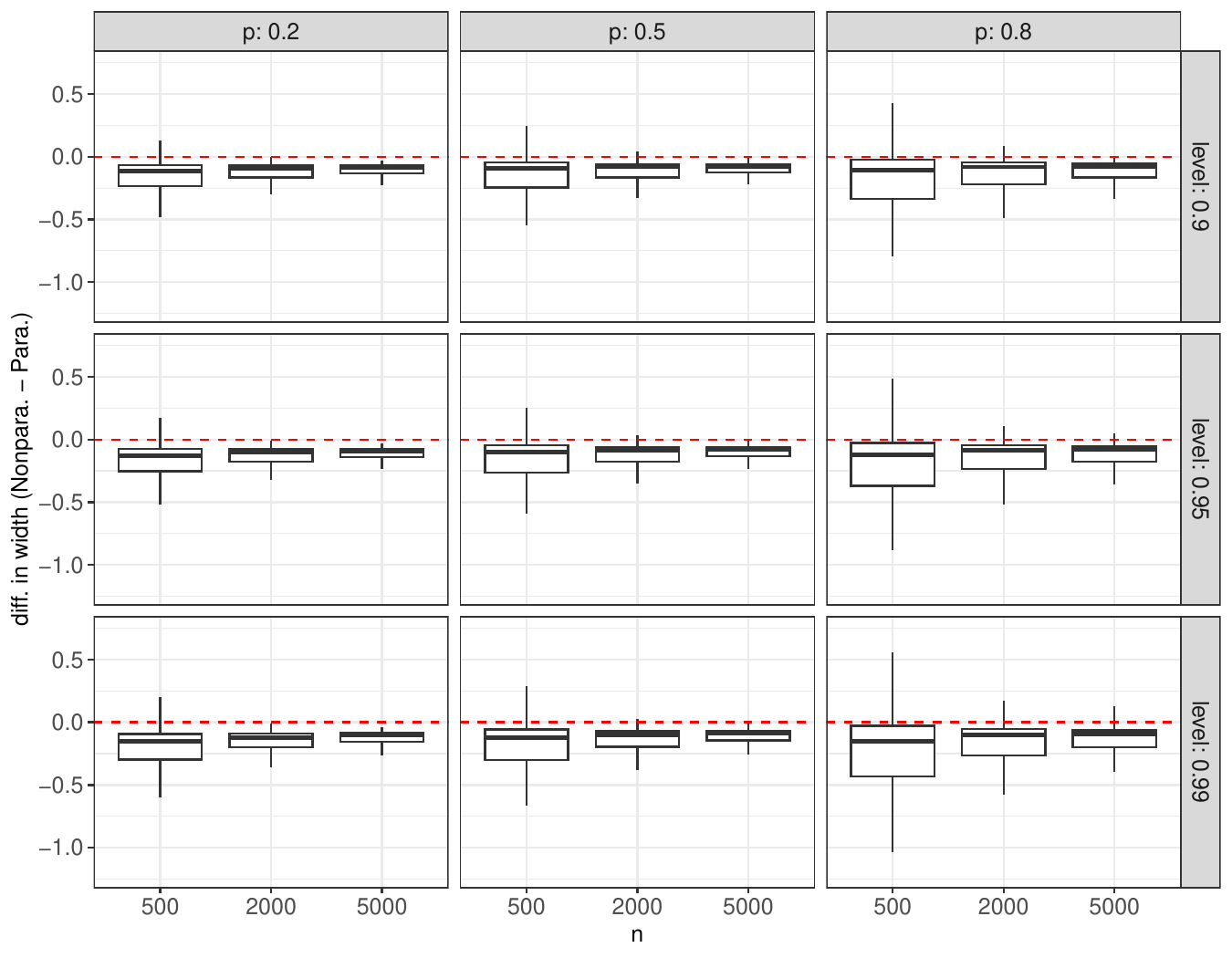}
    \captionof{figure}{Boxplots of the difference in confidence-interval widths (nonparametric minus parametric) across sample sizes and treatment probabilities; the red dashed line denotes zero difference.}
    \label{fig:width_diff_miss}
\end{minipage}
\end{figure}

\begin{figure}[H]
\centering
\begin{minipage}[t]{0.8\textwidth}
    \centering
    \includegraphics[width=\textwidth]{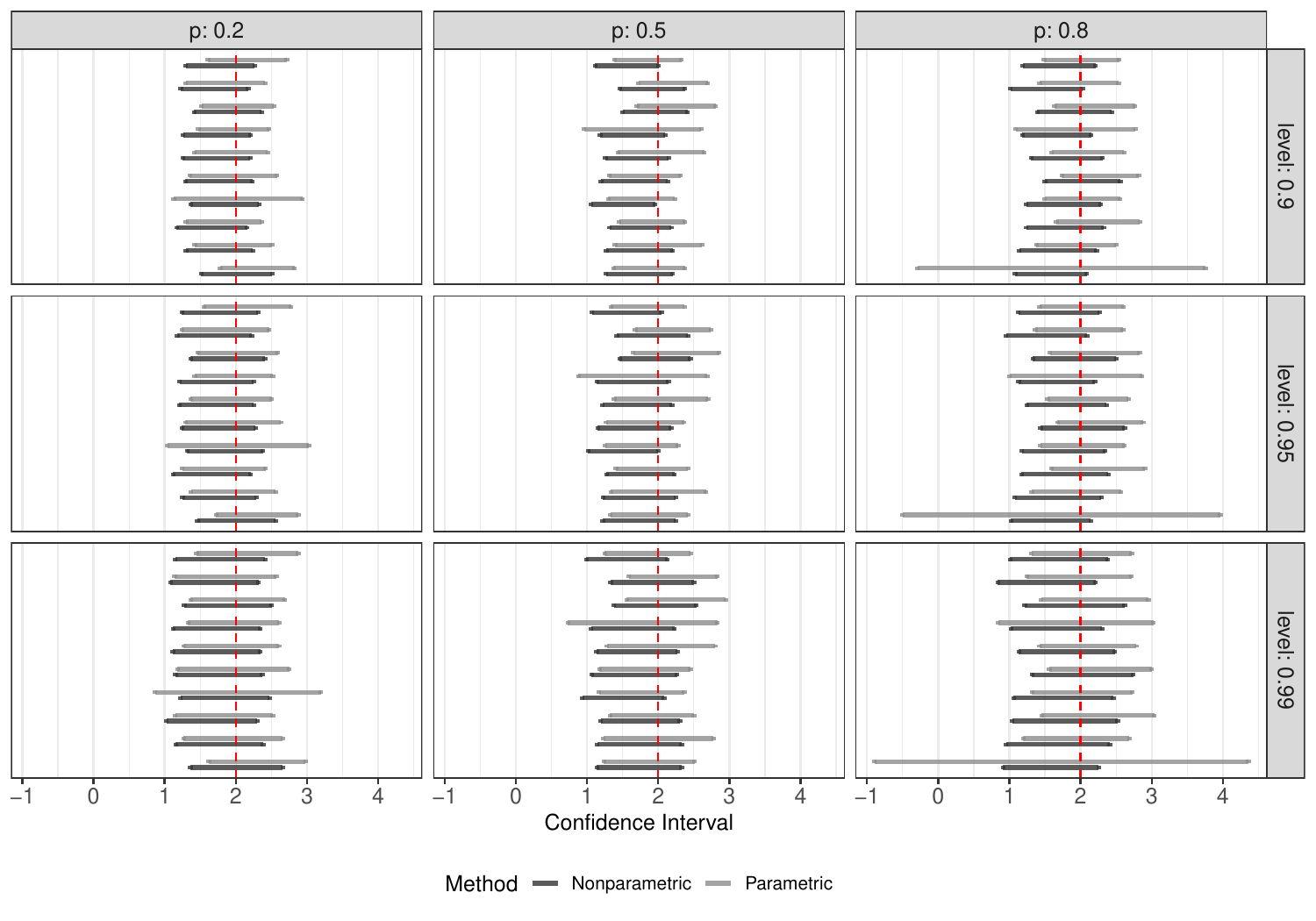}
    \captionof{figure}{Comparison of parametric and nonparametric example confidence bounds under model misspecification; red dashed lines indicate true $\theta$ value.}
    \label{fig:zoom_miss}
\end{minipage}
\hfill

\vspace{1cm}
\begin{minipage}[t]{0.8\textwidth}
    \centering
    \includegraphics[width=\textwidth]{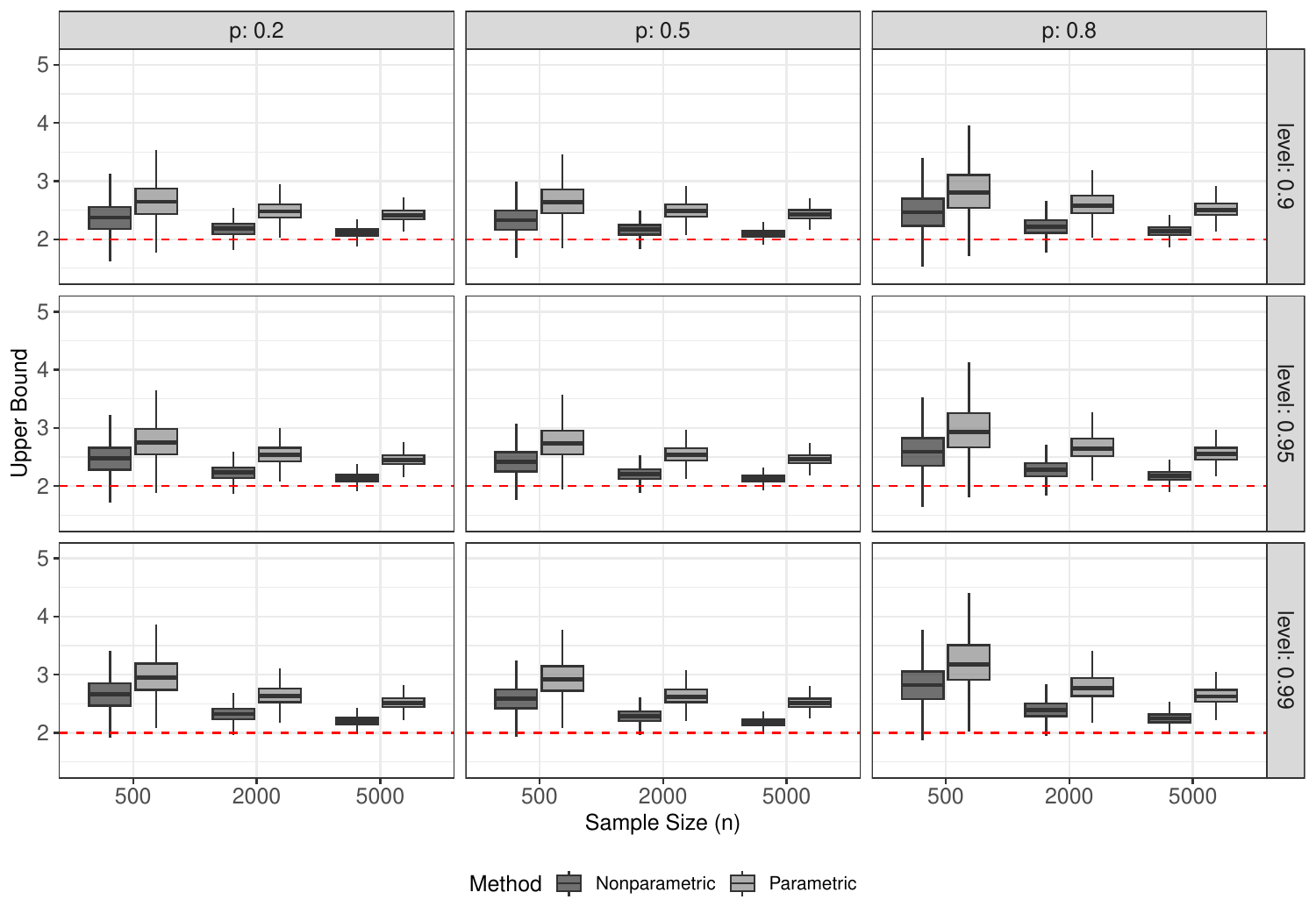}
    \captionof{figure}{Comparison of parametric and nonparametric upper confidence bounds under model misspecification. Red dashed lines indicate the true $\theta$ value.}
    \label{fig:upper_miss}
\end{minipage}
\end{figure}

\begin{figure}[H]
    \centering
    \includegraphics[width=\linewidth]{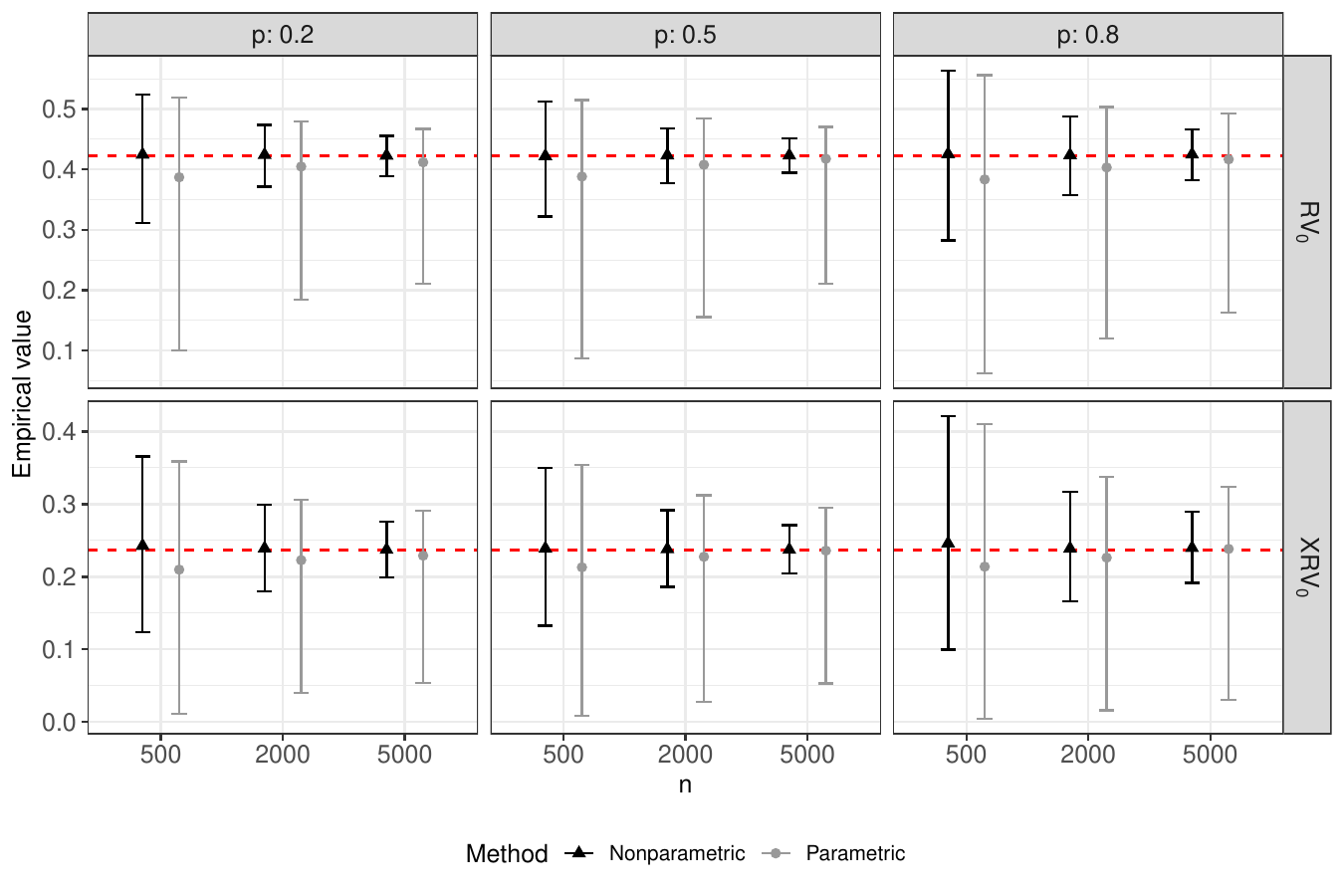}
    \caption{Finite sample behavior of $\text{RV}_{\theta^*=0,\alpha=1}$ and $\text{XRV}_{\theta^*=0,\alpha=1}$ under parametric and nonparametric nuisance estimation across sample sizes and treatment probabilities. Points denote Monte Carlo means; whiskers delimit the empirical $2.5$th and $97.5$th percentiles.}
    \label{fig:ss_miss_point}
\end{figure}

\begin{figure}[H]
    \centering
    \includegraphics[width=\linewidth]{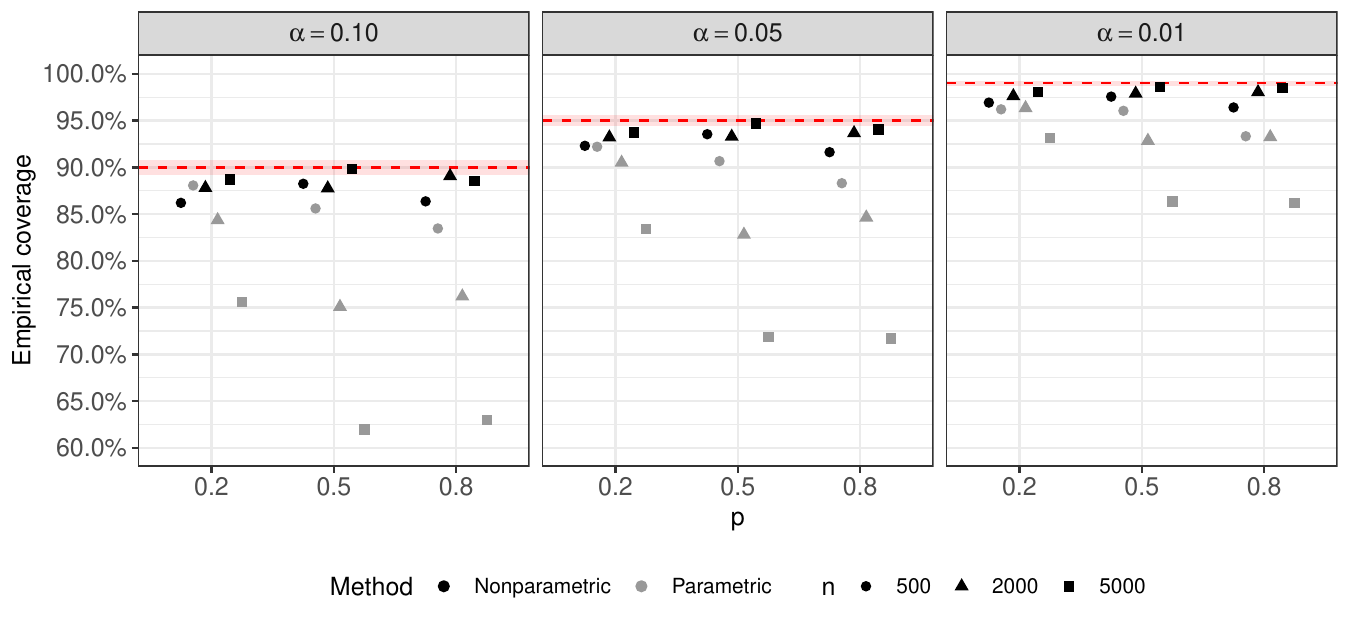}
    \caption{Empirical coverage of $\text{RV}_{\theta^*=0,\alpha}$ and $\text{XRV}_{\theta^*=0,\alpha}$ under parametric and nonparametric nuisance estimation. Red dashed lines indicate the nominal coverage levels $1-\alpha$; shaded regions indicate the expected (95\%) Monte Carlo error under nominal coverage.}
    \label{fig:ss_miss_coverage}
\end{figure}

\section{Comparison with related literature} 
\label{app:all-comparison}

\subsection{Relationship to the ``unconditional" OVB analysis}
\label{app:unc-comparison}

In this section, we compare our OVB analysis for the ATT with the results of \citet{chernozhukov2022long} as applied to the ATT, which were used by \citet{bach2025sensitivity}. We refer to that result as ``unconditional.'' Both our results and the unconditional results target the same causal estimand and characterize the same total bias. The distinction lies in how confounding strength is parameterized and interpreted.

\subsubsection{Review of the unconditional results}

We begin by briefly reviewing the results of \citet{chernozhukov2022long}. Under the same identification assumptions as those imposed in our main analysis, they start with the following parameterization of the ATT:
\begin{align*}
    &\theta = E[g(D,X,U)\alpha(D,X,U)], \text{ and } \theta_s = E[g_s(D,X)\alpha_s(D,X)] \text{ with }\\
    &g(d,X,U) := E[\Delta Y\mid D=d,X,U], \text{ and } g_s(d,X) := E[\Delta Y\mid D=d,X], \\
    &\alpha(d,X,U) := \frac{D}{p} - \frac{1-D}{1-p}\times \frac{O_{XU}}{O}, \text{ and } \alpha_s(d,X) := \frac{D}{p} - \frac{1-D}{1-p}\times \frac{O_{X}}{O}. 
\end{align*}
This leads to the following OVB result for the ATT:
\begin{align}
        \theta - \theta_s &= \rho\times \underbrace{\sqrt{R^2_{\Delta Y -g_s\sim g-g_s}}}_{=: C_{\Delta Y}} \times \underbrace{\sqrt{\frac{1 - R^2_{\alpha \sim \alpha_s}}{R^2_{\alpha \sim \alpha_s}}}}_{=:C_{D}}\times \underbrace{\sqrt{E[(\Delta Y - g_s)^2]\times E[\alpha_s^2]}}_{=: S}, \nonumber \\
        \text{where } \rho &= \operatorname{Cor}\left(g-g_s, \alpha - \alpha_s\right), \nonumber
    \end{align} 
which can be further specialized as:
\begin{align}
        \theta - \theta_s &= \rho\times \underbrace{\sqrt{\eta^2_{\Delta Y\sim U \mid X,D}}}_{=: C_{\Delta Y}} \times \underbrace{\sqrt{\frac{E[O_{XU}] - E[O_{X}]}{E[O_{X}]}}}_{=:C_{D}}\times \underbrace{\sqrt{E[\operatorname{Var}(\Delta Y \mid X,D)]\times \left(\frac{E[O_{X}]}{p^2}\right)}}_{=: S}, \nonumber \\
        \text{where } \rho &= \operatorname{Cor}\left(E[\Delta Y\mid D,X,U] - E[\Delta Y\mid D,X], -(1-D) \times (O_{XU} - O_{X})\right).\nonumber
    \end{align} 

\subsubsection{Alternative ways to quantify selection strength for the unconditional analysis}

For completeness, we now provide alternative characterizations of selection strength for the unconditional results, expressed in terms of $R^2$ measures in odds or $\chi^2$ divergence. These characterizations show that their selection strength measure increases with the treatment probability.

\begin{Proposition}[Alternative characterization of $C_D^2$]\label{prop: CB_D_equivalent}
    \begin{align*}
    C_{D}^2 &= \frac{ 1 - R^2_{O_{XU} \sim O_X|D=0}}{\frac{1}{O}R^2_{O_{XU} \sim O|D=0} + R^2_{O_{XU} \sim O_X|D=0}} \tag*{\text{(Residual $R^2$ in Selection Odds)}} \\
    &= \frac{\chi^2(P_{X,U|1}\|P_{X,U|0})-\chi^2(P_{X|1}\|P_{X|0})}{\chi^2(P_{X|1}\|P_{X|0}) + \frac{1}{p}} \tag*{\text{(Distributional Imbalance)}}. 
\end{align*}
\end{Proposition}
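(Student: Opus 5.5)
We start from the specialized odds form of the unconditional selection factor,
\[
C_D^2=\frac{E[O_{XU}]-E[O_X]}{E[O_X]},
\]
and rewrite numerator and denominator using the control-group identities already established. The key ingredient is Proposition~\ref{prop: among_odds}(4), which converts marginal expectations of odds into control-group second moments. Applied to the long odds it gives
\[
E[O_{XU}]=(1-p)E[O_{XU}^2\mid D=0]+p .
\]
The identity in (4) only uses parts (2) and (3) of that proposition, and these hold with $X$ alone in place of $(X,U)$. Hence the same argument also gives
\[
E[O_X]=(1-p)E[O_X^2\mid D=0]+p .
\]
Subtracting, the numerator becomes $(1-p)\{E[O_{XU}^2\mid D=0]-E[O_X^2\mid D=0]\}$.

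For the residual-$R^2$ form, we divide numerator and denominator by $(1-p)E[O_{XU}^2\mid D=0]$. In the proof of Theorem~\ref{thm:main_npm} it was shown that $R^2_{O_{XU}\sim O_X\mid D=0}=E[O_X^2\mid D=0]/E[O_{XU}^2\mid D=0]$, so the numerator becomes $1-R^2_{O_{XU}\sim O_X\mid D=0}$. The denominator becomes
\[
R^2_{O_{XU}\sim O_X\mid D=0}+\frac{p}{(1-p)E[O_{XU}^2\mid D=0]}.
\]
It remains to identify $p/\{(1-p)E[O_{XU}^2\mid D=0]\}$ with $R^2_{O_{XU}\sim O\mid D=0}/O$. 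Since $O$ is a constant, the uncentered $R^2$ from projecting on a constant is
\[
R^2_{O_{XU}\sim O\mid D=0}=\frac{E[O_{XU}\mid D=0]^2}{E[O_{XU}^2\mid D=0]}=\frac{O^2}{E[O_{XU}^2\mid D=0]},
\]
using Proposition~\ref{prop: among_odds}(2) and the tower property. Dividing by $O$ gives $O/E[O_{XU}^2\mid D=0]=p/\{(1-p)E[O_{XU}^2\mid D=0]\}$, which matches.

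For the $\chi^2$ form, we instead divide numerator and denominator by $(1-p)O^2$. Proposition~\ref{prop: odds_and_divergence}(1), together with the analogous $X$-only version used in the proof of Corollary~\ref{thm:alt-selection}, gives $E[O_{XU}^2\mid D=0]/O^2=\chi^2(P_{X,U|1}\|P_{X,U|0})+1$, and likewise for $X$. The numerator is therefore the difference of the two divergences. The denominator is
\[
\chi^2(P_{X|1}\|P_{X|0})+1+\frac{p}{(1-p)O^2}.
\]
Because $p/\{(1-p)O^2\}=(1-p)/p=1/p-1$, this equals $\chi^2(P_{X|1}\|P_{X|0})+1/p$, as claimed.

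The main obstacle is bookkeeping rather than substance. First, we must confirm that Proposition~\ref{prop: among_odds}(4) transfers verbatim to $O_X$; its proof only uses parts (2) and (3), whose $X$-only analogues hold trivially. Second, we must interpret $R^2_{O_{XU}\sim O\mid D=0}$ correctly as the uncentered $R^2$ on a constant regressor. We also need $E[O_{XU}^2\mid D=0]<\infty$ so that all the divisions are legitimate, and weak overlap in Assumption~\ref{asmp:regularity} provides this.
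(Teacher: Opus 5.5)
Your proof is correct and is essentially the paper's argument. The paper works with $R^2_{\alpha\sim\alpha_s}=E[\alpha_s^2]/E[\alpha^2]$ split by treatment arm, but since $E[\alpha_s^2]=E[O_X]/p^2$ this is the same control-group second-moment expression you get from Proposition~\ref{prop: among_odds}(4), followed by the same two normalizations. The only cosmetic difference is that the paper reaches the $\chi^2$ form through treated-group average odds, whereas you use control-group second moments; these are equivalent by Proposition~\ref{prop: among_odds}(3).
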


\subsubsection{Difference in parameterization}

For the ATT, the only unobserved component is the untreated counterfactual trend for the treated units. The treated trend for treated units is observed and therefore does not need a bias analysis. This allows us to localize the bias conditionally on the untreated group. Note that the unconditional result does not take this information into account.

We show that $C_{\Delta Y}^2$ is a weighted average of the strength of confounding on trends in the treated and control groups. In particular, as treatment becomes more likely, the weight placed on the control-group component, where the bias originates, decreases. Consequently, practitioners must rely on a pooled measure of confounding, which obscures the source of the bias and may complicate interpretation. Similarly, we show that $C_D^2$ is a weighted version of $C_{0D}^2$. In particular, $C_D^2 < C_{0D}^2$, with a weight on $C_{0D}^2$ that decreases as treatment becomes less likely. Likewise, $\rho^2$ is a downweighted version of $\rho_0^2$, with the weight decreasing as the treatment probability increases.

\begin{Proposition}[Relationship with unconditional] 
\label{prop: CB_connection} 
\par\noindent
    \begin{enumerate}
        \item[(i)] \textbf{$C_{0\Delta Y}^2$ vs $C_Y^2$.}
        \begin{align*}
        C_{\Delta Y}^2 &= W_{0\Delta Y}C_{0\Delta Y}^2 + (1 - W_{0\Delta Y})C_{1\Delta Y}^2, \\
        \text{ with } C_{0\Delta Y}^2 &= \eta^2_{\Delta Y \sim U \mid X,D=0}, \ C_{1\Delta Y}^2 = \eta^2_{\Delta Y \sim U \mid X,D=1}, \text{ and} \\
        W_{0\Delta Y} &= \frac{(1-p) \times \sigma_{0s}^2}{p \times \sigma_{1s}^2 + (1-p) \times \sigma_{0s}^2}, 
    \end{align*}
    where $\sigma_{ds}^2 = E[\operatorname{Var}(\Delta Y \mid X,D=d)\mid D=d]$ for $d \in \{0,1\}$.
    
    \item[(ii)] \textbf{$C_{0D}^2$ vs $C_{D}^2$.}
        \begin{align*}
        C_D^2 &= W_{0D}C_{0D}^2, \\
        \text{with } W_{0D} &= \frac{O\times (\chi^2(P_{X|1}\|P_{X|0}) + 1)}{O\times (\chi^2(P_{X|1}\|P_{X|0}) + 1) + 1}. 
    \end{align*}

    \item[(iii)] \textbf{$\rho_0$ vs $\rho$.} Define $g_{d} := E[\Delta Y \mid X,U,D=d]$ and $g_{ds} := E[\Delta Y \mid X,D=d]$. 
    \begin{align*}
        \rho^2 &= W_{0\rho}\rho_0^2, \\
        \text{ with } W_{0\rho} &= \frac{1}{\frac{\operatorname{Var}(g_1-g_{1s}\mid D=1)}{\operatorname{Var}(g_0-g_{0s}\mid D=0)} \times O + 1}. 
    \end{align*}
    \end{enumerate}
\end{Proposition}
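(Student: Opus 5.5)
The proof splits every unconditional second moment by $D$ and then rewrites each piece in terms of the conditional ($D=0$) quantities of Theorem~\ref{thm:main_npm}. Throughout, write $g=Dg_1+(1-D)g_0$ and $g_s=Dg_{1s}+(1-D)g_{0s}$. By the tower property, $E[g_d-g_{ds}\mid D=d]=0$ and $E[\alpha_0-\alpha_{0s}\mid D=0]=0$ (Lemma~\ref{lemma:cond_RR_main}). Hence all centered moments below reduce to uncentered ones.

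\emph{Part (i).} By the proof of Theorem~\ref{thm:main_npm}, $C^2_{\Delta Y}=E[(g-g_s)^2]/E[(\Delta Y-g_s)^2]$. I condition on $D$ in the numerator and in the denominator separately:
\[
E[(g-g_s)^2]=p\,E[(g_1-g_{1s})^2\mid D=1]+(1-p)\,E[(g_0-g_{0s})^2\mid D=0],
\]
\[
E[(\Delta Y-g_s)^2]=p\,\sigma^2_{1s}+(1-p)\,\sigma^2_{0s}.
\]
The argument of Theorem~\ref{thm:main_npm}, applied in each arm, gives $E[(g_d-g_{ds})^2\mid D=d]=C^2_{d\Delta Y}\sigma^2_{ds}$. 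Substituting this into the numerator and dividing by the denominator yields the stated convex combination with weight $W_{0\Delta Y}$.

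\emph{Part (ii).} I use Proposition~\ref{prop: among_odds}(4) twice, once for the long odds and once for the short odds (the same argument applies with $X$ alone):
\[
E[O_{XU}]=(1-p)E[O_{XU}^2\mid D=0]+p,\qquad E[O_X]=(1-p)E[O_X^2\mid D=0]+p.
\]
The numerator of $C^2_D$ is then
\[
E[O_{XU}]-E[O_X]=(1-p)\bigl(E[O_{XU}^2\mid D=0]-E[O_X^2\mid D=0]\bigr),
\]
and by the proof of Theorem~\ref{thm:main_npm} this equals $(1-p)\,C^2_{0D}\,E[O_X^2\mid D=0]$. It follows that
\[
C^2_D=C^2_{0D}\cdot\frac{(1-p)E[O_X^2\mid D=0]}{(1-p)E[O_X^2\mid D=0]+p}.
\]
By Corollary~\ref{thm:alt-selection}, $E[O_X^2\mid D=0]=O^2(\chi^2(P_{X|1}\|P_{X|0})+1)$. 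Using $(1-p)O^2=pO$ and dividing the numerator and denominator by $p$ gives exactly $W_{0D}$.

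\emph{Part (iii).} I use the representers $\alpha=D/p-(1-D)\alpha_0/(1-p)$ and $\alpha_s=D/p-(1-D)\alpha_{0s}/(1-p)$. Their difference is $\alpha-\alpha_s=-(1-D)(\alpha_0-\alpha_{0s})/(1-p)$, which has mean zero. I compute the three moments that make up $\rho$:
\begin{align*}
E[(g-g_s)(\alpha-\alpha_s)]&=-\operatorname{Cov}(g_0-g_{0s},\alpha_0-\alpha_{0s}\mid D=0),\\
\operatorname{Var}(\alpha-\alpha_s)&=\operatorname{Var}(\alpha_0-\alpha_{0s}\mid D=0)/(1-p),\\
\operatorname{Var}(g-g_s)&=p\,V_1+(1-p)\,V_0,
\end{align*}
where $V_d:=\operatorname{Var}(g_d-g_{ds}\mid D=d)$. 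Squaring the correlation gives
\[
\rho^2=\rho_0^2\cdot\frac{(1-p)V_0}{pV_1+(1-p)V_0}=\frac{\rho_0^2}{O\,V_1/V_0+1},
\]
which is the stated $W_{0\rho}$. Rescaling $\alpha$ by $O$, i.e.\ writing it in terms of $O_{XU}$ and $O_X$ as in the paper's definition of $\rho$, does not change the correlation. The overall sign also does not matter, since only $\rho^2$ appears.

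The main obstacle is bookkeeping rather than any new idea: keeping the $1/(1-p)$ factors in the representer consistent, and verifying that centering is harmless by the tower property. Each of the three identities is then a one-line substitution of earlier results.
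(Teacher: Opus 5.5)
Your proposal is correct and follows essentially the same route as the paper: you split each unconditional second moment by $D$, use that the within-arm errors have mean zero, and substitute the conditional ($D=0$) quantities, with parts (i) and (iii) matching the paper's computations term by term. The only cosmetic difference is in (ii): you use the control-arm form $E[O_{XU}]=(1-p)E[O_{XU}^2\mid D=0]+p$ of Proposition~\ref{prop: among_odds}(4) together with the second-moment definition of $C_{0D}^2$, whereas the paper uses the treated-arm form $E[O_{XU}]=p(E[O_{XU}\mid D=1]+1)$ and the odds characterization of Corollary~\ref{thm:alt-selection}, and these are equivalent via Proposition~\ref{prop: among_odds}(3).
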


The relationships in Proposition~\ref{prop: CB_connection} also yield the joint restriction emphasized in Section~\ref{sec:parameterization}:
\[
\rho^2 C_{\Delta Y}^2
=
\frac{P(D=0)\,\sigma_{0s}^2}
{E[\operatorname{Var}(\Delta Y\mid X,D)]}
\rho_0^2 C_{0\Delta Y}^2
\leq
\frac{P(D=0)\,\sigma_{0s}^2}
{E[\operatorname{Var}(\Delta Y\mid X,D)]}.
\]
Thus, the unconditional trend and alignment components cannot vary freely. Whenever the residual outcome variance among treated units is positive, their joint upper bound is strictly smaller than one. Consequently, bounding $|\rho|$ and $C_{\Delta Y}$ separately by one produces a looser worst case bias bound, whereas imposing the joint restriction recovers $C_{0D}S_0$, the bound implied by our conditional parameterization.

\subsubsection{Simulation: comparison with unconditional}

In this section, we use Monte Carlo simulations to compare the finite sample properties of our results with the unconditional results of \citet{chernozhukov2022long}. We follow the same data-generating process as in Section~\ref{sec:dgp}.

    \paragraph{Coverage and width of confidence bounds.}

    We evaluate the coverage of the confidence intervals for $\theta$ obtained from our R package \texttt{dml.sensemakr}, plugging in the true bias factors. Figure~\ref{fig: sim_coverage_uncond} reports results across 5,000 repetitions. The empirical coverage of both methods attains nominal levels across sample sizes and treatment probabilities. Figure~\ref{fig:sim_diff_width_uncond} reports the difference in the widths of the confidence intervals for $\theta$ between the two methods. As expected, this difference is centered around zero and concentrates as the sample size grows. Together, these results confirm that both the unconditional method and ours have the expected finite sample performance, producing valid inference for the bias bounds.
    
    Our approach nonetheless offers several advantages over the unconditional method. It shows that plausibility judgments need only be made on the local components that drive the bias, it yields clean characterizations of selection strength, and it delivers sharp bounds on the selection component from the marginal sensitivity model (MSM) that do not depend on the observed data (see Corollary~\ref{cor:msm-parameterization}). In addition, one of our contributions is to provide alternative ways of characterizing selection strength for the unconditional method.

    \begin{figure}[H]
            \centering
            \includegraphics[width=\linewidth]{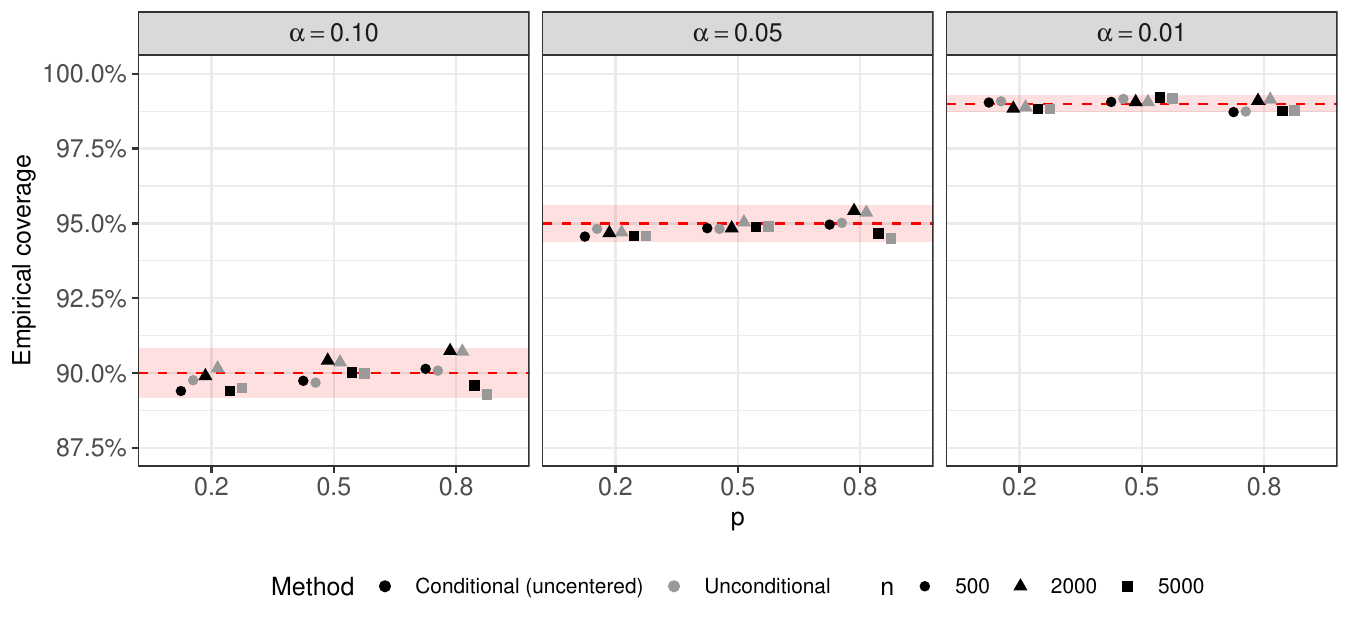}
            \caption{Empirical coverage probabilities of 90\%, 95\%, and 99\% confidence bounds across sample sizes $n$ and treatment probabilities $p$, using the true bias factors. Red dashed lines indicate nominal coverage levels; shaded regions indicate the expected (95\%) Monte Carlo error under nominal coverage.}\label{fig: sim_coverage_uncond}
        \end{figure}

    \begin{figure}[H]
        \centering
        \includegraphics[width=0.8\linewidth]{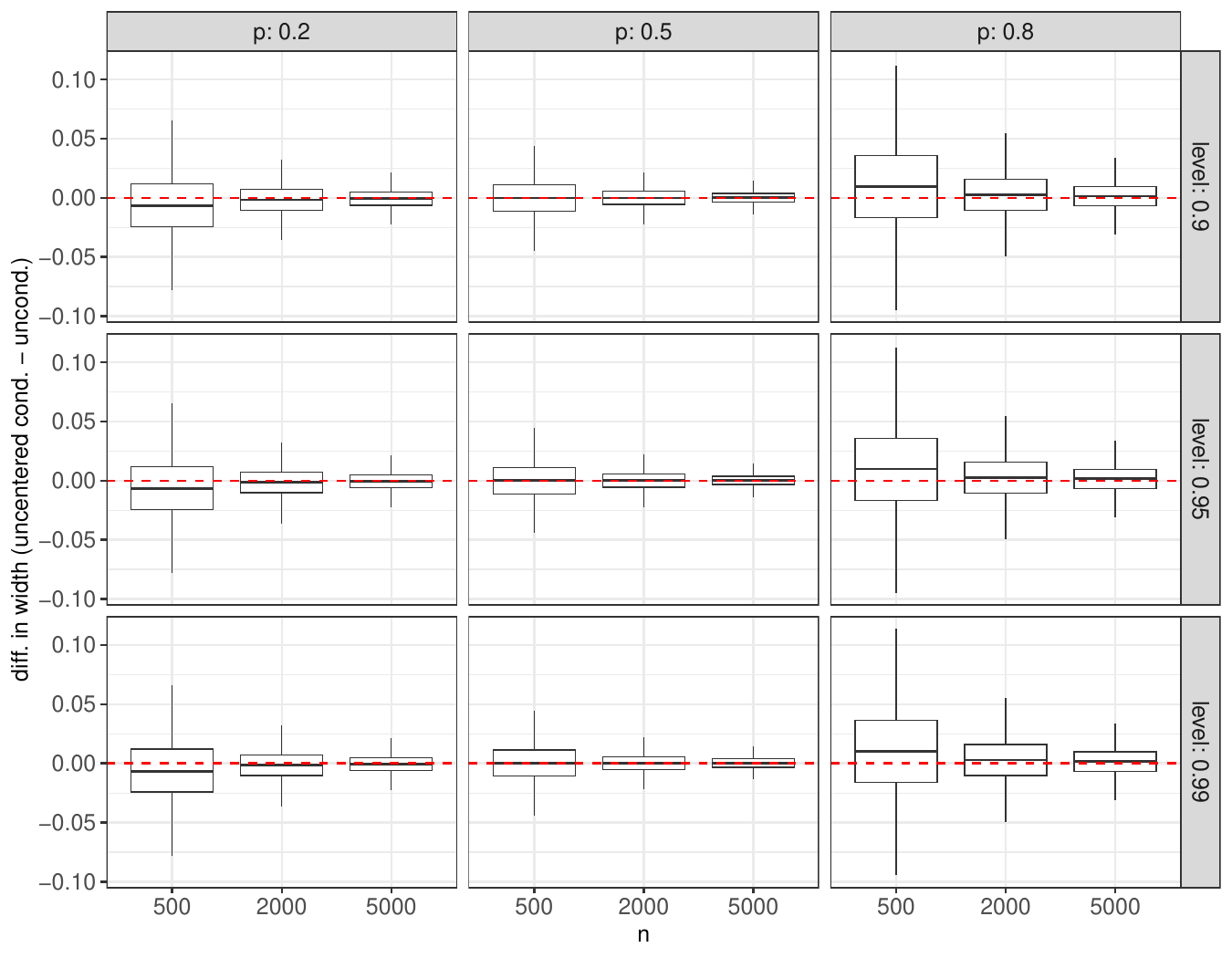}
        \caption{Boxplots of the difference in confidence-interval widths (uncentered conditional minus unconditional) across sample sizes and treatment probabilities; the red dashed line denotes zero difference.}
        \label{fig:sim_diff_width_uncond}
    \end{figure}
        
    \paragraph{Sensitivity statistics.}

    Figures~\ref{fig:ss_uncond_point} and~\ref{fig:ss_uncond_coverage} report the empirical distributions of $\text{RV}_{\theta^*=0,\alpha=1}$ and $\text{XRV}_{\theta^*=0,\alpha=1}$, as well as the empirical coverage of $\text{RV}_{\theta^*=0,\alpha}$ and $\text{XRV}_{\theta^*=0,\alpha}$. As expected, the empirical distributions concentrate around their population values as the sample size increases. Empirical coverage also remains close to nominal levels. Unlike our method, whose population values of sensitivity statistics do not vary with $p$, $\text{RV}_{\theta^*=0}$ and $\text{XRV}_{\theta^*=0}$ for the unconditional method depend on $p$. For $p \in \{0.2, 0.5, 0.8\}$, the unconditional $\text{RV}_{\theta^*=0}$ equals $\{0.2175, 0.2731, 0.2360\}$ and $\text{XRV}_{\theta^*=0}$ equals $\{0.0570, 0.0931, 0.0679\}$. We note that although they share the same name, the sensitivity statistics from our method measure confounding strength conditional on the untreated group and thus carry a different interpretation from those of the unconditional method. This explains the difference in their population values and empirical estimates.

\begin{figure}[H]
    \centering
    \includegraphics[width=\linewidth]{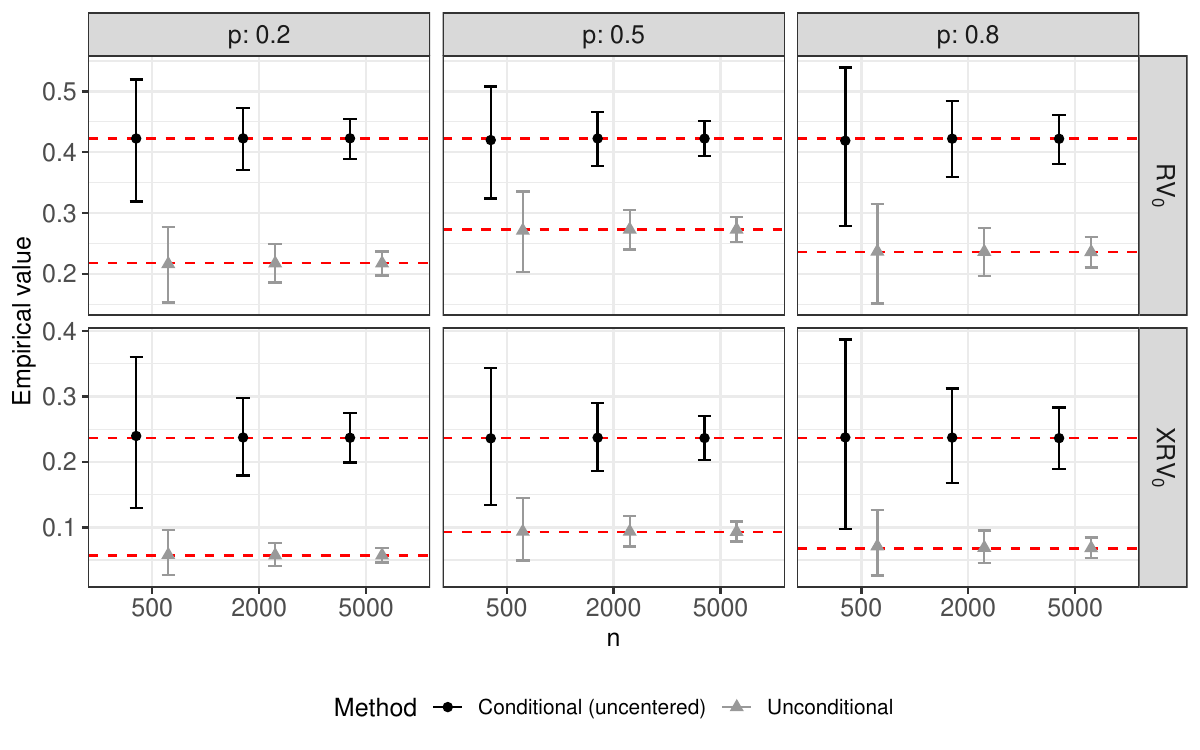}
    \caption{Finite sample behavior of $\text{RV}_{\theta^*=0,\alpha=1}$ and $\text{XRV}_{\theta^*=0,\alpha=1}$ for the conditional and unconditional approaches across sample sizes and treatment probabilities ($5{,}000$ repetitions). Points denote Monte Carlo means; whiskers delimit the empirical $2.5$th and $97.5$th percentiles.}
    \label{fig:ss_uncond_point}
\end{figure}

\begin{figure}[H]
    \centering
    \includegraphics[width=\linewidth]{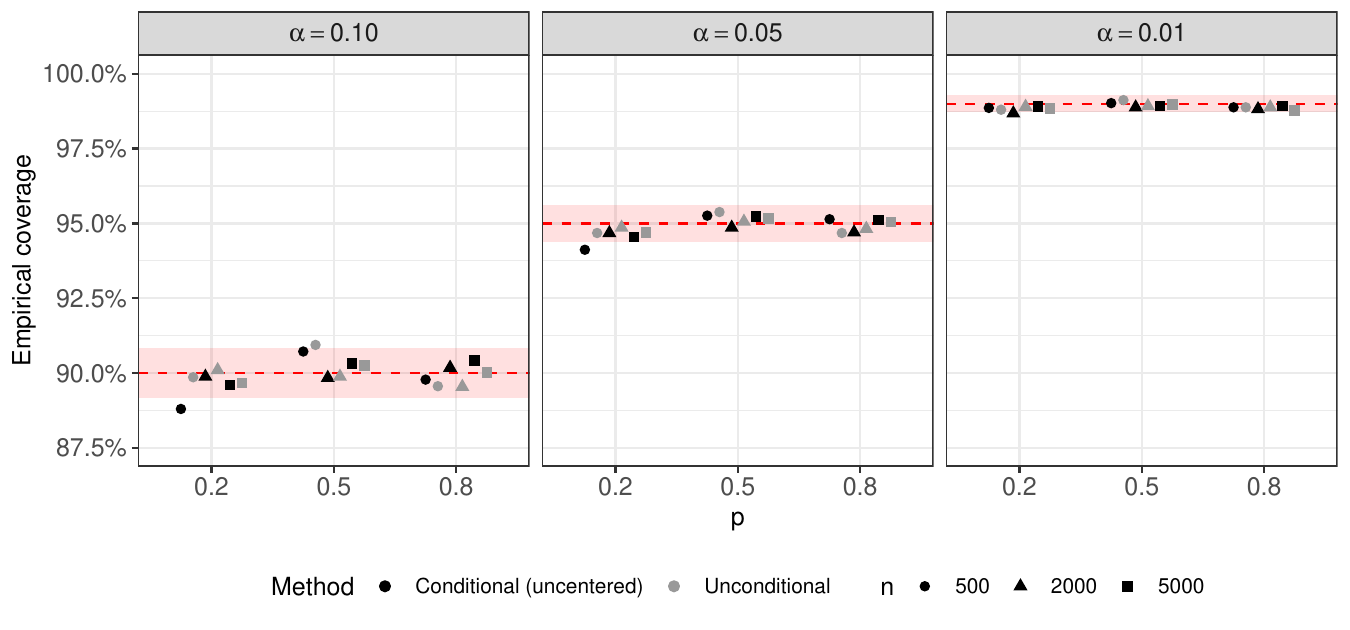}
    \caption{Empirical coverage of $\text{RV}_{\theta^*=0,\alpha}$ and $\text{XRV}_{\theta^*=0,\alpha}$ for the conditional and unconditional approaches across sample sizes and treatment probabilities ($5{,}000$ repetitions). Red dashed lines indicate the nominal coverage levels $1-\alpha$; shaded regions indicate the expected (95\%) Monte Carlo error under nominal coverage.}
    \label{fig:ss_uncond_coverage}
\end{figure}

\subsection{Relationship to variance-based sensitivity analysis for weighting estimators}
\label{app:hp-comparison}

This section compares the results of \citet{huang2025variance}, hereafter HP, with our approach. As before, both results target the same causal estimand and characterize the same total bias. They differ in (1) how confounding strength is parameterized, (2) how benchmarking is performed, and (3) how statistical inference is conducted. For reasons discussed below, we refer to HP's parameterization as the ``centered'' parameterization.

\subsubsection{Review of HP}

We begin by briefly reviewing the IPW formulation of \citet{huang2025variance} (hereafter, HP). To facilitate comparison, we express their characterization in terms of $\Delta Y$, using our notation and indexing HP's bias factors by $w$ to distinguish them from ours.

To begin with, they parameterize the long and short parameters of interest in the following way:
\begin{align*}
    &\theta_0 = E[\alpha_0(X,U)\Delta Y\mid D=0], \text{ and } \theta_{0s} = E[\alpha_{0s}(X)\Delta Y\mid D=0] \text{ with } \\
    &\alpha_0(X,U) = \frac{O_{XU}}{O}, \text{ and } \alpha_{0s}(X) = \frac{O_{X}}{O}. 
\end{align*}
This leads to the following OVB result for the ATT:
\begin{align*}
    \theta - \theta_s &= -\rho_{w0} \times \underbrace{\sqrt{\frac{1 - R^2_{\alpha_0 \sim \alpha_{0s}|1,D=0}}{R^2_{\alpha_{0} \sim \alpha_{0s}|1,D=0}}}}_{=: C_{w0D}} \times \underbrace{\sqrt{\operatorname{Var}(\Delta Y\mid D=0) \times \operatorname{Var}(\alpha_{0s}\mid D=0)}}_{=: S_{w0}}, \\
    \text{where } \rho_{w0} &= \operatorname{Cor}(\Delta Y, \alpha_{0} - \alpha_{0s}\mid D=0).
\end{align*}
We define $\sigma_{w0s}^2 := \operatorname{Var}(\Delta Y\mid D=0)$ and $\nu_{w0s}^2 := \operatorname{Var}(\alpha_{0s}\mid D=0)$. For sensitivity analysis, \citet{huang2025variance} propose bounding $|\rho_{w0}|$ by the estimable quantity $\bar\rho_{w0} := \sqrt{1 - \operatorname{Cor}^2(\alpha_{0s}, \Delta Y\mid D=0)}$.

\subsubsection{Alternative ways to quantify selection strength for HP}

HP do not explore alternative parameterizations of $C_{w0D}^2$ in terms of odds or $\chi^2$-divergence. For completeness, we provide these representations here.

\begin{Proposition}[Alternative characterizations of $C_{w0D}^2$]
    \begin{align*}
    C_{w0D}^2 &= \frac{1 - R^2_{O_{XU} \sim O_X|1,D=0}}{R^2_{O_{XU} \sim O_X|1,D=0}} \tag*{\text{(Residual $R^2$ in Selection Odds)}} \\
    &= \frac{E[O_{XU}] - E[O_X]}{E[O_X] - O}\tag*{\text{(Average Selection Odds)}} \\
    &= \frac{\chi^2(P_{X,U|1}\|P_{X,U|0})-\chi^2(P_{X|1}\|P_{X|0})}{\chi^2(P_{X|1}\|P_{X|0})} \tag*{\text{(Distributional Imbalance)}}. 
\end{align*}
\end{Proposition}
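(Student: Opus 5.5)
The plan is to work entirely within the control population. First reduce the centered $R^2$ to second moments of the odds, using the projection property $O_X=E[O_{XU}\mid X,D=0]$. Then translate those second moments into average odds and $\chi^2$-divergences via Propositions~\ref{prop: among_odds} and~\ref{prop: odds_and_divergence}. Throughout we assume the decomposition is well defined, i.e.\ $\chi^2(P_{X|1}\|P_{X|0})>0$, equivalently $\operatorname{Var}(O_X\mid D=0)>0$.

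\emph{Step 1 (from $\alpha$ to odds).} Since $\alpha_0=O_{XU}/O$ and $\alpha_{0s}=O_X/O$ differ from the odds only by the constant factor $1/O$, and the centered $R^2$ is invariant to rescaling either variable, we get $R^2_{\alpha_0\sim\alpha_{0s}\mid 1,D=0}=R^2_{O_{XU}\sim O_X\mid 1,D=0}$. This gives the first displayed equality.

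\emph{Step 2 (computing the centered $R^2$).} This is the step that needs the one genuine idea. By Proposition~\ref{prop: among_odds}(2), $O_X=E[O_{XU}\mid X,D=0]$. Hence $O_{XU}-O_X$ is orthogonal, under $P(\cdot\mid D=0)$, to every square-integrable function of $X$, in particular to $1$ and to $O_X$. So the linear projection of $O_{XU}$ onto $\mathrm{span}(1,O_X)$ is exactly $O_X$, and the residual is $O_{XU}-O_X$.

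Using that orthogonality again, together with $E[O_{XU}\mid D=0]=E[O_X\mid D=0]=O$, we obtain
\begin{align*}
1-R^2_{O_{XU}\sim O_X\mid 1,D=0}
&=\frac{E[O_{XU}^2\mid D=0]-E[O_X^2\mid D=0]}{E[O_{XU}^2\mid D=0]-O^2},\\
R^2_{O_{XU}\sim O_X\mid 1,D=0}
&=\frac{E[O_X^2\mid D=0]-O^2}{E[O_{XU}^2\mid D=0]-O^2}.
\end{align*}
Taking the ratio gives
\[
C_{w0D}^2=\frac{E[O_{XU}^2\mid D=0]-E[O_X^2\mid D=0]}{E[O_X^2\mid D=0]-O^2}.
\]
The denominator is nonzero exactly when the decomposition is well defined.

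\emph{Step 3 (average selection odds).} Proposition~\ref{prop: among_odds}(4) gives $E[O_{XU}]=(1-p)E[O_{XU}^2\mid D=0]+p$. The identical argument with $X$ alone (replacing $U$ by a trivial variable) gives $E[O_X]=(1-p)E[O_X^2\mid D=0]+p$. Applying the same identity with $O$ in place of the conditional odds also works, since $(1-p)O^2+p=p^2/(1-p)+p=O$. Subtracting, the numerator and denominator of the ratio in Step 2 each become the corresponding odds difference divided by $(1-p)$. The factors $(1-p)$ cancel, yielding $(E[O_{XU}]-E[O_X])/(E[O_X]-O)$.

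\emph{Step 4 (distributional imbalance).} By Proposition~\ref{prop: odds_and_divergence}(1), $E[O_{XU}^2\mid D=0]=O^2\{\chi^2(P_{X,U|1}\|P_{X,U|0})+1\}$. Likewise $E[O_X^2\mid D=0]=O^2\{\chi^2(P_{X|1}\|P_{X|0})+1\}$, as already used in the proof of Corollary~\ref{thm:alt-selection}. Substituting into the ratio from Step 2, the factor $O^2$ and the additive $+1$ terms cancel. What remains is $\{\chi^2(P_{X,U|1}\|P_{X,U|0})-\chi^2(P_{X|1}\|P_{X|0})\}/\chi^2(P_{X|1}\|P_{X|0})$.
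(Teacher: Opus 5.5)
Your proof is correct and takes essentially the paper's route. The paper likewise reduces the ratio of centered $R^2$'s to $\{E[\alpha_0^2\mid D=0]-E[\alpha_{0s}^2\mid D=0]\}/\{E[\alpha_{0s}^2\mid D=0]-1\}$, as in the proof of Proposition~\ref{prop: HS_connection}(i), and then converts it via Propositions~\ref{prop: among_odds} and~\ref{prop: odds_and_divergence} exactly as in Corollary~\ref{thm:alt-selection}; your explicit projection argument in Step 2 just spells out the identity $R^2_{O_{XU}\sim O_X\mid 1,D=0}=\operatorname{Var}(O_X\mid D=0)/\operatorname{Var}(O_{XU}\mid D=0)$ that the paper uses implicitly.
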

\begin{proof}
    The corresponding proof follows steps similar to those in Corollary~\ref{thm:alt-selection}. 
\end{proof}
\begin{remark}
    For the scaling factors, note that $\nu_{w0s}^2 = \chi^2(P_{X|1}\|P_{X|0}) < \chi^2(P_{X|1}\|P_{X|0}) +1 = \nu_{0s}^2$. Therefore, when the variance of observable weights is small (i.e., $\nu_{w0s}^2$ is close to zero), estimation noise, particularly from propensity score estimation, can lead to negative estimates of $\nu_{w0s}^2$. This issue is mitigated in the uncentered parameterization.
\end{remark}

\subsubsection{Difference in parameterization}

Note that our definition of $C_{0D}^2$, which measures the strength of confounding on selection, takes a form similar to HP’s $C_{w0D}^2$. The main distinction is that HP use a centered parameterization. In the odds representation, this centering subtracts the marginal treatment odds $O$ from the expected odds $E[O_X]$ in the denominator of the selection component. As a result, this creates a zero denominator issue in HP's approach when observed covariates do not induce any group imbalance, such as in the canonical $2 \times 2$ DiD design without covariates. HP recognize this issue in their paper: \textit{``Since our sensitivity analysis is based on variances of weights, it is not equipped to address
situations in which either the observable or ideal weights are all identical (with zero variance)''} \citep[Section~2.1]{huang2025variance}.

We now provide connections between HP’s result and ours.

\begin{Proposition}[Relationship with HP] \label{prop: HS_connection} 
\par\noindent
    \begin{enumerate}
        \item[(i)] \textbf{$C_{0D}^2$ vs $C_{w0D}^2$.}
        \begin{align*}
        C_{w0D}^2 = C_{0D}^2 \times \frac{E[O_{X}\mid D=1]}{E[O_{X}\mid D=1] - O} \left(= C_{0D}^2 \times \frac{1}{1 - R^2_{\alpha_{0s} \sim 1\mid D=0}} \right). 
    \end{align*}
        \item[(ii)] \textbf{$\rho_0$ vs $\rho_{w0}$.}
        \begin{align*}
        \rho_{w0}^2 &= \rho_0^2 \times C_{0\Delta Y}^2 \times (1 - R^2_{\Delta Y \sim g_{0s}|1,D=0}). 
    \end{align*}
    \end{enumerate}
\end{Proposition}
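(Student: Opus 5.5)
Both parts reduce to second-moment identities among the untreated, obtained from Lemma~\ref{lemma:cond_RR_main} and Proposition~\ref{prop: among_odds}. The key fact throughout is $E[\alpha_0\mid D=0]=E[\alpha_{0s}\mid D=0]=1$.

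For part (i), I would write HP's selection factor explicitly in its centered form:
\[
C_{w0D}^2=\frac{\operatorname{Var}(\alpha_0-\alpha_{0s}\mid D=0)}{\operatorname{Var}(\alpha_{0s}\mid D=0)}.
\]
This uses the fact that $\alpha_{0s}=E[\alpha_0\mid X,D=0]$, so the centered $R^2$ of $\alpha_0$ on $(\alpha_{0s},1)$ is attained by $\alpha_{0s}$ itself. Both means equal one, hence $\operatorname{Var}(\alpha_0-\alpha_{0s}\mid D=0)=E[(\alpha_0-\alpha_{0s})^2\mid D=0]$, which is exactly the numerator of $C_{0D}^2$ in the proof of Theorem~\ref{thm:main_npm}. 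For the denominator,
\[
\operatorname{Var}(\alpha_{0s}\mid D=0)=E[\alpha_{0s}^2\mid D=0]-1=\nu_{0s}^2-1 .
\]
Therefore
\[
C_{w0D}^2=C_{0D}^2\,\frac{\nu_{0s}^2}{\nu_{0s}^2-1}.
\]
Corollary~\ref{thm:alt-selection} gives $\nu_{0s}^2=E[O_X\mid D=1]/O$, which turns this into the stated factor $E[O_X\mid D=1]/(E[O_X\mid D=1]-O)$. The parenthetical form then follows because $1-R^2_{\alpha_{0s}\sim1\mid D=0}=1-1/\nu_{0s}^2$ under the uncentered definition of $R^2$.

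For part (ii), I would first expand the numerator of $\rho_{w0}$:
\[
\operatorname{Cov}(\Delta Y,\alpha_0-\alpha_{0s}\mid D=0)=E[\Delta Y(\alpha_0-\alpha_{0s})\mid D=0].
\]
Iterating expectations on $(X,U)$ replaces $\Delta Y$ by $g_0$. The term $g_{0s}(\alpha_0-\alpha_{0s})$ then has mean zero by iterating on $X$, so the numerator equals $\operatorname{Cov}(g_0-g_{0s},\alpha_0-\alpha_{0s}\mid D=0)$. In terms of the quantities in Theorem~\ref{thm:main_npm}, this numerator is
\[
\rho_0\,\sqrt{E[(g_0-g_{0s})^2\mid D=0]}\,\sqrt{E[(\alpha_0-\alpha_{0s})^2\mid D=0]} .
\]
Since $\operatorname{Var}(\alpha_0-\alpha_{0s}\mid D=0)$ appears in both numerator and denominator, squaring and dividing by the two variances gives
\[
\rho_{w0}^2=\rho_0^2\,\frac{E[(g_0-g_{0s})^2\mid D=0]}{\operatorname{Var}(\Delta Y\mid D=0)}
=\rho_0^2\,C_{0\Delta Y}^2\,\frac{\sigma_{0s}^2}{\operatorname{Var}(\Delta Y\mid D=0)} .
\]
Finally, I would identify $\sigma_{0s}^2/\operatorname{Var}(\Delta Y\mid D=0)$ with $1-R^2_{\Delta Y\sim g_{0s}\mid 1,D=0}$. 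Because $g_{0s}$ is a conditional expectation, the centered linear projection of $\Delta Y$ onto $(1,g_{0s})$ is $g_{0s}$ itself: the slope is one since $\operatorname{Cov}(\Delta Y,g_{0s})=\operatorname{Var}(g_{0s})$. The residual variance is therefore $E[(\Delta Y-g_{0s})^2\mid D=0]=\sigma_{0s}^2$. This identity is also the one labelled \eqref{eq:R2-DY-g0s} in the proof of Proposition~\ref{prop:hp-correlation}.

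The main care point is this centered-versus-uncentered bookkeeping: confirming that the centered partial $R^2$ definitions in HP coincide with the moment ratios used above, for both $\alpha_0$ on $\alpha_{0s}$ and $\Delta Y$ on $g_{0s}$. Both reduce to showing that the conditional-expectation regressor is already the best affine predictor, so the projection slope is one.
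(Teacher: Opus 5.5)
Your proposal is correct and follows essentially the paper's own argument. In (i) the paper likewise rewrites $C_{w0D}^2$ as $\{E[\alpha_0^2\mid D=0]-E[\alpha_{0s}^2\mid D=0]\}/\{E[\alpha_{0s}^2\mid D=0]-1\}$ and then converts to odds; you do this step a bit more directly through $\nu_{0s}^2=E[O_X\mid D=1]/O$. In (ii) the paper uses the same identity $\operatorname{Cov}(\Delta Y,\alpha_0-\alpha_{0s}\mid D=0)=\operatorname{Cov}(g_0-g_{0s},\alpha_0-\alpha_{0s}\mid D=0)$ together with $1-R^2_{\Delta Y\sim g_{0s}\mid 1,D=0}=\sigma_{0s}^2/\operatorname{Var}(\Delta Y\mid D=0)$, only detouring through a Cohen's $f^2$ expression before reaching the same product. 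One small note: \eqref{eq:R2-DY-g0s} is derived inside the paper's proof of this very proposition and only cited later in Proposition~\ref{prop:hp-correlation}. So deriving it yourself via the slope-one projection argument, as you do, avoids any circularity.
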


Part (ii) shows that $\rho_{w0}$ bundles the trend component $C_{0\Delta Y}$ and the alignment component $\rho_0$, together with the identified factor $1-R^2_{\Delta Y\sim g_{0s}\mid 1,D=0}$, into a single parameter. As a result, HP's parameterization does not allow researchers to make separate plausibility judgments about the strength of confounding with the untreated outcome trend and its alignment with treatment selection.

We next examine HP's proposed upper bound on $|\rho_{w0}|$. Recall that $g_{0s}(X) = E[\Delta Y\mid X,D=0]$ attains the maximal correlation with $\Delta Y$ among all square-integrable functions of $X$ within the control group. This characterization follows from classical results on maximal correlation (see, e.g., \citet{renyi1959measures} eq. (9), or \citet{renyi1959new} Theorem 1). Formally,
\begin{align}\label{eq:maximal-cor}
    g_{0s}(X) &= \arg\max_{f \in L^2(P_{X \mid D=0})} \operatorname{Cor}^2(f(X),\Delta Y \mid D=0).
\end{align}
Using this result, we obtain the following relationship.
\begin{Proposition}[Relationship with HP's upper bound on alignment] \label{prop: HP_align_bound}
    \[\rho_{w0}^2 = 1 - \operatorname{Cor}^2(\alpha_{0s}, \Delta Y\mid D=0) \quad  \text{implies} \quad \rho_0^2C_{0\Delta Y}^2 = 1.\] 
\end{Proposition}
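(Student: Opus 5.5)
The plan is to combine Proposition~\ref{prop: HS_connection}(ii) with a decomposition of $\bar\rho_{w0}^2$ into identified pieces. Write $R^2_g := R^2_{\Delta Y\sim g_{0s}\mid 1,D=0}=\operatorname{Cor}^2(g_{0s},\Delta Y\mid D=0)$. By Proposition~\ref{prop: HS_connection}(ii),
\[
\rho_{w0}^2=\rho_0^2C_{0\Delta Y}^2(1-R^2_g).
\]
The hypothesis reads $\rho_{w0}^2 = 1-\operatorname{Cor}^2(\alpha_{0s},\Delta Y\mid D=0)$. I will show that the right-hand side is always at least $1-R^2_g$. Then
\[
\rho_0^2C_{0\Delta Y}^2(1-R^2_g)\ \ge\ 1-R^2_g .
\]
Since $\rho_0^2C_{0\Delta Y}^2\le 1$, this forces $\rho_0^2C_{0\Delta Y}^2=1$, provided $1-R^2_g>0$.

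The key inequality is $\operatorname{Cor}^2(\alpha_{0s},\Delta Y\mid D=0)\le R^2_g$. I get it from the maximal correlation property \eqref{eq:maximal-cor}, because $\alpha_{0s}=O_X/O$ is a square-integrable function of $X$ under weak overlap. An alternative route is direct. By the tower property, $\operatorname{Cov}(\alpha_{0s},\Delta Y\mid D=0)=\operatorname{Cov}(\alpha_{0s},g_{0s}\mid D=0)$. Cauchy--Schwarz then gives $\operatorname{Cor}^2(\alpha_{0s},\Delta Y\mid D=0)\le \operatorname{Var}(g_{0s}\mid D=0)/\operatorname{Var}(\Delta Y\mid D=0)=R^2_g$. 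The last equality uses $\operatorname{Cov}(g_{0s},\Delta Y\mid D=0)=\operatorname{Var}(g_{0s}\mid D=0)$.

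The main obstacle is the degenerate case $R^2_g=1$, where $\Delta Y=g_{0s}$ a.s.\ among the untreated. There $C_{0\Delta Y}=0$, because $\sigma_{0s}^2=0$ and the ratio is undefined, and the implication can fail. Then the hypothesis forces $\rho_{w0}^2=0$, which holds automatically. I would handle this by noting that $\rho_{w0}$ is well defined only when $\operatorname{Var}(\Delta Y\mid D=0)>0$. I would also assume $\sigma_{0s}^2>0$, the standing condition needed for $C_{0\Delta Y}$ to be defined in Theorem~\ref{thm:main_npm}. Under that condition $1-R^2_g=\sigma_{0s}^2/\operatorname{Var}(\Delta Y\mid D=0)>0$, using \eqref{eq:R2-DY-g0s} and the law of total variance, and the division step goes through. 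I would state this explicitly as a remark in the proof.
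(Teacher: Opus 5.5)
Your proposal is correct and follows essentially the same route as the paper. You rewrite $\rho_0^2C_{0\Delta Y}^2=\rho_{w0}^2/(1-R^2_{\Delta Y\sim g_{0s}\mid 1,D=0})$ via Proposition~\ref{prop: HS_connection}(ii), use the maximal-correlation property of $g_{0s}$ to show this ratio is at least one under the hypothesis, and squeeze it against $\rho_0^2C_{0\Delta Y}^2\le 1$. Your explicit treatment of the degenerate case $\sigma_{0s}^2=0$ is a small improvement, since the paper divides by $1-R^2_{\Delta Y\sim g_{0s}\mid 1,D=0}$ without comment.
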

\begin{proof}
    By Proposition~\ref{prop: HS_connection},
    \begin{align*}
        \rho_0^2 C_{0\Delta Y}^2 &= \frac{\rho_{w0}^2}{1 - R^2_{\Delta Y \sim g_{0s}|1,D=0}}.
    \end{align*}
    Therefore, when $\rho_{w0}^2$ equals $1 - \operatorname{Cor}^2(\alpha_{0s}, \Delta Y\mid D=0)$, this implies the following
    \begin{align*}
        \rho_0^2 C_{0\Delta Y}^2 &= \frac{1 - \operatorname{Cor}^2(\alpha_{0s}, \Delta Y\mid D=0)}{1 - R^2_{\Delta Y \sim g_{0s}|1,D=0}}.
    \end{align*}
    Since $g_{0s}(X) = \arg\max_{f \in L^2(P_{X \mid D=0})} \operatorname{Cor}^2(f(X),\Delta Y \mid D=0)$, 
    \begin{align*}
        &1 = \frac{1 - \operatorname{Cor}^2(g_{0s}, \Delta Y\mid D=0)}{1 - R^2_{\Delta Y \sim g_{0s}|1,D=0}} \le \frac{1 - \operatorname{Cor}^2(\alpha_{0s}, \Delta Y\mid D=0)}{1 - R^2_{\Delta Y \sim g_{0s}|1,D=0}} = \rho_0^2 C_{0\Delta Y}^2 \le 1, \\
        \implies \quad &\rho_0^2 C_{0\Delta Y}^2 = 1.
    \end{align*}
\end{proof}

Proposition~\ref{prop: HP_align_bound} shows that attaining HP's reported upper bound requires $\rho_0^2C_{0\Delta Y}^2=1$, so both the trend and alignment components must reach their worst case values. Moreover, Proposition~\ref{prop:hp-correlation} in Section~\ref{sec:parameterization} shows that HP's reported bound coincides with the sharper upper bound only when, among untreated units, the conditional outcome trend is an affine function of the treatment odds.

\subsubsection{Difference in benchmarking}

We now show how the benchmarking approach of \citet{huang2025variance} is less conservative than ours. The parameter $k$ aims to capture how much stronger the unobserved confounder is than the observed covariate $X_j$ in explaining treatment selection. For example, setting $k=1$ should express the belief that the unobserved confounder is as strong as $X_j$. Here we show that setting both approaches to the same $k$ posits a weaker confounder in their approach than in ours, and thus results in smaller bias bounds.

First, note we can write their benchmarking multiple in the following way.

\begin{Proposition}
\label{prop: HP_bench_metric}
    \begin{align*}
        k_{0D,j}^{\text{HP}} &= \frac{1 - R^2_{O_{XU} \sim O_X\mid 1, D=0}}{R^2_{O_{XU} \sim O_X\mid 1, D=0} - R^2_{O_{XU} \sim O_{X_{-j}}\mid 1, D=0}} = \frac{1 - R^2_{O_{XU} \sim O_X\mid D=0}}{R^2_{O_{XU} \sim O_X\mid D=0} - R^2_{O_{XU} \sim O_{X_{-j}}\mid D=0}}.
    \end{align*}
\end{Proposition}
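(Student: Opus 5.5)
The plan is to reduce both the centered and the uncentered partial $R^2$ quantities to ratios of second moments of odds among the untreated, and then check that the common normalizing factor cancels. The first equality, identifying $k^{\text{HP}}_{0D,j}$ with the centered ratio on the left, I take as the translation of HP's definition into our notation. The work is the second equality.

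\textbf{Step 1 (projection structure).} By Proposition~\ref{prop: among_odds}(2), $E[O_{XU}\mid X,D=0]=O_X$. Applying the same proposition with $X_{-j}$ as the observed covariates and $(X_j,U)$ as the omitted ones gives $E[O_{XU}\mid X_{-j},D=0]=O_{X_{-j}}$. The same argument applied to $O_X$ in place of $O_{XU}$ gives $E[O_X\mid X_{-j},D=0]=O_{X_{-j}}$. Taking expectations once more shows that all three odds have the same mean among the untreated, $E[\,\cdot\mid D=0]=O$.

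\textbf{Step 2 (uncentered $R^2$).} As in the proof of Theorem~\ref{thm:main_npm}, the projection of $O_{XU}$ on $O_X$ has coefficient one because of the orthogonality in Step~1. Hence
\[
R^2_{O_{XU}\sim O_X\mid D=0}=\frac{E[O_X^2\mid D=0]}{E[O_{XU}^2\mid D=0]}.
\]
The identical argument with $O_{X_{-j}}$ gives $R^2_{O_{XU}\sim O_{X_{-j}}\mid D=0}=E[O_{X_{-j}}^2\mid D=0]/E[O_{XU}^2\mid D=0]$. Substituting both into the right-hand ratio, the common factor $E[O_{XU}^2\mid D=0]$ cancels. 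The ratio becomes
\[
\frac{E[O_{XU}^2\mid D=0]-E[O_X^2\mid D=0]}{E[O_X^2\mid D=0]-E[O_{X_{-j}}^2\mid D=0]}.
\]

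\textbf{Step 3 (centered $R^2$).} Partialling out the constant replaces each second moment by a variance. Since all means equal $O$, each variance is the second moment minus $O^2$. Conditional-expectation orthogonality again makes the projection coefficients one, so
\[
R^2_{O_{XU}\sim O_X\mid 1,D=0}=\frac{E[O_X^2\mid D=0]-O^2}{E[O_{XU}^2\mid D=0]-O^2},
\]
and similarly with $O_{X_{-j}}$ in place of $O_X$. In the numerator $1-R^2$ and in the denominator difference of $R^2$'s, the $O^2$ terms cancel. The common factor $E[O_{XU}^2\mid D=0]-O^2$ then divides out, leaving exactly the expression obtained in Step~2.

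The only delicate point is Step~1 for the intermediate projection $O_{X_{-j}}$. It relies on applying Proposition~\ref{prop: among_odds}(2) with a different split of covariates, which needs weak overlap with respect to $X_{-j}$. Weak overlap holds for that split because it holds for $(X,U)$, via Jensen. Everything else is algebra.
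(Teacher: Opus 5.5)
Your proof is correct and follows essentially the same route as the paper's. Both arguments write the centered $R^2$'s as ratios of variances of the odds among the untreated, use the common mean $O$ from Proposition~\ref{prop: among_odds}(2) to turn each variance into a second moment minus $O^2$, and cancel to reach the uncentered second-moment ratio. Your extra justifications (projection coefficients equal to one, and applying the odds identity to the split $(X_{-j},(X_j,U))$ under weak overlap via Jensen) make explicit steps that the paper's proof uses implicitly.
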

\begin{proof}
    \begin{align*}
        k_{0D,j}^{\text{HP}} &= \frac{1 - R^2_{O_{XU} \sim O_X\mid 1, D=0}}{R^2_{O_{XU} \sim O_X\mid 1, D=0} - R^2_{O_{XU} \sim O_{X_{-j}}\mid 1, D=0}} \\
        &= \frac{1 - \frac{\operatorname{Var}(O_X \mid D=0)}{\operatorname{Var}(O_{XU} \mid D=0)}}{\frac{\operatorname{Var}(O_X \mid D=0)}{\operatorname{Var}(O_{XU} \mid D=0)} - \frac{\operatorname{Var}(O_{X_{-j}} \mid D=0)}{\operatorname{Var}(O_{XU} \mid D=0)}} \\
        &= \frac{\operatorname{Var}(O_{XU} \mid D=0) - \operatorname{Var}(O_X \mid D=0)}{\operatorname{Var}(O_X \mid D=0) - \operatorname{Var}(O_{X_{-j}} \mid D=0)} \\
        &= \frac{(E[O_{XU}^2 \mid D =0]-O^2) - (E[O_X^2 \mid D =0]-O^2)}{(E[O_X^2 \mid D =0]-O^2) - (E[O_{X_{-j}}^2 \mid D =0]-O^2)} \text{ by Proposition~\ref{prop: among_odds}(2),}\\
        &= \frac{E[O_{XU}^2 \mid D =0] - E[O_X^2 \mid D =0]}{E[O_X^2 \mid D =0] - E[O_{X_{-j}}^2 \mid D =0]} \\
        &= \frac{1 - R^2_{O_{XU} \sim O_X\mid D=0}}{R^2_{O_{XU} \sim O_X\mid D=0} - R^2_{O_{XU} \sim O_{X_{-j}}\mid D=0}}. 
    \end{align*}
\end{proof}

Our benchmarking metric is,
\begin{align*}
k_{0D,j} &:= \frac{R^2_{O_{X} \sim O_{X_{-j}}\mid D=0} - R^2_{O_{XU} \sim O_{X_{-j}}\mid D=0}}{1 - R^2_{O_X \sim O_{X_{-j}}\mid D=0}},
\end{align*}
while their benchmarking metric, by Proposition~\ref{prop: HP_bench_metric}, is
\begin{align*}
k_{0D,j}^{\text{HP}} &= \frac{1 - R^2_{O_{XU} \sim O_X\mid D=0}}{R^2_{O_{XU} \sim O_X\mid D=0} - R^2_{O_{XU} \sim O_{X_{-j}}\mid D=0}}.
\end{align*}
Note that
\begin{align*}
k_{0D,j} &= k_{0D,j}^{\text{HP}} \times R^2_{O_{XU} \sim O_{X_{-j}}\mid D=0},
\end{align*}
which shows that, for the same value of $k$, their benchmarking exercise will always posit a weaker confounder than ours.

To illustrate, consider the following example.

\begin{Example}
We consider an example based on \citet{cinelli_hazlett_informal_benchmarking} and generate data as follows. Let the observed covariate $X \sim \mathcal{N}(0,1)$ and let the unobserved covariate $U \sim \mathcal{N}(0,1)$ be independent of $X$. The treatment assignment is determined by
\[
D = \mathbf{1}\left\{X/2 + U/2 + \varepsilon_D > 0 \right\},
\]
where $\varepsilon_D$ is independent and standard normal.
The observed outcome evolution is given by
\[
\Delta Y = X + U + \varepsilon_Y,
\]
where $\varepsilon_Y$ is independent and standard normal. We use a sample of size $n=10^6$ to avoid concerns about sampling error.

The propensity score is correctly specified by a probit model for $D$ on $(X,U)$, and the outcome evolution model is correctly specified by a linear regression of $\Delta Y$ on $(X,U)$. Note that in this design, $\theta = 0$, and $U$ is exactly like $X$ in terms of its strength of association with the treatment selection and the outcome evolution.

Benchmarking is intended to express a researcher’s belief about the strength of unobserved confounding relative to observed covariates. Thus, a researcher who believes that $U$ and $X$ have comparable treatment selection strength might set the benchmark to $k=1$ under either approach, using $k_{0D,j}=1$ under our method and $k_{0D,j}^{\mathrm{HP}}=1$ under HP’s. Although the true values of both quantities differ from one in this design, the purpose of the example is to compare the conclusions reached when $k=1$ is used under each approach to express the common belief that $U$ is as strong as $X$. To focus on treatment selection, we use the empirically estimated true strength of $U$ on outcome evolution in both calculations.

Figure~\ref{fig:HP_contour} compares the benchmark points obtained by setting $k=1$ under the two approaches. Note that HP’s benchmark point remains far from zero. This would lead a practitioner using HP’s rule to conclude that confounding ``comparable to $X$'' cannot explain away the effect, providing a false sense of security. By contrast, our benchmark point is more conservative and crosses zero.
\begin{figure}[t]
    \centering
    \includegraphics[width=0.8\linewidth]{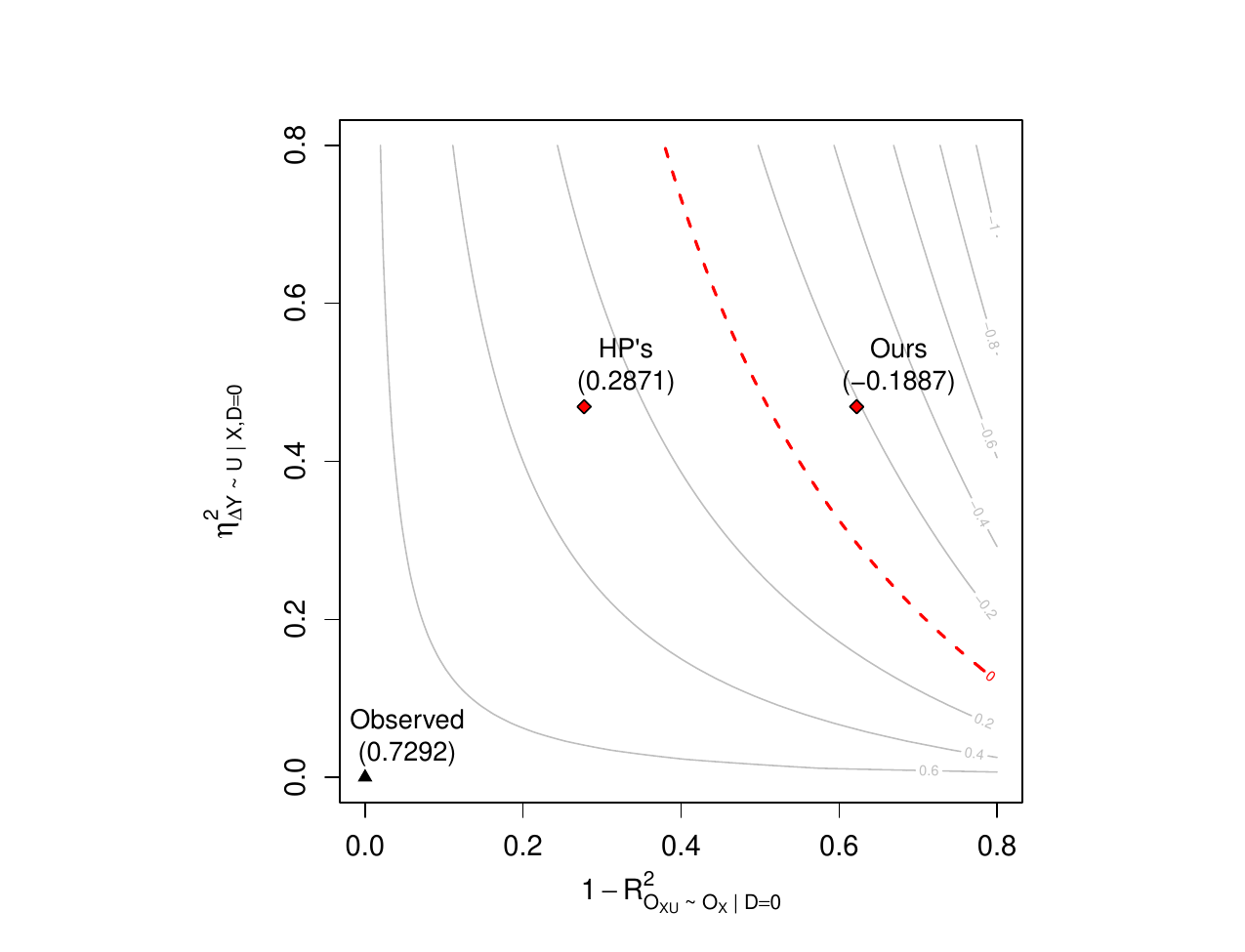}
    \caption{Sensitivity contour plots at the significance level of $\alpha = 0.05$.}
    \label{fig:HP_contour}
\end{figure}
    
\end{Example}

\subsubsection{Difference in statistical inference} \label{Appendix: Difference in Statistical Inference}

\citet{huang2025variance} focus on parametric nuisance estimators, and recommend using the percentile bootstrap to construct asymptotically valid confidence intervals for $\theta$, under pre-specified restrictions on confounding strength.  For completeness, here, we provide additional results for the centered OVB formula, which allows inference with DML. In particular,

\begingroup

\[
\theta - \theta_s = -\rho_{0}C_{0\Delta Y}C_{w0D}\sqrt{\sigma_{0s}^2\nu_{w0s}^2}, 
\]
where,
\[
\rho_0 := \operatorname{Cor}\left(g_0 - g_{0s}, O_{XU} - O_X \mid D=0\right),\quad
C^2_{0\Delta Y} := \eta^2_{\Delta Y\sim U\mid X,D=0},\quad
C^2_{w0D}:= \frac{1-R^2_{O_{XU}\sim O_{X}|1,D=0}}{R^2_{O_{XU}\sim O_{X}|1,D=0}},
\]
and  
\[
\sigma^2_{0s}:= E[\operatorname{Var}(\Delta Y \mid X,D=0)\mid D=0], \qquad 
\nu^2_{w0s}:= E\left[\left(\frac{O_X}{O}\right)^2 \middle| D=0\right] - 1.
\]
\endgroup

In this characterization, $\theta_s$ and $\sigma_{0s}^2$ can be estimated using the same scores as in \eqref{eq:psi-theta}–\eqref{eq:psi-sigma}. We derive the following debiased score for $\nu_{w0s}^2$:

\begin{align*}
    \psi_{\nu_{w0s}^2}(Z;\pi,p) 
    &= 2\frac{D}{p}\left(\frac{O_{X}}{O} - (\nu_{w0s}^2+1)\right) - \frac{1-D}{1-p}\left(\left(\frac{O_{X}}{O}\right)^2 - (\nu_{w0s}^2+1)\right).
\end{align*}

\subsubsection{Simulation: comparison with centered conditional}
 
In this section, we use Monte Carlo simulations to compare the finite sample properties of our approach with those of the ``centered'' approach introduced in Section~\ref{Appendix: Difference in Statistical Inference}, both implemented via DML. We follow the same data-generating process as in Section~\ref{sec:dgp}.

    \paragraph{Coverage and width of confidence bounds.}

    We evaluate the coverage of the confidence intervals for $\theta$ obtained from our R package \texttt{dml.sensemakr}, plugging in the true bias factors. Figure~\ref{fig: sim_coverage_centered} shows that the empirical coverage of the centered conditional approach attains the nominal levels for moderate-to-large samples, with mild under-coverage at the smallest sample size ($n=500$), most visible at the $90\%$ level. For small sample sizes (e.g., $n = 500$), estimation noise occasionally yields a negative estimate of $\nu_{w0s}^2$. Figure~\ref{fig:sim_diff_width_centered} reports the difference in the widths of the confidence intervals for $\theta$ between the uncentered and centered conditional approaches. As expected, this difference is centered around zero and concentrates as the sample size grows. 

     \begin{figure}[H]
            \centering
            \includegraphics[width=\linewidth]{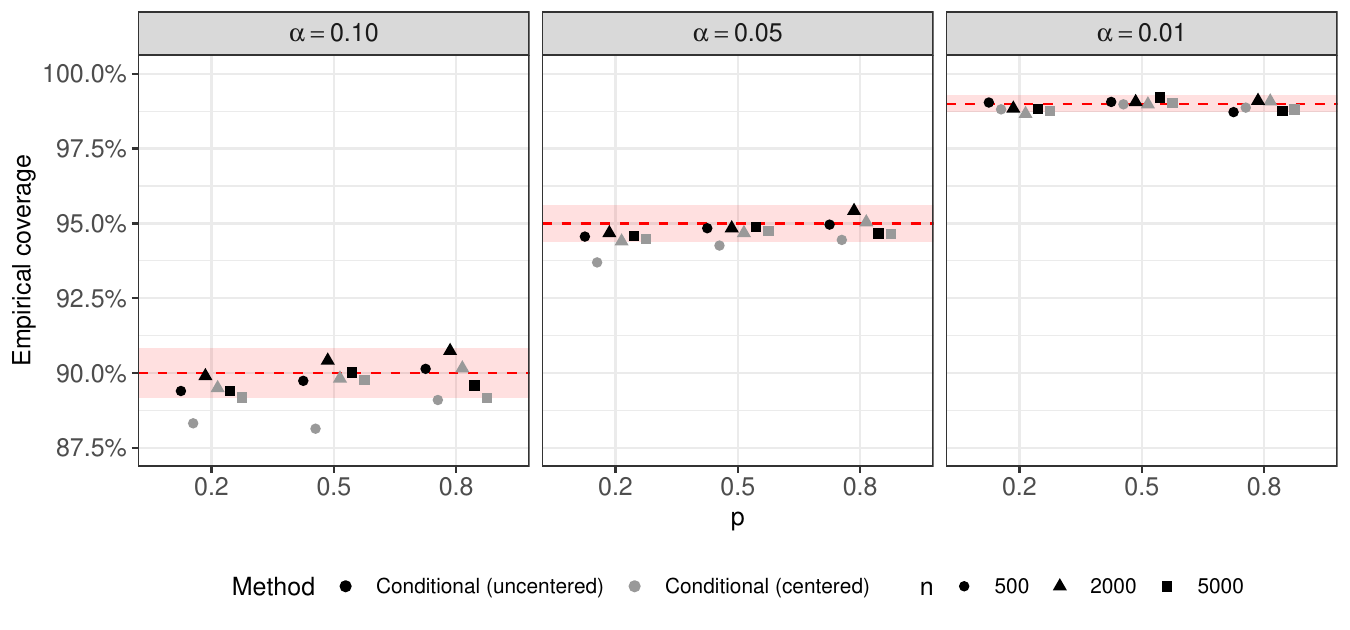}
            \caption{Empirical coverage probabilities of 90\%, 95\%, and 99\% confidence bounds across sample sizes $n$ and treatment probabilities $p$, using the true bias factors. Red dashed lines indicate nominal coverage levels; shaded regions indicate the expected (95\%) Monte Carlo error under nominal coverage.}\label{fig: sim_coverage_centered}
        \end{figure}

    \begin{figure}[H]
        \centering
        \includegraphics[width=0.8\linewidth]{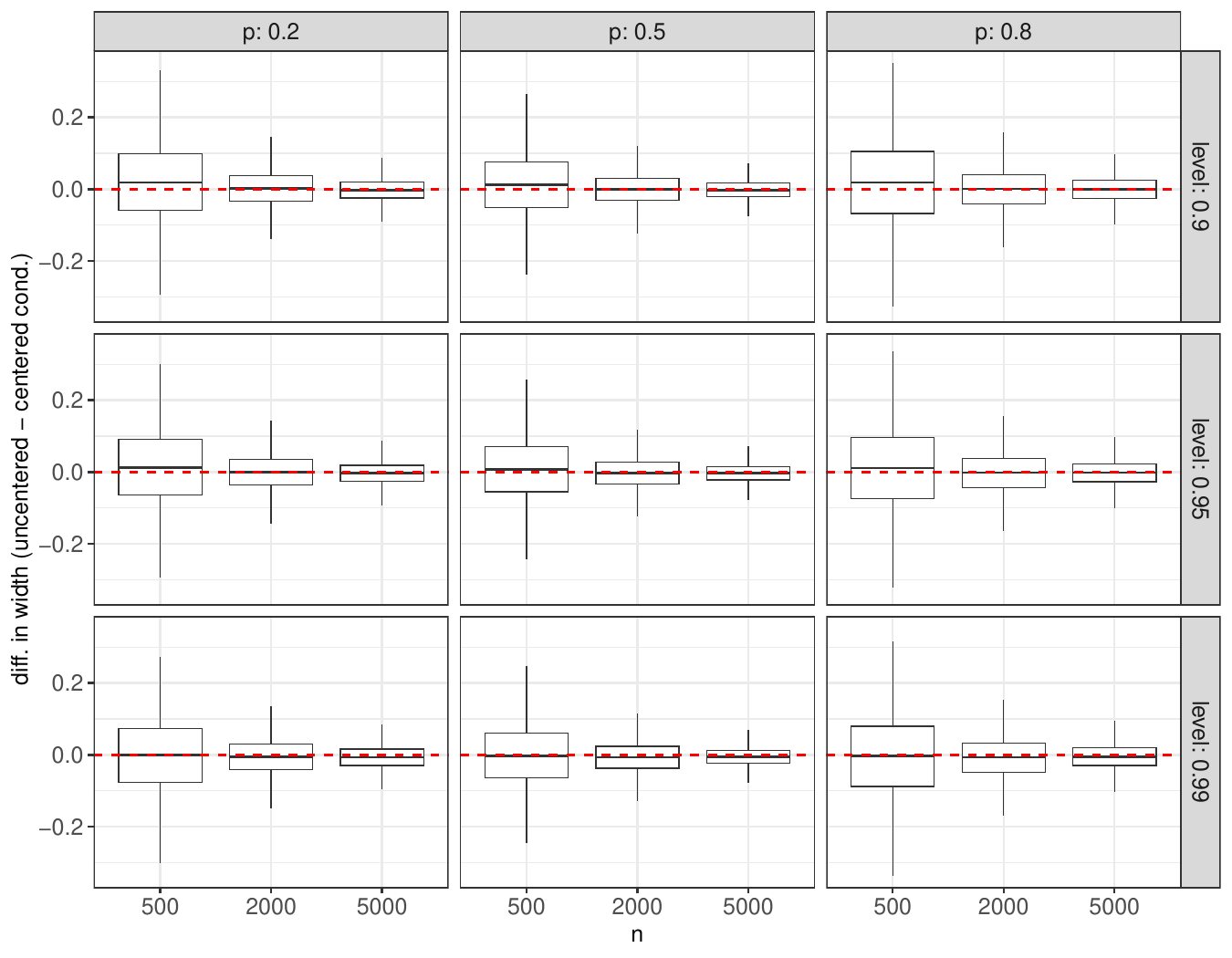}
        \caption{Boxplots of the difference in confidence-interval widths (uncentered conditional minus centered conditional) across sample sizes and treatment probabilities; the red dashed line denotes zero difference.}
        \label{fig:sim_diff_width_centered}
    \end{figure}

    \paragraph{Sensitivity statistics.}

    Figures~\ref{fig:ss_centered_point} and~\ref{fig:ss_centered_coverage} report the empirical distributions of $\text{RV}_{\theta^*=0,\alpha=1}$ and $\text{XRV}_{\theta^*=0,\alpha=1}$, as well as the empirical coverage of $\text{RV}_{\theta^*=0,\alpha}$ and $\text{XRV}_{\theta^*=0,\alpha}$. The empirical estimates of the centered $\text{RV}_{\theta^*=0,\alpha=1}$ and $\text{XRV}_{\theta^*=0,\alpha=1}$ concentrate around their respective population values, $0.7446$ and $0.6848$, with tighter concentration as the sample size increases. We also observe that empirical coverage under the centered parameterization is slightly below the nominal level in several scenarios, most notably at the $90\%$ level. The centered and uncentered statistics follow the same logic but measure robustness on different scales. The corresponding uncentered population values are only $0.4228$ and $0.2365$. The difference comes from the scaling factor alone. The centered statistics gauge confounding relative to the variance of the observable weights, $\nu_{w0s}^2 \approx 0.166$, while the uncentered statistics use their second moment, $\nu_{0s}^2 \approx 1.166$, so for the same bias the centered parameterization inflates the robustness value relative to the uncentered one. For this reason, sensitivity statistics should only be compared within the same parameterization.
    
\begin{figure}[H]
    \centering
    \includegraphics[width=\linewidth]{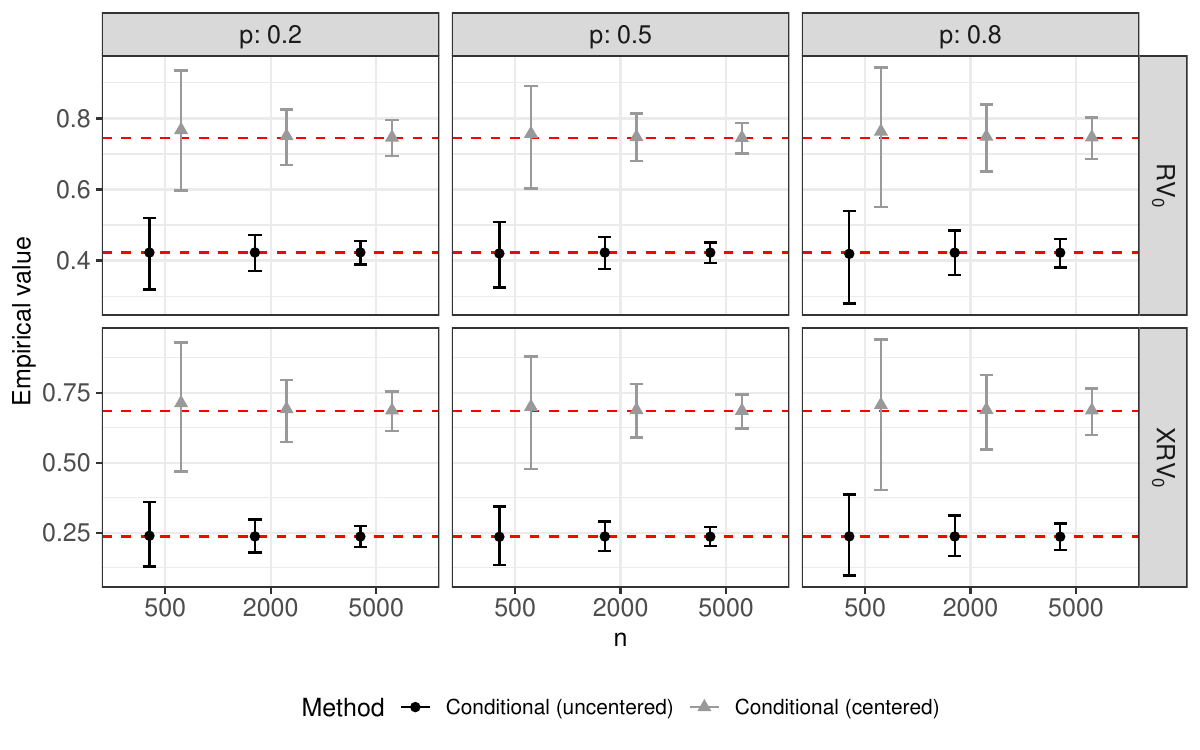}
    \caption{Finite sample behavior of $\text{RV}_{\theta^*=0,\alpha=1}$ and $\text{XRV}_{\theta^*=0,\alpha=1}$ for the centered and uncentered conditional approaches across sample sizes and treatment probabilities (up to $5{,}000$ repetitions per scenario). Points denote Monte Carlo means; whiskers delimit the empirical $2.5$th and $97.5$th percentiles.}
    \label{fig:ss_centered_point}
\end{figure}

\begin{figure}[H]
    \centering
    \includegraphics[width=\linewidth]{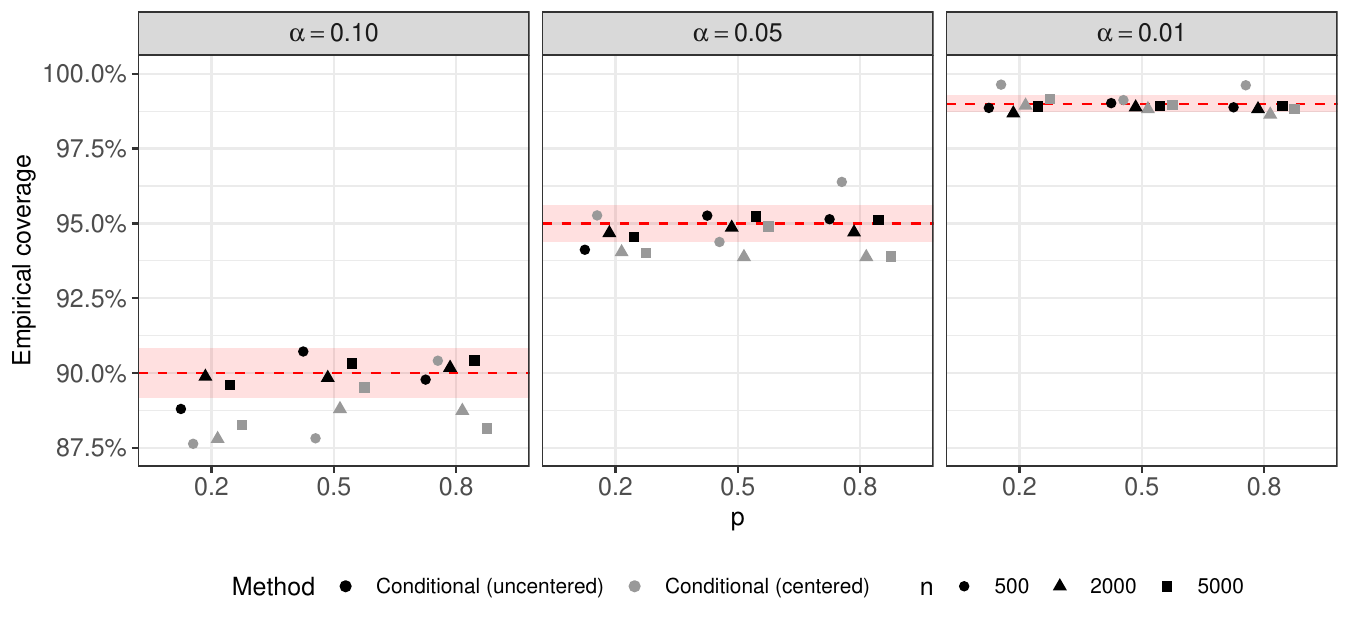}
    \caption{Empirical coverage of $\text{RV}_{\theta^*=0,\alpha}$ and $\text{XRV}_{\theta^*=0,\alpha}$ for the centered and uncentered conditional approaches across sample sizes and treatment probabilities (up to $5{,}000$ repetitions per scenario). Red dashed lines indicate the nominal coverage levels $1-\alpha$; shaded regions indicate the expected (95\%) Monte Carlo error under nominal coverage.}
    \label{fig:ss_centered_coverage}
\end{figure}

\subsection{Relationship to pre-trend extrapolation}
\label{app:pre-trend-comparison}

This section compares pre-trend extrapolation \citep{rambachan2023more} with the OVB formula. The main idea of the former is to extrapolate post-treatment deviations from parallel trends using pre-treatment information. 
Under standard DiD identification assumptions, the following result provides the bridge.

\begin{Proposition}[Parallel trends violation as omitted variable bias]\label{lemma:Parallel trends violation as omitted variable bias}~

Under no anticipation and consistency, 
\[
\text{ATT} - \theta_s = -\delta, \ \text{with} \ \delta := E[E[\Delta Y(0)\mid X, D=1] - E[\Delta Y(0)\mid X, D=0]\mid D=1],
\]
where $\delta$ is the violation of the conditional parallel trends assumption, averaged over the $X$ distribution of the treated group.
    
If, in addition, the conditional parallel trends assumption holds given $(X,U)$, then 
\[
\underbrace{E\left[E[\Delta Y(0) \mid X,D=1] - E[\Delta Y(0) \mid X,D=0]\mid D=1\right]}_{=:\delta \ (\text{deviation from parallel trends})} = \underbrace{E[g_0(X,U)\mid D=1] - E[g_{0s}(X)\mid D=1]}_{ = \theta_0-\theta_{0s} \  (\text{bias from omitting $U$})}.
\]
\end{Proposition}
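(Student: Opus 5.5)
The plan is to prove the two claims separately: the first is pure bookkeeping from consistency and no anticipation, and the second is one application of the law of iterated expectations combined with conditional parallel trends given $(X,U)$.

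\emph{First claim.} By no anticipation (Assumption~\ref{assump:No anticip}), averaged over $(X,U)$ given $D=1$, we have $E[Y_1(1)\mid D=1]=E[Y_1(0)\mid D=1]$. Hence, as in the first display of the proof of Proposition~\ref{lemma:Identification},
\[
\text{ATT}=E[\Delta Y(1)\mid D=1]-E[\Delta Y(0)\mid D=1].
\]
By consistency, $E[\Delta Y(1)\mid D=1]=E[\Delta Y\mid D=1]$. I would then condition on $X$ within the treated group to write
\[
E[\Delta Y(0)\mid D=1]=E\bigl[E[\Delta Y(0)\mid X,D=1]\mid D=1\bigr].
\]
On the short side, $\theta_s=E[\Delta Y\mid D=1]-E[g_{0s}(X)\mid D=1]$, and consistency gives $g_{0s}(X)=E[\Delta Y\mid X,D=0]=E[\Delta Y(0)\mid X,D=0]$. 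Subtracting,
\[
\text{ATT}-\theta_s=-E\bigl[E[\Delta Y(0)\mid X,D=1]-E[\Delta Y(0)\mid X,D=0]\mid D=1\bigr]=-\delta.
\]
Only these two assumptions (plus integrability from Assumption~\ref{asmp:regularity}) are used here.

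\emph{Second claim.} The term $E[E[\Delta Y(0)\mid X,D=0]\mid D=1]$ already equals $\theta_{0s}=E[g_{0s}(X)\mid D=1]$ by the identity above. So it remains to show
\[
E\bigl[E[\Delta Y(0)\mid X,D=1]\mid D=1\bigr]=\theta_0 .
\]
I would apply the tower property within $D=1$ to get
\[
E\bigl[E[\Delta Y(0)\mid X,D=1]\mid D=1\bigr]=E\bigl[E[\Delta Y(0)\mid X,U,D=1]\mid D=1\bigr].
\]
Conditional parallel trends (Assumption~\ref{assump:PTA}) then replaces the inner term by $E[\Delta Y(0)\mid X,U,D=0]$, which by consistency equals $g_0(X,U)$. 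This yields $\theta_0=E[g_0\mid D=1]$, and therefore $\delta=\theta_0-\theta_{0s}$.

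\emph{Main obstacle.} The only delicate point is measure-theoretic: the PTA equality holds a.s.\ under the joint law of $(X,U)$, but it is integrated against $P_{X,U\mid D=1}$. This requires $P_{X,U\mid D=1}\ll P_{X,U\mid D=0}$, which is guaranteed by the weak overlap part of Assumption~\ref{asmp:regularity} (or by strong overlap) via Lemma~\ref{thm:main_connection}(i). The same condition ensures that $g_0$ and $g_{0s}$ are well defined on the support of the treated covariate distribution. Everything else is routine.
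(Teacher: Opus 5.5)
Your proof is correct and follows essentially the same route as the paper's. Both reduce the ATT to $E[\Delta Y(1)\mid D=1]-E[\Delta Y(0)\mid D=1]$ via no anticipation, then use consistency and the tower property within the treated group, and for the second claim pass to $(X,U)$, apply conditional parallel trends, and use consistency to recover $g_0$. The only differences are cosmetic or extra care on your side: the paper adds and subtracts $E[g_0\mid D=1]$ instead of matching the two terms directly, and it leaves implicit your (correct) observation that weak overlap gives $P_{X,U\mid D=1}\ll P_{X,U\mid D=0}$, which is what lets the a.s.\ parallel-trends equality be integrated against the treated distribution.
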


\begin{proof}
By the proof of Proposition~\ref{lemma:Identification}, if no anticipation holds, we have 
\begin{align*}
    \text{ATT} := E\left[Y_2(1) - Y_2(0)\middle|D=1\right] = E[\Delta Y(1)\mid D=1]-E[\Delta Y(0) \mid D=1]. 
\end{align*}
Then, by consistency, 
\begin{align*}
    \text{ATT} - \theta_s &= E[\Delta Y(1)\mid D=1]-E[\Delta Y(0) \mid D=1] - E[\Delta Y - g_{0s} \mid D=1] \\
    &= E[\Delta Y \mid D=1] - E[\Delta Y(0) \mid D=1] - E[\Delta Y \mid D=1] + E[E[\Delta Y(0) \mid X,D=0] \mid D=1]\\
    &= -E[E[\Delta Y (0)\mid X,D=1] \mid D=1] + E[E[\Delta Y(0) \mid X,D=0] \mid D=1] \text{ by LTE,} \\
    &= - E\left[E[\Delta Y(0) \mid X,D=1] - E[\Delta Y(0) \mid X,D=0]\mid D=1\right] \\
    &= -\delta. 
\end{align*}
This matches Example 2.2.3 of \citet{rambachan2023more} where $\delta$ is the violation of conditional parallel trends, averaged over the $X$ distribution of the treated group. As a special case, if one instead uses the unconditional untreated mean trend $E[\Delta Y \mid D=0]$ (in place of $g_{0s}$) as a proxy for the treated group's counterfactual trend, the deviation from parallel trends reduces to $\delta = E[\Delta Y(0)\mid D=1] - E[\Delta Y(0)\mid D=0]$, which matches Example 2.2.1 of \citet{rambachan2023more}.

Assume further that the parallel trends assumption additionally holds conditional on $(X,U)$, 
\begin{align*}
    \delta &:= E\left[E[\Delta Y(0) \mid X,D=1] - E[\Delta Y(0) \mid X,D=0]\mid D=1\right] \\
    &= E[\Delta Y(0) \mid D=1] - E[g_{0s}(X) \mid D =1] \text{ by consistency,} \\
    &= E[\Delta Y(0) \mid D=1] - E[g_0(X,U) \mid D=1] + E[g_0(X,U) \mid D=1] - E[g_{0s}(X) \mid D =1] \\
    &= E[\Delta Y(0) \mid D=1] - E[g_0(X,U) \mid D=1] + (\theta_0 - \theta_{0s}) \\
    &= E[E[\Delta Y(0) \mid X,U,D=1]\mid D=1] - E[g_0(X,U) \mid D=1] + (\theta_0 - \theta_{0s}) \text{ by LTE,} \\
    &= E[E[\Delta Y(0) \mid X,U,D=0]\mid D=1] - E[g_0(X,U) \mid D=1] + (\theta_0 - \theta_{0s}) \text{ by conditional PTA,} \\
    &= E[g_0(X,U) \mid D=1] - E[g_0(X,U) \mid D=1] + (\theta_0 - \theta_{0s}) \text{ by consistency,} \\
    &= \theta_0 - \theta_{0s}. 
\end{align*}

\end{proof}

Note that pre-trend extrapolation is a partial identification exercise where confounding restrictions are imposed using pre-treatment information.
In contrast, we treat confounding strength as a sensitivity parameter and explicitly trace how identification and inference change as this strength varies. These two approaches are complementary. Pre-trend extrapolation can be used to motivate a range of plausible confounding strengths based on restrictions on deviations from parallel trends, while our results help interpret and translate a given confounding strength into the corresponding magnitude of deviations from parallel trends. 

We now highlight two primary limitations.
\begin{itemize}
    \item \textbf{Not applicable to the canonical two-group, two-period DiD:} Pre-trend extrapolation requires pre-treatment information to infer post-treatment deviations from parallel trends. In contrast, our results do not require pre-trend information and can serve as a general building block for sensitivity analysis of the ATT in both two-period and multi-period designs \citep{callaway2021difference}. In this way, we can see pre-trend extrapolation as one additional way of benchmarking confounding strength, as discussed in Section~\ref{sec:benchmarking}.
    
    \item \textbf{Sensitive to mismatches between pre- and post-treatment deviations:} Even when pre-treatment trend information is available, the resulting restrictions may not adequately constrain deviations from parallel trends in the post-treatment period.
    Pre-treatment trends may fail to detect post-treatment violations of parallel trends, and conversely, pre-treatment deviations may exist even when post-treatment parallel trends holds.
    Our approach provides additional ways to gauge the magnitude of the bias, such as, for example, comparing it with the bias induced by observed covariates. This provides a more complete picture of the robustness of the result, and also allows us to put the pre-trend bias itself into better context.
\end{itemize}

\subsection{Deferred proofs}

\begin{proof}[Proof of Proposition~\ref{prop: CB_D_equivalent}]
    We first construct some important relationships:
    \begin{align*}
        (1): \ R^2_{\alpha \sim \alpha_s} &:= \frac{\operatorname{Var}(\alpha_s)}{\operatorname{Var}(\alpha)} = \frac{E[\alpha_s^2]}{E[\alpha^2]} \\
        &\overset{\text{LTE}}{=} \frac{E[E[\alpha_s^2\mid D]]}{E[E[\alpha^2\mid D]]} = \frac{pE[\alpha_s^2\mid D=1] + (1-p) E[\alpha_s^2\mid D=0]}{pE[\alpha^2\mid D=1] + (1-p) E[\alpha^2\mid D=0]} \\
        &= \frac{\frac{1}{p} + \frac{1}{1-p} \times E[\alpha_{0s}^2\mid D=0]}{\frac{1}{p} + \frac{1}{1-p} \times E[\alpha_{0}^2\mid D=0]} \\
        &= \frac{R^2_{O_{XU} \sim O|D=0} + O \times R^2_{O_{XU} \sim O_X|D=0}}{R^2_{O_{XU} \sim O|D=0} + O} \\
        &= \frac{R^2_{O_{XU} \sim O|D=0}}{R^2_{O_{XU} \sim O|D=0} + O} \times 1 + \frac{O}{R^2_{O_{XU} \sim O|D=0} + O} \times R^2_{O_{XU} \sim O_X|D=0}, \\
        (2): \  R^2_{\alpha \sim \alpha_s} &=  \frac{\frac{1}{p} + \frac{1}{1-p} \times (E[\alpha_{0s}^2\mid D=0] -1) +  \frac{1}{1-p}}{\frac{1}{p} + \frac{1}{1-p} \times (E[\alpha_{0}^2\mid D=0] -1) +  \frac{1}{1-p}} \\ 
        &= \frac{\frac{1}{p} +\operatorname{Var}(\alpha_{0s}|D=0)}{\frac{1}{p} + \operatorname{Var}(\alpha_0|D=0)} \\
        &= \frac{\frac{1}{p} + \frac{(1-p)^2}{p^2}\operatorname{Var}(O_X \mid D=0)}{\frac{1}{p} + \frac{(1-p)^2}{p^2}\operatorname{Var}(O_{XU} \mid D=0)}\\
        &=\frac{p + (1-p)^2\operatorname{Var}(O_X \mid D=0)}{p + (1-p)^2\operatorname{Var}(O_{XU} \mid D=0)}\\
        (3): \ R^2_{\alpha \sim \alpha_s} &=  \frac{\frac{1}{p} + \frac{1}{1-p} \times E[\alpha_{0s}^2\mid D=0]}{\frac{1}{p} + \frac{1}{1-p} \times E[\alpha_{0}^2\mid D=0]}  \\
        &\overset{\text{def}}{=} \frac{O + E[O_X^2\mid D=0]}{O + E[O_{XU}^2\mid D=0]} \\
        &= \frac{O + \frac{1}{1-p}(E[O_X] - p)}{O + \frac{1}{1-p}(E[O_{XU}] - p)} \text{ by Proposition \ref{prop: among_odds} (4)}, \\
        &= \frac{E[O_X]}{E[O_{XU}]}. 
    \end{align*}

    Accordingly, 
    \begin{align*}
        \text{By (1):} \ C_{D}^2 &:= \frac{1 - R^2_{\alpha \sim \alpha_s}}{R^2_{\alpha \sim \alpha_s}} \\
        &= \frac{ 1 - R^2_{O_{XU} \sim O_X|D=0}}{\frac{1}{O}R^2_{O_{XU} \sim O|D=0} + R^2_{O_{XU} \sim O_X|D=0}}. \\
        \text{Additionally, } &1 - R^2_{\alpha \sim \alpha_s} = \frac{1 - R^2_{O_{XU} \sim O_X|D=0}}{\frac{1}{O}R^2_{O_{XU} \sim O|D=0} + 1}. \\
        \text{By (2):} \ C_{D}^2 &:= \frac{1 - R^2_{\alpha \sim \alpha_s}}{R^2_{\alpha \sim \alpha_s}} \\
        &= \frac{ 1 - \frac{p + (1-p)^2\operatorname{Var}(O_X \mid D=0)}{p + (1-p)^2\operatorname{Var}(O_{XU} \mid D=0)}}{\frac{p + (1-p)^2\operatorname{Var}(O_X \mid D=0)}{p + (1-p)^2\operatorname{Var}(O_{XU} \mid D=0)}} \\
        &= \frac{\operatorname{Var}(O_{XU} \mid D=0) - \operatorname{Var}(O_X \mid D=0)}{\frac{1}{p}\times O^2 + \operatorname{Var}(O_X \mid D=0)}\\
        &= \frac{1 - R^2_{O_{XU} \sim O_X \mid 1,D=0}}{\frac{1}{p \times CV^2_{O_{XU}|0}} + R^2_{O_{XU} \sim O_X \mid 1,D=0}}. \\
        \text{Additionally, } &1 - R^2_{\alpha \sim \alpha_s} = \frac{1 - R^2_{O_{XU} \sim O_X \mid 1,D=0}}{\frac{1}{p \times CV^2_{O_{XU}|0}} + 1}. \\
        \text{By (3):} \ C_{D}^2 &:= \frac{1 - R^2_{\alpha \sim \alpha_s}}{R^2_{\alpha \sim \alpha_s}} \\
        &= \frac{1 - \frac{E[O_X]}{E[O_{XU}]}}{\frac{E[O_X]}{E[O_{XU}]}} = \frac{E[O_{XU}] - E[O_{X}]}{E[O_{X}]} \\
         &= \frac{E[O_{XU}\mid D=1] - E[O_{X}\mid D=1]}{E[O_{X}\mid D=1] + 1} \text{ by Proposition~\ref{prop: among_odds} (4),} \\
         &= \frac{\chi^2(P_{X,U|1}\|P_{X,U|0}) - \chi^2(P_{X|1}\|P_{X|0})}{\chi^2(P_{X|1}\|P_{X|0}) + \frac{1}{p}} \text{ by Proposition~\ref{prop: odds_and_divergence} (2).} \\
         \text{Additionally, } &1 - R^2_{\alpha \sim \alpha_s} = \frac{\chi^2(P_{X,U|1}\|P_{X,U|0}) - \chi^2(P_{X|1}\|P_{X|0})}{\chi^2(P_{X,U|1}\|P_{X,U|0}) + \frac{1}{p}}.
    \end{align*}
\end{proof}

\begin{proof}[Proof of Proposition~\ref{prop: CB_connection}]
    \par\noindent
    \begin{enumerate}
        \item[(i)]
            Define $g_{d} := E[\Delta Y \mid X,U,D=d]$ and $g_{ds} := E[\Delta Y \mid X,D=d]$. Note that $g_{d}(X,U) = g(D=d,X,U)$ and $g_{ds}(X) = g_s(D=d,X)$, for $d \in \{0,1\}$. 
    \begin{align*}
        C_{\Delta Y}^2 &:= \frac{\operatorname{Var}(g-g_s)}{\operatorname{Var}(\Delta Y - g_s)}  \\
        &\overset{\text{LTV}}{=} \frac{E\left[\operatorname{Var}(g-g_s\mid D)\right] + \operatorname{Var}(E[g-g_s\mid D])}{E\left[\operatorname{Var}(\Delta Y-g_s\mid D)\right] + \operatorname{Var}(E[\Delta Y-g_s\mid D])} \\
        &\overset{\text{LTE}}{=} \frac{E\left[\operatorname{Var}(g-g_s\mid D)\right]}{E\left[\operatorname{Var}(\Delta Y-g_s\mid D)\right]} \\
        &= \frac{p \times \operatorname{Var}(g-g_s\mid D=1) + (1-p) \times \operatorname{Var}(g-g_s\mid D=0)}{p \times \operatorname{Var}(\Delta Y-g_s\mid D=1) + (1-p) \times \operatorname{Var}(\Delta Y-g_s\mid D=0)} \\
        &= \frac{p \times \operatorname{Var}(g_1-g_{1s}\mid D=1) + (1-p) \times \operatorname{Var}(g_0-g_{0s}\mid D=0)}{p \times \operatorname{Var}(\Delta Y-g_{1s}\mid D=1) + (1-p) \times \operatorname{Var}(\Delta Y-g_{0s}\mid D=0)} \\
        &= \frac{p \times \operatorname{Var}(g_1-g_{1s}\mid D=1)}{p \times \sigma_{1s}^2 + (1-p) \times \sigma_{0s}^2} + \frac{(1-p) \times \operatorname{Var}(g_0-g_{0s}\mid D=0)}{p \times \sigma_{1s}^2 + (1-p) \times \sigma_{0s}^2} \\
        &= \frac{p \times \operatorname{Var}(\Delta Y-g_{1s}\mid D=1)}{p \times \sigma_{1s}^2 + (1-p) \times \sigma_{0s}^2} \times \frac{\operatorname{Var}(g_1-g_{1s}\mid D=1)}{\operatorname{Var}(\Delta Y-g_{1s}\mid D=1)} + \\ \nonumber 
        &\phantom{=} \hspace{2cm}
        \frac{(1-p) \times \operatorname{Var}(\Delta Y-g_{0s}\mid D=0)}{p \times \sigma_{1s}^2 + (1-p) \times \sigma_{0s}^2} \times \frac{\operatorname{Var}(g_0-g_{0s}\mid D=0)}{\operatorname{Var}(\Delta Y-g_{0s}\mid D=0)} \\
        &= \frac{p \times \sigma_{1s}^2}{p \times \sigma_{1s}^2 + (1-p) \times \sigma_{0s}^2} \times C_{1\Delta Y}^2 + \frac{(1-p) \times \sigma_{0s}^2}{p \times \sigma_{1s}^2 + (1-p) \times \sigma_{0s}^2} \times C_{0\Delta Y}^2 \\
        &= W_{0\Delta Y}C_{0\Delta Y}^2 + (1 - W_{0\Delta Y})C_{1\Delta Y}^2. 
    \end{align*}
    The 6th and 8th equalities are established by observing that for $d \in \{0,1\}$, 
    \begin{align*}
        &\operatorname{Var}(\Delta Y - g_{ds}\mid D=d) \\
        &\overset{\text{LTV}}{=} E[\operatorname{Var}(\Delta Y \mid X,D=d)\mid D=d] + \operatorname{Var}(E[\Delta Y - g_{ds} \mid X,D=d]\mid D=d) \\
        &\overset{\text{LTE}}{=} E[\operatorname{Var}(\Delta Y \mid X,D=d)\mid D=d] \\
        &\overset{\text{def.}}{=} \sigma_{ds}^2. 
    \end{align*}
    
        \item[(ii)]
        \begin{align*}
        C_D^2 &:= \frac{1  - R^2_{\alpha \sim \alpha_s}}{R^2_{\alpha \sim \alpha_s}} = \frac{E[O_{XU}] - E[O_{X}]}{E[O_{X}]} \\
        &= \frac{p \times \left(E[O_{XU}\mid D=1] - E[O_{X}\mid D=1]\right)}{p \times (E[O_{X}\mid D=1] + 1)} \text{ by Proposition~\ref{prop: among_odds} (4),} \\
        &= \frac{E[O_{XU}\mid D=1] - E[O_{X}\mid D=1]}{E[O_{X}\mid D=1] + 1} \\
        &= \frac{E[O_{XU}\mid D=1] - E[O_{X}\mid D=1]}{E[O_{X}\mid D=1]} \times \frac{E[O_{X}\mid D=1]}{E[O_{X}\mid D=1] + 1} \\
        &\overset{\text{def.}}{=} C_{0D}^2 \times \frac{E[O_{X}\mid D=1]}{E[O_{X}\mid D=1] + 1} \\
        &= C_{0D}^2 \times \frac{O\times \nu_{0s}^2}{O\times \nu_{0s}^2 + 1}, 
    \end{align*}
    where the last equality uses $\nu_{0s}^2 = \frac{E[O_{X}\mid D=1]}{O}$, which was established in Corollary~\ref{thm:alt-selection}. Furthermore, to better separate the confounding strength of interest from the sampling mechanism, we use $\nu_{0s}^2 = \chi^2(P_{X|1}\|P_{X|0}) + 1$, which was established in Corollary~\ref{thm:alt-selection}. Then, the following relationship holds:
    \begin{align*}
        C_D^2 &= C_{0D}^2 \times \frac{O\times (\chi^2(P_{X|1}\|P_{X|0}) + 1)}{O\times (\chi^2(P_{X|1}\|P_{X|0}) + 1) + 1}. 
    \end{align*}
    
        \item[(iii)]
        We begin by constructing several key relationships between the covariances in the conditional and unconditional results.
    \begin{align*}
        \operatorname{Cov}(g-g_s,\alpha - \alpha_s) &= E[(g-g_s)(\alpha - \alpha_s)] \overset{\text{LTE}}{=} E\left[E[(g-g_s)(\alpha - \alpha_s)\mid D]\right] \nonumber \\
        &= pE[(g-g_s)(\alpha - \alpha_s)\mid D=1] + (1-p)E[(g-g_s)(\alpha - \alpha_s)\mid D=0]\nonumber \\
        &= (1-p) E\left[(g_0 - g_{0s}) \times (-O_{XU} + O_{X}) \times \frac{1}{p} \mid D=0\right]\nonumber \\
        &= \frac{1}{O} \times E\left[(g_0 - g_{0s})\left(-O_{XU} + O_{X}\right) \mid D=0\right]\nonumber \\
        &= -E[(g_0 - g_{0s})(\alpha_0 - \alpha_{0s})\mid D=0] \\
        &= -\operatorname{Cov}(g_0 - g_{0s}, \alpha_0 - \alpha_{0s}\mid D=0), \\
        \operatorname{Var}(g-g_s) &\overset{\text{LTV}}{=} E[\operatorname{Var}(g-g_s\mid D)] + \operatorname{Var}(E[g-g_s\mid D]) \\
        &\overset{\text{LTE}}{=}  E[\operatorname{Var}(g-g_s\mid D)] \nonumber \\
        &= \operatorname{Var}(g-g_s\mid D=1)p + \operatorname{Var}(g-g_s\mid D=0) (1-p) \nonumber \\
        &= \operatorname{Var}(g_1-g_{1s}\mid D=1)p + \operatorname{Var}(g_0-g_{0s}\mid D=0) (1-p), \nonumber \\
        \operatorname{Var}(\alpha - \alpha_s) &\overset{\text{LTV}}{=} E[\operatorname{Var}(\alpha-\alpha_s\mid D)] + \operatorname{Var}(E[\alpha - \alpha_s\mid D]) \nonumber \\
        &\overset{\text{LTE}}{=} E[\operatorname{Var}(\alpha-\alpha_s\mid D)] \nonumber \\
        &=  0 \times p + \operatorname{Var}(\alpha-\alpha_s\mid D=0) \times (1-p) \nonumber \\
        &= \operatorname{Var}\left(\frac{-O_{XU} + O_{X}}{p} \mid D=0\right) \times (1-p) \nonumber \\
        &=\operatorname{Var}\left(\frac{O_{XU} - O_{X}}{O} \mid D=0\right) \times \frac{1}{1-p} \nonumber \\
        &\overset{\text{def.}}{=} \frac{1}{1-p}\operatorname{Var}(\alpha_0 - \alpha_{0s}\mid D=0),
    \end{align*}
    
    Accordingly, the following holds:
    \begin{align}
        \rho^2 &:= \frac{\operatorname{Cov}^2(g-g_s, \alpha - \alpha_s)}{\operatorname{Var}(g-g_s)\operatorname{Var}(\alpha -\alpha_s)} \nonumber \\
        &= \frac{\operatorname{Cov}^2(g_0 - g_{0s}, \alpha_0 - \alpha_{0s}\mid D=0)}{\left(\operatorname{Var}(g_1-g_{1s}\mid D=1)p + \operatorname{Var}(g_0-g_{0s}\mid D=0) (1-p)\right) \times \left(\frac{1}{1-p}\operatorname{Var}(\alpha_0 - \alpha_{0s}\mid D=0)\right)} \nonumber \\
        &\overset{\text{def.}}{=} \frac{\rho_0^2}{\frac{\operatorname{Var}(g_1-g_{1s}\mid D=1)}{\operatorname{Var}(g_0-g_{0s}\mid D=0)} \times O + 1} \left(= \frac{\rho_0^2}{\frac{C_{1\Delta Y}^2\sigma_{1s}^2}{C_{0\Delta Y}^2\sigma_{0s}^2} \times O + 1}\right). \label{eq:rho-unc-explicit}
    \end{align}
    We implicitly assume that $\operatorname{Var}(g_0-g_{0s}\mid D=0), \operatorname{Var}(\alpha_0 - \alpha_{0s}\mid D=0) > 0$; otherwise, the bias is zero.

        \item[(iv)] \textbf{Supplement:} Finally, combining the results above, we show that the unconditional result reduces to the conditional result. We begin by deriving two important relationships for the scaling factors:
    \begin{align*}
        \text{(1): }\operatorname{Var}(\Delta Y - g_s) &= E[(\Delta Y - g_s)^2] \overset{\text{LTE}}{=} E\left[E[(\Delta Y - g_s)^2\mid D]\right]\\
        &=pE[(\Delta Y - g_{1s})^2\mid D=1] + (1-p)E[(\Delta Y - g_{0s})^2\mid D=0]\nonumber \\
        &= p\operatorname{Var}(\Delta Y - g_{1s}\mid D=1) + (1-p)\operatorname{Var}(\Delta Y - g_{0s}\mid D=0), \\
        \text{(2): }\operatorname{Var}(\alpha_s) &= E[\alpha_s^2] \overset{\text{LTE}}{=} E\left[E[\alpha_s^2\mid D]\right]\nonumber \\
        &= p \times E[\alpha_s^2\mid D=1] + (1-p) \times E[\alpha_s^2\mid D=0] \nonumber \\
        &\overset{\text{def.}}{=} \frac{1}{p} + \frac{1}{1-p} \times E[\alpha_{0s}^2\mid D=0]. 
    \end{align*}
    The second equality implies that $\operatorname{Var}(\alpha_s) = E[\alpha_s^2] > E[\alpha_{0s}^2\mid D=0]$, with the gap increasing as $p$ approaches zero or one. Accordingly, we have the following:
    \begin{align}
        &\rho^2C_{\Delta Y}^2C_D^2S^2 \nonumber \\
        &= \frac{\rho_0^2}{\frac{\operatorname{Var}(g_1-g_{1s}\mid D=1)O}{\operatorname{Var}(g_0-g_{0s}\mid D=0)} + 1} \times ((1-W_{0\Delta Y})C_{1\Delta Y}^2 + W_{0\Delta Y}C_{0\Delta Y}^2) \times (W_{0D}C_{0D}^2) \times \operatorname{Var}(\Delta Y - g_s)\operatorname{Var}(\alpha_s) \nonumber \\
        &= \rho_0^2 \times \operatorname{Var}(g_0-g_{0s}\mid D=0) \times \frac{O \times \operatorname{Var}(g_1 - g_{1s}\mid D=1) + \sigma_{0s}^2C_{0\Delta Y}^2}{O \times \operatorname{Var}(g_1 - g_{1s}\mid D=1) + \operatorname{Var}(g_0-g_{0s}\mid D=0)} \times C_{0D}^2 \times \nu_{0s}^2 \nonumber \\
        &= \rho_0^2 \times \sigma_{0s}^2C_{0\Delta Y}^2 \times \frac{O\operatorname{Var}(g_1 - g_{1s}\mid D=1) +\sigma_{0s}^2C_{0\Delta Y}^2}{O\operatorname{Var}(g_1 - g_{1s}\mid D=1) + \sigma_{0s}^2C_{0\Delta Y}^2} \times C_{0D}^2 \times \nu_{0s}^2 \nonumber \\
        &= \rho_0^2 C_{0\Delta Y}^2 C_{0D}^2 \sigma_{0s}^2 \nu_{0s}^2 = \rho_0^2 C_{0\Delta Y}^2 C_{0D}^2 S_0^2, \label{eq:unc-cond-product}
    \end{align}
    where the first equality follows from (i),(ii),(iii), the second equality follows from (1) and (2), and the third equality uses $\operatorname{Var}(g_0-g_{0s}\mid D=0) = \sigma_{0s}^2C_{0\Delta Y}^2$. 

   An alternative way to establish this equivalence is to use $\operatorname{Cov}(g-g_s,\alpha - \alpha_s) = -\operatorname{Cov}(g_0 - g_{0s}, \alpha_0 - \alpha_{0s}\mid D=0)$ directly, which is shown as part of the proof of (iii). 
    \end{enumerate}
\end{proof}

\begin{proof}[Proof of Proposition~\ref{prop: HS_connection}]
    \par\noindent
    \begin{enumerate}
        \item[(i)] Note that 
\begin{align*}
    R^2_{\alpha_{0s} \sim 1\mid D=0} &= 1 - \frac{E[(\alpha_{0s} - 1)^2\mid D=0]}{E[\alpha_{0s}^2\mid D=0]} \\
    &= \frac{1}{E[\alpha_{0s}^2\mid D=0]} = \frac{O}{E[O_{X}\mid D=1]}. 
\end{align*}

Accordingly, we have
    \begin{align*}
        C_{w0D}^2 &:= \frac{1 - R^2_{\alpha_0 \sim \alpha_{0s}|1,D=0}}{R^2_{\alpha_0 \sim \alpha_{0s}|1,D=0}} = \frac{\operatorname{Var}(\alpha_0\mid D=0) - \operatorname{Var}(\alpha_{0s}\mid D=0)}{\operatorname{Var}(\alpha_{0s}\mid D=0)} \nonumber \\
        &= \frac{E[\alpha_0^2\mid D=0] - E[\alpha_{0s}^2\mid D=0]}{E[\alpha_{0s}^2\mid D=0] - 1} \\
        &= \frac{1 - R^2_{\alpha_0 \sim \alpha_{0s}\mid D=0}}{R^2_{\alpha_0 \sim \alpha_{0s}\mid D=0} - \frac{1}{E[\alpha_0^2\mid D=0]}} \\
        &\overset{\text{def.}}{=} C_{0D}^2 \times \frac{R^2_{\alpha_0 \sim \alpha_{0s}\mid D=0}}{R^2_{\alpha_0 \sim \alpha_{0s}\mid D=0} - \frac{1}{E[\alpha_0^2\mid D=0]}} \\
        &= C_{0D}^2 \times \frac{\frac{E[O_{X}\mid D=1]}{E[O_{XU}\mid D=1]}}{\frac{E[O_{X}\mid D=1]}{E[O_{XU}\mid D=1]} - \frac{O}{E[O_{XU}\mid D=1]}} \\
        &= C_{0D}^2 \times \frac{E[O_{X}\mid D=1]}{E[O_{X}\mid D=1] - O} \left(= C_{0D}^2 \times \frac{1}{1 - R^2_{\alpha_{0s} \sim 1\mid D=0}} \right). 
    \end{align*}
       These equalities follow from the results established in the proof of Corollary~\ref{thm:alt-selection}.

        \item[(ii)]  We begin by constructing several important relationships:
    \begin{align*}
        \text{(1): } &\operatorname{Cov}(g_0 - g_{0s}, \alpha_0 - \alpha_{0s}\mid D=0) = E[(g_0 - g_{0s})(\alpha_{0} - \alpha_{0s})\mid D=0] \nonumber \\
        &= E[g_0(X,U)\alpha_{0}(X,U)\mid D=0] - E[g_{0s}(X)\alpha_{0s}(X)\mid D=0] \nonumber \\
        &= E[E\left[\Delta Y \mid X, U, D=0\right]\alpha_{0}(X,U)\mid D=0] - E[E\left[\Delta Y \mid X, D=0\right]\alpha_{0s}(X)\mid D=0] \nonumber \\
        &= E[E\left[\alpha_{0}(X,U)\Delta Y \mid X, U, D=0\right]\mid D=0] - E[E\left[\alpha_{0s}(X)\Delta Y \mid X, D=0\right]\mid D=0] \nonumber \\
        &= E\left[\alpha_{0}(X,U)\Delta Y\mid D=0\right] - E\left[\alpha_{0s}(X)\Delta Y\mid D=0\right] \nonumber \\
        &= E[(\alpha_{0}- \alpha_{0s})\Delta Y\mid D=0] - \underbrace{E[(\alpha_{0} - \alpha_{0s})\mid D=0]E[\Delta Y\mid D=0]}_{=0} \nonumber \\
        &= \operatorname{Cov}((\alpha_{0} - \alpha_{0s}), \Delta Y\mid D=0),
    \end{align*}
    where the second equality follows from the proof of Theorem~\ref{thm:main_npm}. Moreover, we showed in the proof of Theorem~\ref{thm:main_npm} that $\sigma_{0s}^2 = \operatorname{Var}(\Delta Y -g_{0s}\mid D=0) = E[\operatorname{Var}(\Delta Y \mid X,D=0)\mid D=0]$, so we have 
    \begin{align*}
        \text{(2): }\sigma^2_{w0s}&:=\operatorname{Var}(\Delta Y\mid D=0) \\
        &\overset{\text{LTV}}{=} E[\operatorname{Var}(\Delta Y \mid X,D=0)\mid D=0] + \operatorname{Var}(E[\Delta Y \mid X,D=0]\mid D=0) \\
        &= \sigma_{0s}^2 + \operatorname{Var}(g_{0s}\mid D=0),
    \end{align*}
    which suggests that $\sigma^2_{w0s} \ge \sigma_{0s}^2$. Additionally, we have 
    \begin{align*}
        R^2_{g_0 \sim g_{0s}|1,D=0} &:= 1 - \frac{\operatorname{Var}(g_0 - g_{0s}\mid D=0)}{\operatorname{Var}(g_0\mid D=0)} \\
        &= \frac{-\operatorname{Var}(g_{0s}\mid D=0) + 2(E[g_0g_{0s}\mid D=0]-E[g_0\mid D=0]E[g_{0s}\mid D=0])}{\operatorname{Var}(g_0\mid D=0)}  \\
        &\overset{\text{LTE}}{=} \frac{\operatorname{Var}(g_{0s}\mid D=0)}{\operatorname{Var}(g_0\mid D=0)}. 
    \end{align*}
    Then, we define the corresponding Cohen's partial $f^2$ as: 
    \begin{align*}
        f^2_{g_0 \sim g_{0s}|1,D=0} &:= \frac{R^2_{g_0 \sim g_{0s}|1,D=0}}{1-R^2_{g_0 \sim g_{0s}|1,D=0}} \\
        &= \frac{\operatorname{Var}(g_{0s}\mid D=0)}{\operatorname{Var}(g_{0}\mid D=0) - \operatorname{Var}(g_{0s}\mid D=0)}. 
    \end{align*} 
    Similarly, we have that
    \begin{align*}
        R^2_{\Delta Y \sim g_{0s}|1,D=0} &:= 1 - \frac{\operatorname{Var}(\Delta Y - g_{0s}\mid D=0)}{\operatorname{Var}(\Delta Y\mid D=0)} \\
        &= \frac{-\operatorname{Var}(g_{0s}\mid D=0) + 2(E[\Delta Yg_{0s}\mid D=0] - E[\Delta Y\mid D=0]E[g_{0s}\mid D=0])}{\operatorname{Var}(\Delta Y\mid D=0)} \\
        &\overset{\text{LTE}}{=} \frac{\operatorname{Var}(g_{0s}\mid D=0)}{\operatorname{Var}(\Delta Y\mid D=0)}.
    \end{align*}
    Combined with (2), this yields
    \begin{align}\label{eq:R2-DY-g0s}
        1 - R^2_{\Delta Y \sim g_{0s}|1,D=0} = \frac{\sigma_{0s}^2}{\operatorname{Var}(\Delta Y\mid D=0)}.
    \end{align}
    
    Accordingly, we have the following:
    \begin{align*}
        \rho_{w0}^2 &= \frac{\operatorname{Cov}^2(\Delta Y, \alpha_{0} - \alpha_{0s}\mid D=0)}{\operatorname{Var}(\Delta Y\mid D=0)\operatorname{Var}(\alpha_{0} - \alpha_{0s}\mid D=0)} \\
        &\overset{\text{(1)}}{=} \frac{\operatorname{Cov}^2(g_0 - g_{0s}, \alpha_{0} - \alpha_{0s}\mid D=0)}{\operatorname{Var}(g_0 - g_{0s}\mid D=0)\operatorname{Var}(\alpha_{0} - \alpha_{0s}\mid D=0)} \times \frac{\operatorname{Var}(g_0 - g_{0s}\mid D=0)}{\operatorname{Var}(\Delta Y\mid D=0)} \\
        &\overset{\text{(2)}}{=} \rho_0^2 \times \frac{\operatorname{Var}(g_0 - g_{0s}\mid D=0)}{\sigma_{0s}^2 + \operatorname{Var}(g_{0s}\mid D=0)} \\
        &\overset{\text{def.}}{=} \rho_0^2 \times \frac{1}{\frac{1}{C_{0\Delta Y}^2} + \frac{\operatorname{Var}(g_{0s}\mid D=0)}{\operatorname{Var}(g_0 - g_{0s}\mid D=0)}} = \rho_0^2 \times \frac{1}{\frac{1}{C_{0\Delta Y}^2} + \frac{\operatorname{Var}(g_{0s}\mid D=0)}{\operatorname{Var}(g_0\mid D=0) - \operatorname{Var}(g_{0s}\mid D=0)}} \\
        &= \rho_0^2 \times \frac{1}{\frac{1}{C_{0\Delta Y}^2} + f^2_{g_0 \sim g_{0s}|1,D=0}} \left( = \rho_0^2 \times \frac{1}{\frac{1}{C_{0\Delta Y}^2} + \frac{1}{1-R^2_{g_0 \sim g_{0s}|1,D=0}} - 1}\right). 
    \end{align*}
    Note that $\frac{1}{C_{0\Delta Y}^2} + \frac{1}{1-R^2_{g_0 \sim g_{0s}|1,D=0}} - 1 \ge 1$, so we have $\rho_{w0}^2 \le \rho_0^2$. Alternatively, we have that 
    \begin{align*}
        \rho_{w0}^2 &\overset{\text{(1)}}{=} \rho_0^2 \times \frac{\operatorname{Var}(g_0 - g_{0s}\mid D=0)}{\operatorname{Var}(\Delta Y - g_{0s}\mid D=0)} \times \frac{\operatorname{Var}(\Delta Y - g_{0s}\mid D=0)}{\operatorname{Var}(\Delta Y\mid D=0)} \\
        &\overset{\text{def.}}{=} \rho_0^2 \times C_{0\Delta Y}^2 \times (1 - R^2_{\Delta Y \sim g_{0s}|1,D=0}). 
    \end{align*}
    \end{enumerate}
\end{proof}

\section{Supplementary results for the empirical application}
\label{app:additional-results}

Our empirical application studies the effect of the increase in minimum wage on teen employment. This section presents additional results, including those for the main analysis, and results that allow for one year anticipation.

\subsection{Nuisance learners: specification and RMSE}

Table~\ref{tab:mw_learners} presents the specifications of the machine learning methods used to estimate the first-stage nuisance functions. We consider four learners---a parametric model, ridge regression, lasso, and random forest---using the covariates from \citet{callaway2021difference}. Table~\ref{tab:mw_RMSE} reports the corresponding cross-fitted root mean squared errors (RMSE) for predicting $D$ and $\Delta Y$. Random Forest attains the lowest RMSE for both, and we therefore use it as the learner for our main analysis.

\begin{table}[H]
\centering
\footnotesize
\renewcommand{\arraystretch}{0.9}
\setlength{\tabcolsep}{4pt}
\resizebox{\textwidth}{!}{
\begin{tabular}{llll}
\toprule
\textbf{Learner} & \textbf{Variables} & \textbf{Package} & \textbf{Tuning grid} \\
\midrule
\makecell[l]{Parametric: \\
Linear for $g_{0s}$ \\
Logistic for $\pi$}  & \makecell[l]{\textit{region, white, hs, pov,} \\
\textit{lpop, $\textit{lpop}^2$} \\ \textit{lmedinc, $\textit{lmedinc}^2$}}
 &
\makecell[l]{\texttt{stats} (\texttt{lm})\\
\texttt{stats} (\texttt{glm})} &
N/A\\
\hline
\makecell[l]{Ridge: \\
Linear for $g_{0s}$ \\
Logistic for $\pi$} &
\makecell[l]{\textit{region}; cubic polynomials \\ in
\textit{lpop, white, pov, hs, lmedinc}; \\
\textit{region}-polynomial interactions} &
\texttt{glmnet} &
\makecell[l]{$\alpha_{\text{EN}} = 0$ (ridge) \\ $\lambda$ selected from \texttt{cv.glmnet}} \\
\hline
\makecell[l]{Lasso: \\
Linear for $g_{0s}$ \\
Logistic for $\pi$} &
\makecell[l]{\textit{region}; cubic polynomials \\ in
\textit{lpop, white, pov, hs, lmedinc}; \\
\textit{region}-polynomial interactions} &
\texttt{glmnet} &
\makecell[l]{$\alpha_{\text{EN}} = 1$ (lasso) \\ $\lambda$ selected from \texttt{cv.glmnet}} \\
\hline
\makecell[l]{Random Forest \\ ($\texttt{num.trees}$ \\ $ = 1,000$)} &
\makecell[l]{\textit{region, white, hs, pov,} \\
\textit{lpop, lmedinc}}  &
\texttt{ranger} &
\makecell[l]{\texttt{mtry} $\in \{2,4,6\}$ \\
             \texttt{min.node.size} $\in \{10, 15, 25, 50, 75,\dots,150\}$ \\
             \texttt{splitrule} $\in \{\texttt{variance}, \texttt{extratrees}\}$} \\
\bottomrule
\end{tabular}}                    
\caption{Machine learning methods used for first-stage nuisance estimation in the minimum wage example. Covariates enter linearly. Models are selected by minimizing the RMSE.}
\label{tab:mw_learners}
\end{table}

\begin{table}[ht]
\centering
\begin{tabular}{lll}
  \toprule
Methods & RMSE($D$) & RMSE($\Delta Y$)  \\ 
  \hline
  Parametric & 0.420 & 0.1548  \\ 
  Lasso & 0.3803 & 0.1540 \\ 
  Ridge & 0.3859 & 0.1539  \\ 
  \textbf{Random Forest} & \textbf{0.3719} & \textbf{0.1529} \\  
  \hline
  Best & 0.3719 & 0.1529  \\ 
  \bottomrule
\end{tabular}
\caption{Cross-fitted RMSEs for predicting $\Delta Y$ and $D$ in the minimum wage application. Random Forest achieves the lowest RMSE for both the outcome evolution and the propensity score and is used as the learner in our main analysis.}
\label{tab:mw_RMSE}
\end{table}

\subsection{Multiplier bootstrap for uniform confidence bands}

Algorithm~\ref{alg:Mboot} presents the multiplier bootstrap procedure of \citet{callaway2021difference}, which we use to construct the uniform $95\%$ confidence bands shown in Figure~\ref{fig:nt_ATT_multi_main}. The procedure is computationally fast and, by reweighting rather than resampling observations, guarantees that every bootstrap iteration retains units from both treated and control groups.

\begin{algorithm}[H]
\caption{Multiplier Bootstrap: MBoot($\bm{\varphi}^0_{\bm{\theta_s}}$)}
\begin{algorithmic}
\State \textbf{Input:} An $n \times T$ influence-function matrix $\bm{\varphi}^0_{\bm{\theta}_s}$, with columns corresponding to time periods ($T=1$ in the canonical setup). Significance level $\alpha$, and number of bootstrap draws $B$. 
    \For{$b = 1,\dots,B$}
        \State Draw weights $\{V_i\}_{i=1}^n$ from Mammen's distribution: let $\kappa = (\sqrt{5}+1)/2$, 

        \vspace{-0.5em}
        \[
V_i \overset{\text{i.i.d.}}{\sim}
\begin{cases}
1-\kappa & \text{with probability } \kappa/\sqrt{5},\\
\kappa   & \text{with probability } 1-\kappa/\sqrt{5}.
\end{cases}
\]
        \State Compute $\widehat{R}^* = \frac{1}{\sqrt{n}}\sum_{i=1}^n\left(V_i \times \bm{\varphi}^0_{\bm{\theta_s}}(Z_i)\right)$. Denote the $t$-th element by $\widehat{R}^*(t)$ for $t = 1,\dots,T$. 
    \EndFor
    \State Estimate bootstrap standard deviations: $\widehat{\sigma}_{\bm{\varphi}^0_{\bm{\theta_s},t}} = \text{IQR}(\widehat{R}^*(t))/\left(\Phi^{-1}(0.75)-\Phi^{-1}(0.25)\right)$. 
    \State For each bootstrap draw compute $\text{t-test}_b = \max_t\left(\left|\widehat{R}^*(t)\right|\big/\widehat{\sigma}_{\bm{\varphi}^0_{\bm{\theta_s},t}}\right)$. 
    \State Calculate the critical value as $\hat{c}_{1-\alpha}= \text{Quantile}_{1-\alpha}(\{\text{t-test}_b\}_{b=1}^B)$.
\State \textbf{Return:} Construct confidence band for $\bm{\theta_s}(t)$ as $\widehat{\text{CI}}_{1-\alpha}(t) = \left[\widehat{\bm{\theta_s}}(t) \pm \left(\hat{c}_{1-\alpha}\widehat{\sigma}_{\bm{\varphi}^0_{\bm{\theta_s},t}}\Big/\sqrt{n}\right)\right]$. 
\end{algorithmic}
\label{alg:Mboot}
\end{algorithm}

\subsection{Covariate balance check}

Table~\ref{tab:balance_main} reports the means and standard deviations of each covariate in the treated and control groups, together with conventional balance diagnostics---the standardized mean difference and the variance-ratio deviation. Table~\ref{tab:obs_strength} reports the bias decomposition obtained by adjusting for each observed covariate separately. This provides a natural diagnostic for covariate imbalance, showing directly how each covariate affects the results.

\begin{table}[H]
\centering
\renewcommand{\arraystretch}{0.65}
\setlength{\tabcolsep}{4pt}
\resizebox{0.85\textwidth}{!}{
\begin{tabular}{lcccccc}
\toprule
&
\multicolumn{2}{c}{Untreated}
&
\multicolumn{2}{c}{Treated}
&
Std. Mean Diff.
&
Var. Ratio Dev.
\\
\cmidrule(l{3pt}r{3pt}){2-3}
\cmidrule(l{3pt}r{3pt}){4-5}
\cmidrule(l{3pt}r{3pt}){6-6}
\cmidrule(l{3pt}r{3pt}){7-7}
&
$\mu_0$
&
$\sigma_0^2$
&
$\mu_1$
&
$\sigma_1^2$
&
$\frac{\mu_1-\mu_0}{\sqrt{(\sigma_1^2 + \sigma_0^2)/2}}$
&
$\frac{\sigma_1^2}{\sigma_0^2} - 1$
\\
\midrule
Midwest & 0.336 & 0.223 & 0.483 & 0.250 & 0.303 & 0.120 \\ 
South   & 0.593 & 0.241 & 0.301 & 0.212 & -0.613 & -0.128 \\
West    & 0.072 & 0.067 & 0.216 & 0.169 & 0.419 &  1.536\\
\midrule
Population (1000s) & 53.425 & 23,438 & 84.142 & 32,850 & 0.183 & 0.402 \\
log(Population)    & 3.018  & 1.584 & 3.467  & 1.781 & 0.346 & 0.125 \\
\midrule
Median Inc. (1000s) & 31.889 & 54.565  & 33.080 & 66.654 & 0.153 & 0.222 \\
log(Median Inc.)    & 3.438  & 0.048 & 3.471  &  0.053 & 0.150 & 0.111 \\
\midrule
White         & 0.826 & 0.026 & 0.885 & 0.018 & 0.397 & -0.331 \\
HS Graduates  & 0.553 & 0.006 & 0.577 & 0.004 & 0.336 & -0.382 \\
Poverty Rate  & 0.157 & 0.004 & 0.138 & 0.003 & -0.324 & -0.378 \\
\bottomrule
\end{tabular}
}
\caption{
{\footnotesize Covariate balance between $584$ treated counties ($29.8\%$) in states that raised the minimum wage above the federal level in 2007 and $1,377$ never-treated control counties. Columns report, from left to right, means, variances, standardized mean differences, and deviations of the variance ratio from one.}}
\label{tab:balance_main}
\end{table}

\begin{table}[H]
\centering
\footnotesize                       
\setlength{\tabcolsep}{5pt}         
\renewcommand{\arraystretch}{0.8}  
\resizebox{0.9\textwidth}{!}{%
\begin{tabular}[t]{lcccc}
\toprule
\multicolumn{1}{c}{ } & \multicolumn{1}{c}{Bias (from $X_j$)} & \multicolumn{1}{c}{Alignment ($\rho_{0,j}$)} & \multicolumn{1}{c}{Trend ($C_{0\Delta Y,j}$)} & \multicolumn{1}{c}{Imbalance ($C_{0D,j}$)} \\
\cmidrule(l{3pt}r{3pt}){2-2} \cmidrule(l{3pt}r{3pt}){3-3} \cmidrule(l{3pt}r{3pt}){4-4} \cmidrule(l{3pt}r{3pt}){5-5}
& $\theta_{s,j} - \theta_{s,\emptyset}$ & $\operatorname{Cor}\left(g_{0s,j}, O_{X_j}\mid D=0\right)$ & $\sqrt{\eta^2_{\Delta Y \sim X_j\mid D=0}}$ &
$\sqrt{\chi^2(P_{X_j\mid1}||P_{X_j\mid0})}$ \\
\midrule
Region (Overall) & -0.0010 & 0.291 & 0.036 & 0.589\\
 & (0.0015) & (0.425) & (0.023) & (0.036)\\
\hline
Region (Midwest) & 0.0016 & -1.000 & 0.032 & 0.263\\
 & (0.0006) & (0.498) & (0.021) & (0.030)\\
Region (South) & 0.0013 & -1.000 & 0.015 & 0.504\\
 & (0.0012) & (1.921) & (0.030) & (0.029)\\
Region (West) & -0.0025 & 1.000 & 0.033 & 0.432\\
 & (0.0011) & (0.611) & (0.028) & (0.039)\\
\hline
Population (1000s) & -0.0016 & 0.765 & 0.045 & 0.295\\
 & (0.0007) & (0.373) & (0.010) & (0.032)\\
log(Population) & -0.0017 & 1.000 & 0.023 & 0.286\\
 & (0.0007) & (1.834) & (0.027) & (0.032)\\
\hline
Median Inc. (1000s) & -0.0002 & 0.051 & 0.110 & 0.179\\
 & (0.0006) & (0.186) & (0.018) & (0.034)\\
log(Median Inc.) & -0.0002 & 0.058 & 0.110 & 0.162\\
 & (0.0006) & (0.210) & (0.019) & (0.037)\\
 \hline
White & 0.0011 & -0.307 & 0.060 & 0.406\\
 & (0.0010) & (0.261) & (0.026) & (0.029)\\
HS degree & -0.0026 & 0.607 & 0.072 & 0.389\\
 & (0.0010) & (0.246) & (0.016) & (0.026)\\
Poverty rate & 0.0001 & -0.029 & 0.076 & 0.251\\
 & (0.0008) & (0.283) & (0.026) & (0.026)\\
\midrule
$X_{\text{all}}$ & -0.0089 &0.194 & 0.117 & 2.548 \\
& (0.0079) &(0.166) & (0.045) & (0.170)\\
\bottomrule
\multicolumn{5}{l}{\rule{0pt}{1.5em}\textit{Note:}
$S_{0,\emptyset} = \sqrt{\sigma_{0s,\emptyset}^2\nu_{0s,\emptyset}^2} = 0.154$
with an SE of $0.006$.} \\
\multicolumn{5}{l}{
$\widehat{\sigma}_{0s,\emptyset}^2 = 0.024$ with an SE of $0.002$,
and $\widehat{\nu}_{0s,\emptyset}^2 = 1$ because there are no covariates.}\\
\end{tabular}%
}
\caption{{\footnotesize Decomposition of the observed confounding strength. For each covariate, the first row reports the estimates and the second row reports the corresponding standard errors. The decomposition follows from $\theta_{s,j} - \theta_{s,\emptyset} = -\rho_{0,j}C_{0\Delta Y,j}C_{0D,j}S_{0,\emptyset}$. If the plug-in estimate of $\rho_{0,j}$ exceeds one in absolute value, it is capped at $\pm 1$, in which case the product of the displayed factors may not match the estimated bias exactly.}}
\label{tab:obs_strength}
\end{table}
    
\subsection{Additional sensitivity results}

Table~\ref{tab:gain_metrics} reports additional empirical benchmarking results for the minimum wage application. The gain metrics $G_{0\Delta Y,j}$ and $G_{0D,j}$, defined in Section~\ref{sec:Statistical Inference of benchmark components}, measure the explanatory power of each observed covariate for the outcome evolution and the treatment indicator, respectively; these are the gain metrics used in the contour plots in the main text. The correlations $\rho_{0,j}$, defined in Section~\ref{sec:rho-benchmark} for benchmarking $\rho_{0}$, are all bounded above by $0.3$ in absolute value.

\begin{table}[ht]
\centering
\begin{tabular}{lcccc}
\toprule
& \multicolumn{2}{c}{Gain Metrics} & Correlation & Change in estimate \\
\cmidrule(lr){2-5}
Observed covariate 
& $G_{0\Delta Y,j}$ 
& $G_{0D,j}$ 
& $\rho_{0,j}$ 
& $\theta_{s} - \theta_{s,-j}$ \\
\midrule
lmedinc & 0.0051 & 0.3701 &  -0.2296 &  0.0036\\
region &  0.0049 &  0.7389 &  0.1327 & -0.0025\\
white   &  0.0021 & 0.5190 & -0.0544 & 0.0006 \\
pov & 0.0047 &  0.2392 &  0.0834 &  -0.0011 \\
\bottomrule
\end{tabular}
\caption{Explanatory power of observed covariates, minimum wage example. All estimates are debiased and cross-fitted.}
\label{tab:gain_metrics}
\end{table}

\begin{table}[htbp]
\centering
\begin{tabular}{lrrrrrr}
\toprule
& \multicolumn{2}{c}{$|\rho_0| = |\rho_{0,j}|$} & \multicolumn{2}{c}{$|\rho_0| = 0.3$} & \multicolumn{2}{c}{$|\rho_0| = 1$} \\
\cmidrule(lr){2-3} \cmidrule(lr){4-5} \cmidrule(lr){6-7}
Covariate & Lower & Upper & Lower & Upper & Lower & Upper \\
\midrule
lmedinc & $-0.0622$ & $-0.0147$ & $-0.0613$ & $-0.0112$ & $-0.0840$ & $0.0122$ \\
white   & $-0.0554$ & $-0.0161$ & $-0.0648$ & $-0.0106$ & $-0.1021$ & $0.0253$ \\
pov     & $-0.0541$ & $-0.0156$ & $-0.0595$ & $-0.0144$ & $-0.0751$ & $-0.0002$ \\
region  & $-0.0838$ & $-0.0098$ & $-0.0841$ & $0.0011$ & $-0.1679$ & $0.0828$ \\
\bottomrule
\end{tabular}
\caption{One-sided 95\% confidence intervals obtained by benchmarking against observed covariates, minimum wage example. The intervals account for sampling uncertainty in the estimated benchmark components.}
\label{tab:benchmark_conf_bound}
\end{table}

\begin{table}[H]
\centering
\begin{tabular}{lrr}
\toprule
Pre-trend extrapolation method & Lower & Upper \\
\midrule
Bias magnitude & $-0.0922$ & $0.0272$ \\
Bias factors   & $-0.0917$ & $0.0270$ \\
\bottomrule
\end{tabular}
\caption{One-sided 95\% confidence intervals obtained by benchmarking against the $2006$ pre-trend deviation, minimum wage example. The intervals account for sampling uncertainty in the estimated benchmark components.}
\label{tab:benchmark_pretrend_bounds}
\end{table}

Table~\ref{tab:benchmark_conf_bound} reports one-sided 95\% confidence bounds implied by an unobserved confounder comparable to each observed covariate. In the first pair of columns, the confounder is set comparable in both strength and alignment to the observed covariate. In the remaining two pairs, we instead fix the alignment at $|\rho_0|=0.3$ (a moderate scenario) and $|\rho_0|=1$ (an adversarial scenario), while keeping its strength in the outcome trend and treatment selection comparable to the covariate. All bounds account for the uncertainty in estimating the benchmark components, using the results of Section~\ref{sec:est-inf-bench}. 

Table~\ref{tab:benchmark_pretrend_bounds} reports the one-sided 95\% confidence bounds implied by the 2006 pre-trend deviation, under the two extrapolation strategies introduced in Section~\ref{sec:bench-pre-trend}, with sampling uncertainty accounted for using the results of Section~\ref{sec:est-inf-bench}. Since the estimated effect is negative, we focus on the upper bound, which corresponds to the direction of bias relevant for overturning the original conclusion. The two strategies yield similar upper bounds, both indicating that a confounder generating bias comparable to the 2006 pre-trend deviation would be able to explain away the negative effect. 

\subsection{Results allowing for anticipation}
\label{app:anticipation}

In the main text, we show that an unobserved confounder would have to be stronger or more adversarial than any observed covariate to overturn the result. Explaining the 2006 pre-trend deviation would be more demanding still. Once we rule out such confounders, a natural explanation is that the pre-trend reflects anticipation, meaning that treated units began responding to the policy before it formally took effect. In this case, the anticipation effect is part of the total treatment effect. This section therefore reports results that allow for one year of treatment anticipation.

Here we ascribe the full decrease in teen employment from 2005 to 2007 to the minimum wage increase in 2007. As shown in Figure~\ref{fig:nt_ATT_multi_anticipation}, the point estimate of the 2007 treatment effect then changes from a $3.66\%$ decrease in teen employment (without anticipation) to a $7.4\%$ decrease (with anticipation). Table~\ref{tab: MW-results-robustness_anticipation} reports the corresponding minimal sensitivity reporting. The robustness value $\text{RV}_{\theta^*=0,\,\alpha=0.05}$ increases from $4.38\%$ (without anticipation) to $10.46\%$ (with anticipation), and the extreme robustness value $\text{XRV}_{\theta^*=0,\,\alpha=0.05}$ increases from $0.2\%$ (without anticipation) to $1.21\%$ (with anticipation). As explained in Section~\ref{sec:statistics}, these statistics provide a quick summary of how robust the estimated effect is to the presence of omitted variables. Allowing for one year of anticipation therefore makes the estimated negative effect in 2007 more robust to omitted confounding, since a larger amount of confounding would now be required to overturn it.

\begin{figure}[H]
        \centering
        \includegraphics[width=0.66\textwidth]{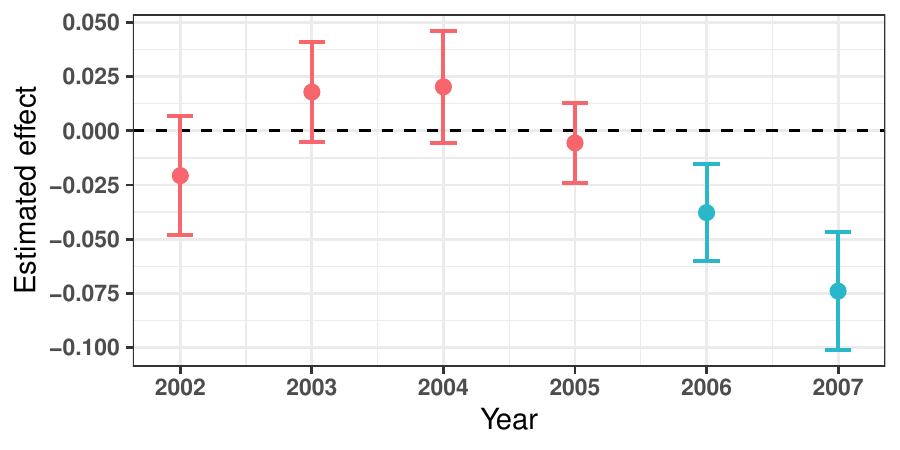}
    \caption{{\footnotesize The effect of the minimum wage increase on teen employment under the conditional parallel trends assumption, using the \textbf{never-treated} units as the control group and allowing for \textbf{one year of anticipation}. Estimates use DML with a canonical DiD design. For the pre-treatment periods, the preceding year is used as the base period; for 2006 and 2007, 2005 is used as the base period. Red and blue lines give point estimates with uniform 95\% confidence bands for the pre-treatment periods and the treatment effects, respectively.}}
    \label{fig:nt_ATT_multi_anticipation}
\end{figure}

\begin{table}[H]
\centering
{\footnotesize
\begin{tabular}{ccccc}
\toprule
\multicolumn{3}{c}{\textbf{Results Under Conditional Parallel Trends}} 
& \multicolumn{2}{c}{\textbf{Robustness Values}} \\
\cmidrule(lr){1-3} \cmidrule(lr){4-5}
\textbf{Short Estimate ($\widehat{\theta}_{s,2007}$)} 
& \textbf{Std. Error} 
& \textbf{Confidence Interval} 
& $\text{RV}_{\theta^*=0,\ \alpha=0.05}$ 
& $\text{XRV}_{\theta^*=0,\ \alpha=0.05}$\\
\midrule
-0.074
& 0.0109
& [-0.0954;\;-0.0526] 
& 10.46\% 
& 1.21\% \\
\bottomrule
\end{tabular}
}
\caption{{\footnotesize Minimal Sensitivity Reporting: Increase in Minimum Wage on Teen Employment allowing for one year anticipation. We use random forests for outcome evolution and propensity scores, with parameters chosen by cross-validation.}}
\label{tab: MW-results-robustness_anticipation}
\end{table}

We now turn to the benchmarking analysis. Table~\ref{tab:benchmark_conf_bound_anticipation} is the counterpart of Table~\ref{tab:benchmark_conf_bound} under one year anticipation. It shows that a latent confounder comparable to any of the three observed covariates---or comparable in strength with alignment fixed at $|\rho_0|=0.3$---is not strong enough to explain away the estimated effect. Figure~\ref{fig:mw_contour_all_anticipation} presents the contour plots. The red diamonds mark the scenarios implied by the observed covariates, positioned by their gain metrics (Table~\ref{tab:gain_metrics_anticipation}).

\begin{table}[htbp]
\centering
\begin{tabular}{lrrrrrr}
\toprule
& \multicolumn{2}{c}{$|\rho_0| = |\rho_{0,j}|$} & \multicolumn{2}{c}{$|\rho_0| = 0.3$} & \multicolumn{2}{c}{$|\rho_0| = 1$} \\
\cmidrule(lr){2-3} \cmidrule(lr){4-5} \cmidrule(lr){6-7}
Covariate & Lower & Upper & Lower & Upper & Lower & Upper \\
\midrule
lmedinc & $-0.1057$ & $-0.0464$ & $-0.1068$ & $-0.0425$ & $-0.1548$ & $0.0047$ \\
white   & $-0.0978$ & $-0.0500$ & $-0.1086$ & $-0.0389$ & $-0.1600$ & $0.0127$ \\
region  & $-0.1360$ & $-0.0255$ & $-0.1382$ & $-0.0106$ & $-0.2716$ & $0.1226$ \\
\bottomrule
\end{tabular}
\caption{One-sided 95\% confidence intervals obtained by benchmarking against observed covariates, minimum wage example allowing for anticipation. The intervals account for sampling uncertainty in the estimated benchmark components.}
\label{tab:benchmark_conf_bound_anticipation}
\end{table}

\begin{figure}[H]
    \centering
    \begin{subfigure}[t]{0.49\textwidth}
        \centering
        \makebox[\linewidth][c]{
        \includegraphics[width=1.4\linewidth]{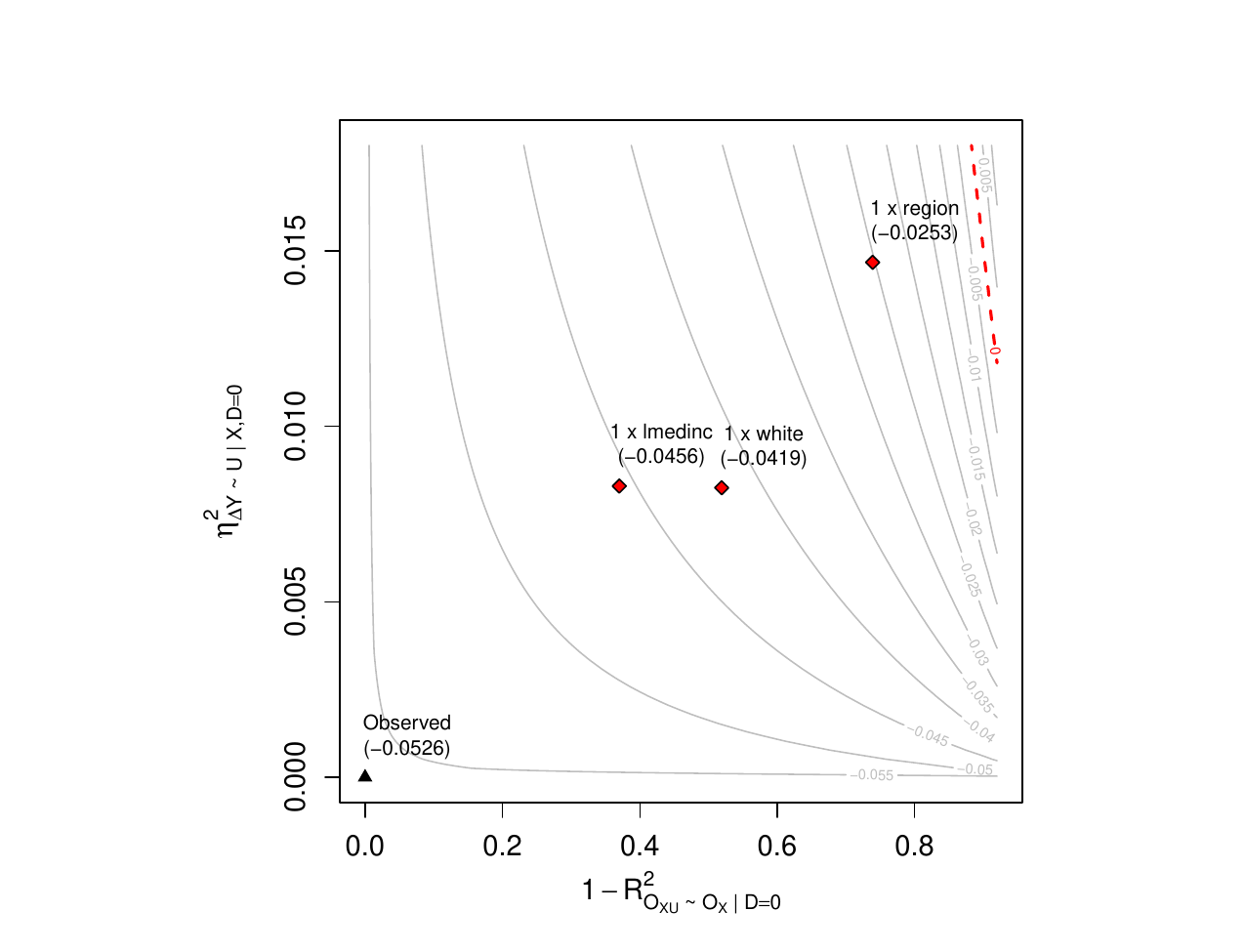}}
        \caption{\scriptsize{Upper limit conf. bound, $|\rho_{0}|=0.3$.}}
    \end{subfigure}\hfill
    \begin{subfigure}[t]{0.49\textwidth}
        \centering
        \makebox[\linewidth][c]{
        \includegraphics[width=1.4\linewidth]{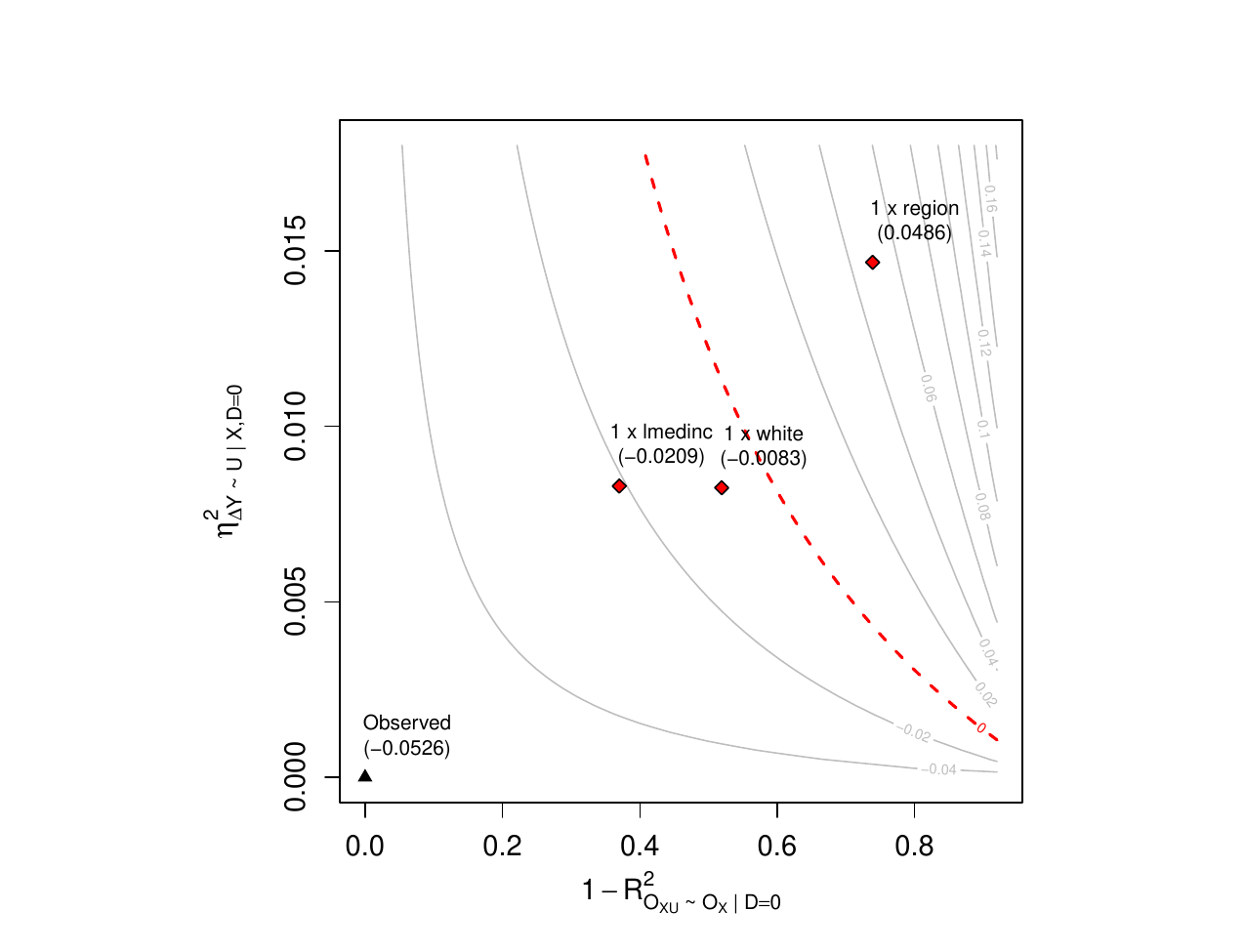}}
        \caption{\scriptsize{Upper limit conf. bound, $|\rho_{0}|=1$.}}
    \end{subfigure}
    \caption{{\footnotesize Sensitivity contour plots for the minimum wage example at a significance level of $\alpha = 0.05$.}}
    \label{fig:mw_contour_all_anticipation}
\end{figure} 

\begin{table}[H]
\centering
\begin{tabular}{lcccc}
\toprule
& \multicolumn{2}{c}{Gain Metrics} & Correlation & Change in estimate \\
\cmidrule(lr){2-5}
Observed covariate 
& $G_{0\Delta Y,j}$ 
& $G_{0D,j}$ 
& $\rho_{0,j}$ 
& $\theta_{s} - \theta_{s,-j}$ \\
\midrule
lmedinc & 0.0083 & 0.3701 &  -0.2915 &  0.0068\\
region &  0.0147 &  0.7389 &  0.2156 & -0.0084\\
white   &  0.0082 & 0.5190 & -0.0801 & 0.0021 \\
\bottomrule
\end{tabular}
\caption{Explanatory power of observed covariates, minimum wage example allowing for anticipation. All estimates are debiased and cross-fitted.}
\label{tab:gain_metrics_anticipation}
\end{table}

Overall, the analysis shows that allowing for anticipation yields a larger estimated negative effect that is more robust to omitted confounding.

\clearpage

\putbib
\end{bibunit}

\end{document}